\documentclass[aps, reprint, prx, nofootinbib, superscriptaddress]{revtex4-2}

\usepackage[T1]{fontenc}
\usepackage[utf8]{inputenc}
\usepackage{CJKutf8}

\makeatletter
\def\l@subsubsection#1#2{}
\makeatother

\makeatletter
\renewcommand\onecolumngrid{%
    \do@columngrid{one}{\@ne}%
    \def\set@footnotewidth{\onecolumngrid}%
    \def\footnoterule{\kern-6pt\hrule width 1.5in\kern6pt}% Optional: fixes footnote line width
}
\makeatother

\usepackage{amsmath, amssymb, amsthm, mathtools}
\usepackage{bm}
\usepackage{physics}
\usepackage{mathrsfs}
\usepackage{esint}
\usepackage{thmtools}
\usepackage{thm-restate}
\usepackage{chngcntr}

\usepackage{enumitem}

\usepackage{graphicx}
\usepackage[dvipsnames]{xcolor}
\usepackage[most]{tcolorbox}

\usepackage{hyperref}
\hypersetup{
  colorlinks=true,
  linkcolor=Blue,
  citecolor=Blue,
  urlcolor=Blue
}

\newtheorem{proposition}{Proposition}

\newtheorem{corollary}{Corollary}

\declaretheorem[name=Theorem]{thm}
\declaretheorem[name=Proposition]{prop}
\declaretheorem[name=Lemma]{lem}
\declaretheorem[name=Corollary]{cor}

\newtheorem{definition}{Definition}

\newcommand{\Var}{\operatorname{Var}}

\begin{document}

%\preprint{APS/123-QED}

% ----- Title -----
\title{Precision Limits of Multiparameter Markovian-Noise Metrology}

\author{Anthony J. Brady}\email{ajbrady.q@gmail.com}
\affiliation{Joint Center for Quantum Information and Computer Science, NIST/University of Maryland, College Park, MD, 20742, USA}
\affiliation{Joint Quantum Institute, NIST/University of Maryland, College Park, MD, 20742, USA}

\author{\begin{CJK}{UTF8}{gbsn}Yu-Xin Wang (王语馨)\end{CJK}}
\affiliation{Joint Center for Quantum Information and Computer Science, NIST/University of Maryland, College Park, MD, 20742, USA}

\author{Luis Pedro Garc\'ia-Pintos}
\affiliation{Quantum and Condensed Matter Physics Group (T-4), Theoretical Division, Los Alamos National Laboratory, Los Alamos, NM, 87545, USA}

\author{Alexey V. Gorshkov}
\affiliation{Joint Center for Quantum Information and Computer Science, NIST/University of Maryland, College Park, MD, 20742, USA}
\affiliation{Joint Quantum Institute, NIST/University of Maryland, College Park, MD, 20742, USA}

\begin{abstract}
    Measuring stochastic signals (``noise metrology'') constitutes a central task in quantum sensing and the characterization of open quantum systems. Here we establish ultimate precision bounds for multiparameter estimation of stochastic signals encoded through Markovian Lindblad dynamics, allowing for arbitrary quantum control and noiseless ancillae. Although Markovianity enforces standard-quantum-limit scaling with sensing time $T$, our bounds reveal Heisenberg-type scaling in the number of dissipative channels, $R$: when the stochastic signal exhibits high-rank correlations across the $R$ channels and the probe is entangled, the average variance (per parameter) scales no better than $\Omega(1/(TR^2))$. For collective $k$-body dissipation, $R=\Theta(N^k)$, signifying super-Heisenberg scaling with the system size $N$. We further show that, when the unknown parameters enter through the dissipative eigenrates, a Rapid Prepare-and-Measure (RPM) protocol that tracks many distinct quantum jumps in parallel attains these limits. In this regime, the estimation problem reduces to a multi-Poisson counting model, providing a conceptually clean route to optimal quantum noise metrology. We illustrate the breadth of the framework with applications to networked noise metrology, collective many-body dissipation, learning Pauli noise, and subdiffraction quantum imaging.
\end{abstract}

\date{\today}

\maketitle

\tableofcontents

%\newpage

%==========
\section{Introduction}
\label{sec:intro}

Precision sensing drives progress and innovation across science and technology. Quantum sensing~\cite{Degen2017:QuSensing, Pezze2018:QuMetrologyAtoms, Lloyd2018:PhotonQuSensing}, a cornerstone of quantum information science and technology, employs quantum resources, such as entanglement~\cite{Huang2024:EntangledMetrology}, quantum memory~\cite{Zaiser2016:QuMemorySensing, Chen2022:ExpSepMemory, Huang2022:QuAdvLearn}, and collective measurements (via multiplexed quantum readout)~\cite{Massar1995:EntanglMeas, Conlon2023:EntanglMeas}, to surpass limitations inherent to classical sensing strategies. 

Formally, quantum sensing aims to estimate unknown parameters $\bm{\theta}=(\theta_1, \dots, \theta_d)$ that imprint on the probe state $\rho$ of a quantum system through a parametric process $\mathcal{E}_{\bm{\theta}}(\rho)$. We prepare and optimize the quantum probe, perform measurements that extract information about $\bm{\theta}$, and construct estimators $\check{\bm{\theta}}$ from the measurement data. Quantum estimation theory~\cite{Helstrom1976, Holevo1982, Paris2009:QFI} then quantifies the ultimate precision (defined as the inverse variance) in this task through the quantum Fisher information (QFI) and the associated quantum Cram\'er-Rao bound~\cite{Braunstein1994:Bures}, which establishes fundamental limits on any quantum sensing protocol.

Much work in quantum sensing focuses on unitary parameter encoding, where coherent signal accumulation supports quadratic precision scaling (the Heisenberg limit) with both system size $N$ and time $T$~\cite{Giovannetti2004:BeatSQL, Giovannetti2006:QuMetrology}. While unitary encoding permits Heisenberg scaling in time through coherent signal accumulation, many frontier tasks target stochastic signals, such as learning noise correlations or characterizing dissipative couplings, where coherent buildup in time is intrinsically absent. 

Nevertheless, quantum resources still deliver strong advantages for sensing stochastic signals, including entanglement-enhanced estimation of correlated noise with quantum sensor networks~\cite{Brady2024:NoiseQSN, Wang2024:ExpSensing, Prabhu2026:ExpSensingQSA, Wang2026:PowerLaw}, exponential complexity advantage in learning large families of noise parameters using quantum memory~\cite{Chen2022:PauliChEst, Chen2024:TightPauliLearn, Caro2024:PTMlearning, Kim2025:LearnResources}, and enhanced information extraction in quantum-optical imaging through multiplexed quantum readout~\cite{Tsang2019:Starlight, Albarelli2020:ImagingPersp, Defienne2024:AdvQuImaging}. These motivations extend across noise characterization and benchmarking of quantum devices~\cite{Hashim2021:RandCompil, Burnett2019:BenchmarkDecoherence, Harper2020:PauliLearn, Lupke2020:Benchmarking, Van2023:PEC, Hashim2025:BenchmarkingRvw, Proctor2025:Benchmarking}, interrogating condensed-matter and many-body systems~\cite{Casola2018:nvManyBody, Rovny2024:nvManyBody, Rovny2022NVcovariance, Ziffer2024:qnsCriticality, Cheng2025:mplex, Rovny2025:nvEntResource, Zhou2025:nvEntResource}, and searches for new physics~\cite{YeZoller2024:Essay, Bass2024:NatRvw}. Despite notable successes, existing results on \emph{quantum noise metrology} largely address specific noise models or single-parameter regimes, and a unified multiparameter theory that yields computable, process-agnostic precision bounds for parametric dissipative processes remains incomplete.

In this work, we establish fundamental limits for multiparameter estimation of stochastic signals encoded in a quantum system via Lindbladian dynamics. We derive universal upper bounds on the QFI matrix for parametric dissipative processes (Theorem~\ref{thm:qfi-matrix}), accounting for arbitrary quantum control and noiseless ancillae. While strict Markovianity forces linear scaling in time $T$ (the standard quantum limit, SQL), our bounds reveal a Heisenberg-type scaling in the number of jump operators (Theorem~\ref{thm:gen-Heisenberg}). Specifically, when the dissipative process exhibits maximal connectivity across jump-index space and the probes are entangled, the precision scales at most as $\order{T R^2}$. Here $R$ counts the number of dissipative channels (independent, local jump operators), which can scale polynomially with the system size $N$ for collective many-body dissipation (Corollary~\ref{cor:super-heisenberg}). This ``super-Heisenberg'' scaling in $N$ mirrors that of the unitary case~\cite{Roy2008:ExpMetrology, Boixo2007:QuEstim} but, uniquely for noise metrology, identifies connectivity in jump space as a metrological resource, generalizing prior insights on networked quantum noise sensing~\cite{Brady2024:NoiseQSN}; see also~\cite{Wang2024:ExpSensing, Prabhu2026:ExpSensingQSA, Wang2026:PowerLaw}.

We further show that a Rapid Prepare-and-Measure (RPM) protocol that quickly tracks many quantum jumps in parallel approaches the fundamental precision limits when the unknown parameters enter through the \emph{eigenrates} of the dissipative process (Lemma~\ref{lem:rpm-fisher}, Theorem~\ref{thm:rpm-qcrb}). In this sense, our RPM protocol extends prior single-parameter fast-reset strategies~\cite{Correa2015:OptThermometry, Sekatski2022:Thermometry, Gardner2025:FastReset, Wang2026:PowerLaw} to the multiparameter setting and provides a conceptually simple route to saturating quantum Cram\'er-Rao bounds.

We organize the remainder of the paper as follows. In Sec.~\ref{sec:setup}, we introduce the estimation setting, the class of parametric Markovian dissipative processes that we analyze in this work, and the collisional purification of the dissipative dynamics that we use to upper bound estimation precision. In Sec.~\ref{sec:results}, we present our precision bounds for general parametric dissipative processes, develop the generalized Heisenberg scaling, and describe protocols that attain these limits in important regimes, particularly for eigenrate estimation. In Sec.~\ref{sec:examples}, we illustrate the framework with applications to networked noise metrology, including an example on collective spin dissipation; super-Heisenberg scaling for many-body dissipators, including an example on collective multipole dissipation; Pauli-noise estimation with and without quantum memory; and subdiffraction incoherent quantum imaging. In Sec.~\ref{sec:outlook}, we close with an outlook.

We also provide detailed appendices that include a derivation of the collisional purification of Markovian dynamics (Appendix~\ref{app:collisional}), full proofs of our main results (Appendix~\ref{app:proofs}), a technical discussion of tightness of our precision bounds via Uhlmann extremality (Appendix~\ref{app:uhlmann-extremality}), and additional details on learning Pauli noise (Appendix~\ref{app:pauli-noise}) and subdiffraction imaging (Appendix~\ref{app:subdiff-imaging}).

%==========
%==========
\section{Setup}
\label{sec:setup}

We study estimation of parameters $\bm{\theta}=(\theta_\alpha)_{\alpha=1}^d$ encoded in a quantum probe through a dissipative, memoryless (Markovian) process. The probe is prepared in an initial state $\psi\triangleq \dyad{\psi}$ which may include ancillae, evolves under Lindbladian evolution $\mathcal{E}_{\bm{\theta}}(\psi)$ with parametric dissipator $\mathscr{D}_{\bm{\theta}}$ and arbitrary control, and is then measured to produce estimators $\check{\bm{\theta}}$. To analyze estimation precision, we use local (asymptotic) quantum-estimation theory based on Fisher information (see Refs.~\cite{Paris2009:QFI, Sidhu2020:QFIrvw, LiuYuanLuWang2020, Meyer2021, Pezze2025:AvMultiQuEst} for background). We also introduce a %collisional
purification of $\mathcal{E}_{\bm \theta}(\psi)$ that allows us to compute upper bounds on the precision for arbitrary Markovian dissipative encodings.

%==========
\subsection{Parameter Estimation}
\label{sec:estimation}

\paragraph*{Estimation precision.}
Let $\check{\bm{\theta}}$ denote the unbiased estimators constructed from measurement outcomes, with uncertainty characterized by the covariance matrix
\begin{equation}
    \bm{\Sigma}_{\alpha\beta}\ \triangleq\ \mathbb{E}\left[(\check{\theta}_\alpha-\theta_\alpha)(\check{\theta}_\beta-\theta_\beta)\right].
\end{equation}
Throughout, we also use the (diagonal) absolute precision,
\begin{equation}
    \label{eq:precision}
    \tau_\alpha \triangleq \frac{1}{\bm{\Sigma}_{\alpha\alpha}},
\end{equation}
as a simple but informative performance metric.

\paragraph*{Classical and quantum Fisher information.}
Given a POVM $\{\Pi_j\}$, outcome probabilities derive from the Born rule $p_j(\bm{\theta})=\Tr(\Pi_j\rho(\bm{\theta}))$ with $\sum_j \Pi_j = I$, and the classical Fisher information matrix reads
\begin{equation}\label{eq:fisher-metric}
  (F_C)_{\alpha\beta}
  = \sum_j \frac{(\partial_\alpha p_j)(\partial_\beta p_j)}{p_j}
  = 4\sum_j (\partial_\alpha \sqrt{p_j})(\partial_\beta \sqrt{p_j}).
\end{equation}
For $\nu$ i.i.d.\ repetitions and unbiased estimators, the classical Cram\'er-Rao bound establishes a (protocol-dependent) precision limit,
\begin{equation}
  \bm{\Sigma} \ \ge\ \frac{1}{\nu}F_C^{-1}.
\end{equation}
For any POVM, the classical Fisher information matrix $F_C$ is bounded from above by the (protocol-independent) QFI matrix $F_Q$, leading to the celebrated quantum Cramér-Rao bound (QCRB)~\cite{Helstrom1976, Holevo1982, Braunstein1994:Bures, Fujiwara1995:QFImetric}
\begin{equation}\label{eq:qcrb}
  \bm{\Sigma} \ \ge\ \frac{1}{\nu}F_C^{-1} \ \ge\ \frac{1}{\nu}F_Q^{-1}.
\end{equation}
A useful inequality is $(F_Q^{-1})_{\alpha\alpha}\ge 1/(F_Q)_{\alpha\alpha}$; hence, correlations in $F_Q$ typically degrade marginal precisions relative to what the diagonal entries alone suggest. (In general, the multiparameter QCRB involving $F_Q$ may not always be achievable due to measurement incompatibility~\cite{Ragy2016compatibility,Pezze2017:Multiparam, LiuYuanLuWang2020, Albarelli2022:Incompatibility, Pezze2025:AvMultiQuEst} but tight, closely related bounds can be formulated, e.g., for the case with two parameters~\cite{Sidhu2021:HolevoBound}.)

\paragraph*{Purification viewpoint.}
For a parametric quantum channel $\rho(\bm{\theta})=\mathcal{E}_{\bm{\theta}}(\psi)$, any Stinespring dilation~\cite{Stinespring1955} produces a purification
$\Psi(\bm{\theta})=U(\bm{\theta})(\psi\otimes \varphi_{\rm env})U^\dagger(\bm{\theta})$ such that
$\rho(\bm{\theta})=\Tr_{\rm env}\Psi(\bm{\theta})$. Since partial trace is parameter-independent,
\begin{equation}
  F_Q[\Psi(\bm{\theta})]\ \ge\ F_Q[\rho(\bm{\theta})]
\end{equation}
via quantum data processing~\cite{Nielsen1996}. The choice of dilation is not unique; in fact, minimizing $F_Q[\Psi]$ over environment unitaries recovers $F_Q[\rho]$ (per Uhlmann's theorem~\cite{Uhlmann1976}). In this work we focus on a specific purification---the \emph{canonical collisional purification} below---whose QFI is computable and yields an upper bound which is often tight; we discuss tightness more generally in Appendix~\ref{app:uhlmann-extremality}. 

%==========
\subsection{Dissipative Dynamics and Collisional Purification}
\label{sec:markov-evol}

\paragraph*{Markovian evolution.}
We consider a quantum system evolving under a Lindbladian generator with parametrized dissipator~\cite{Lindblad1976:Generators, Gorini1976, Breuer2007, Albert2016:GeometryLindbladians},
\begin{equation}
    \label{eq:lindbladian}
  \partial_t \rho(\bm{\theta};t) = \mathcal{L}_{\bm{\theta}, t}(\rho),\quad
  \mathcal{L}_{\bm{\theta}, t}(\rho)\triangleq -i\comm{H_c(t)}{\rho}+\mathscr{D}_{\bm{\theta}}(\rho),
\end{equation}
where $H_c(t)$ is a parameter-independent control Hamiltonian and $\mathscr{D}_{\bm{\theta}}$ is the parametric dissipator. The formal solution of the Lindblad equation is
\begin{equation}
    \label{eq:rho-t}
    \rho(\bm{\theta};t)=\mathcal{T}\exp\!\Big(\int_0^t \dd{s}\,\mathcal{L}_{\bm{\theta}, s}\Big)(\psi),
\end{equation}
where $\psi=\rho(0)$. We assume an initial pure state throughout, $\psi \triangleq \dyad{\psi}$, which may include ancillary degrees of freedom that evolve trivially. By convexity of the QFI, sensors prepared in a mixed initial state satisfy the same upper bounds.

Let $\{L_i\}_{i=1}^R$ denote jump operators in a local basis (independent of $\bm{\theta}$). Parameter dependence enters through the Hermitian positive semidefinite matrix $\Gamma(\bm{\theta})\in\mathbb{C}^{R\times R}$,
\begin{equation}
    \label{eq:gamma-matrix}
    \Gamma(\bm{\theta})=V(\bm{\theta}) \Lambda(\bm{\theta}) V^\dagger(\bm{\theta}),
    \quad
    \Lambda={\rm diag}(\gamma_k),\quad \gamma_k \geq 0,
\end{equation}
where $\Lambda \in \mathbb{R}_+^{R\times R}$ and $V\in\mathbb{C}^{R\times R}$ and $k=1,\dots, R$.
From which we define the canonical jump operators
\begin{equation}
    \label{eq:can-jump}
    J_k(\bm{\theta}) = \sum_{i=1}^R V_{ik}(\bm{\theta}) L_i.
\end{equation}
Thus the eigenrates $\gamma_k$ and the eigenframe $V$ both depond on $\bm{\theta}$. The dissipator may be written in the physical frame and in the canonical frame, respectively,
\begin{align}
    \mathscr{D}_{\bm{\theta}}(\rho)
    &= \sum_{i,j=1}^R\Gamma_{ij}(\bm{\theta})
    \left(L_i\rho L_j^\dagger-\tfrac12\acomm{L_j^\dagger L_i}{\rho}\right)
    \label{eq:dissip-phys}
    \\
    &=\sum_{k=1}^R \gamma_k(\bm{\theta})
    \left(J_k(\bm{\theta})\rho J_k^\dagger(\bm{\theta})-\tfrac12\acomm{J_k^\dagger J_k(\bm{\theta})}{\rho}\right).
    \label{eq:dissip-can}
\end{align}

\paragraph*{Collisional purification.}
We realize the Markovian dissipator by coupling the probe to a memoryless bosonic bath through a purification of the Lindbladian evolution; see Appendix~\ref{app:collisional} for further details. Let $U(\bm{\theta};t)=\mathcal{T}{\rm exp}(-i\int_0^t\dd{s}H_{\rm tot}(\bm{\theta}; s))$ act on the joint system-environment Hilbert space $\mathscr{H}_S\otimes\mathscr{H}_{\rm env}$, with
\begin{equation}
    H_{\rm tot}(\bm{\theta}; t)=H_c(t)+H(\bm{\theta};t),
\end{equation}
and interaction Hamiltonian
\begin{equation}
    \label{eq:collisional-H}
    H(\bm{\theta};t)
    = \sum_{k=1}^R \sqrt{\gamma_k(\bm{\theta})}\,J_k(\bm{\theta})\otimes e_k^\dagger(t) + {\rm h.c.}
\end{equation}
We take the bath operators $\{e_k(t)\}$ as bosonic annihilation operators satisfying
$\comm*{e_k(t)}{e_{k'}^\dagger(t')}=\delta_{kk'}\delta(t-t')$ and assume the bath state $\varphi_{\rm env}$ is the (memoryless, Gaussian) vacuum state, such that
\begin{equation}
    \Tr(\varphi_{\rm env}e_k(t)e_{k'}^\dagger(t'))=\delta_{kk'}\delta(t-t'),
\end{equation}
with all other first and second moments vanishing. Defining the global state
\begin{equation}
    \Psi(\bm{\theta};t)
    = U(\bm{\theta};t)\,(\psi\otimes\varphi_{\rm env})\,U^\dagger(\bm{\theta};t),
\end{equation}
we have $\rho(\bm{\theta};t)=\Tr_{\rm env}\Psi(\bm{\theta};t)$, and this dilation reproduces the Lindblad evolution in Eq.~\eqref{eq:rho-t}.

Operationally, the purification corresponds to a collisional model in which the system couples at each instant to a fresh bath element prepared in the vacuum, thereby enforcing Markovianity. This dilation is not unique: Taking $U(\bm{\theta};t)\mapsto (I\otimes U_E(\bm{\theta}; t))\,U(\bm{\theta};t)$ leaves the reduced evolution unchanged. 

In what follows, we employ the canonical ($U_E=I$) collisional purification to derive precision bounds and infer protocols. The resulting bounds need not be universally tight, precisely because different choices $U_E$ correspond to different purifications of the same reduced dynamics; we do, however, identify regimes in which the bounds are tight. Appendix~\ref{app:uhlmann-extremality} further discusses the general issue of tightness from the perspective of Uhlmann's theorem.

%==========
%==========
\section{Precision Limits and Protocols}
\label{sec:results}

In this Section, we present our main results (Theorems~\ref{thm:qfi-matrix} and~\ref{thm:gen-Heisenberg}). In particular, we analyze the QFI matrix of the collisional purification of parametric dissipative dynamics. We first establish the temporal structure of the QFI, showing that, for memoryless processes, the attainable precision grows at most linearly with sensing time (Lemma~\ref{lem:qfi-flow}). We then derive a general closed form for the \emph{QFI matrix flow} (Theorem~\ref{thm:qfi-matrix}), which upper bounds the metrological performance of arbitrary dissipatively encoded processes and makes explicit the roles of the eigenrates $\gamma_r$ and the eigenframe $V$; this key result holds under the most general adaptive strategy compatible with continuous-time Lindblad dynamics (see, e.g., Ref.~\cite{Albarelli2022:Incompatibility} for a complementary multiparameter, channel-based approach). Building on this, we prove a \emph{Generalized Heisenberg Limit} in the jump-index dimension $R$ (Theorem~\ref{thm:gen-Heisenberg}) and identify structural conditions required to approach it. Finally, we specialize to the practically relevant case of \emph{eigenrate estimation} (parameters encoded only in $\gamma_r$; Proposition~\ref{prop:rate-qfi}). We provide sufficient conditions and an explicit RPM protocol for attaining the multiparameter QCRB (Theorem~\ref{thm:rpm-qcrb}), extending single-parameter, fast-reset strategies~\cite{Correa2015:OptThermometry, Sekatski2022:Thermometry, Gardner2025:FastReset, Wang2026:PowerLaw} to multiparameter estimation.

%==========
\subsection{Collisional QFI and Generalized Heisenberg Limit}
\label{sec:collisional-qfi}

\paragraph*{Temporal structure and flow.}
For parametric memoryless processes, the achievable precision is bounded by a linear scaling in time. This behavior is expected when estimating dissipative processes~\cite{Fujiwara2008:fibre, Escher2011:Framework, Dobrza2012ElusiveHeisenberg, Kolodynski2013:EfficientTools, Catana2015, Guta2017, Gorecki2023:CausalAdaptive} and is well known in single-parameter Markovian settings~\cite{Correa2015:OptThermometry, Sekatski2022:Thermometry, Wan2022:AdaptiveMarkov, Das2025:QFItimeScalings, Gardner2025:FastReset}.
Though not our main result, we derive this scaling using collisional purification techniques. For clarity, we consider single-parameter estimation and a single jump operator; the same time-scaling extends to multiple parameters and multiple jump operators by similar reasoning.

\begin{restatable}[QFI Flow]{lem}{QFIFlow}
    \label{lem:qfi-flow}
    Consider a parametric dissipator
    \begin{equation}
        \mathscr{D}_\theta(\rho)
        =\gamma_\theta\!\left(L\rho L^\dagger-\tfrac{1}{2}\acomm{L^\dagger L}{\rho}\right),
    \end{equation}
    where the rate $\gamma_\theta$ encodes the parameter $\theta$.
    Then the QFI of the collisional purification $F_Q(t)$ satisfies
    \begin{equation}
        \label{eq:qfi-flow}
        F_Q(t)=\int_{0}^{t}\dd{s}\,\dot{F}_Q(s)\ \le\ 4t (\partial_\theta \sqrt{\gamma_\theta})^2\,\norm{L}^{2},
    \end{equation}
    where $\dot{F}_Q\triangleq \partial_t F_Q(t)$ is the QFI flow and $\norm{\cdot}$ denotes the spectral norm.
\end{restatable}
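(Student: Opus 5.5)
The plan is to compute the QFI of the pure global state $\Psi(\theta;t)$ directly, using the pure-state formula $F_Q = 4\big(\langle \partial_\theta\Psi|\partial_\theta\Psi\rangle - |\langle\Psi|\partial_\theta\Psi\rangle|^2\big)$, and to control its growth in time through a discretized collisional picture. I will slice $[0,t]$ into steps of length $\delta t$. In each step the system interacts with a fresh bath mode $b_n$ in vacuum through
\[
U_n = \exp\!\big(-i\sqrt{\gamma_\theta\,\delta t}\,(L\otimes b_n^\dagger + L^\dagger\otimes b_n)\big),
\]
interleaved with the parameter-independent control $e^{-iH_c\delta t}$. The total unitary is the ordered product of these steps, and in the limit $\delta t\to 0$ it reproduces Eq.~\eqref{eq:collisional-H}.

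The first step is a standard telescoping (chain-rule) bound for pure states. Write $|\Psi_N\rangle = U_N\cdots U_1|\psi,0\rangle$. Then $|\partial_\theta \Psi_N\rangle = \sum_n U_N\cdots(\partial_\theta U_n)\cdots U_1|\psi,0\rangle$. Define the local generator $h_n = i(\partial_\theta U_n)U_n^\dagger$. The QFI is then four times the variance of the sum $\sum_n \tilde h_n$, where $\tilde h_n$ are the Heisenberg-picture generators. Because the generators act on distinct fresh modes, the cross terms are where the Markov structure must enter.

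The second step extracts the single-collision contribution. To leading order, $\partial_\theta U_n \approx -i\,\partial_\theta\!\sqrt{\gamma_\theta}\,\sqrt{\delta t}\,(L b_n^\dagger+L^\dagger b_n)U_n$, so $h_n\approx \partial_\theta\!\sqrt{\gamma_\theta}\sqrt{\delta t}\,(L\otimes b_n^\dagger + \text{h.c.})$. Acting on the state before the collision, in which $b_n$ is in vacuum, the term $L^\dagger b_n$ annihilates it. The remaining term produces an orthogonal one-photon component with squared norm $(\partial_\theta\sqrt{\gamma_\theta})^2\delta t\,\langle L^\dagger L\rangle$.

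I expect the main obstacle to be the cross terms between different collisions. They are the reason the argument needs care, since they could a priori produce $t^2$ scaling. My plan is to avoid computing them directly. I will use the fact that the one-photon excitation created at step $n$ lives in mode $b_n$, which is never touched again. Consequently, the derivative vectors $|\partial_\theta\Psi\rangle$ generated at distinct steps are orthogonal up to $O(\delta t^{3/2})$ corrections, which vanish as $\delta t \to 0$. Contributions to the derivative in the vacuum sector of $b_n$ (from the $O(\delta t)$ expansion of $U_n$) are anti-Hermitian-phase-like and are removed by the subtraction of $|\langle\Psi|\partial_\theta\Psi\rangle|^2$; this is where I will have to be careful about second-order terms. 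Alternatively, if the orthogonality argument gets messy, I would fall back on the general bound $F_Q\le 4\big(\sum_n\|h_n\|_{\text{rel}}\big)^2$ restricted to the relevant sector, or on the additivity-type inequality $F_Q(\Psi_{n+1})\le F_Q(\Psi_n)+4\langle h_n^2\rangle+\ldots$ specialized to vacuum inputs.

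With orthogonality in hand, the flow follows. The increment is $F_Q(s+\delta t)-F_Q(s) = 4(\partial_\theta\sqrt{\gamma_\theta})^2\,\delta t\,\Tr[\rho(s)L^\dagger L]+o(\delta t)$, where $\rho(s)$ is the reduced system state. This gives $\dot F_Q(s) = 4(\partial_\theta\sqrt{\gamma_\theta})^2\langle L^\dagger L\rangle_{\rho(s)}$, and integrating yields the first equality in Eq.~\eqref{eq:qfi-flow}. Finally, $\langle L^\dagger L\rangle\le\|L^\dagger L\| = \|L\|^2$, which holds for any state, including under arbitrary control and with ancillae. Integrating this bound gives the claimed $4t(\partial_\theta\sqrt{\gamma_\theta})^2\|L\|^2$.
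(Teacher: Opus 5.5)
Your overall plan matches the paper's proof. Both use a discrete collision model with fresh vacuum bath elements and split $\lvert\partial_\theta\Psi\rangle$ into per-collision generator terms. Both make the variance additive because cross terms vanish, use the single-collision second moment $(\partial_\theta\sqrt{\gamma_\theta})^2\,\delta t\,\Tr(L^\dagger L\rho)$, pass to a Riemann sum, and finish with $\langle L^\dagger L\rangle\le\|L\|^2$.

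The gap is in the step you flag yourself. You only claim that derivative vectors from distinct steps are orthogonal up to $O(\delta t^{3/2})$, and that this "vanishes as $\delta t\to0$." There are about $\tfrac12(t/\delta t)^2$ pairs, so these errors sum to $O(t^2\delta t^{-1/2})$, which diverges. Even a single increment $F_Q(s+\delta t)-F_Q(s)$ collects about $s/\delta t$ cross terms, i.e.\ $O(s\,\delta t^{1/2})$, which is not $o(\delta t)$. Any per-pair estimate weaker than $o(\delta t^2)$ fails.

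Your fallbacks fail the same way:
\begin{itemize}
\item On the relevant sector $\|h_n\|\sim\sqrt{\delta t}\,\|L\|$, so $4(\sum_n\|h_n\|)^2\sim t^2/\delta t$.
\item If you control the covariance in $F_Q(\Psi_{n+1})\le F_Q(\Psi_n)+4\langle h_n^2\rangle+\dots$ by Cauchy--Schwarz, you get $\sqrt{F_Q}\lesssim N\sqrt{\delta t}$, the same divergence.
\end{itemize}
What you need, and what the paper's additivity step relies on, is \emph{exact} vanishing of the cross terms.

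Exact vanishing does hold here, and the argument is short. Write $U_n=\exp(-i\sqrt{\gamma_\theta\delta t}\,X_n)$ with $X_n=L\otimes b_n^\dagger+L^\dagger\otimes b_n$ independent of $\theta$. Then, exactly, $\partial_\theta U_n=-iU_n g_n$ with $g_n=\sqrt{\delta t}\,(\partial_\theta\sqrt{\gamma_\theta})\,X_n$. Interleaved $\theta$-independent control does not change this. So there is no "leading order" and no vacuum-sector phase term to track.

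It follows that $\lvert\partial_\theta\Psi_N\rangle=\sum_n\lvert v_n\rangle$ with $\lvert v_n\rangle=-iU_N\cdots U_n\,g_n\lvert\Psi_{n-1}\rangle$. The vector $g_n\lvert\Psi_{n-1}\rangle=\sqrt{\delta t}\,(\partial_\theta\sqrt{\gamma_\theta})(L\otimes b_n^\dagger)\lvert\Psi_{n-1}\rangle$ lies entirely in the one-photon sector of $b_n$. For $m<n$,
\[
\langle v_m\vert v_n\rangle=\langle\Psi_{m-1}\rvert\, g_m U_m^\dagger\cdots U_{n-1}^\dagger\, g_n\,\lvert\Psi_{n-1}\rangle=0,
\]
because none of $g_m,U_m,\dots,U_{n-1}$ acts on $b_n$, and $b_n$ is in vacuum in $\lvert\Psi_{m-1}\rangle$.

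The operative fact is that $b_n$ is untouched \emph{before} collision $n$, not that it is ``never touched again.'' At the final time every $b_m$ already carries $O(\sqrt{\delta t})$ one-photon amplitude in $\lvert\Psi_N\rangle$ itself. Also, $U_n$ acts on the photon created at step $n$. Comparing vectors at the final time therefore only gives approximate orthogonality, which is presumably where your $O(\delta t^{3/2})$ came from.

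The same reasoning gives $\langle\Psi_N\vert v_n\rangle=-i\langle\Psi_{n-1}\rvert g_n\lvert\Psi_{n-1}\rangle=0$, so the subtracted term vanishes identically. Hence, exactly,
\[
F_Q=4\sum_n\|v_n\|^2=4(\partial_\theta\sqrt{\gamma_\theta})^2\,\delta t\sum_n\Tr(L^\dagger L\,\rho_{n-1}).
\]
The claimed bound $4t(\partial_\theta\sqrt{\gamma_\theta})^2\|L\|^2$ then holds for every $\delta t$, before any limit is taken.
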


\begin{proof}[Proof sketch]
    The linear time scaling follows directly from the memoryless assumption of the bath (viz., $\Tr(\varphi_{\rm env}e(s)e^\dagger(s'))=\delta(s-s')$) and can be recovered within the collisional purification; we provide a full proof in Appendix~\ref{app:lem-qfi-flow} by modeling the environment as a product of independent bath elements in the discrete-time approximation.
\end{proof}

In words, estimation precision for parametric Markovian dynamics obeys the SQL in time $t$. This is consistent with temporal locality of Markovian evolution, for which coherent buildup across disjoint time intervals is absent; the QFI flow thus represents the natural object to analyze in the Markovian setting~\cite{Lu2010:QFIflow}. This perspective is conceptually related to Fisher-information-based speed limits for stochastic quantum dynamics~\cite{Pires2016:QSL, Ito2020:StochSpeedLimit, Luispe2022SpeedLimits}, though the QFI flow plays a metrological (rather than dynamical) role in our work.

\paragraph*{Collisional QFI matrix.}
We now turn to the general multiparameter problem with multiple (potentially many-body) jump operators. We derive the QFI matrix flow for the collisional purification, obtaining a general expression that upper bounds the metrological performance of arbitrary parametric dissipative processes. For related information-geometric treatments of dissipative processes, see Refs.~\cite{Catana2015, Guta2017}.

Define the (disconnected) correlator
\begin{equation}
    \label{eq:disc-corr}
    \mathcal{C}_{ij}(t)\ \triangleq\ \Tr\!\big(L_i^\dagger L_j\,\rho(t)\big),
\end{equation}
with $\rho(t)$ the time-evolved state of Eq.~\eqref{eq:rho-t}, which may include (noiseless) ancillary degrees of freedom, and the dissipative amplitude
\begin{equation}
    \label{eq:dissip-amplitude}
    \zeta(\bm{\theta})\ \triangleq\ V\Lambda^{1/2},\qquad
    \Gamma=\zeta\zeta^\dagger,
\end{equation}
where $\Gamma\in\mathbb{C}^{R\times R}$ is the Hermitian dissipation matrix in the local jump basis $\{L_a\}$ [Eq.~\eqref{eq:gamma-matrix}].

\begin{restatable}[Collisional QFI Matrix]{thm}{QFIM}
    \label{thm:qfi-matrix}
    Consider the collisional purification (Sec.~\ref{sec:markov-evol}) of the dissipative process $\mathscr{D}_{\bm{\theta}}$ [Eqs.~\eqref{eq:dissip-phys}--\eqref{eq:dissip-can}]. Given the correlators $\mathcal{C}(t)$ in Eq.~\eqref{eq:disc-corr}, the collisional QFI matrix flow obeys
    \begin{equation}
        \label{eq:qfim-flow}
        (\dot{F}_Q)_{\alpha\beta}(t)
        = \Re\,\Tr\left(\mathcal{K}_{\alpha\beta}\,\mathcal{C}(t)\right),
    \end{equation}
    where the trace runs over the $R$-dimensional jump-index space and
    \begin{align}
        \label{eq:dissip-kernel}
        \mathcal{K}_{\alpha\beta}
        &\triangleq 4 \big(\partial_\beta \zeta\big)\big(\partial_\alpha \zeta\big)^\dagger \\
        &= \big(\partial_\beta \Gamma + \acomm{K_\beta}{\Gamma}\big)\,
        \Gamma^{-1}\big(\partial_\alpha \Gamma + \acomm{K_\alpha}{\Gamma}\big)^\dagger,
    \end{align}
    with eigenframe connection $K_\alpha\triangleq (\partial_\alpha V)V^\dagger$ in the local basis determined by $\{L_a\}$.
\end{restatable}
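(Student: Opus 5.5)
We plan to follow the discrete-time collisional picture used for Lemma~\ref{lem:qfi-flow}. We slice $[0,t]$ into steps of length $\delta t$. In each step the system interacts with a fresh bath element consisting of $R$ bosonic modes $b_k$ in vacuum, with $e_k(t)\to b_k/\sqrt{\delta t}$. Writing $\zeta_{ik}=V_{ik}\sqrt{\gamma_k}$, the interaction of Eq.~\eqref{eq:collisional-H} becomes
\begin{equation}
H(\bm\theta)\,\delta t=\sqrt{\delta t}\sum_{i,k}\zeta_{ik}L_i\otimes b_k^\dagger+\text{h.c.}
\end{equation}
The global pure state is a product of sequential unitaries acting on $\psi\otimes\varphi_{\rm env}$. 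For pure states the QFI matrix is $4\Re[\langle\partial_\alpha\Psi|\partial_\beta\Psi\rangle-\langle\partial_\alpha\Psi|\Psi\rangle\langle\Psi|\partial_\beta\Psi\rangle]$. The control $H_c$ is parameter independent, so the derivative $\partial_\alpha|\Psi(t)\rangle$ is a sum over slices of terms in which only one slice unitary is differentiated. We will compute the increment $F_Q(t+\delta t)-F_Q(t)$.

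To leading order in $\sqrt{\delta t}$, the new slice acts as
\begin{equation}
|\Phi\rangle\otimes|0\rangle\ \mapsto\ |\Phi\rangle|0\rangle-i\sqrt{\delta t}\sum_{i,k}\zeta_{ik}L_i|\Phi\rangle|1_k\rangle+O(\delta t),
\end{equation}
and its $\theta$-derivative is $-i\sqrt{\delta t}\sum(\partial_\alpha\zeta_{ik})L_i|\Phi\rangle|1_k\rangle$. The fresh mode's one-excitation sector is orthogonal to the vacuum. Hence the cross terms between this new derivative and the old derivatives $\partial_\alpha|\Phi\rangle\otimes|0\rangle$ vanish at order $\delta t$, up to the vacuum-branch correction. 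That correction (order $\delta t$, involving $\partial_\alpha\Gamma$ in the no-jump term) contributes to the Berry-phase subtraction term. The new-derivative overlap gives
\begin{equation}
4\,\delta t\sum_{i,j,k}\overline{\partial_\alpha\zeta_{ik}}\,\partial_\beta\zeta_{jk}\,\langle\Phi|L_i^\dagger L_j|\Phi\rangle .
\end{equation}
Tracing over the ancilla and earlier bath elements, $\langle L_i^\dagger L_j\rangle=\mathcal{C}_{ij}(t)$ from Eq.~\eqref{eq:disc-corr}. The expression is therefore $4\delta t\,\Tr[(\partial_\beta\zeta)(\partial_\alpha\zeta)^\dagger\mathcal{C}^T]$ up to index conventions, and after fixing the transpose convention we obtain $\Re\Tr(\mathcal{K}_{\alpha\beta}\mathcal{C})$.

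\textbf{Main obstacle.} The hard step is showing that all remaining order-$\delta t$ contributions cancel or vanish: the overlap $\langle\partial_\alpha\Psi|\Psi\rangle$ terms, and the no-jump term $-\tfrac12\delta t\sum\Gamma_{ij}L_j^\dagger L_i$. Our plan is to use unitarity, which gives $\Re\langle\Psi|\partial\Psi\rangle=0$. We will then show that the imaginary part of the new-slice generator derivative $\langle\partial_\alpha(H\delta t)\rangle$ vanishes, because $\langle 0|b_k^\dagger|0\rangle=0$ on the fresh vacuum. The second-order no-jump piece $\partial_\alpha\Gamma$ is Hermitian, so it contributes only a real increment to the phase overlap; this increment is imaginary after the factor of $i$ and drops out of $\Re$. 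Past phase terms are unchanged, since $\partial$ of earlier unitaries is untouched and the new slice preserves the norm. Letting $\delta t\to0$ then gives Eq.~\eqref{eq:qfim-flow} with $\mathcal{K}_{\alpha\beta}=4(\partial_\beta\zeta)(\partial_\alpha\zeta)^\dagger$.

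\textbf{Second form of $\mathcal{K}$.} The alternative expression is pure algebra. With $\zeta=V\Lambda^{1/2}$ we have $\partial\zeta=(\partial V)\Lambda^{1/2}+V\partial\Lambda^{1/2}$. Using $V^\dagger V=I$ and $K=(\partial V)V^\dagger$, which is anti-Hermitian, we can write $\partial\zeta=(K\zeta+V\partial\Lambda^{1/2})$. We then express $2\partial\zeta\,\zeta^\dagger$ via $\partial\Gamma=K\Gamma-\Gamma K+V(\partial\Lambda)V^\dagger$ together with $2\partial\Lambda^{1/2}\Lambda^{1/2}=\partial\Lambda$. This gives
\begin{equation}
2(\partial\zeta)\zeta^\dagger=\partial\Gamma+\{K,\Gamma\},
\end{equation}
so that $2\partial\zeta=(\partial\Gamma+\{K,\Gamma\})\zeta^{-\dagger}$. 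Substituting and using $\zeta^{-\dagger}\zeta^{-1}=\Gamma^{-1}$, with pseudo-inverses on the support when $\Gamma$ is singular, yields the second line of Eq.~\eqref{eq:dissip-kernel}.
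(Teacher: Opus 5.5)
\textbf{There is a genuine gap, at exactly the step you flag as the ``main obstacle,'' and it cannot be closed as planned.}

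\emph{The missing cross term.} After the new collision, the old derivative is $U_{\rm new}\big(\partial_\alpha|\Phi\rangle\otimes|0\rangle\big)$, not $\partial_\alpha|\Phi\rangle\otimes|0\rangle$. It therefore has a one-excitation component $-i\sqrt{\delta t}\sum_k A_k\,\partial_\alpha|\Phi\rangle|1_k\rangle$, with $A_k\triangleq\sum_i\zeta_{ik}L_i$. This component overlaps your new-slice derivative $-i\sqrt{\delta t}\sum_k B_{\beta,k}|\Phi\rangle|1_k\rangle$, with $B_{\beta,k}\triangleq\sum_i(\partial_\beta\zeta_{ik})L_i$, at order $\delta t$. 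That is the same order as the no-jump piece. The cross term is thus governed by the fresh-bin vacuum block
\begin{align*}
\langle 0|U_{\rm new}^\dagger\,\partial_\beta U_{\rm new}|0\rangle
&=\delta t\Big(\sum_k A_k^\dagger B_{\beta,k}-\tfrac12\,\partial_\beta\sum_k A_k^\dagger A_k\Big)+O(\delta t^2)
=-i\,\delta t\,c_\beta+O(\delta t^2),\\
c_\beta&=\tfrac{i}{2}\sum_{i,j}\{K_\beta,\Gamma\}_{ji}\,L_i^\dagger L_j .
\end{align*}
The second line uses your own identity $2(\partial_\beta\zeta)\zeta^\dagger=\partial_\beta\Gamma+\{K_\beta,\Gamma\}$. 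So the cross term is $-i\,\delta t\,\langle\partial_\alpha\Phi|c_\beta|\Phi\rangle$.

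\emph{Why your two arguments fail.} Hermiticity of $\partial_\beta\Gamma$ does not remove the no-jump part, because it enters through the off-diagonal element $\langle\partial_\alpha\Phi|\cdot|\Phi\rangle$. Likewise, the phase overlap shifts by $-i\,\delta t\langle c_\beta\rangle$. This changes the quadratic Berry term $\langle\partial_\alpha\Psi|\Psi\rangle\langle\Psi|\partial_\beta\Psi\rangle$ by terms of order $\delta t\,\langle\widetilde G\rangle\langle c\rangle$; it does not drop out of the real part. Write $\partial_\alpha|\Psi(t)\rangle=-i\widetilde G_\alpha|\Psi(t)\rangle$ and $\mathrm{Cov}(X,Y)=\langle XY\rangle-\langle X\rangle\langle Y\rangle$ in $\Psi(t)$. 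The correct increment is then
\[
\dot{\mathcal Q}_{\alpha\beta}=\tfrac14\Tr\big(\mathcal K_{\alpha\beta}\mathcal C\big)+\mathrm{Cov}\big(\widetilde G_\alpha,c_\beta\big)+\mathrm{Cov}\big(c_\alpha,\widetilde G_\beta\big).
\]

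\emph{When the extra terms vanish.} They vanish whenever $c_\alpha\equiv0$. This includes eigenrate-only encoding, where $(\partial_\alpha\zeta)\zeta^\dagger=\tfrac12\partial_\alpha\Gamma$ is Hermitian. There the one-excitation overlap exactly cancels the no-jump term, which is the real reason your intuition works in that case. They also vanish at $t=0^+$ for any encoding, since $\widetilde G_\alpha(0)=0$.

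\emph{Counterexample beyond those cases.} Take a qubit with $L_1=\sigma_x$, $L_2=\sigma_y$, $\Lambda=\gamma I$, $V(\theta)=\begin{pmatrix}\cos\theta&-\sin\theta\\ \sin\theta&\cos\theta\end{pmatrix}$, and probe $|{\uparrow}\rangle$.
\begin{itemize}
\item One finds $\mathcal K=4\gamma I$, so $\Re\Tr(\mathcal K\mathcal C)=8\gamma$, while $c=-2\gamma\sigma_z\neq0$.
\item Since $J_1=e^{i\theta}\sigma^-+e^{-i\theta}\sigma^+$ and $J_2=ie^{i\theta}\sigma^--ie^{-i\theta}\sigma^+$, the $\sigma^\mp$ jumps emit into fixed orthogonal bath modes with phases $e^{\pm i\theta}$.
\item These jumps alternate, so $|\Psi(\theta;t)\rangle=|\Psi_{\rm even}\rangle+e^{i\theta}|\Psi_{\rm odd}\rangle$ with $\|\Psi_{\rm odd}\|^2=p=\tfrac12(1-e^{-4\gamma t})$.
\item Hence $F_Q(t)=4p(1-p)=1-e^{-8\gamma t}$, not $8\gamma t$. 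The two agree only to first order in $t$.
\end{itemize}
By continuity, the discrepancy persists for slightly non-degenerate $\Lambda$, where $\Gamma(\theta)$ genuinely varies.

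\emph{The paper's proof has the same blind spot.} It sets $\langle g_\alpha\rangle=0$ and treats $g_\alpha$ as exactly $\delta$-correlated. But its own discrete regularization, Eq.~\eqref{eq:gj_integral}, gives $g_j=\sqrt{\Delta t}\,\partial H_j+\tfrac{i\Delta t}{2}[H_j,\partial H_j]+\dots$, whose fresh-vacuum block is precisely $\Delta t\,c$.

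\emph{What your argument does establish.} It proves the flow identity for eigenrate-only encodings, and the initial flow $\dot F_Q(0^+)$ in general, but not the theorem as stated. Your algebra for the second form of $\mathcal K$ is fine. No transpose is needed: your overlap is already $\Tr[(\partial_\beta\zeta)(\partial_\alpha\zeta)^\dagger\mathcal C]$.
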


\begin{proof}[Proof sketch]
    The result follows from the collisional purification by expressing the QFI flow as a covariance of the joint system-bath generator associated with the collisional interaction Hamiltonian [Eq.~\eqref{eq:collisional-H}]. The linear-in-time structure again originates from the memoryless assumption, in direct analogy with Lemma~\ref{lem:qfi-flow}. See Appendix~\ref{app:thm-collision-qfi} for the full proof.
\end{proof}

For Hermitian operators, $L_i^\dagger=L_i$, we can tighten Theorem~\ref{thm:qfi-matrix} by replacing the disconnected correlator $\mathcal{C}$ with its connected version, obtained by taking ${L_i \to L_i - \Tr(L_i\rho(t))}$. We state this precisely in Proposition~\ref{prop:connected-qgt} of Appendix~\ref{app:thm-collision-qfi};
see also Ref.~\cite{Wang2026:PowerLaw}.

The dissipation kernel $\mathcal{K}$ captures metrological sensitivity arising from correlations in the parametric dissipator $\mathscr{D}_{\bm{\theta}}$, while the correlator $\mathcal{C}$ captures sensitivity arising from quantum correlations of the probe state. The separation of terms appearing in the kernel follows from the fundamental rate equation
\begin{equation}
    \label{eq:gamma-rate-eq}
    \partial_\alpha\Gamma = V(\partial_\alpha\Lambda)V^\dagger + \comm{K_\alpha}{\Gamma},
\end{equation}
which decomposes parameter dependence into eigenrate variations $\partial_\alpha\Lambda$ and eigenframe rotations $\comm{K_\alpha}{\Gamma}$; see Appendix~\ref{app:thm-collision-qfi} for details.

Finally, observe that Hamiltonian control $H_c(t)$ appears only through the state $\rho(t)$ per Eq.~\eqref{eq:rho-t}. Thus control can modify the constants in the bound by shaping $\rho(t)$, but it does not change the linear time scaling. The scaling with the jump-index dimension $R$ is controlled by the joint correlation structure entering both $\mathcal{K}$ and $\mathcal{C}$, which we quantify below through a generalized Heisenberg limit (Theorem~\ref{thm:gen-Heisenberg}).

\paragraph*{Generalized Heisenberg limit.}
We establish an upper bound on the average QFI (per parameter) in terms of the number of local jump operators $R$ and the connectivity of both the dissipation kernel $\mathcal{K}$ and the quantum correlator matrix $\mathcal{C}$. We first introduce notation that we use in the statement of our result (Theorem~\ref{thm:gen-Heisenberg}).

\begin{definition}[Parameter Averaging]
    \label{def:param-avg}
    Given a collection of matrices $\{M_{\alpha\beta}\}$ with $M_{\alpha\beta}\in\mathbb{C}^{m\times m}$ and parameter indices $\alpha,\beta=1,\dots,d$, we define the parameter-averaged matrix
    \begin{equation}
        \label{eq:param-avg}
        \bar{M}\ \triangleq\ \frac{1}{d}\sum_{\alpha=1}^d M_{\alpha\alpha}\ \in\ \mathbb{C}^{m\times m}.
    \end{equation}
\end{definition}
From which we define the average QFI  $\bar{F}_Q(t)$ ($m=1$) and the average dissipation kernel $\bar{\mathcal{K}}$ ($m=R$).

\begin{definition}[Max Norms]
    \label{def:max-norms}
    For any matrix $M\in\mathbb{C}^{R\times R}$, define the max-entry norm
    \begin{equation}
        \label{eq:max-entry}
        \norm{M}_{\max} \triangleq \max_{a,b\in[R]} \abs{M_{ab}}.
    \end{equation}
    Similarly, for a collection of operators $\{L_a\}_{a=1}^R$ on the probe Hilbert space, define
    \begin{equation}
        \label{eq:max-op}
        \norm{L}_{\max}\ \triangleq\ \max_{a\in[R]} \norm{L_a},
    \end{equation}
    where $\norm{\cdot}$ denotes the spectral norm.
\end{definition}

\begin{definition}[Support and Row Connectivity]
    \label{def:row-connect}
    For any matrix $M\in\mathbb{C}^{R\times R}$ define its support
    \begin{equation}
        \label{eq:supp}
        {\rm supp}(M)\ \triangleq\ \{(a,b)\in[R]^2:\ M_{ab}\neq 0\},
    \end{equation}
    and its row connectivity
    \begin{equation}
        \label{eq:row-connect}
        r_M\ \triangleq\ \max_{a\in[R]}\abs{\{b\in[R]:(a,b)\in{\rm supp}(M)\}}.
    \end{equation}
\end{definition}
Intuitively, ${\rm supp}(M)$ records which jump-index pairs $(a,b)$ contribute to, e.g., contractions such as $\Tr(\mathcal{K} \mathcal{C})$, while $r_M$ measures the maximal number of nonzero entries in any row (e.g., $r_{\mathcal{C}}$ quantifies the connectivity of probe correlations in jump-index space).

\begin{restatable}[Generalized Heisenberg Limit]{thm}{GHL}
    \label{thm:gen-Heisenberg}
    Consider a parametric dissipative encoding $\mathscr{D}_{\bm{\theta}}$ with $R$ distinct dissipation channels, and fix a local jump-operator basis $\{L_i\}_{i=1}^R$. For a sensing protocol consisting of $\nu$ i.i.d. repetitions of a single-shot experiment of duration $t$ and total sensing time $T=\nu t$, the average QFI (per parameter) satisfies
    \begin{equation}
        \label{eq:avg-qfi-bound}
        \bar{F}_Q(T) \leq
        T\, \norm{\bar{\mathcal{K}}}_{\max}\,\norm{L}_{\max}^2\,
        R\,\min(r_{\bar{\mathcal{K}}},\,r_{\mathcal{C}}),
    \end{equation}
    where $r_{\bar{\mathcal{K}}}$ denotes the row connectivity of $\bar{\mathcal{K}}$ and $r_{\mathcal{C}}\ \triangleq\ \sup_{s\in[0,t]} r_{\mathcal{C}(s)}$ denotes any uniform upper bound on the row connectivity of $\mathcal{C}(s)$ over a single shot. Ultimately, $\bar{F}_Q(T) \leq \order{TR^2}$ since $r_{\bar{\mathcal{K}}},r_{\mathcal{C}}\le R$.
\end{restatable}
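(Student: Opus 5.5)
The bound follows directly from the collisional QFI matrix flow of Theorem~\ref{thm:qfi-matrix}, combined with additivity over i.i.d.\ repetitions and a counting argument in jump-index space.

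\textbf{Reduction to a single shot.} Since the QFI is additive over $\nu$ independent repetitions, and the collisional QFI upper bounds the true QFI by data processing, it suffices to show that for a single shot of duration $t$
\begin{equation*}
\bar{F}_Q(t)\le t\,\norm{\bar{\mathcal{K}}}_{\max}\norm{L}_{\max}^2 R\min(r_{\bar{\mathcal{K}}},r_{\mathcal{C}}).
\end{equation*}
Multiplying by $\nu$ then gives the claim with $T=\nu t$. By Theorem~\ref{thm:qfi-matrix}, averaging the diagonal entries $\alpha=\beta$ gives
\begin{equation*}
\bar{F}_Q(t)=\int_0^t\dd{s}\,\Re\Tr\big(\bar{\mathcal{K}}\,\mathcal{C}(s)\big),
\end{equation*}
because the trace is linear in $\mathcal{K}$. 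It therefore remains to bound the integrand pointwise in $s$.

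\textbf{Entrywise bound.} Write
\begin{equation*}
\Tr(\bar{\mathcal{K}}\mathcal{C})=\sum_{a,b}\bar{\mathcal{K}}_{ab}\,\mathcal{C}_{ba}.
\end{equation*}
A term is nonzero only if $(a,b)\in{\rm supp}(\bar{\mathcal{K}})$ and $(b,a)\in{\rm supp}(\mathcal{C})$. Count the admissible pairs row by row: for each of the $R$ values of $a$, the index $b$ is restricted by the row connectivity of $\bar{\mathcal{K}}$. For the transposed constraint from $\mathcal{C}$, note that $\mathcal{C}$ is Hermitian, since $\mathcal{C}_{ij}^*=\Tr(L_j^\dagger L_i\rho)=\mathcal{C}_{ji}$. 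Its support is therefore symmetric, and its column connectivity equals its row connectivity. Hence the number of contributing pairs is at most $R\min(r_{\bar{\mathcal{K}}},r_{\mathcal{C}})$. Each term satisfies
\begin{equation*}
|\bar{\mathcal{K}}_{ab}\mathcal{C}_{ba}|\le\norm{\bar{\mathcal{K}}}_{\max}\,|\Tr(L_b^\dagger L_a\rho)|\le\norm{\bar{\mathcal{K}}}_{\max}\norm{L_b}\norm{L_a}\le\norm{\bar{\mathcal{K}}}_{\max}\norm{L}_{\max}^2,
\end{equation*}
where the middle step uses $|\Tr(X\rho)|\le\norm{X}$ for a state $\rho$ and submultiplicativity of the spectral norm. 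Bounding $\Re$ by the absolute value and integrating over $[0,t]$, with $r_{\mathcal{C}}$ taken as the supremum over $s$, yields the single-shot bound. The final statement $\bar{F}_Q(T)\le\order{TR^2}$ follows because $r_{\bar{\mathcal{K}}},r_{\mathcal{C}}\le R$.

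\textbf{Expected obstacle.} The analysis itself is elementary; the delicate points are bookkeeping. First, one must check that the support counting for $\Tr(\bar{\mathcal{K}}\mathcal{C})$ uses the right (transposed) index pattern, which is resolved by the Hermiticity of $\mathcal{C}$ noted above (and $\bar{\mathcal{K}}$ is positive semidefinite, being an average of terms of the form $4(\partial_\alpha\zeta)(\partial_\alpha\zeta)^\dagger$). Second, one must justify that the single-shot integral bound transfers to the total time $T$ under the i.i.d.\ protocol. If adaptive control within a shot is allowed, it enters only through $\rho(s)$, and the pointwise bound is uniform in $\rho(s)$, so it still applies.
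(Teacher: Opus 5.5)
Your proposal is correct and follows the paper's proof essentially step by step. You integrate the parameter-averaged flow from Theorem~\ref{thm:qfi-matrix}, bound $\Re$ by the modulus, bound each entry by $\norm{\bar{\mathcal{K}}}_{\max}\norm{L}_{\max}^2$, count the overlapping support row by row to get $R\min(r_{\bar{\mathcal{K}}},r_{\mathcal{C}})$, and multiply by $\nu$ via additivity. Your observation that the Hermiticity of $\mathcal{C}$ (equivalently, of $\bar{\mathcal{K}}$) makes the transposed index pattern in $\Tr(\bar{\mathcal{K}}\mathcal{C})$ harmless is a point the paper passes over silently when it intersects ${\rm supp}(\bar{\mathcal{K}})$ and ${\rm supp}(\mathcal{C}(s))$ directly.
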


\begin{proof}[Proof sketch.]
    The result derives from Theorem~\ref{thm:qfi-matrix} and the requirement that $\mathcal{K}$ and $\mathcal{C}$ must have overlapping support in the jump-index space (of linear dimension $R$) via $\Tr\,(\mathcal{K}_{\alpha\beta}^\top \mathcal{C}(t))$. See Appendix~\ref{app:thm-gen-heisenberg} for the full proof.
\end{proof}

\subparagraph*{Discussion.}
We distinguish scaling regimes by the connectivity pair $(r_{\bar{\mathcal{K}}}, r_{\mathcal{C}})$, which provides necessary structural conditions to attain $\order{TR^2}$ scaling:
\begin{enumerate}[label=(\roman*)]
    \item \emph{Sparse-sparse.} If $r_{\bar{\mathcal{K}}}=\order{1}$ and $r_{\mathcal{C}}=\order{1}$, then
    \begin{equation}
        \bar{F}_Q(T) \leq \order{TR},
    \end{equation}
    corresponding to uncorrelated dissipation probed by uncorrelated, independent sensors.

    \item \emph{Sparse-dense.} If $r_{\bar{\mathcal{K}}}=\order{1}$ but $r_{\mathcal{C}}=\order{R}$, then still
    \begin{equation}
        \bar{F}_Q(T) \leq \order{TR}.
    \end{equation}
    That is, strong probe correlations cannot overcome an effectively uncorrelated stochastic signal (no-go entanglement advantage).

    \item \emph{Dense-sparse.} If $r_{\bar{\mathcal{K}}}=\order{R}$ but $r_{\mathcal{C}}=\order{1}$, then again
    \begin{equation}
        \bar{F}_Q(T) \leq \order{TR}.
    \end{equation}
    That is, strong correlations in the stochastic signal do not yield Heisenberg scaling without correspondingly strong probe correlations.

    \item \emph{Dense-dense.} If $r_{\bar{\mathcal{K}}}=\order{R}$ and $r_{\mathcal{C}}=\order{R}$, then the bound is compatible with
    \begin{equation}
        \bar{F}_Q(T) \leq \order{TR^2}.
    \end{equation}
    Conversely, joint high-connectivity is necessary to even allow quadratic scaling in $R$; see Sections~\ref{sec:qsn-scaling} and~\ref{sec:super-heisenberg} for concrete examples.
\end{enumerate}
The SQL regimes in (i)-(iii) are consistent with scenarios where one estimates many independent rates in parallel, or where classical correlations in the stochastic signal can be leveraged through multiplexed readout. The Heisenberg-in-$R$ regime of (iv) is unique to noise metrology because enhancement requires high connectivity in both the quantum probes as well as the stochastic signal, as noted in Ref.~\cite{Brady2024:NoiseQSN} for a network of quantum sensors.

As of now, the jump-index dimension $R$ is treated abstractly. Physically, $R$ denotes the number of independent dissipation channels---equivalently, the number of linearly independent jump operators [Eqs.~\eqref{eq:dissip-phys}--\eqref{eq:dissip-can}]. Relating $R$ to the system size $N$ requires additional locality assumptions on the canonical jump operators in $\mathscr{D}_{\bm{\theta}}$, which we defer to Sec.~\ref{sec:super-heisenberg}.

\paragraph*{Eigenrate QFI.}
In much of this work, we consider the case where the parameters enter solely through the eigenrates $\gamma_k(\bm{\theta})$ of the dissipation matrix $\Gamma$, i.e., the eigenframe is parameter-independent, $\partial_\alpha V=0$, so that $K_\alpha=(\partial_\alpha V)V^\dagger=0$ in Theorem~\ref{thm:qfi-matrix}. This is a physically important setting, e.g., learning error rates, thermometry, decay/dephasing estimation etc., and it yields an explicit form of the QFI matrix flow.

\begin{proposition}[Eigenrate QCRB]
    \label{prop:rate-qfi}
    Suppose the parameters $\bm{\theta}$ are encoded solely in the eigenrates $\gamma_k$, so that $\partial_\alpha V=0$
    (and hence $K_\alpha=0$ in Theorem~\ref{thm:qfi-matrix}). Then
    \begin{equation}
        \mathcal{K}_{\alpha\beta}
        = (\partial_\beta \Gamma)\,\Gamma^{-1}\,(\partial_\alpha \Gamma).
    \end{equation}
    Let $\Gamma=V\Lambda V^\dagger$ with $\Lambda=\mathrm{diag}(\gamma_1,\dots,\gamma_R)$ and let $J_k$ denote the canonical
    jump operators [Eq.~\eqref{eq:can-jump}]. In this eigenbasis, the QFI matrix flow takes the explicit form
    \begin{equation}
        \label{eq:eigenrate-qfi}
        (\dot{F}_Q)_{\alpha\beta}(t)
        = 4\sum_{k=1}^R \big(\partial_\alpha \sqrt{\gamma_k}\big)\big(\partial_\beta \sqrt{\gamma_k}\big)\,
        \Tr\!\big(J_k^\dagger J_k\,\rho(t)\big).
    \end{equation}
    For a protocol of $\nu$ i.i.d.\ repetitions, each of duration $t$ (total time $T=\nu t$), the multiparameter QCRB implies the following precision bound on the diagonal precision $\tau_\alpha$ defined in Eq.~\eqref{eq:precision} (cf.~\cite{Sekatski2022:Thermometry}),
    \begin{equation}
        \label{eq:eigenrate-precision}
        \tau_\alpha \le T \sum_{k=1}^R \frac{(\partial_\alpha\gamma_k)^2}{\gamma_k}\norm{J_k}^2.
    \end{equation}
\end{proposition}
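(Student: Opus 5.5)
Setting $K_\alpha=0$ in Theorem~\ref{thm:qfi-matrix}, the kernel becomes $\mathcal{K}_{\alpha\beta}=(\partial_\beta\Gamma)\Gamma^{-1}(\partial_\alpha\Gamma)^\dagger$, which equals the stated form because $\partial_\alpha\Gamma$ is Hermitian. Equivalently, I will work from the first line of the kernel, $\mathcal{K}_{\alpha\beta}=4(\partial_\beta\zeta)(\partial_\alpha\zeta)^\dagger$. With $V$ fixed we have $\zeta=V\Lambda^{1/2}$ and $\partial_\alpha\zeta=V\,\partial_\alpha\Lambda^{1/2}$, so
\[
\mathcal{K}_{\alpha\beta}=4V\,(\partial_\beta\Lambda^{1/2})(\partial_\alpha\Lambda^{1/2})V^\dagger .
\]
This form also handles vanishing eigenrates, where $\Gamma^{-1}$ would otherwise have to be read as a pseudoinverse.

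Next, insert this kernel into $\Re\Tr(\mathcal{K}_{\alpha\beta}\mathcal{C})$. The index contraction is $\sum_{ij}(\mathcal{K})_{ij}\mathcal{C}_{ji}$, and it should be matched with the convention of Theorem~\ref{thm:qfi-matrix} (either $\mathcal{K}$ or $\mathcal{K}^\top$ pairs with $\mathcal{C}$). Rotating $\mathcal{C}$ into the eigenframe gives
\[
(V^\dagger\mathcal{C}V)_{kk}=\sum_{ij}\overline{V_{ik}}V_{jk}\Tr(L_i^\dagger L_j\rho)=\Tr(J_k^\dagger J_k\rho).
\]
The trace is then a diagonal sum, and the flow is $4\sum_k(\partial_\alpha\sqrt{\gamma_k})(\partial_\beta\sqrt{\gamma_k})\Tr(J_k^\dagger J_k\rho)$. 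Every term is real, so taking $\Re$ changes nothing. The only care needed is the placement of $V$ versus $\bar V$ in the index convention. This bookkeeping is the one real obstacle, and I would check it against the canonical-frame dissipator in Eq.~\eqref{eq:dissip-can}.

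For the precision bound, use the QCRB in Eq.~\eqref{eq:qcrb} together with $(F_Q^{-1})_{\alpha\alpha}\ge1/(F_Q)_{\alpha\alpha}$. This gives $\tau_\alpha\le\nu(F_Q)_{\alpha\alpha}(t)$, where $F_Q(t)$ is the collisional single-shot QFI, which upper bounds the true QFI by data processing. Integrating the flow, and bounding $\Tr(J_k^\dagger J_k\rho)\le\norm{J_k}^2$, gives $(F_Q)_{\alpha\alpha}(t)\le 4t\sum_k(\partial_\alpha\sqrt{\gamma_k})^2\norm{J_k}^2$. Finally, $4(\partial_\alpha\sqrt{\gamma_k})^2=(\partial_\alpha\gamma_k)^2/\gamma_k$, and with $T=\nu t$ this yields Eq.~\eqref{eq:eigenrate-precision}.
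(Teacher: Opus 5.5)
Your proposal is correct and follows the paper's route. You specialize the kernel of Theorem~\ref{thm:qfi-matrix} to $K_\alpha=0$ and rotate the correlator into the eigenframe, where $(V^\dagger\mathcal{C}V)_{kk}=\Tr(J_k^\dagger J_k\rho)$; you then combine the QCRB, $(F_Q^{-1})_{\alpha\alpha}\ge 1/(F_Q)_{\alpha\alpha}$, data processing, and $\Tr(J_k^\dagger J_k\rho)\le\norm{J_k}^2$. The paper compresses this into two lines, and your index convention, $\Tr(\mathcal{K}\mathcal{C})$ with $J_k=\sum_i V_{ik}L_i$, matches its appendix relation $\mathcal{C}'=V^\dagger\mathcal{C}V$. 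Your identity $4(\partial_\alpha\sqrt{\gamma_k})^2=(\partial_\alpha\gamma_k)^2/\gamma_k$ is the correct one; the paper's proof text drops the factor of $4$, although its final bound agrees with yours.
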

\begin{proof}
    The explicit form~\eqref{eq:eigenrate-qfi} follows from Theorem~\ref{thm:qfi-matrix}, while the precision bound~\eqref{eq:eigenrate-precision} follows from the facts $(\partial_\alpha\sqrt{\gamma_k})^2 = (\partial_\alpha\gamma_k)^2/\gamma_k$ and $\Tr\,(J_k^\dagger J_k\rho(s)) \leq \norm{J_k}^2$.
\end{proof}

Proposition~\ref{prop:rate-qfi} extends the single-parameter eigenrate bounds of Refs.~\cite{Sekatski2022:Thermometry, Gardner2025:FastReset} to the multiparameter setting and applies broadly to finite- and infinite-dimensional systems (spins and bosons).\footnote{Though for bosons, care must be taken with the operator-norm bound.} Moreover, for Hermitian jump operators, $J_k^\dagger=J_k$, Proposition~\ref{prop:rate-qfi} can actually be tightened by replacing $\Tr\,(J_k^\dagger J_k\,\rho(t))$ with the variance $\Var_{\rho(t)}(J_k)$; cf. Proposition~\ref{prop:connected-qgt} in Appendix~\ref{app:thm-collision-qfi}.

As of now, the physical setting and the canonical jump operators $J_k$ are left unspecified; they may be strictly local, collective (sums of local operators), or many-body operators. The structure and form inevitably determines the precision scaling with $R$. We discuss structural conditions, as well as concrete examples, in Sec.~\ref{sec:examples}. 

%==========
\subsection{Operational Scenarios}

To apply the collisional QCRB inferred from the QFI-matrix relations, we specify the operating scenarios relevant to experiments conducted over a long total sensing time $T=\nu t$, comprised of $\nu$ i.i.d.\ repetitions of a single-shot protocol of duration $t$. The scenarios below correspond to different experimental constraints and the resulting optimization.

\paragraph*{Scenario 1: Unlimited shots.}
Here the total sensing time $T$ is fixed and the number of repetitions $\nu=T/t$ can be taken large
by choosing small $t$ (subject to experimental overheads discussed below). The relevant figure of merit is the (average) QFI flow, $F_Q(t)/t$, so the time-optimal performance obeys
\begin{equation}
    \label{eq:F-unlim-shots}
    F_Q^{\rm opt}(T)
    = T\,\max_{t}\frac{F_Q(t)}{t}.
\end{equation}

\paragraph*{Scenario 2: Shot limited.}
If the number of repetitions $\nu$ is constrained, e.g., by slow state preparation or long measurement overhead, one instead optimizes the per-shot sensitivity,
\begin{equation}
    \label{eq:F-shot-lim}
    F_Q^{\rm opt}(\nu)
    = \nu\,\max_{t}F_Q(t).
\end{equation}
A subtlety is that the optimizing time $t$ can depend on the unknown parameters themselves, e.g., the rates $\gamma_k$, so this scenario is naturally addressed by adaptive protocols.

\paragraph*{Scenario 3: Short time / weak signal.}
In many practical settings, the single-shot duration is constrained to be small, $t=\delta t$, so that only the initial-time behavior is accessible; mathematically, $\delta t \ll 1/\big(\sum_k \gamma_k\norm{J_k}^2\big)$. In this regime,
\begin{equation}
    \label{eq:F-short-time}
    F_Q(T) \approx T\,\dot{F}_Q(0^+),
    \quad
    \dot{F}_Q(0^+)\triangleq \lim_{\delta t\to 0^+}\frac{F_Q(\delta t)}{\delta t},
\end{equation}
for $\nu$ repetitions ($T=\nu\delta t$). Thus, the initial QFI growth rate controls the attainable precision.

In this work, we focus on Scenarios~1 and~3, for which the central objective is to bound and attain a large Fisher-information rate, $F_Q(t)/t$. In the Markovian setting, the optimal operating point in both scenarios proves to be the continuous short-time limit, $\delta t \to 0^+$. An RPM strategy realizes this by rapidly repeating short-time experiments and tracking many quantum jumps in parallel, thereby attaining the eigenrate QCRB of Proposition~\ref{prop:rate-qfi}.

We note that Scenario~2 can exhibit qualitatively different behavior, e.g., exponential entanglement advantage~\cite{Wang2024:ExpSensing, Prabhu2026:ExpSensingQSA}, which we do not pursue here.

%==========
\subsection{Quickly Tracking Quantum Jumps}
\label{sec:rpm}

In the eigenrate-estimation setting (viz., Proposition~\ref{prop:rate-qfi}), the parameter dependence enters solely through the canonical rates $\gamma_k(\bm{\theta})$, so that the canonical jump operators $J_k$ are known and only the jump statistics depend on $\bm{\theta}$. The estimation problem in this regime is ``quasi-classical''~\cite{Matsumoto2002}: if we can resolve which jump channel occurred, then the measurement record becomes a multi-Poisson model whose Fisher information matches the collisional eigenrate QFI of Proposition~\ref{prop:rate-qfi} [Eq.~\eqref{eq:eigenrate-qfi}]. 

In this Section, we prove that an RPM protocol, which rapidly tracks many quantum jumps in parallel, reduces the problem to a multi-Poisson counting model in the sense above and saturates the parameter-averaged QCRB (Theorem~\ref{thm:rpm-qcrb}). This extends single-parameter, fast-reset strategies (cf.~\cite{Correa2015:OptThermometry, Sekatski2022:Thermometry, Gardner2025:FastReset, Wang2026:PowerLaw}) to the multiparameter setting and highlights the need to resolve \emph{each} jump channel, rather than a single or aggregate jump record which often suffices for single-parameter estimation. 

\paragraph*{Distinguishable jump basis.}
Fix a pure probe state $\ket{\psi}$ and define the normalized jump vectors
\begin{equation}
    \label{eq:jkfrak}
    \ket{\mathfrak{j}_k}\ \triangleq\ \frac{1}{\sqrt{\mu_k}}J_k\ket{\psi},\quad
    \mu_k\ \triangleq\ \expval{J_k^\dagger J_k}{\psi}.
\end{equation}
We say that $\ket{\psi}$ admits a \emph{distinguishable jump basis} if the vectors
$\{\ket{\mathfrak{j}_k}\}_{k=1}^R$ are mutually orthonormal and orthogonal to $\ket{\psi}$, i.e.
\begin{equation}
    \label{eq:jump-orth}
    \ip{\mathfrak{j}_k}{\mathfrak{j}_\ell}=\delta_{k\ell},
    \qquad
    \ip{\psi}{\mathfrak{j}_k}=0\ \ \forall\ k.
\end{equation}
(Note that $\ip{\psi}{\mathfrak{j}_k}=\expval{J_k}{\psi}=0$.) In this case, the projective POVM
\begin{equation}
    \label{eq:jump-povm}
    \bm{\mathcal{J}}\ \triangleq\ \Big\{\dyad{\psi},\ \dyad{\mathfrak{j}_1}, \ \dots, \ \dyad{\mathfrak{j}_R},\ \Pi_\perp\Big\},
\end{equation}
where $\Pi_\perp \triangleq I-\dyad{\psi}-\sum_{k=1}^R \dyad{\mathfrak{j}_k}$ represents projection outside the jump space, resolves whether a jump occurred (to leading order $\delta t$) and in which channel $k$. 

We emphasize that if the conditions in~\eqref{eq:jump-orth} do not hold on the system alone, they can often be enforced by enlarging the Hilbert space via noiseless ancillae.

\paragraph*{Rapid prepare-and-measure.}
The RPM protocol prepares $\psi$, evolves for a short time $\delta t$, measures using the POVM~\eqref{eq:jump-povm}, and repeats for $\nu$ i.i.d.\ shots (total time $T=\nu\delta t$). Here we show that, for fixed $T$ and in the limit $\delta t\to 0^+$, the resulting RPM Fisher information matches the collisional eigenrate QFI [Eq.~\eqref{eq:eigenrate-qfi} of Proposition~\ref{prop:rate-qfi}].

Assume eigenrate-only encoding, so that $J_k$ are parameter-independent and all parameter dependence enters through $\gamma_k(\bm{\theta})$. Within a single time-bin $\delta t$, the Lindblad evolution (without control, $H_c=0$) gives
\begin{multline}
\label{eq:short-time-rho}
    \rho(\bm{\theta};\delta t)
    = 
    \psi
     + \delta t\sum_{k=1}^R \gamma_k(\bm{\theta})
    \left(J_k\psi J_k^\dagger-\tfrac12\acomm{J_k^\dagger J_k}{\psi}\right) \\
    + \order{\delta t^2},
\end{multline}
where $\psi\triangleq\dyad{\psi}$. Using~\eqref{eq:jump-povm} and~\eqref{eq:jump-orth}, the single-bin detection probabilities are thus
\begin{align}
    p_k(\bm{\theta};\delta t)
     &= \delta t\,\gamma_k(\bm{\theta})\,\mu_k + \order{\delta t^2}, \label{eq:pk-shot} \\
    p_0(\bm{\theta};\delta t) &= 1-\sum_k p_k(\bm{\theta};\delta t)+\order{\delta t^2}, \label{eq:no-jump-p}
\end{align}
where $p_k$ is the $k$th jump probability and $p_0$ collects the no-jump outcomes, including $\ket{\psi}$ and $\Pi_\perp$ (cf.~\cite{Radaelli2026:QuJumpUnravel}).

\begin{restatable}[RPM Fisher Matrix]{lem}{}
    \label{lem:rpm-fisher}
    Assume eigenrate-only encoding and suppose $\ket{\psi}$ admits a distinguishable jump basis~\eqref{eq:jump-orth}.
    Consider the RPM protocol with bin size $\delta t$ and $\nu=T/\delta t$ i.i.d.\ shots (total time $T$). Then, in the limit
    $\delta t\to 0^+$ with $T$ fixed,
    \begin{equation}
        \label{eq:rpm-fisher}
        (F_{\rm rpm})_{\alpha\beta}(T)
        =
        4T\, \sum_{k=1}^R
        (\partial_\alpha\sqrt{\gamma_k})(\partial_\beta\sqrt{\gamma_k})\,\expval{J_k^\dagger J_k}{\psi}.
    \end{equation}
    Equivalently,
    \begin{equation}\label{eq:Frpm-limit}
        (F_{\rm rpm})_{\alpha\beta}(T)
        =
        T\,(\dot F_Q)_{\alpha\beta}(0^+),
    \end{equation}
    where $(\dot F_Q)_{\alpha\beta}(0^+)$ is the eigenrate QFI flow of Proposition~\ref{prop:rate-qfi} [Eq.~\eqref{eq:eigenrate-qfi}]
    evaluated at $\rho(0)=\psi$.
\end{restatable}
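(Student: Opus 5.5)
The plan is to compute the classical Fisher information of a single shot directly from the outcome probabilities \eqref{eq:pk-shot}--\eqref{eq:no-jump-p}, then use additivity over the $\nu=T/\delta t$ i.i.d.\ shots, and finally take $\delta t\to0^+$. Because the outcomes are independent shots of the same POVM $\bm{\mathcal{J}}$, the total Fisher matrix is $\nu$ times the single-shot matrix. It therefore suffices to show that the single-shot Fisher matrix equals $\delta t\cdot 4\sum_k(\partial_\alpha\sqrt{\gamma_k})(\partial_\beta\sqrt{\gamma_k})\mu_k+o(\delta t)$.

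First I will justify the probabilities more carefully. By \eqref{eq:short-time-rho} and the orthogonality conditions \eqref{eq:jump-orth}, the outcome $\dyad{\mathfrak{j}_k}$ gets the following contributions:
\begin{itemize}
\item from $\psi$, nothing, since $\ip{\mathfrak{j}_k}{\psi}=0$;
\item from the anticommutator terms, nothing, since each contains a factor $\ket{\psi}$ or $\bra{\psi}$ that is killed by this orthogonality;
\item from the sandwich terms $J_\ell\psi J_\ell^\dagger$, the amount $\gamma_\ell\,\abs{\ip{\mathfrak{j}_k}{J_\ell\psi}}^2=\gamma_\ell\mu_\ell\delta_{k\ell}$.
\end{itemize}
This gives $p_k=\delta t\,\gamma_k\mu_k+\order{\delta t^2}$. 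The $\Pi_\perp$ outcome has probability $\order{\delta t^2}$, and so does the correction to it, because the first-order term of $\rho$ projected onto $\Pi_\perp$ vanishes. I will lump $\Pi_\perp$ with $\dyad{\psi}$ into a single no-jump outcome; coarse-graining can only decrease Fisher information, so I will also check afterwards that the lumping costs nothing at leading order. The $\theta$-derivatives of the $\order{\delta t^2}$ remainders are also $\order{\delta t^2}$, since the expansion is analytic in $\delta t$ with coefficients smooth in $\bm\theta$.

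Next I will evaluate the Fisher matrix using the square-root form \eqref{eq:fisher-metric}.
\begin{itemize}
\item For the jump outcomes, $\sqrt{p_k}=\sqrt{\delta t\,\mu_k}\sqrt{\gamma_k}\,(1+\order{\delta t})$, so each contributes $4\,\delta t\,\mu_k(\partial_\alpha\sqrt{\gamma_k})(\partial_\beta\sqrt{\gamma_k})+\order{\delta t^2}$.
\item For the no-jump outcome, $\partial_\alpha p_0=-\delta t\sum_k\mu_k\partial_\alpha\gamma_k+\order{\delta t^2}$ while $p_0=1-\order{\delta t}$, so its contribution is $\order{\delta t^2}$.
\item The discarded $\Pi_\perp$ piece would contribute at most $\order{\delta t^2}/\order{\delta t^2}$-type terms.
\end{itemize}
Multiplying by $\nu=T/\delta t$ and sending $\delta t\to0$ yields \eqref{eq:rpm-fisher}. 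Comparing with \eqref{eq:eigenrate-qfi} at $\rho(0)=\psi$ gives \eqref{eq:Frpm-limit}.

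The main obstacle is controlling the remainder terms uniformly, especially any outcome whose leading probability vanishes. Examples are a channel with $\mu_k=0$ or $\gamma_k=0$, and the $\Pi_\perp$ outcome, where the ratio $(\partial p)^2/p$ can be of order $\delta t^2/\delta t^2=\order{1}$ per shot, which becomes $\order{1/\delta t}$ after multiplying by $\nu$. I would handle this in two steps. First, assume $\gamma_k>0$ and $\mu_k>0$, so the leading terms dominate. Second, for $\Pi_\perp$, note that merging it into the no-jump outcome gives a lower bound. For the matching upper bound, note that the full RPM Fisher matrix is at most $\nu F_Q[\rho(\delta t)]\le \nu\int_0^{\delta t}\dot F_Q = T\dot F_Q(0^+)+o(1)$, by the QCRB together with Theorem~\ref{thm:qfi-matrix} and Proposition~\ref{prop:rate-qfi}. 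Sandwiching the two bounds gives equality in the limit.
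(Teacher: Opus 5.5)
Your argument is correct, but it is organized differently from the paper's.

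\textbf{The two routes.} The paper passes to the limiting statistical model first. It asserts that, as $\delta t\to0^+$ with $T$ fixed, the channel counts $N_k$ converge jointly to independent Poisson variables with means $T\gamma_k\mu_k$. It then reads off the Fisher information of that multi-Poisson model. You instead compute the exact Fisher matrix of the finite-$\delta t$ protocol, namely $\nu$ times the single-shot multinomial Fisher matrix of $\bm{\mathcal{J}}$, and take the limit afterwards.

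\textbf{What each buys.} Your order of operations is the more rigorous one. Convergence in distribution of the counts does not by itself imply convergence of the Fisher information, and the paper leaves that interchange implicit. Your sandwich $F_{\rm coarse}\le F_{\rm rpm}\le \nu F_Q[\rho(\delta t)]\le (T/\delta t)\int_0^{\delta t}\dot F_Q(s)\dd{s}$ is also valid. Here $F_{\rm coarse}$ is the Fisher matrix with $\Pi_\perp$ merged into the no-jump outcome. The upper bound uses Theorem~\ref{thm:qfi-matrix} and Proposition~\ref{prop:rate-qfi}, which are proved independently, so there is no circularity. The sandwich additionally shows that no refinement of the measurement on $\psi$ can beat the RPM rate, which anticipates Theorem~\ref{thm:rpm-qcrb}. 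The paper's route, in turn, buys the clean multi-Poisson picture, which it then uses to invoke asymptotic efficiency of the MLE.

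\textbf{One correction.} Your concern about the $\Pi_\perp$ outcome rests on an arithmetic slip. By the orthogonality conditions~\eqref{eq:jump-orth}, $\Pi_\perp$ annihilates both $\ket{\psi}$ and every $J_k\ket{\psi}$. Hence $\Tr(\Pi_\perp\mathscr{D}_{\bm{\theta}}(\psi))=0$ for every $\bm{\theta}$, so $\partial_\alpha p_\perp$ is $\order{\delta t^2}$, not only $p_\perp$. The ratio $(\partial_\alpha p_\perp)(\partial_\beta p_\perp)/p_\perp$ therefore has an $\order{\delta t^4}$ numerator over an $\order{\delta t^2}$ denominator. It is $\order{\delta t^2}$ per shot, and only $\order{\delta t}$ after multiplying by $\nu$. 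So the direct computation already closes the proof without lumping $\Pi_\perp$, and the sandwich is a safety net rather than a necessity.
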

We provide a short intuitive proof below, based on multi-Poisson counting statistics of quantum jumps. % Also see Tsang's "Poisson Quantum Information"

\begin{proof}
    Use the single-bin jump outcomes from Eqs.~\eqref{eq:pk-shot}--\eqref{eq:no-jump-p}. Over $\nu=T/\delta t$ i.i.d.\ bins, the jump counts $N_k$ converge jointly (as $\delta t\to 0^+$ with $T$ fixed) to
    independent Poisson random variables with means
    \begin{equation}
        \lambda_k(\bm{\theta})=T\,\gamma_k(\bm{\theta})\,\mu_k.
    \end{equation}
    For independent Poisson counts, the classical Fisher information is additive over $k$ and equals
    \begin{align}
        (F_{\rm rpm})_{\alpha\beta}(T)
        &=
        \sum_{k=1}^R \frac{(\partial_\alpha \lambda_k)(\partial_\beta \lambda_k)}{\lambda_k} \\
        &=
        T\sum_{k=1}^R \frac{(\partial_\alpha\gamma_k)(\partial_\beta\gamma_k)}{\gamma_k}\,\mu_k \\
        &=
        4T\sum_{k=1}^R (\partial_\alpha\sqrt{\gamma_k})(\partial_\beta\sqrt{\gamma_k})\,\mu_k,
    \end{align}
    which proves Eq.~\eqref{eq:rpm-fisher}. Finally, Proposition~\ref{prop:rate-qfi} implies
    $(\dot F_Q)_{\alpha\beta}(0^+)=4\sum_k(\partial_\alpha\sqrt{\gamma_k})(\partial_\beta\sqrt{\gamma_k})\,\mu_k$,
    which proves Eq.~\eqref{eq:Frpm-limit}.
\end{proof}

Lemma~\ref{lem:rpm-fisher} demonstrates that, as $\delta t\to 0^+$, the RPM protocol reduces eigenrate estimation to a multi-Poisson counting model if $\ket{\psi}$ admits a distinguishable jump basis. Hence standard efficient estimators, such as the MLE, attain the classical Cram\'er-Rao bound asymptotically as $T\to\infty$~\cite{Kay1993, Helstrom1976}.

We now connect RPM to the physical setting involving a large sensing time $T$ and unlimited shots (\emph{Scenario 1}; see Eq.~\eqref{eq:F-unlim-shots}). As we show below, the optimal average (per parameter) QFI rate, $\max_{t}\sup_{\psi} \bar{F}_Q(t;\psi)/t$, coincides with the corresponding optimal RPM rate, and is attained by a RPM protocol in the limit $\delta t\to 0^+$. We point to Appendix~\ref{app:thm-rpm-qcrb} for a formal proof.

\begin{restatable}[RPM Achieves the Eigenrate QCRB]{thm}{RPM}
    \label{thm:rpm-qcrb}
    Assume eigenrate-only encoding (Proposition~\ref{prop:rate-qfi}) and define the optimal average (per parameter) QFI rate
    \begin{equation}
        \mathcal{R}^\star \triangleq \max_{t}\ \sup_{\psi}\ \frac{\bar{F}_Q(t;\psi)}{t},
    \end{equation}
    where $\bar{F}_Q(t)\triangleq\Tr F_Q(t)/d$ and $F_Q$ is the collisional eigenrate QFI matrix~\eqref{eq:eigenrate-qfi}.
    Then $\mathcal{R}^\star=\lambda_{\max}(A)$, with positive operator $A=\sum_{k=1}^R c_k J_k^\dagger J_k$ and $c_k = \sum_{\alpha=1}^d 4(\partial_\alpha\sqrt{\gamma_k})^2/d$. If a maximizing state $\ket{\psi^\star}$ admits a distinguishable jump POVM~\eqref{eq:jump-povm}, then RPM with $\psi^\star$ achieves $\mathcal{R}^\star$ as $\delta t\to 0^+$:
    \begin{equation}
        \lim_{\delta t\to 0^+}\frac{\bar{F}_{\rm rpm}(\delta t;\psi^\star)}{\delta t}=\mathcal{R}^\star.
    \end{equation}
    Consequently, in the unlimited-shots scenario~\eqref{eq:F-unlim-shots}, RPM saturates the optimal eigenrate QCRB per parameter.
\end{restatable}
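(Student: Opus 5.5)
The plan is to prove the two inequalities $\mathcal{R}^\star\le\lambda_{\max}(A)$ and $\mathcal{R}^\star\ge\lambda_{\max}(A)$. The lower bound will follow from the RPM construction together with Lemma~\ref{lem:rpm-fisher}.

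\emph{Upper bound.} Take the trace over parameters in Eq.~\eqref{eq:eigenrate-qfi} of Proposition~\ref{prop:rate-qfi} and divide by $d$. This gives the averaged flow
\begin{equation}
\dot{\bar F}_Q(s)=\sum_{k=1}^R c_k \Tr\!\big(J_k^\dagger J_k\rho(s)\big)=\Tr\!\big(A\rho(s)\big),
\end{equation}
with $c_k=\frac{4}{d}\sum_\alpha(\partial_\alpha\sqrt{\gamma_k})^2\ge 0$. Each $J_k^\dagger J_k$ is positive, so $A\ge 0$. For any density operator $\rho(s)$, including evolved states, states with ancillae (on which $A$ acts as $A\otimes I$, leaving the spectrum unchanged), and states under arbitrary control $H_c$, we have $\Tr(A\rho(s))\le\lambda_{\max}(A)$. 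Integrating over $[0,t]$ and using $F_Q(0)=0$ gives $\bar F_Q(t;\psi)\le t\,\lambda_{\max}(A)$ for every $t$ and every $\psi$. Hence $\mathcal{R}^\star\le\lambda_{\max}(A)$.

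\emph{Achievability of the rate.} Let $\ket{\psi^\star}$ be a top eigenvector of $A$. Then $\dot{\bar F}_Q(0^+)=\expval{A}{\psi^\star}=\lambda_{\max}(A)$. The flow $s\mapsto\Tr(A\rho(s))$ is continuous, because the Lindblad evolution is norm-continuous for bounded $J_k$. Therefore $\bar F_Q(t;\psi^\star)/t\to\lambda_{\max}(A)$ as $t\to 0^+$, so the supremum is approached in this limit. Strictly speaking, the ``$\max_t$'' should be read as a supremum attained as $t\to0^+$. This shows $\mathcal{R}^\star=\lambda_{\max}(A)$.

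\emph{RPM saturation.} Now assume $\ket{\psi^\star}$ admits a distinguishable jump basis. Lemma~\ref{lem:rpm-fisher} gives $F_{\rm rpm}(T)=T\,\dot F_Q(0^+)$ evaluated at $\psi^\star$. Averaging over the diagonal entries yields $\bar F_{\rm rpm}/T=\expval{A}{\psi^\star}=\lambda_{\max}(A)=\mathcal{R}^\star$. Equivalently, the per-bin Fisher information divided by $\delta t$ converges to the same value. Explicitly, the single-bin multinomial Fisher information from Eqs.~\eqref{eq:pk-shot}--\eqref{eq:no-jump-p} is
\begin{equation}
\sum_k\frac{(\delta t\,\mu_k\partial_\alpha\gamma_k)^2}{\delta t\,\gamma_k\mu_k}+\order{\delta t^2}.
\end{equation}
The $p_0$ term contributes $\order{\delta t^2}$ because $\partial_\alpha p_0=\order{\delta t}$ and $p_0\to1$. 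In the unlimited-shots scenario~\eqref{eq:F-unlim-shots}, this gives the stated saturation.

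The main obstacle is controlling the $\order{\delta t^2}$ remainders uniformly. One must check that the second-order terms in the jump probabilities do not spoil the ratio $(\partial p_k)^2/p_k$, which requires $\gamma_k\mu_k>0$. Channels with $\gamma_k\mu_k=0$ contribute zero to both sides, but they need separate handling. A secondary subtlety is the interchange of the limit with the supremum over $\psi$. This is avoided because the upper bound $t\lambda_{\max}(A)$ holds uniformly in $t$ and $\psi$.
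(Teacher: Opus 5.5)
Your proposal is correct and follows the paper's proof essentially step by step. You average the eigenrate flow to get $\dot{\bar F}_Q(s)=\Tr(A\rho(s))\le\lambda_{\max}(A)$, recover equality in the $t\to0^+$ limit at a top eigenvector of $A$, and invoke Lemma~\ref{lem:rpm-fisher} for RPM saturation. Your extra remarks are sound refinements that the paper leaves implicit: continuity of the flow, reading $\max_t$ as a supremum, the explicit per-bin multinomial expansion, and the invariance of $\lambda_{\max}$ under adding ancillae.
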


A notable feature of the RPM protocol is that optimality does not require coherent control during sensing. Rather, control enters only through our ability to prepare an optimal probe state, measure in the jump basis, and rapidly repeat the protocol.

Finally, we note that Theorem~\ref{thm:rpm-qcrb} straightforwardly extends to any quadratic form $F_Q[u]=\sum_{\alpha,\beta} u_\alpha u_\beta (F_Q)_{\alpha\beta}$, not just the per-parameter average.

%==========
%==========
\section{Examples}
\label{sec:examples}

In this Section, we illustrate how the QFI bounds apply to an array of tasks in quantum noise metrology. Across all settings, the Markovian assumption enforces SQL scaling in the total sensing time $T$, while nontrivial enhancements arise from (i) how many dissipative channels are available \emph{and} identifiable, and (ii) how densely the parameter dependence and probe correlations are spread across those channels. We focus on four representative use-cases: Sec.~\ref{sec:qsn-scaling} analyzes Heisenberg-in-$N$ scaling in networked quantum sensors; Sec.~\ref{sec:super-heisenberg} analyzes super-Heisenberg scaling when collective $K$-body dissipation channels are present; Sec.~\ref{sec:pauli-learning} analyzes learning Pauli noise rates as a high-dimensional multiparameter sensing problem; and Sec.~\ref{sec:subdiff} analyzes subdiffraction incoherent imaging as a joint eigenrate/eigenframe estimation problem.

%==========
\subsection{Networked Quantum Noise Metrology}

\label{sec:qsn-scaling}

A quantum sensor network consists of $N$ spatially distributed sensors, each coupled locally to a ``sample'' (e.g., condensed matter system) or classical field~\cite{Proctor2018:qsn, Zhuang2018:dqs, Eldredge2018:SecureQSNs, Ge2018:qsnLinNetworks,Zhang2021:DQSrvw}. For stochastic signals, the local jump operators $\{L_i\}_{i=1}^R$ represent the \emph{single-sensor} (single-body) operators and $\Gamma(\bm{\theta})$ encodes spatial correlations of the stochastic signal across the sensor array. In this case, the number of dissipation channels (determined by the rank of $\Gamma$) scales as $R= q_1 N$, where $q_1$ is the number of distinct single-body operators per sensor (see Sec.~\ref{sec:super-heisenberg}); we take $q_1=1$ below for clarity. Theorem~\ref{thm:gen-Heisenberg} then suggests that $N^2$ scaling is permissible, in principle, provided the dissipation kernel and quantum correlator matrix of the network are sufficiently dense.

\begin{corollary}[Quantum Sensor Network Scaling]
\label{cor:qsn-scaling}
    Consider a quantum sensor network of $N$ sensors with single-body jump operators $\{L_i\}_{i=1}^N$ (so $R=N$). Assume that, for the parameter family of interest, the dissipation kernel is dense across channels (row-connectivity $r_{\bar{\mathcal{K}}}=\Theta(N)$) and that the probe state realizes dense jump correlations (row-connectivity $r_{\mathcal{C}}=\Theta(N)$). Then for total sensing time $T=\nu t$, there exists a constant $c$ independent of $N$ and $T$, such that the average precision (per parameter) satisfies (cf.~\cite{Brady2024:NoiseQSN})
    \begin{equation}
    \label{eq:FQ-scaling-K2kappa}
      \bar{\tau}\le c\,TN^2.
    \end{equation}
\end{corollary}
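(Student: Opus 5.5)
The plan is to obtain the corollary as a direct specialization of Theorem~\ref{thm:gen-Heisenberg} with $R=N$, then pass from the averaged QFI to the averaged precision through the quantum Cram\'er--Rao bound~\eqref{eq:qcrb}.

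First, I will define the averaged precision as $\bar{\tau}\triangleq\frac{1}{d}\sum_\alpha\tau_\alpha$ and relate it to the QFI matrix. For the full $\nu$-shot experiment, Eq.~\eqref{eq:qcrb} together with the inequality $(F_Q^{-1})_{\alpha\alpha}\ge 1/(F_Q)_{\alpha\alpha}$ gives
\begin{equation}
    \tau_\alpha = \frac{1}{\bm{\Sigma}_{\alpha\alpha}} \le (F_Q(T))_{\alpha\alpha},
\end{equation}
where $F_Q(T)=\nu F_Q(t)$ is the total QFI, additive over i.i.d.\ repetitions. Averaging over $\alpha$ then yields $\bar{\tau}\le \bar{F}_Q(T)$. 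In addition, the data-processing inequality for the collisional purification ensures that this is a valid upper bound for the true reduced dynamics.

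Second, I will invoke Eq.~\eqref{eq:avg-qfi-bound} with $R=N$. This gives
\begin{equation}
    \bar{F}_Q(T)\le T\,\norm{\bar{\mathcal{K}}}_{\max}\,\norm{L}_{\max}^2\,N\,\min(r_{\bar{\mathcal{K}}},r_{\mathcal{C}}).
\end{equation}
The hypotheses give $r_{\bar{\mathcal{K}}},r_{\mathcal{C}}=\Theta(N)$, and trivially $\min(r_{\bar{\mathcal{K}}},r_{\mathcal{C}})\le N$. Hence the bound is $T N^2 \norm{\bar{\mathcal{K}}}_{\max}\norm{L}_{\max}^2$, which lets me set $c\triangleq\norm{\bar{\mathcal{K}}}_{\max}\norm{L}_{\max}^2$. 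The density assumptions are not needed for the upper bound itself. They only indicate that the $N^2$ scaling is not excluded, in line with the regime~(iv) discussion.

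The main obstacle is justifying that $c$ is independent of $N$ and $T$. For $\norm{L}_{\max}$ this is immediate: the $L_i$ are single-sensor operators, identical copies acting on different sites, so $\max_i\norm{L_i}$ does not depend on $N$. For $\norm{\bar{\mathcal{K}}}_{\max}$, I would argue that each entry $\mathcal{K}_{ab}=4\sum_k(\partial_\beta\zeta_{ak})\overline{(\partial_\alpha\zeta_{bk})}$ is a pairwise quantity of the signal's spatial correlation function between sensors $a$ and $b$. This makes it an intensive, $N$-independent quantity when the correlation strength per pair is held fixed, which I take as the natural physical normalization implicit in the corollary. If instead the entries scale with $N$, for example when $\Gamma$ is normalized by total power, that scaling simply gets absorbed into $c$. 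In either case, I would state the $N$-independence of $\norm{\bar{\mathcal{K}}}_{\max}$ explicitly as the operative assumption.
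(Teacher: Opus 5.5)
Your proposal is correct and follows the paper's own argument: set $R=N$ in Theorem~\ref{thm:gen-Heisenberg}, use $\min(r_{\bar{\mathcal K}},r_{\mathcal C})\le N$, and absorb $\norm{\bar{\mathcal K}}_{\max}\norm{L}_{\max}^2$ into $c$. Your QCRB step $\tau_\alpha\le (F_Q)_{\alpha\alpha}$, together with data processing, only makes explicit what the paper's one-line sketch leaves implicit.

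Your remark that the $N$-independence of $\norm{\bar{\mathcal K}}_{\max}$ is an unstated assumption is fair, since the paper simply absorbs the kernel prefactors into $c$. Note, however, that kernel entries growing with $N$ could not be ``absorbed into $c$'' without breaking the claimed $N$-independence; only bounded or decreasing entries can, so the assumption you state at the end is exactly what is required.
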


\begin{proof}[Proof sketch.]
    For single-body operators, $R = N$. Under connectivity assumptions, Theorem~\ref{thm:gen-Heisenberg} indicates that $\bar{\tau}\le c\,TR^2=c\,TN^2$ with a constant $c$ absorbing operator-norm and kernel prefactors.
\end{proof}

The $N^2$ scaling originates from the joint effect of (i) spatial correlations in the stochastic signal, encoded in $\Gamma(\bm{\theta})$, and (ii) quantum correlations between sensors, encoded in the correlator $\mathcal{C}$, as emphasized in Ref.~\cite{Brady2024:NoiseQSN}. Networked quantum noise metrology arguably provides one of the most experimentally relevant settings for quantum enhancements, e.g., via superdecoherence~\cite{Monz2011:Superdecoherence, Shammah2018:CollectiveDecoherence}, when interrogating many-body noise correlations~\cite{Rovny2025:nvEntResource,Zhou2025:nvEntResource, Wang2026:PowerLaw}, and distributed sensing of stochastic forces and fields (including waveform estimation~\cite{Tsang2011:Waveform, Ng2016:WaveformSpectrum, Gardner2025:Waveform}) with bosonic degrees of freedom~\cite{Brady2022:cavityDMSearch, Brady2023:OmechArrayDMsearch, Xia2023:OmechDQS, Guo2020:dqsCVnetwork, Gilmore2021:IonEFieldQSN}.

Corollary~\ref{cor:qsn-scaling} provides an upper bound but does not, by itself, guarantee achievability. In the important class of eigenrate-estimation problems, however, Theorem~\ref{thm:rpm-qcrb} provides a sufficient route to saturation whenever the canonical jump operators admit a probe with a distinguishable jump basis. We illustrate this with a simple example involving collective spin dissipation.

\paragraph*{Example: Collective spin dissipation.}
Consider collective spin dissipation along each direction $(x,y,z)$, 
\begin{equation}
    \mathscr{D}_{\bm{\theta}}(\rho) = \sum_{\alpha\in\{x,y,z\}} \gamma_\alpha \left(S_\alpha \rho S_\alpha - \tfrac{1}{2} \acomm{S_\alpha^2}{\rho}\right),
\end{equation}
with canonical jump operators
\begin{equation}
    \label{eq:global-spin}
    S_\alpha \triangleq \sum_{i=1}^N \sigma_i^{(\alpha)},
    \qquad \alpha\in\{x,y,z\}.
\end{equation}
We now argue that we can estimate the rates $\bm{\theta}=(\gamma_x,\gamma_y,\gamma_z)$ in parallel, with precision scaling $\Theta(N^2)$ for each using (isotropic, quasi-classical) 3D quantum compass solutions~\cite{Vasilyev2024:QuCompass}. 

The (isotropic) 3D quantum compass $\ket{\psi_\circledast}$ has uniform collective-spin fluctuations and vanishing means, $\expval{S_\alpha}{\psi_\circledast}=0$~\cite{Vasilyev2024:QuCompass}. Consequently, the off-diagonal symmetrized correlators vanish, $\expval{\{S_\alpha,S_\beta\}}{\psi_\circledast}=0$ for $\alpha\neq\beta$, which together with $\comm{S_\alpha}{S_\beta}= 2i\epsilon_{\alpha\beta\gamma}S_\gamma$ implies
\begin{equation}
    \expval{S_\alpha S_\beta}{\psi_\circledast}=0
    \quad (\alpha\neq\beta).
\end{equation}
Thus the states $\{\ket{\psi_\circledast},S_x\ket{\psi_\circledast},S_y\ket{\psi_\circledast},S_z\ket{\psi_\circledast}\}$ are mutually orthogonal, i.e., $\ket{\psi_\circledast}$ admits a distinguishable jump basis in the sense of Eqs.~\eqref{eq:jump-orth} and~\eqref{eq:jump-povm}. Moreover, the quantum compass has collective spin fluctuations
\begin{equation}
    \expval{S_\alpha^2}{\psi_\circledast}=\Theta(N^2),
\end{equation}
so an RPM protocol with a 3D quantum compass achieves $\Theta(TN^2)$ precision scaling for each $\gamma_\alpha$.

A GHZ probe is not optimal for this multiparameter task, since it is aligned along a preferred axis and therefore suppresses sensitivity in the longitudinal direction; in that case multiple probe settings may be required, with the GHZ aligned along a different axis in each setting. By contrast, separable probes can estimate the rates simultaneously but only with scaling $\Theta(TN)$.

%==========
%==========
\subsection{Super-Heisenberg Scaling}
\label{sec:super-heisenberg}

Suppose the jump operators consist of many-body terms, e.g., $k$-local operators, so that the channel dimension $R$ grows superlinearly with the system size $N$. Moreover, suppose the canonical jump operators are genuinely collective, i.e., sums of $k$-local operators. In this setting, the Heisenberg-in-$R$ scaling from Theorem~\ref{thm:gen-Heisenberg} translates into super-Heisenberg scaling in $N$, analogous in spirit to unitary metrology with nonlinear collective generators~\cite{Boixo2007:QuEstim, Roy2008:ExpMetrology}.

\begin{corollary}[Super-Heisenberg Scaling]
    \label{cor:super-heisenberg}
    Consider a probe of $N$ subsystems. Suppose the Lindbladian includes jump operators that act
    nontrivially on at most $K$ subsystems, and that for each support size $k\le K$ there are
    $q_k$ independent operators per support. Then the number of dissipation channels obeys
    \begin{equation}
    \label{eq:R-scaling}
      R \le \sum_{k=1}^{K} q_k \binom{N}{k}
      = \Theta\!\left(N^K\right)\quad(N\gg 1,\;K\;\text{fixed}).
    \end{equation}
    Moreover, if the dissipation kernel and correlator are dense across the active channels ($r_{\bar{\mathcal{K}}}, \, r_{\mathcal{C}}=\Theta(R)$), then there exists a
    constant $c$ such that
    \begin{equation}
      \bar{\tau}\le c\,TR^2
      = \Theta\!\left(TN^{2K}\right).
    \end{equation}
\end{corollary}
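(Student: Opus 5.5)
The proof has two parts: a combinatorial count of the dissipation channels, followed by a direct application of Theorem~\ref{thm:gen-Heisenberg}.

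\emph{Counting channels.} Any jump operator that acts nontrivially on at most $K$ of the $N$ subsystems is a linear combination of operators supported on subsets $S\subseteq[N]$ with $|S|=k\le K$. For each support size $k$ there are $\binom{N}{k}$ such subsets, and by assumption each carries at most $q_k$ independent operators (for example, traceless tensor products on $S$ that act nontrivially on every site of $S$). The local basis $\{L_i\}$ can therefore be taken from a set of at most $\sum_{k=1}^{K} q_k\binom{N}{k}$ operators, and $R$, the number of linearly independent jump operators, is bounded by this sum. For fixed $K$ and $N\gg1$ we have $\binom{N}{k}=N^k/k!+\order{N^{k-1}}$, so the $k=K$ term dominates and the sum is $\Theta(N^K)$. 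The lower bound in this $\Theta$ requires $q_K\ge1$. To say $R=\Theta(N^K)$, and not merely $\order{N^K}$, I would also assume that the active channels saturate this count up to constants. This is what ``dense across the active channels'' is meant to supply.

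\emph{Precision bound.} Apply the QCRB~\eqref{eq:qcrb} together with $(F_Q^{-1})_{\alpha\alpha}\ge 1/(F_Q)_{\alpha\alpha}$. This gives $\tau_\alpha\le (F_Q)_{\alpha\alpha}$, and averaging over $\alpha$ gives $\bar\tau\le \bar F_Q(T)$. Theorem~\ref{thm:gen-Heisenberg} then yields
\[
\bar F_Q(T)\le T\,\norm{\bar{\mathcal K}}_{\max}\norm{L}_{\max}^2\,R\min(r_{\bar{\mathcal K}},r_{\mathcal C})\le T\,\norm{\bar{\mathcal K}}_{\max}\norm{L}_{\max}^2R^2 .
\]
Setting $c=\norm{\bar{\mathcal K}}_{\max}\norm{L}_{\max}^2$ and substituting $R=\Theta(N^K)$ gives $\bar\tau\le cTR^2=\Theta(TN^{2K})$. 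The density assumption $r_{\bar{\mathcal K}},r_{\mathcal C}=\Theta(R)$ is not needed for the upper bound, since $\min(\cdot)\le R$ always holds. Its role is to show that the $R^2$ factor is not loose, i.e., we are in the dense-dense regime (iv) of the discussion following Theorem~\ref{thm:gen-Heisenberg}.

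\emph{Main obstacle.} The substantive step is showing that $c$ is independent of $N$. For $K$-local $L_i$ built from bounded single-site operators, $\norm{L}_{\max}$ is $\order{1}$ in $N$, since it is a product of at most $K$ bounded factors. The kernel entries $\norm{\bar{\mathcal K}}_{\max}$, however, depend on how $\Gamma(\bm\theta)$ is parametrized. I would make explicit that the parametrization keeps the per-entry derivatives $\partial_\alpha\zeta$ bounded uniformly in $N$. This is the analogue of an intensive per-channel coupling, and under it $c$ is an $N$-independent constant. The statement should then be read with this normalization made explicit; without it, $c$ could absorb $N$-dependence and the scaling claim would be ill-posed.
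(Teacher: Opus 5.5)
Your proposal is correct and follows the paper's route: count supports to get $R\le\sum_k q_k\binom{N}{k}=\Theta(N^K)$, then apply Theorem~\ref{thm:gen-Heisenberg} using $\min(r_{\bar{\mathcal K}},r_{\mathcal C})\le R$. Your two remarks, that the density hypothesis is not needed for the upper bound and that $c=\norm{\bar{\mathcal K}}_{\max}\norm{L}_{\max}^2$ must be shown $N$-independent through an intensive normalization of $\Gamma(\bm{\theta})$, spell out what the paper's one-line proof leaves tacit; one correction is that $r_{\bar{\mathcal K}},r_{\mathcal C}=\Theta(R)$ is stated relative to $R$ and does not by itself give $R=\Theta(N^K)$ (unless $R$ is taken to be the full $K$-local count), so the final equality $cTR^2=\Theta(TN^{2K})$ rests on a separate saturation assumption, which the paper also leaves implicit.
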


\begin{proof}
    The bound \eqref{eq:R-scaling} follows by counting supports and the number of independent operators per support. Under the connectivity assumptions, Theorem~\ref{thm:gen-Heisenberg} indicates that $\bar{\tau}\le c\,T\,R^2$, and substituting \eqref{eq:R-scaling} yields the claim.
\end{proof}

For finite-dimensional subsystems of local dimension $d_{\rm loc}$, the space of traceless operators on $k$ subsystems has dimension $d_{\rm loc}^{2k}-1$. Hence, for fixed $k$, a Lindblad dissipator can involve at most $d_{\rm loc}^{2k}-1$ linearly independent traceless jump operators, so $q_k\le d_{\rm loc}^{2k}-1$ (up to physical restrictions on the allowed jump algebra).

For bosonic modes, the local operator space is infinite-dimensional; a finite $q_k$ arises by restricting the operator class. For instance, if each $k$-mode jump operator is a linear combination of normal-ordered monomials in $\{a_j,a_j^\dagger\}$ of bounded total degree $\le r$, then $q_k$ is finite and scales as $q_k=\order{r^{2k}}$. Here $k$ counts the number of modes on which the operator acts, while $r$ bounds the polynomial degree. For instance, single-photon loss has $(k,r)=(1, 1)$, single-mode two-photon loss has $(k,r)=(1, 2)$, Kerr-type processes have $r=2$, and higher-order nonlinear dissipation has $r>2$. Combining this with \eqref{eq:R-scaling} indicates super-Heisenberg bounds with exponent set by the locality $K$, up to local factors $q_k$. Finally, we remark that if $K$ grows with $N$, then $R$ can, in principle, scale exponentially with $N$~\cite{Roy2008:ExpMetrology, Boixo2007:QuEstim}.

Regarding achievability of Corollary~\ref{cor:super-heisenberg}, it is useful to consider eigenrate-estimation problems in which collective channels are resolved individually. In this setting, Lemma~\ref{lem:rpm-fisher} and Theorem~\ref{thm:rpm-qcrb} imply that, when the canonical jump operators admit a probe with a distinguishable jump basis, an RPM protocol attains the corresponding eigenrate QCRB. For collective $k$-body channels whose jump amplitudes scale extensively with system size, this yields super-Heisenberg scaling $\Theta(TN^{2k})$, as we now demonstrate with an example.

\paragraph*{Example: Collective multipole dissipation.}
We extend the collective spin dissipation example to collective multipole dissipation, which showcases a concrete realization of super-Heisenberg scaling. This provides a family of collective channels for which we can, in principle, estimate $(N+1)^2$ dissipation rates in parallel, with rank-$k$ multipoles admitting super-Heisenberg scaling $\Theta(N^{2k})$.

Consider $N$ spins with collective operators $S_\alpha$ [Eq.~\eqref{eq:global-spin}] and $S_\pm \triangleq S_x \pm i S_y$. The fully symmetric subspace carries spin quantum number $s=N/2$, and the collective irreducible spherical tensors, $T_q^{(k)}$, on this sector have multipole order $k=0,1,\dots,N$ and components $q=-k,\dots,k$; see Section 15.3 of Shankar~\cite{Shankar2012QM}. These operators correspond to degree-$k$ polynomials in the collective spin operators $S_z$ and $S_{\pm}$. There are $\sum_{k=0}^N (2k+1)=(N+1)^2$ tensor operators in total. For example, up to normalization conventions,
\begin{equation}
    T_0^{(0)}= S^2, \quad T_0^{(1)} = S_z,
    \quad
    T_{\pm1}^{(1)} = \mp \frac{1}{\sqrt2} S_\pm,
\end{equation}
and
\begin{align}
    T_0^{(2)} &= \frac{1}{\sqrt6}\left(3S_z^2 - S^2\right), \\
    T_{\pm1}^{(2)} &= \mp \frac12 \left(S_z S_\pm + S_\pm S_z\right), \\
    T_{\pm2}^{(2)} &= \frac12 S_\pm^2,
\end{align}
where $S^2 = S_x^2 + S_y^2 + S_z^2$.

For the moment, we take the non-scalar tensors ($k\neq 0$) as the canonical jump operators of the parametrized dissipator; we include the scalar tensor $T_0^{(0)}$ thereafter. Physically, this model describes an ensemble of identical spins subject to genuinely collective, rotationally structured dissipation~\cite{Shammah2018:CollectiveDecoherence}. It should therefore be understood as an effective description of dissipative dynamics on the symmetric $s=N/2$ manifold, which may arise from coupling to a common reservoir, cavity-mediated decay, or engineered reservoir interactions.

The tensors furnish a symmetry-adapted operator basis for collective multipole relaxation, which spans the operator space on the fully symmetric ($s=N/2$) subspace~\cite{Omanakuttan2024:SpinCat}. We thus assume the dissipator diagonalizes in this basis,
\begin{equation}
    \mathscr{D}_{\bm{\theta}}(\rho)
    =
    \sum_{k=1}^{N}\sum_{q=-k}^{k}
    \gamma_q^{(k)}
    \left(
        T_{q}^{(k)} \rho T_{q}^{(k)\dagger}
        - \tfrac{1}{2}\acomm{T_{q}^{(k)\dagger}T_{q}^{(k)}}{\rho}
    \right),
\end{equation}
where $\gamma_q^{(k)}=\gamma_q^{(k)}(\bm{\theta})$ denotes the parametrized multipole dissipation rates. For concreteness, we take the estimands to be
\begin{equation}
    \bm{\theta}=(\gamma_q^{(k)}), \qquad 1\le k\le N,\ \ -k\le q\le k,
\end{equation}
so that there are $(N+1)^2-1$ parameters to estimate not including $\gamma_0^{(0)}$, which we eventually circle back to.

We now construct a method to simultaneously estimate the multipole rates. Consider the symmetric Dicke states $\{\ket{D_n^N}\}_{n=0}^N$~\cite{Toth2012:dicke}, which satisfy
\begin{equation}
    S_z \ket{D_n^N} = (N-2n)\ket{D_n^N},
\end{equation}
where $n$ counts the number of spins in the $\ket{\downarrow}$ state. Dicke states form a basis of the fully symmetric subspace, with normalized projector
\begin{equation}
    \label{eq:Psym}
    P_{\rm sym} \triangleq \frac{1}{N+1}\sum_{n=0}^N \dyad{D_n^N}.
\end{equation}
On this sector, the tensor operators are orthogonal,
\begin{equation}
    \label{eq:T-orthogonal}
    \Tr\!\left(P_{\rm sym}\,T_q^{(k)\,\dagger}T_{q'}^{(k')}\right)
    =
    C^{(k)}(N)\,\delta_{qq'}\delta_{kk'},
\end{equation}
with $C^{(k)}(N)=\Theta(N^{2k})$ for $k\neq 0$, while for the scalar operator $T_0^{(0)}=S^2$, we have $C^{(0)}=\Theta(N^4)$. Moreover,
\begin{equation}
    \label{eq:T-trace}
    \Tr\!\left(P_{\rm sym}\,T_q^{(k)}\right)=0
    \qquad (k\neq 0).
\end{equation}
We make use of these properties to construct an orthogonal jump basis in the sense of Eqs.~\eqref{eq:jump-orth} and~\eqref{eq:jump-povm}. 

To do so, we introduce an ancillary system of dimension $N+1$; for instance, if the ancilla consists of $L$ spins, then we require only $L=\lceil\log_2(N+1)\rceil$ ancillary spins. Define the Dicke-Choi state (cf.~\cite{Brady2024:NoiseQSN}),
\begin{equation}
    \ket{\psi_{\rm sym}}
    \triangleq
    \frac{1}{\sqrt{N+1}}
    \sum_{n=0}^N
    \ket{n}_{\rm anc}\ket{D_n^N}_{\rm sys},
\end{equation}
so that the reduced system state is the normalized projector onto the symmetric subspace,
\begin{equation}
    \Tr_{\rm anc}(\dyad{\psi_{\rm sym}})=P_{\rm sym}.
\end{equation}
From Eqs.~\eqref{eq:T-orthogonal} and~\eqref{eq:T-trace}, it follows that $\ket{\psi}_{\rm sym}$ and the jump vectors
\begin{equation}
    \ket*{\mathfrak{j}_q^{(k)}}
    \triangleq
    (C^{(k)}(N))^{-1/2}\,
    (I_{\rm anc}\otimes T_q^{(k)})\ket{\psi_{\rm sym}},
\end{equation}
form a distinguishable jump basis. Therefore, by Lemma~\ref{lem:rpm-fisher}, an RPM protocol with the Dicke-Choi probe can estimate the $(N+1)^2-1$ rates $\gamma_q^{(k)}$ ($k\neq 0$) in parallel. Furthermore, since $\expval*{T_q^{(k)\dagger} T_q^{(k)}}{\psi_{\rm sym}}=C^{(k)}(N)=\Theta(N^{2k})$, the achievable precision for each parameter $\gamma_q^{(k)}$ scales as
\begin{equation}
    \tau(\gamma_q^{(k)})=\Theta(TN^{2k})
    \qquad
    (k\neq 0).
\end{equation}
The prefactor depends on the $\order{1}$ constants in $C^{(k)}(N)$ together with a factor of $1/\gamma_q^{(k)}$ per Eq.~\eqref{eq:rpm-fisher}.

We now include the scalar relaxation rate $\gamma_0^{(0)}$ generated by $T_0^{(0)}=S^2$. To make this channel identifiable, the probe must have support on more than one total-spin sector, equivalently $\Var(S^2)>0$. Take $N$ even for simplicity, and let $\ket{\mathfrak{s}}_{\rm sys}$ denote a spin singlet, so that
\begin{equation}
    T_q^{(k)}\ket{\mathfrak{s}}_{\rm sys}=0
    \qquad \forall\ k.
\end{equation}
Introduce an extra ancillary state $\ket{\chi}_{\rm anc}$ orthogonal to $\{\ket{n}_{\rm anc}\}_{n=0}^N$, and let
\begin{equation}
    \ket*{\widetilde{\psi}_{\rm sym}}
    =
    \frac{1}{\sqrt2}
    \left(
        \ket{\psi_{\rm sym}}
        +
        \ket{\chi}_{\rm anc}\ket{\mathfrak{s}}_{\rm sys}
    \right).
\end{equation}
For $k\neq 0$, the non-scalar jump states remain unchanged. We handle the scalar component by introducing the centered jump vector
\begin{equation}
    \ket*{\mathfrak{j}_0^{(0)}}
    \triangleq
    \frac{I_{\rm anc}\otimes \Delta T_0^{(0)}\ket*{\widetilde{\psi}_{\rm sym}}}
    {\sqrt{\expval*{(I_{\rm anc}\otimes\Delta T_0^{(0)})^2}_{\widetilde{\psi}_{\rm sym}}}},
\end{equation}
where $\Delta T_0^{(0)} \triangleq T_0^{(0)}-\expval*{T_0^{(0)}}_{\widetilde{\psi}_{\rm sym}}$. Since $T_0^{(0)}=S^2=N(N+2)$ on the symmetric sector and $S^2=0$ on the singlet, we have
\begin{equation}
    \expval*{(I_{\rm anc}\otimes\Delta T_0^{(0)})^2}_{\widetilde{\psi}_{\rm sym}}
    =
    \Theta(N^4).
\end{equation}
Thus the rate $\gamma_0^{(0)}$ is also estimable with precision
\begin{equation}
    \tau(\gamma_0^{(0)})=\Theta(TN^4).
\end{equation}
This centered construction is consistent with the Hermitian-jump refinement of the QFI matrix, in which the disconnected correlator is replaced by the connected one (i.e., the variance); see the discussion following Theorem~\ref{thm:qfi-matrix} and Proposition~\ref{prop:connected-qgt} of Appendix~\ref{app:thm-collision-qfi}.

Altogether, this protocol enables estimation of the $(N+1)^2$ collective dissipation rates in parallel, with precision scaling $\Theta(TN^{2k})$ for the $k\neq 0$ multipoles and $\Theta(TN^4)$ for the scalar $k=0$ channel.

It is plausible to estimate all multipole dissipation rates in parallel with separable probes. However, due to the independence of the local probes, the estimation precision of $\gamma_q^{(k)}$ scales at best as $\order{TN^k}$ for each $k$, indicating substantially reduced precision relative to the Dicke-Choi construction. %More quantitatively, for a fixed multipole rank $k$, the collective jump $T_q^{(k)}$ has support on $R_k=\Theta(N^k)$ $k$-body operators in the Pauli basis. However, for separable probes, the correlator remains sparse in that basis, with row connectivity $r_{\mathcal C}=\order{1}$, and Theorem~\ref{thm:gen-Heisenberg} implies the stated scaling.

%==========
\subsection{Learning Pauli Noise Rates}
\label{sec:pauli-learning}

In this Section, we consider estimation of Pauli noise rates for the Pauli noise channel~\cite{Fujiwara2003:PauliNoise, Flammia2020:PauliLearn}, a fundamental task in characterization and benchmarking of quantum devices~\cite{Harper2020:PauliLearn, Hashim2021:RandCompil, Van2023:PEC, Hashim2025:BenchmarkingRvw, Proctor2025:Benchmarking}. This example is somewhat atypical from the viewpoint of quantum metrology: there is no Heisenberg-type scaling enhancement because the noise acts independently across Pauli channels (Theorem~\ref{thm:gen-Heisenberg}). Nevertheless, there is an \emph{exponential} separation in learning complexity once we constrain quantum resources---viz., learning with quantum memory versus without~\cite{Chen2022:ExpSepMemory, Chen2022:PauliChEst, Chen2024:TightPauliLearn, Caro2024:PTMlearning, Kim2025:LearnResources}.

Estimating (or learning) Pauli noise rates is a canonical instance of eigenrate estimation (cf.\ Proposition~\ref{prop:rate-qfi}): The channel structure is known (diagonal in the Pauli basis), while the Pauli error rates are unknown. Moreover, the RPM protocol achieves the QCRB for this task (Lemma~\ref{lem:rpm-fisher} and Theorem~\ref{thm:rpm-qcrb}) provided the Pauli channels are distinguishable, as we show below.

Before proceeding, we clarify two points of terminology. Throughout this section, we use the term \emph{learning} to emphasize reconstruction of an unstructured, high-dimensional noise model, although we formulate our analysis in metrological terms through quantum estimation theory rather than statistical learning theory~\cite{Chen2024:TightPauliLearn}. In the asymptotic small-error regime relevant here, these viewpoints largely align, since the inverse Fisher information matrix controls local learning complexity~\cite{Kwon2026:FisherComplexity}. We also fix the resource terminology: Specifically, we refer to a \emph{quantum memory} here as noiseless ancillary qubits that may be initially entangled with the system qubits and measured jointly with them, but do not themselves undergo the parametrized dissipative dynamics. By contrast, \emph{without quantum memory} means that no entangled ancillae are available, although classical registers and adaptive protocols are allowed. %\footnote{In the RPM protocol with quantum memory, the measurement record reduces exactly to independent Poisson counts. Consequently, standard Poisson concentration yields finite-sample $(\epsilon, \delta)$ guarantees. Thus the Fisher-information perspective upgrades directly to a fixed-confidence learning statement.}

Consider an $N$-qubit probe state $\psi\triangleq \dyad{\psi}$ subject over a short time $\delta t$ to the Pauli channel with parameters $\bm r=\{r_a\}$,
\begin{equation}
    \mathcal{E}_{\bm r}(\psi)
    =
    \left(1-\sum_{a=1}^{R} r_a\right)\psi
    +\sum_{a=1}^{R} r_a\, P_a \psi P_a,
\end{equation}
where $\{P_a\}_{a=1}^{R}$ are the non-identity Pauli strings and $R=4^N-1$.
In the Markovian regime, this channel arises from a Lindblad generator with jump operators $J_a=P_a$ and rates $\gamma_a$, so that $r_a=\gamma_a\,\delta t + \order{\delta t^2}$. Hence we focus on weak noise estimation (cf. the seminal work of~\cite{Flammia2020:PauliLearn}), though the main conclusions do not rely on this assumption~\cite{Chen2022:PauliChEst, Chen2024:TightPauliLearn}. We take the estimands as $\bm{\theta}=(\gamma_a)_{a=1}^R$.

\paragraph*{Complexity heuristics.}
We provide a heuristic complexity analysis based on Projection-Valued Measures (PVMs), which makes the underlying dimensional obstruction for learning without quantum memory transparent. This PVM-based argument serves as an intuitive proxy for the full problem. However, the exponential complexity barrier persists under more refined analysis and general measurements: Appendix~\ref{app:pauli-noise} proves that the same conclusion holds for arbitrary no-memory protocols in the short-time regime, including arbitrary POVMs supplemented by classical memory~\cite{Chen2024:TightPauliLearn}; cf.~\cite{Tu2025:LearnMixedU, Kwon2026:FisherComplexity} for related Fisher-information based analysis.

We quantify learning complexity through the average absolute precision, $\bar{\tau}$ (see Definition~\ref{def:param-avg}). For a protocol with total Fisher information matrix $F_{\rm tot}$ accumulated over total sensing time $T$, the multiparameter Cram\'er-Rao bound implies
\begin{equation}
    \label{eq:prec-trace-bound}
    \bar{\tau}\le \frac{1}{R}\Tr(F_{\rm tot}).
\end{equation}
Indeed, given $\tau_a = 1/\bm{\Sigma}_{aa}$ and using $\bm{\Sigma} \geq F_{\rm tot}^{-1}$ together with $(F_{\rm tot}^{-1})_{aa}\ge 1/(F_{\rm tot})_{aa}$, it follows that $\tau_a \le (F_{\rm tot})_{aa}$ and hence $\bar{\tau}\le \Tr(F_{\rm tot})/R$.

To upper bound $\Tr(F_{\rm tot})$, decompose the protocol into $\nu$ experimental configurations indexed by $j$, each run for time $t_j$ with $T=\sum_{j=1}^\nu t_j$. Let $F^{(j)}$ denote the Fisher matrix of configuration $j$, so that $F_{\rm tot}=\sum_{j=1}^\nu F^{(j)}$ and $\Tr(F_{\rm tot})=\sum_{j=1}^\nu \Tr(F^{(j)})$.

Estimating Pauli error rates is an eigenrate estimation problem and Proposition~\ref{prop:rate-qfi} implies a uniform spectral bound on the QFI rate. Since $F^{(j)}\leq F^{(j)}_{Q}$, we have
\begin{equation}
    \lambda_{\max}\left(F^{(j)}\right)
    \leq \lambda_{\max}\left(F^{(j)}_{Q}\right)
    \leq t_j\,\gamma_{\min}^{-1},
\end{equation}
where $\gamma_{\min}\triangleq \min_a \gamma_a$ and $\|J_a\|=\|P_a\|=1$ for Pauli jumps. Therefore,
\begin{equation}
    \Tr\,(F^{(j)})
    \leq \lambda_{\max}(F^{(j)})\,\rank(F^{(j)})
    \le t_j\gamma_{\min}^{-1}\,\rank(F^{(j)}).
\end{equation}

Suppose that each experiment ends with a \emph{projective} measurement on the available quantum Hilbert space $\mathscr{H}$ of dimension $D\triangleq \abs{\mathscr{H}}$. Then each configuration has at most $D$ outcomes, so $\rank(F^{(j)})\le D-1$. Combine these bounds to obtain
\begin{equation}
    \Tr(F_{\rm tot})
    \le \sum_{j=1}^\nu t_j \gamma_{\min}^{-1}(D-1)
    = T\gamma_{\min}^{-1}(D-1),
\end{equation}
and substitute into Eq.~\eqref{eq:prec-trace-bound} to reckon\footnote{This bounds the absolute precision; a bound on the (dimensionless) average relative precision reads $\bar{\tau}_{\rm rel} \leq T\gamma_{\rm max} (D-1)/R$.}
\begin{equation}
    \label{eq:prec-d-bound}
    \bar{\tau}\le T\gamma_{\min}^{-1}\,\frac{D-1}{R}.
\end{equation}

\paragraph*{Learning without quantum memory.}
Without quantum memory and assuming projective measurements, the available Hilbert space is $\mathscr{H}=\mathscr{H}_S$ with $D_S=2^N$, while $R=4^N-1$. Thus Eq.~\eqref{eq:prec-d-bound} yields
\begin{equation}
    \bar{\tau}\le T\gamma_{\min}^{-1}\,\frac{2^N-1}{4^N-1}
    = \Theta\!\left(T\gamma_{\min}^{-1}\,2^{-N}\right).
\end{equation}
Equivalently, achieving a target $\bar{\tau}$ without a quantum memory requires an exponential sensing time,
\begin{equation}
    T_{\text{no-mem}} \ge \Omega\!\left(\bar{\tau}\gamma_{\min}2^N\right),
\end{equation}
consistent with tight complexity bounds~\cite{Chen2024:TightPauliLearn, Kim2025:LearnResources}.

Allowing a large POVM $\{E_x\}_{x=1}^{m}$ with $m\gg D$ can, in principle, render $F_{\rm tot}$ full rank using fewer measurement settings. However, when only system degrees of freedom and classical registers are available, these measurements do not circumvent the exponential learning barrier; see Appendix~\ref{app:pauli-noise} for proof. Intuitively, although a large POVM increases the number of outcomes, the resulting information is spread across exponentially many Pauli directions, so that $F_{\rm tot}$ cannot provide uniformly strong information about all parameters. Equivalently, the root-score vectors $\{\nabla_{\bm{\theta}}\sqrt{p_x}\}_x$ [Eq.~\eqref{eq:fisher-metric}] cannot resolve all directions in parameter space with comparable strength~\cite{Tu2025:LearnMixedU, Kwon2026:FisherComplexity}.

This reflects a fundamental tradeoff when estimating an exponentially large number of parameters without quantum memory: either (i) we use distinguishable measurements (PVMs), which provide well-conditioned Fisher information on a low-dimensional subspace per measurement setting and therefore require exponentially many settings to identify the full model, or (ii) we use a large POVM to make $F_{\rm tot}$ full rank with fewer settings, in which case the information per direction becomes diluted across the exponentially large parameter space, again resulting in an exponential sample complexity~\cite{Chen2024:TightPauliLearn, Kwon2026:FisherComplexity}.

\paragraph*{Learning with quantum memory.}
To circumvent this complexity barrier, we introduce a quantum memory~\cite{Chen2022:ExpSepMemory, Chen2022:PauliChEst}. Specifically, consider an $N$-qubit noiseless quantum memory $\mathscr{H}_{\rm mem}$, which evolves trivially, and prepare Bell pairs between system and memory, $\ket{\Phi_N}\triangleq \ket{\Phi}^{\otimes N}$ with $\ket{\Phi}=(\ket{00}+\ket{11})/\sqrt{2}$. This enlarges the available Hilbert space to $\mathscr{H}=\mathscr{H}_S\otimes\mathscr{H}_{\rm mem}$ with $D_{\rm mem}=4^N$, and indeed removes the exponential sampling overhead~\cite{Chen2022:PauliChEst, Chen2024:TightPauliLearn, Caro2024:PTMlearning, Kim2025:LearnResources}, as we now show.

Using $\expval{A\otimes B}{\Phi_N} = \Tr(AB^\top)/2^N$, the Pauli basis becomes perfectly distinguishable (up to the transpose convention),
\begin{equation}
    \expval{(P_a \otimes P_b^\top)}{\Phi_N} = \delta_{ab}.
\end{equation}
In other words, the family $\{(P_a\otimes I)\ket{\Phi_N}\}_{a=1}^{R}$ is orthonormal and orthogonal to the no-jump state $\ket{\Phi_N}$, so $\ket{\Phi_N}$ admits a distinguishable jump basis spanning all $R=4^N-1$ Pauli channels. Therefore, the jump POVM $\bm{\mathcal{J}}$ of Eq.~\eqref{eq:jump-povm} can be implemented on $\mathscr{H}_S\otimes\mathscr{H}_{\rm mem}$. %, and the RPM record reduces to a multi-Poisson counting model over all parameters.

As a result, Lemma~\ref{lem:rpm-fisher} and Theorem~\ref{thm:rpm-qcrb} apply directly: RPM with quantum memory achieves the optimal eigenrate QCRB, and yields a full-rank Fisher matrix for learning the complete Pauli model with a \emph{single} experimental configuration. Hence all Pauli rates are simultaneously identifiable and achieving a target average precision $\bar{\tau}$ requires time
\begin{equation}
    T_{\rm mem}\ge \Omega\!\left(\bar{\tau}\gamma_{\min}\right),
\end{equation}
indicating an exponential complexity advantage over learning without memory~\cite{Chen2022:PauliChEst, Chen2024:TightPauliLearn, Kim2025:LearnResources}.

%==========
\subsection{Subdiffraction Quantum Imaging}
\label{sec:subdiff}

In this Section, we close with a short coda from subdiffraction incoherent imaging to highlight joint \emph{eigenrate} and \emph{eigenframe} estimation, and to illustrate how our framework applies in the seemingly disparate setting of quantum-optical imaging~\cite{Tsang2019:Starlight, Albarelli2020:ImagingPersp, Defienne2024:AdvQuImaging}. We give a brief overview and point to Appendix~\ref{app:subdiff-imaging} for details.

We consider the paradigmatic problem of imaging two weak, identical thermal emitters with centroid $\bar{x}$ and separation $d$ imaged through a diffraction-limited system with Gaussian point-spread function of width $\sigma$~\cite{Tsang2016:SPADE,Lupo2016:QuImaging}. As $d/\sigma\to 0$ (subdiffraction limit), the two intensity profiles merge on the image plane and the image becomes nearly indistinguishable from that of a single emitter, so direct imaging (i.e., local intensity detection) suffers from ``Rayleigh's curse''~\cite{Tsang2016:SPADE}. Quantum mechanically, however, the separation information remains accessible: the QFI for $d$ stays finite, but the optimal measurement basis depends on the unknown centroid $\bar{x}$~\cite{Lupo2016:QuImaging,Tsang2016:SPADE}. Estimating the separation $d$ is an eigenrate estimation problem, while estimating the centroid $\bar{x}$ is an eigenframe estimation problem (cf. Theorem~\ref{thm:qfi-matrix}).

Following the weak-thermal model of Ref.~\cite{Tsang2016:SPADE}, the weak thermal field is approximated as a single-photon state characterized by a mutual coherence $\Gamma(\bar{x},d)$, defined as the two-point correlation matrix of the thermal field on the image plane, with brightness $\Tr(\Gamma)=\varepsilon\ll 1$~\cite{Lupo2020:LinearOptLimits}. In our framework, we regard $\Gamma(\bar{x},d)$ as the dissipation matrix of a weak Markovian excitation process with jump operators $L_u=\hat{a}_u^\dagger$, indexed by image-plane coordinate $u$, acting on the optical vacuum. In the weak-signal regime (Scenario~3, Eq.~\eqref{eq:F-short-time}), the per-measurement QFI reads $F_Q(\delta t)\approx \delta t\,\dot F_Q(0^+)$ and is controlled by the initial QFI flow $\dot F_Q(0^+)$, so Theorem~\ref{thm:qfi-matrix} applies directly with vacuum correlator $\mathcal{C}(0^+)=I$. For $\nu$ identical bins and measurements, $F_Q^{\rm tot}=\nu F_Q(\delta t)$.

For identical emitters and $d/\sigma\ll 1$, translation symmetry suggests an eigen-decomposition,
\begin{equation}
  \Gamma(\bar{x},d)=V(\bar{x})\,\Lambda(d)\,V^\dagger(\bar{x}),
  \label{eq:Gamma-eigenframe-split-main}
\end{equation}
where $V(\bar{x})$ encodes the centroid as a coordinate shift, while the nonzero eigenvalues $\Lambda(d)=\mathrm{diag}(\lambda_+(d),\lambda_-(d))$ depend only on the separation~\cite{Tsang2016:SPADE,Lupo2016:QuImaging}. Thus $\bar{x}$ appears purely as an eigenframe parameter (generated by $K_{\bar{x}}=(\partial_{\bar{x}} V)V^\dagger$~\cite{Lupo2020:LinearOptLimits}), whereas $d$ appears purely through the eigenrates. Applying Theorem~\ref{thm:qfi-matrix} and assuming a Gaussian point-spread function yields (see Appendix~\ref{app:subdiff-imaging})
\begin{equation}
  \lim_{d/\sigma\to 0}F_Q^{\rm tot}
  =\frac{\nu\varepsilon}{\sigma^2}
  \begin{pmatrix}
    1 & 0\\[2pt]
    0 & \tfrac14
  \end{pmatrix},
\end{equation}
which is finite as $d/\sigma\to 0$ and reproduces the known separation QFI~\cite{Tsang2016:SPADE,Rehacek2017:MultiImaging}, thereby circumventing Rayleigh's curse, in principle, at the quantum limit.

In the eigenbasis of $\Gamma(\bar{x},d)$, the dominant modes are the symmetric and antisymmetric superpositions of the displaced point-spread functions~\cite{Lupo2016:QuImaging}. Consequently, when $\bar{x}$ is known, a spatial-mode demultiplexing (SPADE) measurement aligned to the centroid projects onto this eigenbasis and realizes an eigenrate-estimation protocol for $\lambda_-$ that saturates the separation QFI~\cite{Tsang2016:SPADE,Lupo2016:QuImaging}, precisely as in the RPM protocol. However, because $\bar{x}$ enters through the eigenframe $V(\bar{x})$, the SPADE basis itself depends on $\bar{x}$. Thus, when $\bar{x}$ is unknown \textit{a priori}, a fixed SPADE measurement is generally not optimal. One needs adaptive centroid alignment or a more general joint measurement strategy to recover the sub-Rayleigh separation advantage~\cite{Grace2020:Imaging}. %attaining the QFI requires an adaptive strategy that continuously tracks the centroid. 

Similar considerations apply to active quantum imaging, where a sample undergoing incoherent emission and/or absorption is probed with a structured quantum-optical field~\cite{Defienne2024:AdvQuImaging, Brady2025:TwinEcho}. The major difference is that the correlator matrix $\mathcal{C}(0^+)$ acquires a photon-number enhancement that depends on the probe state (e.g., Fock or two-mode squeezed), and the probe is initialized in the eigenmode basis of $\Gamma$~\cite{Brady2025:TwinEcho}.

%==========
%==========
\section{Outlook}
\label{sec:outlook}

In this work, we establish ultimate precision bounds for multiparameter Markovian noise metrology. Our results provide a unified Lindbladian framework for sensing (memoryless) stochastic signals, clarify the role of jump-space connectivity as a metrological resource, and furnish fundamental limits for noise characterization and benchmarking of quantum devices~\cite{Hashim2021:RandCompil, Burnett2019:BenchmarkDecoherence, Harper2020:PauliLearn, Lupke2020:Benchmarking, Van2023:PEC, Hashim2025:BenchmarkingRvw, Proctor2025:Benchmarking}, probing correlations in condensed-matter systems with quantum sensors~\cite{Casola2018:nvManyBody, Rovny2024:nvManyBody, Rovny2022NVcovariance, Ziffer2024:qnsCriticality, Cheng2025:mplex, Rovny2025:nvEntResource, Zhou2025:nvEntResource}, quantum imaging~\cite{Tsang2019:Starlight, Albarelli2020:ImagingPersp, Defienne2024:AdvQuImaging}, and searches for new physics with quantum measurement technology~\cite{YeZoller2024:Essay, Bass2024:NatRvw}.

Several directions remain open. A natural next step would connect the multiparameter noise metrology perspective to broader problems in Lindblad learning and tomography~\cite{Samach2022:LindbladTomography, Olsacher2025:LindbladLearning, Ivashkov2026:LindbladLearn}, where identifiability, sample complexity, and optimal precision all play a role. Extending the present theory beyond strict Markovianity to non-Markovian noise metrology~\cite{Breuer2016:NonMarkovRvw, Groszkowski2023:nonMarkovModel, Chin2012:QuMetrologyNonMarkov, White2022:nonMarkovProcessTom, Varona2025:nonMarkovLearn, Jordi2025:NonMarkovLearn} marks another important direction; although the loss of an immediate Lindblad description complicates the analysis, aspects of our approach may still carry over to effectively time-local settings with pseudo-Lindblad descriptions (cf.~\cite{Wang2026:PowerLaw}), to genuinely non-Markovian environments using effective master equations~\cite{Breuer2010} or with the help of effective memory modes~\cite{Tanimura2005,Plenio2016,Cirac2021}. Regarding measurement protocols, our RPM construction addresses eigenrate estimation, but broader classes of dissipative learning problems will likely require more intricate protocols where intrinsic measurement incompatibility may arise~\cite{Ragy2016compatibility, Albarelli2022:Incompatibility}; randomized strategies may prove useful here~\cite{Emerson2005:randNoiseEstim, Elben2023:RandToolbox, Li2023:ScrambleSensing, Gong2026:ScrambleSensing, Zhou2026:RandMeasurements}. The shot-limited regime [Scenario~2, Eq.~\eqref{eq:F-shot-lim}] also merits further study as the optimal interrogation time can depend on the unknown dissipative parameters, making adaptive estimation essential while also unlocking potentially exponential quantum advantage~\cite{Wang2024:ExpSensing, Prabhu2026:ExpSensingQSA}. Finally, although realizing quantum-sensing advantage in many-body noise metrology remains a daunting challenge, technical developments in preparing entangled probe states, collective control, and multiplexed quantum readout for large systems mark an exciting frontier~\cite{Zhou2020:QuMetroStrongSpin, Colombo2022:QSNEcho, Gao2025:NVEcho, Wu2025:NVSpinSqz, Montenegro2024:ManyBodyMetrology}.

%==========
\acknowledgments
A.J.B.\@ thanks Victor V. Albert for insightful discussions. 
A.J.B.\@ acknowledges support from the NRC Research Associateship Program at NIST.
Y.-X.W.~acknowledges support from a QuICS Hartree Postdoctoral Fellowship. 
L.P.G.P.~acknowledges support from the Beyond Moore’s Law project of the Advanced Simulation and Computing Program at LANL managed by Triad National Security, LLC, and the U.S. DoE, Office of Science, Accelerated Research in Quantum Computing, Fundamental Algorithmic Research toward Quantum Utility (FAR-Qu).
A.V.G.~was supported in part by ONR MURI, AFOSR MURI, DARPA SAVaNT ADVENT, NSF STAQ program, ARL (W911NF-24-2-0107), NSF QLCI (award No.~OMA-2120757),  DoE ASCR Quantum Testbed Pathfinder program (award No.~DE-SC0024220),   and NQVL:QSTD:Pilot:FTL. A.V.G.~also acknowledges support from the U.S.~Department of Energy, Office of Science, National Quantum Information Science Research Centers, Quantum Systems Accelerator (award No.~DE-SCL0000121).

%============ Bibliography & Appendix ==================

\bibliography{main}

%apsrev4-2.bst 2019-01-14 (MD) hand-edited version of apsrev4-1.bst
%Control: key (0)
%Control: author (8) initials jnrlst
%Control: editor formatted (1) identically to author
%Control: production of article title (0) allowed
%Control: page (0) single
%Control: year (1) truncated
%Control: production of eprint (0) enabled
\begin{thebibliography}{140}%
\makeatletter
\providecommand \@ifxundefined [1]{%
 \@ifx{#1\undefined}
}%
\providecommand \@ifnum [1]{%
 \ifnum #1\expandafter \@firstoftwo
 \else \expandafter \@secondoftwo
 \fi
}%
\providecommand \@ifx [1]{%
 \ifx #1\expandafter \@firstoftwo
 \else \expandafter \@secondoftwo
 \fi
}%
\providecommand \natexlab [1]{#1}%
\providecommand \enquote  [1]{``#1''}%
\providecommand \bibnamefont  [1]{#1}%
\providecommand \bibfnamefont [1]{#1}%
\providecommand \citenamefont [1]{#1}%
\providecommand \href@noop [0]{\@secondoftwo}%
\providecommand \href [0]{\begingroup \@sanitize@url \@href}%
\providecommand \@href[1]{\@@startlink{#1}\@@href}%
\providecommand \@@href[1]{\endgroup#1\@@endlink}%
\providecommand \@sanitize@url [0]{\catcode `\\12\catcode `\$12\catcode `\&12\catcode `\#12\catcode `\^12\catcode `\_12\catcode `\%12\relax}%
\providecommand \@@startlink[1]{}%
\providecommand \@@endlink[0]{}%
\providecommand \url  [0]{\begingroup\@sanitize@url \@url }%
\providecommand \@url [1]{\endgroup\@href {#1}{\urlprefix }}%
\providecommand \urlprefix  [0]{URL }%
\providecommand \Eprint [0]{\href }%
\providecommand \doibase [0]{https://doi.org/}%
\providecommand \selectlanguage [0]{\@gobble}%
\providecommand \bibinfo  [0]{\@secondoftwo}%
\providecommand \bibfield  [0]{\@secondoftwo}%
\providecommand \translation [1]{[#1]}%
\providecommand \BibitemOpen [0]{}%
\providecommand \bibitemStop [0]{}%
\providecommand \bibitemNoStop [0]{.\EOS\space}%
\providecommand \EOS [0]{\spacefactor3000\relax}%
\providecommand \BibitemShut  [1]{\csname bibitem#1\endcsname}%
\let\auto@bib@innerbib\@empty
%</preamble>
\bibitem [{\citenamefont {Degen}\ \emph {et~al.}(2017)\citenamefont {Degen}, \citenamefont {Reinhard},\ and\ \citenamefont {Cappellaro}}]{Degen2017:QuSensing}%
  \BibitemOpen
  \bibfield  {author} {\bibinfo {author} {\bibfnamefont {C.~L.}\ \bibnamefont {Degen}}, \bibinfo {author} {\bibfnamefont {F.}~\bibnamefont {Reinhard}},\ and\ \bibinfo {author} {\bibfnamefont {P.}~\bibnamefont {Cappellaro}},\ }\bibfield  {title} {\bibinfo {title} {{Quantum sensing}},\ }\href {https://doi.org/10.1103/RevModPhys.89.035002} {\bibfield  {journal} {\bibinfo  {journal} {Rev. Mod. Phys.}\ }\textbf {\bibinfo {volume} {89}},\ \bibinfo {pages} {035002} (\bibinfo {year} {2017})}\BibitemShut {NoStop}%
\bibitem [{\citenamefont {Pezz\`e}\ \emph {et~al.}(2018)\citenamefont {Pezz\`e}, \citenamefont {Smerzi}, \citenamefont {Oberthaler}, \citenamefont {Schmied},\ and\ \citenamefont {Treutlein}}]{Pezze2018:QuMetrologyAtoms}%
  \BibitemOpen
  \bibfield  {author} {\bibinfo {author} {\bibfnamefont {L.}~\bibnamefont {Pezz\`e}}, \bibinfo {author} {\bibfnamefont {A.}~\bibnamefont {Smerzi}}, \bibinfo {author} {\bibfnamefont {M.~K.}\ \bibnamefont {Oberthaler}}, \bibinfo {author} {\bibfnamefont {R.}~\bibnamefont {Schmied}},\ and\ \bibinfo {author} {\bibfnamefont {P.}~\bibnamefont {Treutlein}},\ }\bibfield  {title} {\bibinfo {title} {{Quantum metrology with nonclassical states of atomic ensembles}},\ }\href {https://doi.org/10.1103/RevModPhys.90.035005} {\bibfield  {journal} {\bibinfo  {journal} {Rev. Mod. Phys.}\ }\textbf {\bibinfo {volume} {90}},\ \bibinfo {pages} {035005} (\bibinfo {year} {2018})}\BibitemShut {NoStop}%
\bibitem [{\citenamefont {Pirandola}\ \emph {et~al.}(2018)\citenamefont {Pirandola}, \citenamefont {Bardhan}, \citenamefont {Gehring}, \citenamefont {Weedbrook},\ and\ \citenamefont {Lloyd}}]{Lloyd2018:PhotonQuSensing}%
  \BibitemOpen
  \bibfield  {author} {\bibinfo {author} {\bibfnamefont {S.}~\bibnamefont {Pirandola}}, \bibinfo {author} {\bibfnamefont {B.~R.}\ \bibnamefont {Bardhan}}, \bibinfo {author} {\bibfnamefont {T.}~\bibnamefont {Gehring}}, \bibinfo {author} {\bibfnamefont {C.}~\bibnamefont {Weedbrook}},\ and\ \bibinfo {author} {\bibfnamefont {S.}~\bibnamefont {Lloyd}},\ }\bibfield  {title} {\bibinfo {title} {{Advances in photonic quantum sensing}},\ }\href {https://doi.org/10.1038/s41566-018-0301-6} {\bibfield  {journal} {\bibinfo  {journal} {Nat. Photonics}\ }\textbf {\bibinfo {volume} {12}},\ \bibinfo {pages} {724–733} (\bibinfo {year} {2018})}\BibitemShut {NoStop}%
\bibitem [{\citenamefont {Huang}\ \emph {et~al.}(2024)\citenamefont {Huang}, \citenamefont {Zhuang},\ and\ \citenamefont {Lee}}]{Huang2024:EntangledMetrology}%
  \BibitemOpen
  \bibfield  {author} {\bibinfo {author} {\bibfnamefont {J.}~\bibnamefont {Huang}}, \bibinfo {author} {\bibfnamefont {M.}~\bibnamefont {Zhuang}},\ and\ \bibinfo {author} {\bibfnamefont {C.}~\bibnamefont {Lee}},\ }\bibfield  {title} {\bibinfo {title} {{Entanglement-enhanced quantum metrology: From standard quantum limit to Heisenberg limit}},\ }\bibfield  {journal} {\bibinfo  {journal} {Appl. Phys. Rev.}\ }\textbf {\bibinfo {volume} {11}},\ \href {https://doi.org/10.1063/5.0204102} {10.1063/5.0204102} (\bibinfo {year} {2024})\BibitemShut {NoStop}%
\bibitem [{\citenamefont {Zaiser}\ \emph {et~al.}(2016)\citenamefont {Zaiser}, \citenamefont {Rendler}, \citenamefont {Jakobi}, \citenamefont {Wolf}, \citenamefont {Lee}, \citenamefont {Wagner}, \citenamefont {Bergholm}, \citenamefont {Schulte-Herbr{\"u}ggen}, \citenamefont {Neumann} \emph {et~al.}}]{Zaiser2016:QuMemorySensing}%
  \BibitemOpen
  \bibfield  {author} {\bibinfo {author} {\bibfnamefont {S.}~\bibnamefont {Zaiser}}, \bibinfo {author} {\bibfnamefont {T.}~\bibnamefont {Rendler}}, \bibinfo {author} {\bibfnamefont {I.}~\bibnamefont {Jakobi}}, \bibinfo {author} {\bibfnamefont {T.}~\bibnamefont {Wolf}}, \bibinfo {author} {\bibfnamefont {S.-Y.}\ \bibnamefont {Lee}}, \bibinfo {author} {\bibfnamefont {S.}~\bibnamefont {Wagner}}, \bibinfo {author} {\bibfnamefont {V.}~\bibnamefont {Bergholm}}, \bibinfo {author} {\bibfnamefont {T.}~\bibnamefont {Schulte-Herbr{\"u}ggen}}, \bibinfo {author} {\bibfnamefont {P.}~\bibnamefont {Neumann}}, \emph {et~al.},\ }\bibfield  {title} {\bibinfo {title} {{Enhancing quantum sensing sensitivity by a quantum memory}},\ }\href {https://doi.org/10.1038/ncomms12279} {\bibfield  {journal} {\bibinfo  {journal} {Nat. Commun.}\ }\textbf {\bibinfo {volume} {7}},\ \bibinfo {pages} {12279} (\bibinfo {year} {2016})}\BibitemShut {NoStop}%
\bibitem [{\citenamefont {Chen}\ \emph {et~al.}(2022{\natexlab{a}})\citenamefont {Chen}, \citenamefont {Cotler}, \citenamefont {Huang},\ and\ \citenamefont {Li}}]{Chen2022:ExpSepMemory}%
  \BibitemOpen
  \bibfield  {author} {\bibinfo {author} {\bibfnamefont {S.}~\bibnamefont {Chen}}, \bibinfo {author} {\bibfnamefont {J.}~\bibnamefont {Cotler}}, \bibinfo {author} {\bibfnamefont {H.-Y.}\ \bibnamefont {Huang}},\ and\ \bibinfo {author} {\bibfnamefont {J.}~\bibnamefont {Li}},\ }\bibfield  {title} {\bibinfo {title} {{Exponential Separations Between Learning With and Without Quantum Memory}},\ }in\ \href {https://doi.org/10.1109/FOCS52979.2021.00063} {\emph {\bibinfo {booktitle} {2021 IEEE 62nd Annual Symposium on Foundations of Computer Science (FOCS)}}}\ (\bibinfo {year} {2022})\ pp.\ \bibinfo {pages} {574--585}\BibitemShut {NoStop}%
\bibitem [{\citenamefont {Huang}\ \emph {et~al.}(2022)\citenamefont {Huang}, \citenamefont {Broughton}, \citenamefont {Cotler}, \citenamefont {Chen}, \citenamefont {Li}, \citenamefont {Mohseni}, \citenamefont {Neven}, \citenamefont {Babbush}, \citenamefont {Kueng} \emph {et~al.}}]{Huang2022:QuAdvLearn}%
  \BibitemOpen
  \bibfield  {author} {\bibinfo {author} {\bibfnamefont {H.-Y.}\ \bibnamefont {Huang}}, \bibinfo {author} {\bibfnamefont {M.}~\bibnamefont {Broughton}}, \bibinfo {author} {\bibfnamefont {J.}~\bibnamefont {Cotler}}, \bibinfo {author} {\bibfnamefont {S.}~\bibnamefont {Chen}}, \bibinfo {author} {\bibfnamefont {J.}~\bibnamefont {Li}}, \bibinfo {author} {\bibfnamefont {M.}~\bibnamefont {Mohseni}}, \bibinfo {author} {\bibfnamefont {H.}~\bibnamefont {Neven}}, \bibinfo {author} {\bibfnamefont {R.}~\bibnamefont {Babbush}}, \bibinfo {author} {\bibfnamefont {R.}~\bibnamefont {Kueng}}, \emph {et~al.},\ }\bibfield  {title} {\bibinfo {title} {{Quantum advantage in learning from experiments}},\ }\href {https://doi.org/10.1126/science.abn7293} {\bibfield  {journal} {\bibinfo  {journal} {Science}\ }\textbf {\bibinfo {volume} {376}},\ \bibinfo {pages} {1182–1186} (\bibinfo {year} {2022})}\BibitemShut {NoStop}%
\bibitem [{\citenamefont {Massar}\ and\ \citenamefont {Popescu}(1995)}]{Massar1995:EntanglMeas}%
  \BibitemOpen
  \bibfield  {author} {\bibinfo {author} {\bibfnamefont {S.}~\bibnamefont {Massar}}\ and\ \bibinfo {author} {\bibfnamefont {S.}~\bibnamefont {Popescu}},\ }\bibfield  {title} {\bibinfo {title} {{Optimal Extraction of Information from Finite Quantum Ensembles}},\ }\href {https://doi.org/10.1103/PhysRevLett.74.1259} {\bibfield  {journal} {\bibinfo  {journal} {Phys. Rev. Lett.}\ }\textbf {\bibinfo {volume} {74}},\ \bibinfo {pages} {1259} (\bibinfo {year} {1995})}\BibitemShut {NoStop}%
\bibitem [{\citenamefont {Conlon}\ \emph {et~al.}(2023)\citenamefont {Conlon}, \citenamefont {Vogl}, \citenamefont {Marciniak}, \citenamefont {Pogorelov}, \citenamefont {Yung}, \citenamefont {Eilenberger}, \citenamefont {Berry}, \citenamefont {Santana}, \citenamefont {Blatt} \emph {et~al.}}]{Conlon2023:EntanglMeas}%
  \BibitemOpen
  \bibfield  {author} {\bibinfo {author} {\bibfnamefont {L.~O.}\ \bibnamefont {Conlon}}, \bibinfo {author} {\bibfnamefont {T.}~\bibnamefont {Vogl}}, \bibinfo {author} {\bibfnamefont {C.~D.}\ \bibnamefont {Marciniak}}, \bibinfo {author} {\bibfnamefont {I.}~\bibnamefont {Pogorelov}}, \bibinfo {author} {\bibfnamefont {S.~K.}\ \bibnamefont {Yung}}, \bibinfo {author} {\bibfnamefont {F.}~\bibnamefont {Eilenberger}}, \bibinfo {author} {\bibfnamefont {D.~W.}\ \bibnamefont {Berry}}, \bibinfo {author} {\bibfnamefont {F.~S.}\ \bibnamefont {Santana}}, \bibinfo {author} {\bibfnamefont {R.}~\bibnamefont {Blatt}}, \emph {et~al.},\ }\bibfield  {title} {\bibinfo {title} {{Approaching optimal entangling collective measurements on quantum computing platforms}},\ }\href {https://doi.org/10.1038/s41567-022-01875-7} {\bibfield  {journal} {\bibinfo  {journal} {Nat. Phys.}\ }\textbf {\bibinfo {volume} {19}},\ \bibinfo {pages} {351–357} (\bibinfo {year} {2023})}\BibitemShut {NoStop}%
\bibitem [{\citenamefont {Helstrom}(1976)}]{Helstrom1976}%
  \BibitemOpen
  \bibfield  {author} {\bibinfo {author} {\bibfnamefont {C.~W.}\ \bibnamefont {Helstrom}},\ }\href@noop {} {\emph {\bibinfo {title} {{Quantum Detection and Estimation Theory}}}}\ (\bibinfo  {publisher} {Academic Press},\ \bibinfo {address} {New York},\ \bibinfo {year} {1976})\BibitemShut {NoStop}%
\bibitem [{\citenamefont {Holevo}(1982)}]{Holevo1982}%
  \BibitemOpen
  \bibfield  {author} {\bibinfo {author} {\bibfnamefont {A.~S.}\ \bibnamefont {Holevo}},\ }\href@noop {} {\emph {\bibinfo {title} {{Probabilistic and Statistical Aspects of Quantum Theory}}}}\ (\bibinfo  {publisher} {North-Holland},\ \bibinfo {address} {Amsterdam},\ \bibinfo {year} {1982})\BibitemShut {NoStop}%
\bibitem [{\citenamefont {Paris}(2009)}]{Paris2009:QFI}%
  \BibitemOpen
  \bibfield  {author} {\bibinfo {author} {\bibfnamefont {M.~G.~A.}\ \bibnamefont {Paris}},\ }\bibfield  {title} {\bibinfo {title} {{Quantum estimation for quantum technology}},\ }\href {https://doi.org/10.1142/S0219749909004839} {\bibfield  {journal} {\bibinfo  {journal} {Int. J. Quantum Inf.}\ }\textbf {\bibinfo {volume} {7}},\ \bibinfo {pages} {125} (\bibinfo {year} {2009})}\BibitemShut {NoStop}%
\bibitem [{\citenamefont {Braunstein}\ and\ \citenamefont {Caves}(1994)}]{Braunstein1994:Bures}%
  \BibitemOpen
  \bibfield  {author} {\bibinfo {author} {\bibfnamefont {S.~L.}\ \bibnamefont {Braunstein}}\ and\ \bibinfo {author} {\bibfnamefont {C.~M.}\ \bibnamefont {Caves}},\ }\bibfield  {title} {\bibinfo {title} {{Statistical distance and the geometry of quantum states}},\ }\href {https://doi.org/10.1103/PhysRevLett.72.3439} {\bibfield  {journal} {\bibinfo  {journal} {Phys. Rev. Lett.}\ }\textbf {\bibinfo {volume} {72}},\ \bibinfo {pages} {3439} (\bibinfo {year} {1994})}\BibitemShut {NoStop}%
\bibitem [{\citenamefont {Giovannetti}\ \emph {et~al.}(2004)\citenamefont {Giovannetti}, \citenamefont {Lloyd},\ and\ \citenamefont {Maccone}}]{Giovannetti2004:BeatSQL}%
  \BibitemOpen
  \bibfield  {author} {\bibinfo {author} {\bibfnamefont {V.}~\bibnamefont {Giovannetti}}, \bibinfo {author} {\bibfnamefont {S.}~\bibnamefont {Lloyd}},\ and\ \bibinfo {author} {\bibfnamefont {L.}~\bibnamefont {Maccone}},\ }\bibfield  {title} {\bibinfo {title} {{Quantum-Enhanced Measurements: Beating the Standard Quantum Limit}},\ }\href {https://doi.org/10.1126/science.1104149} {\bibfield  {journal} {\bibinfo  {journal} {Science}\ }\textbf {\bibinfo {volume} {306}},\ \bibinfo {pages} {1330} (\bibinfo {year} {2004})}\BibitemShut {NoStop}%
\bibitem [{\citenamefont {Giovannetti}\ \emph {et~al.}(2006)\citenamefont {Giovannetti}, \citenamefont {Lloyd},\ and\ \citenamefont {Maccone}}]{Giovannetti2006:QuMetrology}%
  \BibitemOpen
  \bibfield  {author} {\bibinfo {author} {\bibfnamefont {V.}~\bibnamefont {Giovannetti}}, \bibinfo {author} {\bibfnamefont {S.}~\bibnamefont {Lloyd}},\ and\ \bibinfo {author} {\bibfnamefont {L.}~\bibnamefont {Maccone}},\ }\bibfield  {title} {\bibinfo {title} {{Quantum Metrology}},\ }\href {https://doi.org/10.1103/PhysRevLett.96.010401} {\bibfield  {journal} {\bibinfo  {journal} {Phys. Rev. Lett.}\ }\textbf {\bibinfo {volume} {96}},\ \bibinfo {pages} {010401} (\bibinfo {year} {2006})}\BibitemShut {NoStop}%
\bibitem [{\citenamefont {Brady}\ \emph {et~al.}(2026)\citenamefont {Brady}, \citenamefont {Wang}, \citenamefont {Albert}, \citenamefont {Gorshkov},\ and\ \citenamefont {Zhuang}}]{Brady2024:NoiseQSN}%
  \BibitemOpen
  \bibfield  {author} {\bibinfo {author} {\bibfnamefont {A.~J.}\ \bibnamefont {Brady}}, \bibinfo {author} {\bibfnamefont {Y.-X.}\ \bibnamefont {Wang}}, \bibinfo {author} {\bibfnamefont {V.~V.}\ \bibnamefont {Albert}}, \bibinfo {author} {\bibfnamefont {A.~V.}\ \bibnamefont {Gorshkov}},\ and\ \bibinfo {author} {\bibfnamefont {Q.}~\bibnamefont {Zhuang}},\ }\bibfield  {title} {\bibinfo {title} {{Correlated Noise Estimation with Quantum Sensor Networks}},\ }\href {https://doi.org/10.1103/sl32-jn82} {\bibfield  {journal} {\bibinfo  {journal} {Phys. Rev. Lett.}\ }\textbf {\bibinfo {volume} {136}},\ \bibinfo {pages} {080803} (\bibinfo {year} {2026})}\BibitemShut {NoStop}%
\bibitem [{\citenamefont {Wang}\ \emph {et~al.}(2024)\citenamefont {Wang}, \citenamefont {Bringewatt}, \citenamefont {Seif}, \citenamefont {Brady}, \citenamefont {Oh},\ and\ \citenamefont {Gorshkov}}]{Wang2024:ExpSensing}%
  \BibitemOpen
  \bibfield  {author} {\bibinfo {author} {\bibfnamefont {Y.-X.}\ \bibnamefont {Wang}}, \bibinfo {author} {\bibfnamefont {J.}~\bibnamefont {Bringewatt}}, \bibinfo {author} {\bibfnamefont {A.}~\bibnamefont {Seif}}, \bibinfo {author} {\bibfnamefont {A.~J.}\ \bibnamefont {Brady}}, \bibinfo {author} {\bibfnamefont {C.}~\bibnamefont {Oh}},\ and\ \bibinfo {author} {\bibfnamefont {A.~V.}\ \bibnamefont {Gorshkov}},\ }\href@noop {} {\bibinfo {title} {{Exponential entanglement advantage in sensing correlated noise}}} (\bibinfo {year} {2024}),\ \Eprint {https://arxiv.org/abs/2410.05878} {arXiv:2410.05878 [quant-ph]} \BibitemShut {NoStop}%
\bibitem [{\citenamefont {Prabhu}\ \emph {et~al.}(2026)\citenamefont {Prabhu}, \citenamefont {Kremenetski}, \citenamefont {Khan}, \citenamefont {Yanagimoto},\ and\ \citenamefont {McMahon}}]{Prabhu2026:ExpSensingQSA}%
  \BibitemOpen
  \bibfield  {author} {\bibinfo {author} {\bibfnamefont {S.}~\bibnamefont {Prabhu}}, \bibinfo {author} {\bibfnamefont {V.}~\bibnamefont {Kremenetski}}, \bibinfo {author} {\bibfnamefont {S.~A.}\ \bibnamefont {Khan}}, \bibinfo {author} {\bibfnamefont {R.}~\bibnamefont {Yanagimoto}},\ and\ \bibinfo {author} {\bibfnamefont {P.~L.}\ \bibnamefont {McMahon}},\ }\bibfield  {title} {\bibinfo {title} {{Exponential advantage in quantum sensing of correlated parameters}},\ }\href {https://doi.org/10.22331/q-2026-01-14-1963} {\bibfield  {journal} {\bibinfo  {journal} {Quantum}\ }\textbf {\bibinfo {volume} {10}},\ \bibinfo {pages} {1963} (\bibinfo {year} {2026})}\BibitemShut {NoStop}%
\bibitem [{\citenamefont {Wang}\ \emph {et~al.}(2026)\citenamefont {Wang}, \citenamefont {Brady}, \citenamefont {Belliardo},\ and\ \citenamefont {Gorshkov}}]{Wang2026:PowerLaw}%
  \BibitemOpen
  \bibfield  {author} {\bibinfo {author} {\bibfnamefont {Y.-X.}\ \bibnamefont {Wang}}, \bibinfo {author} {\bibfnamefont {A.~J.}\ \bibnamefont {Brady}}, \bibinfo {author} {\bibfnamefont {F.}~\bibnamefont {Belliardo}},\ and\ \bibinfo {author} {\bibfnamefont {A.~V.}\ \bibnamefont {Gorshkov}},\ }\href@noop {} {\bibinfo {title} {{Entanglement advantage in sensing power-law spatiotemporal noise correlations}}} (\bibinfo {year} {2026}),\ \Eprint {https://arxiv.org/abs/2603.15742} {arXiv:2603.15742 [quant-ph]} \BibitemShut {NoStop}%
\bibitem [{\citenamefont {Chen}\ \emph {et~al.}(2022{\natexlab{b}})\citenamefont {Chen}, \citenamefont {Zhou}, \citenamefont {Seif},\ and\ \citenamefont {Jiang}}]{Chen2022:PauliChEst}%
  \BibitemOpen
  \bibfield  {author} {\bibinfo {author} {\bibfnamefont {S.}~\bibnamefont {Chen}}, \bibinfo {author} {\bibfnamefont {S.}~\bibnamefont {Zhou}}, \bibinfo {author} {\bibfnamefont {A.}~\bibnamefont {Seif}},\ and\ \bibinfo {author} {\bibfnamefont {L.}~\bibnamefont {Jiang}},\ }\bibfield  {title} {\bibinfo {title} {{Quantum advantages for Pauli channel estimation}},\ }\href {https://doi.org/10.1103/PhysRevA.105.032435} {\bibfield  {journal} {\bibinfo  {journal} {Phys. Rev. A}\ }\textbf {\bibinfo {volume} {105}},\ \bibinfo {pages} {032435} (\bibinfo {year} {2022}{\natexlab{b}})}\BibitemShut {NoStop}%
\bibitem [{\citenamefont {Chen}\ \emph {et~al.}(2024)\citenamefont {Chen}, \citenamefont {Oh}, \citenamefont {Zhou}, \citenamefont {Huang},\ and\ \citenamefont {Jiang}}]{Chen2024:TightPauliLearn}%
  \BibitemOpen
  \bibfield  {author} {\bibinfo {author} {\bibfnamefont {S.}~\bibnamefont {Chen}}, \bibinfo {author} {\bibfnamefont {C.}~\bibnamefont {Oh}}, \bibinfo {author} {\bibfnamefont {S.}~\bibnamefont {Zhou}}, \bibinfo {author} {\bibfnamefont {H.-Y.}\ \bibnamefont {Huang}},\ and\ \bibinfo {author} {\bibfnamefont {L.}~\bibnamefont {Jiang}},\ }\bibfield  {title} {\bibinfo {title} {{Tight Bounds on Pauli Channel Learning without Entanglement}},\ }\href {https://doi.org/10.1103/PhysRevLett.132.180805} {\bibfield  {journal} {\bibinfo  {journal} {Phys. Rev. Lett.}\ }\textbf {\bibinfo {volume} {132}},\ \bibinfo {pages} {180805} (\bibinfo {year} {2024})}\BibitemShut {NoStop}%
\bibitem [{\citenamefont {Caro}(2024)}]{Caro2024:PTMlearning}%
  \BibitemOpen
  \bibfield  {author} {\bibinfo {author} {\bibfnamefont {M.~C.}\ \bibnamefont {Caro}},\ }\bibfield  {title} {\bibinfo {title} {{Learning Quantum Processes and Hamiltonians via the Pauli Transfer Matrix}},\ }\href {https://doi.org/10.1145/3670418} {\bibfield  {journal} {\bibinfo  {journal} {ACM Trans. Quantum Comput.}\ }\textbf {\bibinfo {volume} {5}},\ \bibinfo {pages} {1–53} (\bibinfo {year} {2024})}\BibitemShut {NoStop}%
\bibitem [{\citenamefont {Kim}\ and\ \citenamefont {Oh}(2026)}]{Kim2025:LearnResources}%
  \BibitemOpen
  \bibfield  {author} {\bibinfo {author} {\bibfnamefont {M.}~\bibnamefont {Kim}}\ and\ \bibinfo {author} {\bibfnamefont {C.}~\bibnamefont {Oh}},\ }\bibfield  {title} {\bibinfo {title} {{On the fundamental resource for exponential advantage in quantum channel learning}},\ }\href {https://doi.org/10.1038/s41467-026-68532-y} {\bibfield  {journal} {\bibinfo  {journal} {Nat. Commun.}\ }\textbf {\bibinfo {volume} {17}},\ \bibinfo {pages} {1822} (\bibinfo {year} {2026})}\BibitemShut {NoStop}%
\bibitem [{\citenamefont {Tsang}(2019)}]{Tsang2019:Starlight}%
  \BibitemOpen
  \bibfield  {author} {\bibinfo {author} {\bibfnamefont {M.}~\bibnamefont {Tsang}},\ }\bibfield  {title} {\bibinfo {title} {{Resolving starlight: a quantum perspective}},\ }\href {https://doi.org/10.1080/00107514.2020.1736375} {\bibfield  {journal} {\bibinfo  {journal} {Contemp. Phys.}\ }\textbf {\bibinfo {volume} {60}},\ \bibinfo {pages} {279} (\bibinfo {year} {2019})}\BibitemShut {NoStop}%
\bibitem [{\citenamefont {Albarelli}\ \emph {et~al.}(2020)\citenamefont {Albarelli}, \citenamefont {Barbieri}, \citenamefont {Genoni},\ and\ \citenamefont {Gianani}}]{Albarelli2020:ImagingPersp}%
  \BibitemOpen
  \bibfield  {author} {\bibinfo {author} {\bibfnamefont {F.}~\bibnamefont {Albarelli}}, \bibinfo {author} {\bibfnamefont {M.}~\bibnamefont {Barbieri}}, \bibinfo {author} {\bibfnamefont {M.}~\bibnamefont {Genoni}},\ and\ \bibinfo {author} {\bibfnamefont {I.}~\bibnamefont {Gianani}},\ }\bibfield  {title} {\bibinfo {title} {{A perspective on multiparameter quantum metrology: From theoretical tools to applications in quantum imaging}},\ }\href {https://doi.org/10.1016/j.physleta.2020.126311} {\bibfield  {journal} {\bibinfo  {journal} {Phys. Lett. A}\ }\textbf {\bibinfo {volume} {384}},\ \bibinfo {pages} {126311} (\bibinfo {year} {2020})}\BibitemShut {NoStop}%
\bibitem [{\citenamefont {Defienne}\ \emph {et~al.}(2024)\citenamefont {Defienne}, \citenamefont {Bowen}, \citenamefont {Chekhova}, \citenamefont {Lemos}, \citenamefont {Oron}, \citenamefont {Ramelow}, \citenamefont {Treps},\ and\ \citenamefont {Faccio}}]{Defienne2024:AdvQuImaging}%
  \BibitemOpen
  \bibfield  {author} {\bibinfo {author} {\bibfnamefont {H.}~\bibnamefont {Defienne}}, \bibinfo {author} {\bibfnamefont {W.~P.}\ \bibnamefont {Bowen}}, \bibinfo {author} {\bibfnamefont {M.}~\bibnamefont {Chekhova}}, \bibinfo {author} {\bibfnamefont {G.~B.}\ \bibnamefont {Lemos}}, \bibinfo {author} {\bibfnamefont {D.}~\bibnamefont {Oron}}, \bibinfo {author} {\bibfnamefont {S.}~\bibnamefont {Ramelow}}, \bibinfo {author} {\bibfnamefont {N.}~\bibnamefont {Treps}},\ and\ \bibinfo {author} {\bibfnamefont {D.}~\bibnamefont {Faccio}},\ }\bibfield  {title} {\bibinfo {title} {{Advances in quantum imaging}},\ }\href {https://doi.org/10.1038/s41566-024-01516-w} {\bibfield  {journal} {\bibinfo  {journal} {Nat. Photon.}\ }\textbf {\bibinfo {volume} {18}},\ \bibinfo {pages} {1024} (\bibinfo {year} {2024})}\BibitemShut {NoStop}%
\bibitem [{\citenamefont {Hashim}\ \emph {et~al.}(2021)\citenamefont {Hashim}, \citenamefont {Naik}, \citenamefont {Morvan}, \citenamefont {Ville}, \citenamefont {Mitchell}, \citenamefont {Kreikebaum}, \citenamefont {Davis}, \citenamefont {Smith}, \citenamefont {Iancu} \emph {et~al.}}]{Hashim2021:RandCompil}%
  \BibitemOpen
  \bibfield  {author} {\bibinfo {author} {\bibfnamefont {A.}~\bibnamefont {Hashim}}, \bibinfo {author} {\bibfnamefont {R.~K.}\ \bibnamefont {Naik}}, \bibinfo {author} {\bibfnamefont {A.}~\bibnamefont {Morvan}}, \bibinfo {author} {\bibfnamefont {J.-L.}\ \bibnamefont {Ville}}, \bibinfo {author} {\bibfnamefont {B.}~\bibnamefont {Mitchell}}, \bibinfo {author} {\bibfnamefont {J.~M.}\ \bibnamefont {Kreikebaum}}, \bibinfo {author} {\bibfnamefont {M.}~\bibnamefont {Davis}}, \bibinfo {author} {\bibfnamefont {E.}~\bibnamefont {Smith}}, \bibinfo {author} {\bibfnamefont {C.}~\bibnamefont {Iancu}}, \emph {et~al.},\ }\bibfield  {title} {\bibinfo {title} {{Randomized Compiling for Scalable Quantum Computing on a Noisy Superconducting Quantum Processor}},\ }\href {https://doi.org/10.1103/PhysRevX.11.041039} {\bibfield  {journal} {\bibinfo  {journal} {Phys. Rev. X}\ }\textbf {\bibinfo {volume} {11}},\ \bibinfo {pages} {041039} (\bibinfo {year} {2021})}\BibitemShut {NoStop}%
\bibitem [{\citenamefont {Burnett}\ \emph {et~al.}(2019)\citenamefont {Burnett}, \citenamefont {Bengtsson}, \citenamefont {Scigliuzzo}, \citenamefont {Niepce}, \citenamefont {Kudra}, \citenamefont {Delsing},\ and\ \citenamefont {Bylander}}]{Burnett2019:BenchmarkDecoherence}%
  \BibitemOpen
  \bibfield  {author} {\bibinfo {author} {\bibfnamefont {J.~J.}\ \bibnamefont {Burnett}}, \bibinfo {author} {\bibfnamefont {A.}~\bibnamefont {Bengtsson}}, \bibinfo {author} {\bibfnamefont {M.}~\bibnamefont {Scigliuzzo}}, \bibinfo {author} {\bibfnamefont {D.}~\bibnamefont {Niepce}}, \bibinfo {author} {\bibfnamefont {M.}~\bibnamefont {Kudra}}, \bibinfo {author} {\bibfnamefont {P.}~\bibnamefont {Delsing}},\ and\ \bibinfo {author} {\bibfnamefont {J.}~\bibnamefont {Bylander}},\ }\bibfield  {title} {\bibinfo {title} {{Decoherence benchmarking of superconducting qubits}},\ }\href {https://doi.org/10.1038/s41534-019-0168-5} {\bibfield  {journal} {\bibinfo  {journal} {npj Quantum Inf.}\ }\textbf {\bibinfo {volume} {5}},\ \bibinfo {pages} {54} (\bibinfo {year} {2019})}\BibitemShut {NoStop}%
\bibitem [{\citenamefont {Harper}\ \emph {et~al.}(2020)\citenamefont {Harper}, \citenamefont {Flammia},\ and\ \citenamefont {Wallman}}]{Harper2020:PauliLearn}%
  \BibitemOpen
  \bibfield  {author} {\bibinfo {author} {\bibfnamefont {R.}~\bibnamefont {Harper}}, \bibinfo {author} {\bibfnamefont {S.~T.}\ \bibnamefont {Flammia}},\ and\ \bibinfo {author} {\bibfnamefont {J.~J.}\ \bibnamefont {Wallman}},\ }\bibfield  {title} {\bibinfo {title} {{Efficient learning of quantum noise}},\ }\href {https://doi.org/10.1038/s41567-020-0992-8} {\bibfield  {journal} {\bibinfo  {journal} {Nat. Phys.}\ }\textbf {\bibinfo {volume} {16}},\ \bibinfo {pages} {1184} (\bibinfo {year} {2020})}\BibitemShut {NoStop}%
\bibitem [{\citenamefont {von L\"upke}\ \emph {et~al.}(2020)\citenamefont {von L\"upke}, \citenamefont {Beaudoin}, \citenamefont {Norris}, \citenamefont {Sung}, \citenamefont {Winik}, \citenamefont {Qiu}, \citenamefont {Kjaergaard}, \citenamefont {Kim}, \citenamefont {Yoder} \emph {et~al.}}]{Lupke2020:Benchmarking}%
  \BibitemOpen
  \bibfield  {author} {\bibinfo {author} {\bibfnamefont {U.}~\bibnamefont {von L\"upke}}, \bibinfo {author} {\bibfnamefont {F.}~\bibnamefont {Beaudoin}}, \bibinfo {author} {\bibfnamefont {L.~M.}\ \bibnamefont {Norris}}, \bibinfo {author} {\bibfnamefont {Y.}~\bibnamefont {Sung}}, \bibinfo {author} {\bibfnamefont {R.}~\bibnamefont {Winik}}, \bibinfo {author} {\bibfnamefont {J.~Y.}\ \bibnamefont {Qiu}}, \bibinfo {author} {\bibfnamefont {M.}~\bibnamefont {Kjaergaard}}, \bibinfo {author} {\bibfnamefont {D.}~\bibnamefont {Kim}}, \bibinfo {author} {\bibfnamefont {J.}~\bibnamefont {Yoder}}, \emph {et~al.},\ }\bibfield  {title} {\bibinfo {title} {{Two-Qubit Spectroscopy of Spatiotemporally Correlated Quantum Noise in Superconducting Qubits}},\ }\href {https://doi.org/10.1103/PRXQuantum.1.010305} {\bibfield  {journal} {\bibinfo  {journal} {PRX Quantum}\ }\textbf {\bibinfo {volume} {1}},\ \bibinfo {pages} {010305} (\bibinfo {year} {2020})}\BibitemShut {NoStop}%
\bibitem [{\citenamefont {Van Den~Berg}\ \emph {et~al.}(2023)\citenamefont {Van Den~Berg}, \citenamefont {Minev}, \citenamefont {Kandala},\ and\ \citenamefont {Temme}}]{Van2023:PEC}%
  \BibitemOpen
  \bibfield  {author} {\bibinfo {author} {\bibfnamefont {E.}~\bibnamefont {Van Den~Berg}}, \bibinfo {author} {\bibfnamefont {Z.~K.}\ \bibnamefont {Minev}}, \bibinfo {author} {\bibfnamefont {A.}~\bibnamefont {Kandala}},\ and\ \bibinfo {author} {\bibfnamefont {K.}~\bibnamefont {Temme}},\ }\bibfield  {title} {\bibinfo {title} {{Probabilistic error cancellation with sparse Pauli--Lindblad models on noisy quantum processors}},\ }\href {https://doi.org/10.1038/s41567-023-02042-2} {\bibfield  {journal} {\bibinfo  {journal} {Nat. Phys.}\ }\textbf {\bibinfo {volume} {19}},\ \bibinfo {pages} {1116} (\bibinfo {year} {2023})}\BibitemShut {NoStop}%
\bibitem [{\citenamefont {Hashim}\ \emph {et~al.}(2025)\citenamefont {Hashim}, \citenamefont {Nguyen}, \citenamefont {Goss}, \citenamefont {Marinelli}, \citenamefont {Naik}, \citenamefont {Chistolini}, \citenamefont {Hines}, \citenamefont {Marceaux}, \citenamefont {Kim} \emph {et~al.}}]{Hashim2025:BenchmarkingRvw}%
  \BibitemOpen
  \bibfield  {author} {\bibinfo {author} {\bibfnamefont {A.}~\bibnamefont {Hashim}}, \bibinfo {author} {\bibfnamefont {L.~B.}\ \bibnamefont {Nguyen}}, \bibinfo {author} {\bibfnamefont {N.}~\bibnamefont {Goss}}, \bibinfo {author} {\bibfnamefont {B.}~\bibnamefont {Marinelli}}, \bibinfo {author} {\bibfnamefont {R.~K.}\ \bibnamefont {Naik}}, \bibinfo {author} {\bibfnamefont {T.}~\bibnamefont {Chistolini}}, \bibinfo {author} {\bibfnamefont {J.}~\bibnamefont {Hines}}, \bibinfo {author} {\bibfnamefont {J.}~\bibnamefont {Marceaux}}, \bibinfo {author} {\bibfnamefont {Y.}~\bibnamefont {Kim}}, \emph {et~al.},\ }\bibfield  {title} {\bibinfo {title} {{Practical Introduction to Benchmarking and Characterization of Quantum Computers}},\ }\href {https://doi.org/10.1103/PRXQuantum.6.030202} {\bibfield  {journal} {\bibinfo  {journal} {PRX Quantum}\ }\textbf {\bibinfo {volume} {6}},\ \bibinfo {pages} {030202} (\bibinfo {year} {2025})}\BibitemShut {NoStop}%
\bibitem [{\citenamefont {Proctor}\ \emph {et~al.}(2025)\citenamefont {Proctor}, \citenamefont {Young}, \citenamefont {Baczewski},\ and\ \citenamefont {Blume-Kohout}}]{Proctor2025:Benchmarking}%
  \BibitemOpen
  \bibfield  {author} {\bibinfo {author} {\bibfnamefont {T.}~\bibnamefont {Proctor}}, \bibinfo {author} {\bibfnamefont {K.}~\bibnamefont {Young}}, \bibinfo {author} {\bibfnamefont {A.~D.}\ \bibnamefont {Baczewski}},\ and\ \bibinfo {author} {\bibfnamefont {R.}~\bibnamefont {Blume-Kohout}},\ }\bibfield  {title} {\bibinfo {title} {{Benchmarking quantum computers}},\ }\href {https://doi.org/10.1038/s42254-024-00796-z} {\bibfield  {journal} {\bibinfo  {journal} {Nat. Rev. Phys.}\ }\textbf {\bibinfo {volume} {7}},\ \bibinfo {pages} {105} (\bibinfo {year} {2025})}\BibitemShut {NoStop}%
\bibitem [{\citenamefont {Casola}\ \emph {et~al.}(2018)\citenamefont {Casola}, \citenamefont {van~der Sar},\ and\ \citenamefont {Yacoby}}]{Casola2018:nvManyBody}%
  \BibitemOpen
  \bibfield  {author} {\bibinfo {author} {\bibfnamefont {F.}~\bibnamefont {Casola}}, \bibinfo {author} {\bibfnamefont {T.}~\bibnamefont {van~der Sar}},\ and\ \bibinfo {author} {\bibfnamefont {A.}~\bibnamefont {Yacoby}},\ }\bibfield  {title} {\bibinfo {title} {{Probing condensed matter physics with magnetometry based on nitrogen-vacancy centres in diamond}},\ }\href {https://doi.org/10.1038/natrevmats.2017.88} {\bibfield  {journal} {\bibinfo  {journal} {Nat. Rev. Mat.}\ }\textbf {\bibinfo {volume} {3}},\ \bibinfo {pages} {17088} (\bibinfo {year} {2018})}\BibitemShut {NoStop}%
\bibitem [{\citenamefont {Rovny}\ \emph {et~al.}(2024)\citenamefont {Rovny}, \citenamefont {Gopalakrishnan}, \citenamefont {Jayich}, \citenamefont {Maletinsky}, \citenamefont {Demler},\ and\ \citenamefont {de~Leon}}]{Rovny2024:nvManyBody}%
  \BibitemOpen
  \bibfield  {author} {\bibinfo {author} {\bibfnamefont {J.}~\bibnamefont {Rovny}}, \bibinfo {author} {\bibfnamefont {S.}~\bibnamefont {Gopalakrishnan}}, \bibinfo {author} {\bibfnamefont {A.~C.~B.}\ \bibnamefont {Jayich}}, \bibinfo {author} {\bibfnamefont {P.}~\bibnamefont {Maletinsky}}, \bibinfo {author} {\bibfnamefont {E.}~\bibnamefont {Demler}},\ and\ \bibinfo {author} {\bibfnamefont {N.~P.}\ \bibnamefont {de~Leon}},\ }\bibfield  {title} {\bibinfo {title} {{Nanoscale diamond quantum sensors for many-body physics}},\ }\href {https://doi.org/10.1038/s42254-024-00775-4} {\bibfield  {journal} {\bibinfo  {journal} {Nat. Rev. Phys.}\ }\textbf {\bibinfo {volume} {6}},\ \bibinfo {pages} {753} (\bibinfo {year} {2024})}\BibitemShut {NoStop}%
\bibitem [{\citenamefont {Rovny}\ \emph {et~al.}(2022)\citenamefont {Rovny}, \citenamefont {Yuan}, \citenamefont {Fitzpatrick}, \citenamefont {Abdalla}, \citenamefont {Futamura}, \citenamefont {Fox}, \citenamefont {Cambria}, \citenamefont {Kolkowitz},\ and\ \citenamefont {de~Leon}}]{Rovny2022NVcovariance}%
  \BibitemOpen
  \bibfield  {author} {\bibinfo {author} {\bibfnamefont {J.}~\bibnamefont {Rovny}}, \bibinfo {author} {\bibfnamefont {Z.}~\bibnamefont {Yuan}}, \bibinfo {author} {\bibfnamefont {M.}~\bibnamefont {Fitzpatrick}}, \bibinfo {author} {\bibfnamefont {A.~I.}\ \bibnamefont {Abdalla}}, \bibinfo {author} {\bibfnamefont {L.}~\bibnamefont {Futamura}}, \bibinfo {author} {\bibfnamefont {C.}~\bibnamefont {Fox}}, \bibinfo {author} {\bibfnamefont {M.~C.}\ \bibnamefont {Cambria}}, \bibinfo {author} {\bibfnamefont {S.}~\bibnamefont {Kolkowitz}},\ and\ \bibinfo {author} {\bibfnamefont {N.~P.}\ \bibnamefont {de~Leon}},\ }\bibfield  {title} {\bibinfo {title} {{Nanoscale covariance magnetometry with diamond quantum sensors}},\ }\href {https://doi.org/10.1126/science.ade9858} {\bibfield  {journal} {\bibinfo  {journal} {Science}\ }\textbf {\bibinfo {volume} {378}},\ \bibinfo {pages} {1301} (\bibinfo {year} {2022})}\BibitemShut {NoStop}%
\bibitem [{\citenamefont {Ziffer}\ \emph {et~al.}(2024)\citenamefont {Ziffer}, \citenamefont {Machado}, \citenamefont {Ursprung}, \citenamefont {Lozovoi}, \citenamefont {Tazi}, \citenamefont {Yuan}, \citenamefont {Ziebel}, \citenamefont {Delord}, \citenamefont {Zeng} \emph {et~al.}}]{Ziffer2024:qnsCriticality}%
  \BibitemOpen
  \bibfield  {author} {\bibinfo {author} {\bibfnamefont {M.~E.}\ \bibnamefont {Ziffer}}, \bibinfo {author} {\bibfnamefont {F.}~\bibnamefont {Machado}}, \bibinfo {author} {\bibfnamefont {B.}~\bibnamefont {Ursprung}}, \bibinfo {author} {\bibfnamefont {A.}~\bibnamefont {Lozovoi}}, \bibinfo {author} {\bibfnamefont {A.~B.}\ \bibnamefont {Tazi}}, \bibinfo {author} {\bibfnamefont {Z.}~\bibnamefont {Yuan}}, \bibinfo {author} {\bibfnamefont {M.~E.}\ \bibnamefont {Ziebel}}, \bibinfo {author} {\bibfnamefont {T.}~\bibnamefont {Delord}}, \bibinfo {author} {\bibfnamefont {N.}~\bibnamefont {Zeng}}, \emph {et~al.},\ }\href@noop {} {\bibinfo {title} {{Quantum Noise Spectroscopy of Criticality in an Atomically Thin Magnet}}} (\bibinfo {year} {2024}),\ \Eprint {https://arxiv.org/abs/2407.05614} {arXiv:2407.05614 [cond-mat.mes-hall]} \BibitemShut {NoStop}%
\bibitem [{\citenamefont {Cheng}\ \emph {et~al.}(2025)\citenamefont {Cheng}, \citenamefont {Kazi}, \citenamefont {Rovny}, \citenamefont {Zhang}, \citenamefont {Nassar}, \citenamefont {Thompson},\ and\ \citenamefont {de~Leon}}]{Cheng2025:mplex}%
  \BibitemOpen
  \bibfield  {author} {\bibinfo {author} {\bibfnamefont {K.-H.}\ \bibnamefont {Cheng}}, \bibinfo {author} {\bibfnamefont {Z.}~\bibnamefont {Kazi}}, \bibinfo {author} {\bibfnamefont {J.}~\bibnamefont {Rovny}}, \bibinfo {author} {\bibfnamefont {B.}~\bibnamefont {Zhang}}, \bibinfo {author} {\bibfnamefont {L.~S.}\ \bibnamefont {Nassar}}, \bibinfo {author} {\bibfnamefont {J.~D.}\ \bibnamefont {Thompson}},\ and\ \bibinfo {author} {\bibfnamefont {N.~P.}\ \bibnamefont {de~Leon}},\ }\bibfield  {title} {\bibinfo {title} {{Massively Multiplexed Nanoscale Magnetometry with Diamond Quantum Sensors}},\ }\href {https://doi.org/10.1103/t8fz-3tzs} {\bibfield  {journal} {\bibinfo  {journal} {Phys. Rev. X}\ }\textbf {\bibinfo {volume} {15}},\ \bibinfo {pages} {031014} (\bibinfo {year} {2025})}\BibitemShut {NoStop}%
\bibitem [{\citenamefont {Rovny}\ \emph {et~al.}(2025)\citenamefont {Rovny}, \citenamefont {Kolkowitz},\ and\ \citenamefont {de~Leon}}]{Rovny2025:nvEntResource}%
  \BibitemOpen
  \bibfield  {author} {\bibinfo {author} {\bibfnamefont {J.}~\bibnamefont {Rovny}}, \bibinfo {author} {\bibfnamefont {S.}~\bibnamefont {Kolkowitz}},\ and\ \bibinfo {author} {\bibfnamefont {N.~P.}\ \bibnamefont {de~Leon}},\ }\bibfield  {title} {\bibinfo {title} {{Multi-qubit nanoscale sensing with entanglement as a resource}},\ }\href {https://doi.org/10.1038/s41586-025-09760-y} {\bibfield  {journal} {\bibinfo  {journal} {Nature}\ }\textbf {\bibinfo {volume} {647}},\ \bibinfo {pages} {876–882} (\bibinfo {year} {2025})}\BibitemShut {NoStop}%
\bibitem [{\citenamefont {Zhou}\ \emph {et~al.}(2025)\citenamefont {Zhou}, \citenamefont {Wang}, \citenamefont {Ye}, \citenamefont {Sun}, \citenamefont {Guo}, \citenamefont {Han}, \citenamefont {Chai}, \citenamefont {Ji}, \citenamefont {Xia} \emph {et~al.}}]{Zhou2025:nvEntResource}%
  \BibitemOpen
  \bibfield  {author} {\bibinfo {author} {\bibfnamefont {X.}~\bibnamefont {Zhou}}, \bibinfo {author} {\bibfnamefont {M.}~\bibnamefont {Wang}}, \bibinfo {author} {\bibfnamefont {X.}~\bibnamefont {Ye}}, \bibinfo {author} {\bibfnamefont {H.}~\bibnamefont {Sun}}, \bibinfo {author} {\bibfnamefont {Y.}~\bibnamefont {Guo}}, \bibinfo {author} {\bibfnamefont {S.}~\bibnamefont {Han}}, \bibinfo {author} {\bibfnamefont {Z.}~\bibnamefont {Chai}}, \bibinfo {author} {\bibfnamefont {W.}~\bibnamefont {Ji}}, \bibinfo {author} {\bibfnamefont {K.}~\bibnamefont {Xia}}, \emph {et~al.},\ }\bibfield  {title} {\bibinfo {title} {{Entanglement-enhanced nanoscale single-spin sensing}},\ }\href {https://doi.org/10.1038/s41586-025-09790-6} {\bibfield  {journal} {\bibinfo  {journal} {Nature}\ }\textbf {\bibinfo {volume} {647}},\ \bibinfo {pages} {883} (\bibinfo {year} {2025})}\BibitemShut {NoStop}%
\bibitem [{\citenamefont {Ye}\ and\ \citenamefont {Zoller}(2024)}]{YeZoller2024:Essay}%
  \BibitemOpen
  \bibfield  {author} {\bibinfo {author} {\bibfnamefont {J.}~\bibnamefont {Ye}}\ and\ \bibinfo {author} {\bibfnamefont {P.}~\bibnamefont {Zoller}},\ }\bibfield  {title} {\bibinfo {title} {{Essay: Quantum Sensing with Atomic, Molecular, and Optical Platforms for Fundamental Physics}},\ }\href {https://doi.org/10.1103/PhysRevLett.132.190001} {\bibfield  {journal} {\bibinfo  {journal} {Phys. Rev. Lett.}\ }\textbf {\bibinfo {volume} {132}},\ \bibinfo {pages} {190001} (\bibinfo {year} {2024})}\BibitemShut {NoStop}%
\bibitem [{\citenamefont {Bass}\ and\ \citenamefont {Doser}(2024)}]{Bass2024:NatRvw}%
  \BibitemOpen
  \bibfield  {author} {\bibinfo {author} {\bibfnamefont {S.~D.}\ \bibnamefont {Bass}}\ and\ \bibinfo {author} {\bibfnamefont {M.}~\bibnamefont {Doser}},\ }\bibfield  {title} {\bibinfo {title} {{Quantum sensing for particle physics}},\ }\href {https://doi.org/10.1038/s42254-024-00714-3} {\bibfield  {journal} {\bibinfo  {journal} {Nat. Rev. Phys.}\ }\textbf {\bibinfo {volume} {6}},\ \bibinfo {pages} {329} (\bibinfo {year} {2024})}\BibitemShut {NoStop}%
\bibitem [{\citenamefont {Roy}\ and\ \citenamefont {Braunstein}(2008)}]{Roy2008:ExpMetrology}%
  \BibitemOpen
  \bibfield  {author} {\bibinfo {author} {\bibfnamefont {S.~M.}\ \bibnamefont {Roy}}\ and\ \bibinfo {author} {\bibfnamefont {S.~L.}\ \bibnamefont {Braunstein}},\ }\bibfield  {title} {\bibinfo {title} {{Exponentially Enhanced Quantum Metrology}},\ }\href {https://doi.org/10.1103/PhysRevLett.100.220501} {\bibfield  {journal} {\bibinfo  {journal} {Phys. Rev. Lett.}\ }\textbf {\bibinfo {volume} {100}},\ \bibinfo {pages} {220501} (\bibinfo {year} {2008})}\BibitemShut {NoStop}%
\bibitem [{\citenamefont {Boixo}\ \emph {et~al.}(2007)\citenamefont {Boixo}, \citenamefont {Flammia}, \citenamefont {Caves},\ and\ \citenamefont {Geremia}}]{Boixo2007:QuEstim}%
  \BibitemOpen
  \bibfield  {author} {\bibinfo {author} {\bibfnamefont {S.}~\bibnamefont {Boixo}}, \bibinfo {author} {\bibfnamefont {S.~T.}\ \bibnamefont {Flammia}}, \bibinfo {author} {\bibfnamefont {C.~M.}\ \bibnamefont {Caves}},\ and\ \bibinfo {author} {\bibfnamefont {J.}~\bibnamefont {Geremia}},\ }\bibfield  {title} {\bibinfo {title} {{Generalized Limits for Single-Parameter Quantum Estimation}},\ }\href {https://doi.org/10.1103/PhysRevLett.98.090401} {\bibfield  {journal} {\bibinfo  {journal} {Phys. Rev. Lett.}\ }\textbf {\bibinfo {volume} {98}},\ \bibinfo {pages} {090401} (\bibinfo {year} {2007})}\BibitemShut {NoStop}%
\bibitem [{\citenamefont {Correa}\ \emph {et~al.}(2015)\citenamefont {Correa}, \citenamefont {Mehboudi}, \citenamefont {Adesso},\ and\ \citenamefont {Sanpera}}]{Correa2015:OptThermometry}%
  \BibitemOpen
  \bibfield  {author} {\bibinfo {author} {\bibfnamefont {L.~A.}\ \bibnamefont {Correa}}, \bibinfo {author} {\bibfnamefont {M.}~\bibnamefont {Mehboudi}}, \bibinfo {author} {\bibfnamefont {G.}~\bibnamefont {Adesso}},\ and\ \bibinfo {author} {\bibfnamefont {A.}~\bibnamefont {Sanpera}},\ }\bibfield  {title} {\bibinfo {title} {{Individual Quantum Probes for Optimal Thermometry}},\ }\href {https://doi.org/10.1103/PhysRevLett.114.220405} {\bibfield  {journal} {\bibinfo  {journal} {Phys. Rev. Lett.}\ }\textbf {\bibinfo {volume} {114}},\ \bibinfo {pages} {220405} (\bibinfo {year} {2015})}\BibitemShut {NoStop}%
\bibitem [{\citenamefont {Sekatski}\ and\ \citenamefont {Perarnau-Llobet}(2022)}]{Sekatski2022:Thermometry}%
  \BibitemOpen
  \bibfield  {author} {\bibinfo {author} {\bibfnamefont {P.}~\bibnamefont {Sekatski}}\ and\ \bibinfo {author} {\bibfnamefont {M.}~\bibnamefont {Perarnau-Llobet}},\ }\bibfield  {title} {\bibinfo {title} {{Optimal nonequilibrium thermometry in Markovian environments}},\ }\href {https://doi.org/10.22331/q-2022-12-07-869} {\bibfield  {journal} {\bibinfo  {journal} {Quantum}\ }\textbf {\bibinfo {volume} {6}},\ \bibinfo {pages} {869} (\bibinfo {year} {2022})}\BibitemShut {NoStop}%
\bibitem [{\citenamefont {Gardner}\ \emph {et~al.}(2025{\natexlab{a}})\citenamefont {Gardner}, \citenamefont {Haine}, \citenamefont {Hope}, \citenamefont {Chen},\ and\ \citenamefont {Gefen}}]{Gardner2025:FastReset}%
  \BibitemOpen
  \bibfield  {author} {\bibinfo {author} {\bibfnamefont {J.~W.}\ \bibnamefont {Gardner}}, \bibinfo {author} {\bibfnamefont {S.~A.}\ \bibnamefont {Haine}}, \bibinfo {author} {\bibfnamefont {J.~J.}\ \bibnamefont {Hope}}, \bibinfo {author} {\bibfnamefont {Y.}~\bibnamefont {Chen}},\ and\ \bibinfo {author} {\bibfnamefont {T.}~\bibnamefont {Gefen}},\ }\bibfield  {title} {\bibinfo {title} {{Lindblad estimation with fast and precise quantum control}},\ }\href {https://doi.org/10.1103/6yzb-43rs} {\bibfield  {journal} {\bibinfo  {journal} {Phys. Rev. Appl.}\ }\textbf {\bibinfo {volume} {24}},\ \bibinfo {pages} {044055} (\bibinfo {year} {2025}{\natexlab{a}})}\BibitemShut {NoStop}%
\bibitem [{\citenamefont {Sidhu}\ and\ \citenamefont {Kok}(2020)}]{Sidhu2020:QFIrvw}%
  \BibitemOpen
  \bibfield  {author} {\bibinfo {author} {\bibfnamefont {J.~S.}\ \bibnamefont {Sidhu}}\ and\ \bibinfo {author} {\bibfnamefont {P.}~\bibnamefont {Kok}},\ }\bibfield  {title} {\bibinfo {title} {{Geometric perspective on quantum parameter estimation}},\ }\href {https://doi.org/10.1116/1.5119961} {\bibfield  {journal} {\bibinfo  {journal} {AVS Quantum Sci.}\ }\textbf {\bibinfo {volume} {2}},\ \bibinfo {pages} {014701} (\bibinfo {year} {2020})}\BibitemShut {NoStop}%
\bibitem [{\citenamefont {Liu}\ \emph {et~al.}(2020)\citenamefont {Liu}, \citenamefont {Yuan}, \citenamefont {Lu},\ and\ \citenamefont {Wang}}]{LiuYuanLuWang2020}%
  \BibitemOpen
  \bibfield  {author} {\bibinfo {author} {\bibfnamefont {J.}~\bibnamefont {Liu}}, \bibinfo {author} {\bibfnamefont {H.}~\bibnamefont {Yuan}}, \bibinfo {author} {\bibfnamefont {X.}~\bibnamefont {Lu}},\ and\ \bibinfo {author} {\bibfnamefont {X.}~\bibnamefont {Wang}},\ }\bibfield  {title} {\bibinfo {title} {{Quantum Fisher Information Matrix and Multi-Parameter Estimation}},\ }\href {https://doi.org/10.1088/1751-8121/ab5d4d} {\bibfield  {journal} {\bibinfo  {journal} {J. Phys. A: Math. Theor.}\ }\textbf {\bibinfo {volume} {53}},\ \bibinfo {pages} {023001} (\bibinfo {year} {2020})}\BibitemShut {NoStop}%
\bibitem [{\citenamefont {Meyer}(2021)}]{Meyer2021}%
  \BibitemOpen
  \bibfield  {author} {\bibinfo {author} {\bibfnamefont {J.~J.}\ \bibnamefont {Meyer}},\ }\bibfield  {title} {\bibinfo {title} {{Fisher Information in Noisy Intermediate-Scale Quantum Applications}},\ }\href {https://doi.org/10.22331/q-2021-09-09-539} {\bibfield  {journal} {\bibinfo  {journal} {Quantum}\ }\textbf {\bibinfo {volume} {5}},\ \bibinfo {pages} {539} (\bibinfo {year} {2021})}\BibitemShut {NoStop}%
\bibitem [{\citenamefont {Pezzè}\ and\ \citenamefont {Smerzi}(2025)}]{Pezze2025:AvMultiQuEst}%
  \BibitemOpen
  \bibfield  {author} {\bibinfo {author} {\bibfnamefont {L.}~\bibnamefont {Pezzè}}\ and\ \bibinfo {author} {\bibfnamefont {A.}~\bibnamefont {Smerzi}},\ }\href@noop {} {\bibinfo {title} {{Advances in multiparameter quantum sensing and metrology}}} (\bibinfo {year} {2025}),\ \Eprint {https://arxiv.org/abs/2502.17396} {arXiv:2502.17396 [quant-ph]} \BibitemShut {NoStop}%
\bibitem [{\citenamefont {Fujiwara}\ and\ \citenamefont {Nagaoka}(1995)}]{Fujiwara1995:QFImetric}%
  \BibitemOpen
  \bibfield  {author} {\bibinfo {author} {\bibfnamefont {A.}~\bibnamefont {Fujiwara}}\ and\ \bibinfo {author} {\bibfnamefont {H.}~\bibnamefont {Nagaoka}},\ }\bibfield  {title} {\bibinfo {title} {{Quantum Fisher metric and estimation for pure state models}},\ }\href {https://doi.org/10.1016/0375-9601(95)00269-9} {\bibfield  {journal} {\bibinfo  {journal} {Phys. Lett. A}\ }\textbf {\bibinfo {volume} {201}},\ \bibinfo {pages} {119} (\bibinfo {year} {1995})}\BibitemShut {NoStop}%
\bibitem [{\citenamefont {Ragy}\ \emph {et~al.}(2016)\citenamefont {Ragy}, \citenamefont {Jarzyna},\ and\ \citenamefont {Demkowicz-Dobrza\ifmmode~\acute{n}\else \'{n}\fi{}ski}}]{Ragy2016compatibility}%
  \BibitemOpen
  \bibfield  {author} {\bibinfo {author} {\bibfnamefont {S.}~\bibnamefont {Ragy}}, \bibinfo {author} {\bibfnamefont {M.}~\bibnamefont {Jarzyna}},\ and\ \bibinfo {author} {\bibfnamefont {R.}~\bibnamefont {Demkowicz-Dobrza\ifmmode~\acute{n}\else \'{n}\fi{}ski}},\ }\bibfield  {title} {\bibinfo {title} {{Compatibility in multiparameter quantum metrology}},\ }\href {https://doi.org/10.1103/PhysRevA.94.052108} {\bibfield  {journal} {\bibinfo  {journal} {Phys. Rev. A}\ }\textbf {\bibinfo {volume} {94}},\ \bibinfo {pages} {052108} (\bibinfo {year} {2016})}\BibitemShut {NoStop}%
\bibitem [{\citenamefont {Pezz\`e}\ \emph {et~al.}(2017)\citenamefont {Pezz\`e}, \citenamefont {Ciampini}, \citenamefont {Spagnolo}, \citenamefont {Humphreys}, \citenamefont {Datta}, \citenamefont {Walmsley}, \citenamefont {Barbieri}, \citenamefont {Sciarrino},\ and\ \citenamefont {Smerzi}}]{Pezze2017:Multiparam}%
  \BibitemOpen
  \bibfield  {author} {\bibinfo {author} {\bibfnamefont {L.}~\bibnamefont {Pezz\`e}}, \bibinfo {author} {\bibfnamefont {M.~A.}\ \bibnamefont {Ciampini}}, \bibinfo {author} {\bibfnamefont {N.}~\bibnamefont {Spagnolo}}, \bibinfo {author} {\bibfnamefont {P.~C.}\ \bibnamefont {Humphreys}}, \bibinfo {author} {\bibfnamefont {A.}~\bibnamefont {Datta}}, \bibinfo {author} {\bibfnamefont {I.~A.}\ \bibnamefont {Walmsley}}, \bibinfo {author} {\bibfnamefont {M.}~\bibnamefont {Barbieri}}, \bibinfo {author} {\bibfnamefont {F.}~\bibnamefont {Sciarrino}},\ and\ \bibinfo {author} {\bibfnamefont {A.}~\bibnamefont {Smerzi}},\ }\bibfield  {title} {\bibinfo {title} {{Optimal Measurements for Simultaneous Quantum Estimation of Multiple Phases}},\ }\href {https://doi.org/10.1103/PhysRevLett.119.130504} {\bibfield  {journal} {\bibinfo  {journal} {Phys. Rev. Lett.}\ }\textbf {\bibinfo {volume} {119}},\ \bibinfo {pages} {130504} (\bibinfo {year} {2017})}\BibitemShut {NoStop}%
\bibitem [{\citenamefont {Albarelli}\ and\ \citenamefont {Demkowicz-Dobrza\ifmmode~\acute{n}\else \'{n}\fi{}ski}(2022)}]{Albarelli2022:Incompatibility}%
  \BibitemOpen
  \bibfield  {author} {\bibinfo {author} {\bibfnamefont {F.}~\bibnamefont {Albarelli}}\ and\ \bibinfo {author} {\bibfnamefont {R.}~\bibnamefont {Demkowicz-Dobrza\ifmmode~\acute{n}\else \'{n}\fi{}ski}},\ }\bibfield  {title} {\bibinfo {title} {{Probe Incompatibility in Multiparameter Noisy Quantum Metrology}},\ }\href {https://doi.org/10.1103/PhysRevX.12.011039} {\bibfield  {journal} {\bibinfo  {journal} {Phys. Rev. X}\ }\textbf {\bibinfo {volume} {12}},\ \bibinfo {pages} {011039} (\bibinfo {year} {2022})}\BibitemShut {NoStop}%
\bibitem [{\citenamefont {Sidhu}\ \emph {et~al.}(2021)\citenamefont {Sidhu}, \citenamefont {Ouyang}, \citenamefont {Campbell},\ and\ \citenamefont {Kok}}]{Sidhu2021:HolevoBound}%
  \BibitemOpen
  \bibfield  {author} {\bibinfo {author} {\bibfnamefont {J.~S.}\ \bibnamefont {Sidhu}}, \bibinfo {author} {\bibfnamefont {Y.}~\bibnamefont {Ouyang}}, \bibinfo {author} {\bibfnamefont {E.~T.}\ \bibnamefont {Campbell}},\ and\ \bibinfo {author} {\bibfnamefont {P.}~\bibnamefont {Kok}},\ }\bibfield  {title} {\bibinfo {title} {{Tight Bounds on the Simultaneous Estimation of Incompatible Parameters}},\ }\href {https://doi.org/10.1103/PhysRevX.11.011028} {\bibfield  {journal} {\bibinfo  {journal} {Phys. Rev. X}\ }\textbf {\bibinfo {volume} {11}},\ \bibinfo {pages} {011028} (\bibinfo {year} {2021})}\BibitemShut {NoStop}%
\bibitem [{\citenamefont {Stinespring}(1955)}]{Stinespring1955}%
  \BibitemOpen
  \bibfield  {author} {\bibinfo {author} {\bibfnamefont {W.~F.}\ \bibnamefont {Stinespring}},\ }\bibfield  {title} {\bibinfo {title} {{Positive Functions on C$^{*}$-Algebras}},\ }\href {https://doi.org/10.2307/2032342} {\bibfield  {journal} {\bibinfo  {journal} {Proc. Amer. Math. Soc.}\ }\textbf {\bibinfo {volume} {6}},\ \bibinfo {pages} {211} (\bibinfo {year} {1955})}\BibitemShut {NoStop}%
\bibitem [{\citenamefont {Schumacher}\ and\ \citenamefont {Nielsen}(1996)}]{Nielsen1996}%
  \BibitemOpen
  \bibfield  {author} {\bibinfo {author} {\bibfnamefont {B.}~\bibnamefont {Schumacher}}\ and\ \bibinfo {author} {\bibfnamefont {M.~A.}\ \bibnamefont {Nielsen}},\ }\bibfield  {title} {\bibinfo {title} {{Quantum data processing and error correction}},\ }\href {https://doi.org/10.1103/PhysRevA.54.2629} {\bibfield  {journal} {\bibinfo  {journal} {Phys. Rev. A}\ }\textbf {\bibinfo {volume} {54}},\ \bibinfo {pages} {2629} (\bibinfo {year} {1996})}\BibitemShut {NoStop}%
\bibitem [{\citenamefont {Uhlmann}(1976)}]{Uhlmann1976}%
  \BibitemOpen
  \bibfield  {author} {\bibinfo {author} {\bibfnamefont {A.}~\bibnamefont {Uhlmann}},\ }\bibfield  {title} {\bibinfo {title} {{The “transition probability” in the state space of a *-algebra}},\ }\href {https://doi.org/10.1016/0034-4877(76)90060-4} {\bibfield  {journal} {\bibinfo  {journal} {Rep. Math. Phys.}\ }\textbf {\bibinfo {volume} {9}},\ \bibinfo {pages} {273} (\bibinfo {year} {1976})}\BibitemShut {NoStop}%
\bibitem [{\citenamefont {Lindblad}(1976)}]{Lindblad1976:Generators}%
  \BibitemOpen
  \bibfield  {author} {\bibinfo {author} {\bibfnamefont {G.}~\bibnamefont {Lindblad}},\ }\bibfield  {title} {\bibinfo {title} {{On the generators of quantum dynamical semigroups}},\ }\href {https://doi.org/10.1007/BF01608499} {\bibfield  {journal} {\bibinfo  {journal} {Commun. Math. Phys.}\ }\textbf {\bibinfo {volume} {48}},\ \bibinfo {pages} {119} (\bibinfo {year} {1976})}\BibitemShut {NoStop}%
\bibitem [{\citenamefont {Gorini}\ \emph {et~al.}(1976)\citenamefont {Gorini}, \citenamefont {Kossakowski},\ and\ \citenamefont {Sudarshan}}]{Gorini1976}%
  \BibitemOpen
  \bibfield  {author} {\bibinfo {author} {\bibfnamefont {V.}~\bibnamefont {Gorini}}, \bibinfo {author} {\bibfnamefont {A.}~\bibnamefont {Kossakowski}},\ and\ \bibinfo {author} {\bibfnamefont {E.~C.~G.}\ \bibnamefont {Sudarshan}},\ }\bibfield  {title} {\bibinfo {title} {{Completely positive dynamical semigroups of $N$-level systems}},\ }\href {https://doi.org/10.1063/1.522979} {\bibfield  {journal} {\bibinfo  {journal} {J. Math. Phys.}\ }\textbf {\bibinfo {volume} {17}},\ \bibinfo {pages} {821} (\bibinfo {year} {1976})}\BibitemShut {NoStop}%
\bibitem [{\citenamefont {Breuer}\ and\ \citenamefont {Petruccione}(2007)}]{Breuer2007}%
  \BibitemOpen
  \bibfield  {author} {\bibinfo {author} {\bibfnamefont {H.-P.}\ \bibnamefont {Breuer}}\ and\ \bibinfo {author} {\bibfnamefont {F.}~\bibnamefont {Petruccione}},\ }\href@noop {} {\emph {\bibinfo {title} {{The Theory of Open Quantum Systems}}}}\ (\bibinfo  {publisher} {Oxford University Press},\ \bibinfo {year} {2007})\BibitemShut {NoStop}%
\bibitem [{\citenamefont {Albert}\ \emph {et~al.}(2016)\citenamefont {Albert}, \citenamefont {Bradlyn}, \citenamefont {Fraas},\ and\ \citenamefont {Jiang}}]{Albert2016:GeometryLindbladians}%
  \BibitemOpen
  \bibfield  {author} {\bibinfo {author} {\bibfnamefont {V.~V.}\ \bibnamefont {Albert}}, \bibinfo {author} {\bibfnamefont {B.}~\bibnamefont {Bradlyn}}, \bibinfo {author} {\bibfnamefont {M.}~\bibnamefont {Fraas}},\ and\ \bibinfo {author} {\bibfnamefont {L.}~\bibnamefont {Jiang}},\ }\bibfield  {title} {\bibinfo {title} {{Geometry and Response of Lindbladians}},\ }\href {https://doi.org/10.1103/PhysRevX.6.041031} {\bibfield  {journal} {\bibinfo  {journal} {Phys. Rev. X}\ }\textbf {\bibinfo {volume} {6}},\ \bibinfo {pages} {041031} (\bibinfo {year} {2016})}\BibitemShut {NoStop}%
\bibitem [{\citenamefont {Fujiwara}\ and\ \citenamefont {Imai}(2008)}]{Fujiwara2008:fibre}%
  \BibitemOpen
  \bibfield  {author} {\bibinfo {author} {\bibfnamefont {A.}~\bibnamefont {Fujiwara}}\ and\ \bibinfo {author} {\bibfnamefont {H.}~\bibnamefont {Imai}},\ }\bibfield  {title} {\bibinfo {title} {{A fibre bundle over manifolds of quantum channels and its application to quantum statistics}},\ }\href {https://doi.org/10.1088/1751-8113/41/25/255304} {\bibfield  {journal} {\bibinfo  {journal} {J. Phys. A: Math. Theor.}\ }\textbf {\bibinfo {volume} {41}},\ \bibinfo {pages} {255304} (\bibinfo {year} {2008})}\BibitemShut {NoStop}%
\bibitem [{\citenamefont {Escher}\ \emph {et~al.}(2011)\citenamefont {Escher}, \citenamefont {de~Matos~Filho},\ and\ \citenamefont {Davidovich}}]{Escher2011:Framework}%
  \BibitemOpen
  \bibfield  {author} {\bibinfo {author} {\bibfnamefont {B.~M.}\ \bibnamefont {Escher}}, \bibinfo {author} {\bibfnamefont {R.~L.}\ \bibnamefont {de~Matos~Filho}},\ and\ \bibinfo {author} {\bibfnamefont {L.}~\bibnamefont {Davidovich}},\ }\bibfield  {title} {\bibinfo {title} {{General framework for estimating the ultimate precision limit in noisy quantum-enhanced metrology}},\ }\href {https://doi.org/10.1038/nphys1958} {\bibfield  {journal} {\bibinfo  {journal} {Nat. Phys.}\ }\textbf {\bibinfo {volume} {7}},\ \bibinfo {pages} {406} (\bibinfo {year} {2011})}\BibitemShut {NoStop}%
\bibitem [{\citenamefont {Demkowicz-Dobrzański}\ \emph {et~al.}(2012)\citenamefont {Demkowicz-Dobrzański}, \citenamefont {Kołodyński},\ and\ \citenamefont {Guţă}}]{Dobrza2012ElusiveHeisenberg}%
  \BibitemOpen
  \bibfield  {author} {\bibinfo {author} {\bibfnamefont {R.}~\bibnamefont {Demkowicz-Dobrzański}}, \bibinfo {author} {\bibfnamefont {J.}~\bibnamefont {Kołodyński}},\ and\ \bibinfo {author} {\bibfnamefont {M.}~\bibnamefont {Guţă}},\ }\bibfield  {title} {\bibinfo {title} {{The elusive Heisenberg limit in quantum-enhanced metrology}},\ }\href {https://doi.org/10.1038/ncomms2067} {\bibfield  {journal} {\bibinfo  {journal} {Nat. Commun.}\ }\textbf {\bibinfo {volume} {3}},\ \bibinfo {pages} {1063} (\bibinfo {year} {2012})}\BibitemShut {NoStop}%
\bibitem [{\citenamefont {Ko{\l}ody{\'n}ski}\ and\ \citenamefont {Demkowicz-Dobrza{\'n}ski}(2013)}]{Kolodynski2013:EfficientTools}%
  \BibitemOpen
  \bibfield  {author} {\bibinfo {author} {\bibfnamefont {J.}~\bibnamefont {Ko{\l}ody{\'n}ski}}\ and\ \bibinfo {author} {\bibfnamefont {R.}~\bibnamefont {Demkowicz-Dobrza{\'n}ski}},\ }\bibfield  {title} {\bibinfo {title} {{Efficient tools for quantum metrology with uncorrelated noise}},\ }\href {https://doi.org/10.1088/1367-2630/15/7/073043} {\bibfield  {journal} {\bibinfo  {journal} {New J. Phys.}\ }\textbf {\bibinfo {volume} {15}},\ \bibinfo {pages} {073043} (\bibinfo {year} {2013})}\BibitemShut {NoStop}%
\bibitem [{\citenamefont {Catana}\ \emph {et~al.}(2015)\citenamefont {Catana}, \citenamefont {Bouten},\ and\ \citenamefont {Guţă}}]{Catana2015}%
  \BibitemOpen
  \bibfield  {author} {\bibinfo {author} {\bibfnamefont {C.}~\bibnamefont {Catana}}, \bibinfo {author} {\bibfnamefont {L.}~\bibnamefont {Bouten}},\ and\ \bibinfo {author} {\bibfnamefont {M.}~\bibnamefont {Guţă}},\ }\bibfield  {title} {\bibinfo {title} {{Fisher informations and local asymptotic normality for continuous-time quantum Markov processes}},\ }\href {https://doi.org/10.1088/1751-8113/48/36/365301} {\bibfield  {journal} {\bibinfo  {journal} {J. Phys. A: Math. Theor.}\ }\textbf {\bibinfo {volume} {48}},\ \bibinfo {pages} {365301} (\bibinfo {year} {2015})}\BibitemShut {NoStop}%
\bibitem [{\citenamefont {Guta}\ and\ \citenamefont {Kiukas}(2017)}]{Guta2017}%
  \BibitemOpen
  \bibfield  {author} {\bibinfo {author} {\bibfnamefont {M.}~\bibnamefont {Guta}}\ and\ \bibinfo {author} {\bibfnamefont {J.}~\bibnamefont {Kiukas}},\ }\bibfield  {title} {\bibinfo {title} {{Information geometry and local asymptotic normality for multi-parameter estimation of quantum Markov dynamics}},\ }\bibfield  {journal} {\bibinfo  {journal} {J. Math. Phys.}\ }\textbf {\bibinfo {volume} {58}},\ \href {https://doi.org/10.1063/1.4982958} {10.1063/1.4982958} (\bibinfo {year} {2017})\BibitemShut {NoStop}%
\bibitem [{\citenamefont {Kurdzia\l{}ek}\ \emph {et~al.}(2023)\citenamefont {Kurdzia\l{}ek}, \citenamefont {G\'orecki}, \citenamefont {Albarelli},\ and\ \citenamefont {Demkowicz-Dobrza\ifmmode~\acute{n}\else \'{n}\fi{}ski}}]{Gorecki2023:CausalAdaptive}%
  \BibitemOpen
  \bibfield  {author} {\bibinfo {author} {\bibfnamefont {S.}~\bibnamefont {Kurdzia\l{}ek}}, \bibinfo {author} {\bibfnamefont {W.}~\bibnamefont {G\'orecki}}, \bibinfo {author} {\bibfnamefont {F.}~\bibnamefont {Albarelli}},\ and\ \bibinfo {author} {\bibfnamefont {R.}~\bibnamefont {Demkowicz-Dobrza\ifmmode~\acute{n}\else \'{n}\fi{}ski}},\ }\bibfield  {title} {\bibinfo {title} {{Using Adaptiveness and Causal Superpositions Against Noise in Quantum Metrology}},\ }\href {https://doi.org/10.1103/PhysRevLett.131.090801} {\bibfield  {journal} {\bibinfo  {journal} {Phys. Rev. Lett.}\ }\textbf {\bibinfo {volume} {131}},\ \bibinfo {pages} {090801} (\bibinfo {year} {2023})}\BibitemShut {NoStop}%
\bibitem [{\citenamefont {Wan}\ and\ \citenamefont {Lasenby}(2022)}]{Wan2022:AdaptiveMarkov}%
  \BibitemOpen
  \bibfield  {author} {\bibinfo {author} {\bibfnamefont {K.}~\bibnamefont {Wan}}\ and\ \bibinfo {author} {\bibfnamefont {R.}~\bibnamefont {Lasenby}},\ }\bibfield  {title} {\bibinfo {title} {{Bounds on adaptive quantum metrology under Markovian noise}},\ }\href {https://doi.org/10.1103/PhysRevResearch.4.033092} {\bibfield  {journal} {\bibinfo  {journal} {Phys. Rev. Res.}\ }\textbf {\bibinfo {volume} {4}},\ \bibinfo {pages} {033092} (\bibinfo {year} {2022})}\BibitemShut {NoStop}%
\bibitem [{\citenamefont {Das}\ \emph {et~al.}(2025)\citenamefont {Das}, \citenamefont {G\'orecki},\ and\ \citenamefont {Demkowicz-Dobrza\ifmmode~\acute{n}\else \'{n}\fi{}ski}}]{Das2025:QFItimeScalings}%
  \BibitemOpen
  \bibfield  {author} {\bibinfo {author} {\bibfnamefont {A.}~\bibnamefont {Das}}, \bibinfo {author} {\bibfnamefont {W.}~\bibnamefont {G\'orecki}},\ and\ \bibinfo {author} {\bibfnamefont {R.}~\bibnamefont {Demkowicz-Dobrza\ifmmode~\acute{n}\else \'{n}\fi{}ski}},\ }\bibfield  {title} {\bibinfo {title} {{Universal time scalings of sensitivity in Markovian quantum metrology}},\ }\href {https://doi.org/10.1103/PhysRevA.111.L020403} {\bibfield  {journal} {\bibinfo  {journal} {Phys. Rev. A}\ }\textbf {\bibinfo {volume} {111}},\ \bibinfo {pages} {L020403} (\bibinfo {year} {2025})}\BibitemShut {NoStop}%
\bibitem [{\citenamefont {Lu}\ \emph {et~al.}(2010)\citenamefont {Lu}, \citenamefont {Wang},\ and\ \citenamefont {Sun}}]{Lu2010:QFIflow}%
  \BibitemOpen
  \bibfield  {author} {\bibinfo {author} {\bibfnamefont {X.-M.}\ \bibnamefont {Lu}}, \bibinfo {author} {\bibfnamefont {X.}~\bibnamefont {Wang}},\ and\ \bibinfo {author} {\bibfnamefont {C.~P.}\ \bibnamefont {Sun}},\ }\bibfield  {title} {\bibinfo {title} {{Quantum Fisher information flow and non-Markovian processes of open systems}},\ }\href {https://doi.org/10.1103/PhysRevA.82.042103} {\bibfield  {journal} {\bibinfo  {journal} {Phys. Rev. A}\ }\textbf {\bibinfo {volume} {82}},\ \bibinfo {pages} {042103} (\bibinfo {year} {2010})}\BibitemShut {NoStop}%
\bibitem [{\citenamefont {Pires}\ \emph {et~al.}(2016)\citenamefont {Pires}, \citenamefont {Cianciaruso}, \citenamefont {C\'eleri}, \citenamefont {Adesso},\ and\ \citenamefont {Soares-Pinto}}]{Pires2016:QSL}%
  \BibitemOpen
  \bibfield  {author} {\bibinfo {author} {\bibfnamefont {D.~P.}\ \bibnamefont {Pires}}, \bibinfo {author} {\bibfnamefont {M.}~\bibnamefont {Cianciaruso}}, \bibinfo {author} {\bibfnamefont {L.~C.}\ \bibnamefont {C\'eleri}}, \bibinfo {author} {\bibfnamefont {G.}~\bibnamefont {Adesso}},\ and\ \bibinfo {author} {\bibfnamefont {D.~O.}\ \bibnamefont {Soares-Pinto}},\ }\bibfield  {title} {\bibinfo {title} {{Generalized Geometric Quantum Speed Limits}},\ }\href {https://doi.org/10.1103/PhysRevX.6.021031} {\bibfield  {journal} {\bibinfo  {journal} {Phys. Rev. X}\ }\textbf {\bibinfo {volume} {6}},\ \bibinfo {pages} {021031} (\bibinfo {year} {2016})}\BibitemShut {NoStop}%
\bibitem [{\citenamefont {Ito}\ and\ \citenamefont {Dechant}(2020)}]{Ito2020:StochSpeedLimit}%
  \BibitemOpen
  \bibfield  {author} {\bibinfo {author} {\bibfnamefont {S.}~\bibnamefont {Ito}}\ and\ \bibinfo {author} {\bibfnamefont {A.}~\bibnamefont {Dechant}},\ }\bibfield  {title} {\bibinfo {title} {{Stochastic Time Evolution, Information Geometry, and the Cram\'er-Rao Bound}},\ }\href {https://doi.org/10.1103/PhysRevX.10.021056} {\bibfield  {journal} {\bibinfo  {journal} {Phys. Rev. X}\ }\textbf {\bibinfo {volume} {10}},\ \bibinfo {pages} {021056} (\bibinfo {year} {2020})}\BibitemShut {NoStop}%
\bibitem [{\citenamefont {Garc\'{\i}a-Pintos}\ \emph {et~al.}(2022)\citenamefont {Garc\'{\i}a-Pintos}, \citenamefont {Nicholson}, \citenamefont {Green}, \citenamefont {del Campo},\ and\ \citenamefont {Gorshkov}}]{Luispe2022SpeedLimits}%
  \BibitemOpen
  \bibfield  {author} {\bibinfo {author} {\bibfnamefont {L.~P.}\ \bibnamefont {Garc\'{\i}a-Pintos}}, \bibinfo {author} {\bibfnamefont {S.~B.}\ \bibnamefont {Nicholson}}, \bibinfo {author} {\bibfnamefont {J.~R.}\ \bibnamefont {Green}}, \bibinfo {author} {\bibfnamefont {A.}~\bibnamefont {del Campo}},\ and\ \bibinfo {author} {\bibfnamefont {A.~V.}\ \bibnamefont {Gorshkov}},\ }\bibfield  {title} {\bibinfo {title} {{Unifying Quantum and Classical Speed Limits on Observables}},\ }\href {https://doi.org/10.1103/PhysRevX.12.011038} {\bibfield  {journal} {\bibinfo  {journal} {Phys. Rev. X}\ }\textbf {\bibinfo {volume} {12}},\ \bibinfo {pages} {011038} (\bibinfo {year} {2022})}\BibitemShut {NoStop}%
\bibitem [{\citenamefont {Matsumoto}(2002)}]{Matsumoto2002}%
  \BibitemOpen
  \bibfield  {author} {\bibinfo {author} {\bibfnamefont {K.}~\bibnamefont {Matsumoto}},\ }\bibfield  {title} {\bibinfo {title} {{A new approach to the Cram{\'e}r-Rao-type bound of the pure-state model}},\ }\href {https://doi.org/10.1088/0305-4470/35/13/307} {\bibfield  {journal} {\bibinfo  {journal} {J. Phys. A: Math. Gen.}\ }\textbf {\bibinfo {volume} {35}},\ \bibinfo {pages} {3111} (\bibinfo {year} {2002})}\BibitemShut {NoStop}%
\bibitem [{\citenamefont {Radaelli}\ \emph {et~al.}(2026)\citenamefont {Radaelli}, \citenamefont {Smiga}, \citenamefont {Landi},\ and\ \citenamefont {Binder}}]{Radaelli2026:QuJumpUnravel}%
  \BibitemOpen
  \bibfield  {author} {\bibinfo {author} {\bibfnamefont {M.}~\bibnamefont {Radaelli}}, \bibinfo {author} {\bibfnamefont {J.~A.}\ \bibnamefont {Smiga}}, \bibinfo {author} {\bibfnamefont {G.~T.}\ \bibnamefont {Landi}},\ and\ \bibinfo {author} {\bibfnamefont {F.~C.}\ \bibnamefont {Binder}},\ }\bibfield  {title} {\bibinfo {title} {{Parameter estimation for quantum jump unraveling}},\ }\href {https://doi.org/10.22331/q-2026-02-02-1993} {\bibfield  {journal} {\bibinfo  {journal} {Quantum}\ }\textbf {\bibinfo {volume} {10}},\ \bibinfo {pages} {1993} (\bibinfo {year} {2026})}\BibitemShut {NoStop}%
\bibitem [{\citenamefont {Kay}(1993)}]{Kay1993}%
  \BibitemOpen
  \bibfield  {author} {\bibinfo {author} {\bibfnamefont {S.~M.}\ \bibnamefont {Kay}},\ }\href@noop {} {\emph {\bibinfo {title} {{Fundamentals of Statistical Signal Processing: Estimation Theory}}}},\ Vol.~\bibinfo {volume} {1}\ (\bibinfo  {publisher} {Prentice-Hall PTR},\ \bibinfo {year} {1993})\BibitemShut {NoStop}%
\bibitem [{\citenamefont {Proctor}\ \emph {et~al.}(2018)\citenamefont {Proctor}, \citenamefont {Knott},\ and\ \citenamefont {Dunningham}}]{Proctor2018:qsn}%
  \BibitemOpen
  \bibfield  {author} {\bibinfo {author} {\bibfnamefont {T.~J.}\ \bibnamefont {Proctor}}, \bibinfo {author} {\bibfnamefont {P.~A.}\ \bibnamefont {Knott}},\ and\ \bibinfo {author} {\bibfnamefont {J.~A.}\ \bibnamefont {Dunningham}},\ }\bibfield  {title} {\bibinfo {title} {{Multiparameter Estimation in Networked Quantum Sensors}},\ }\href {https://doi.org/10.1103/PhysRevLett.120.080501} {\bibfield  {journal} {\bibinfo  {journal} {Phys. Rev. Lett.}\ }\textbf {\bibinfo {volume} {120}},\ \bibinfo {pages} {080501} (\bibinfo {year} {2018})}\BibitemShut {NoStop}%
\bibitem [{\citenamefont {Zhuang}\ \emph {et~al.}(2018)\citenamefont {Zhuang}, \citenamefont {Zhang},\ and\ \citenamefont {Shapiro}}]{Zhuang2018:dqs}%
  \BibitemOpen
  \bibfield  {author} {\bibinfo {author} {\bibfnamefont {Q.}~\bibnamefont {Zhuang}}, \bibinfo {author} {\bibfnamefont {Z.}~\bibnamefont {Zhang}},\ and\ \bibinfo {author} {\bibfnamefont {J.~H.}\ \bibnamefont {Shapiro}},\ }\bibfield  {title} {\bibinfo {title} {{Distributed quantum sensing using continuous-variable multipartite entanglement}},\ }\href {https://doi.org/10.1103/PhysRevA.97.032329} {\bibfield  {journal} {\bibinfo  {journal} {Phys. Rev. A}\ }\textbf {\bibinfo {volume} {97}},\ \bibinfo {pages} {032329} (\bibinfo {year} {2018})}\BibitemShut {NoStop}%
\bibitem [{\citenamefont {Eldredge}\ \emph {et~al.}(2018)\citenamefont {Eldredge}, \citenamefont {Foss-Feig}, \citenamefont {Gross}, \citenamefont {Rolston},\ and\ \citenamefont {Gorshkov}}]{Eldredge2018:SecureQSNs}%
  \BibitemOpen
  \bibfield  {author} {\bibinfo {author} {\bibfnamefont {Z.}~\bibnamefont {Eldredge}}, \bibinfo {author} {\bibfnamefont {M.}~\bibnamefont {Foss-Feig}}, \bibinfo {author} {\bibfnamefont {J.~A.}\ \bibnamefont {Gross}}, \bibinfo {author} {\bibfnamefont {S.~L.}\ \bibnamefont {Rolston}},\ and\ \bibinfo {author} {\bibfnamefont {A.~V.}\ \bibnamefont {Gorshkov}},\ }\bibfield  {title} {\bibinfo {title} {{Optimal and secure measurement protocols for quantum sensor networks}},\ }\href {https://doi.org/10.1103/PhysRevA.97.042337} {\bibfield  {journal} {\bibinfo  {journal} {Phys. Rev. A}\ }\textbf {\bibinfo {volume} {97}},\ \bibinfo {pages} {042337} (\bibinfo {year} {2018})}\BibitemShut {NoStop}%
\bibitem [{\citenamefont {Ge}\ \emph {et~al.}(2018)\citenamefont {Ge}, \citenamefont {Jacobs}, \citenamefont {Eldredge}, \citenamefont {Gorshkov},\ and\ \citenamefont {Foss-Feig}}]{Ge2018:qsnLinNetworks}%
  \BibitemOpen
  \bibfield  {author} {\bibinfo {author} {\bibfnamefont {W.}~\bibnamefont {Ge}}, \bibinfo {author} {\bibfnamefont {K.}~\bibnamefont {Jacobs}}, \bibinfo {author} {\bibfnamefont {Z.}~\bibnamefont {Eldredge}}, \bibinfo {author} {\bibfnamefont {A.~V.}\ \bibnamefont {Gorshkov}},\ and\ \bibinfo {author} {\bibfnamefont {M.}~\bibnamefont {Foss-Feig}},\ }\bibfield  {title} {\bibinfo {title} {{Distributed Quantum Metrology with Linear Networks and Separable Inputs}},\ }\href {https://doi.org/10.1103/PhysRevLett.121.043604} {\bibfield  {journal} {\bibinfo  {journal} {Phys. Rev. Lett.}\ }\textbf {\bibinfo {volume} {121}},\ \bibinfo {pages} {043604} (\bibinfo {year} {2018})}\BibitemShut {NoStop}%
\bibitem [{\citenamefont {Zhang}\ and\ \citenamefont {Zhuang}(2021)}]{Zhang2021:DQSrvw}%
  \BibitemOpen
  \bibfield  {author} {\bibinfo {author} {\bibfnamefont {Z.}~\bibnamefont {Zhang}}\ and\ \bibinfo {author} {\bibfnamefont {Q.}~\bibnamefont {Zhuang}},\ }\bibfield  {title} {\bibinfo {title} {{Distributed quantum sensing}},\ }\href {https://doi.org/10.1088/2058-9565/abd4c3} {\bibfield  {journal} {\bibinfo  {journal} {Quantum Sci. Technol.}\ }\textbf {\bibinfo {volume} {6}},\ \bibinfo {pages} {043001} (\bibinfo {year} {2021})}\BibitemShut {NoStop}%
\bibitem [{\citenamefont {Monz}\ \emph {et~al.}(2011)\citenamefont {Monz}, \citenamefont {Schindler}, \citenamefont {Barreiro}, \citenamefont {Chwalla}, \citenamefont {Nigg}, \citenamefont {Coish}, \citenamefont {Harlander}, \citenamefont {H\"ansel}, \citenamefont {Hennrich} \emph {et~al.}}]{Monz2011:Superdecoherence}%
  \BibitemOpen
  \bibfield  {author} {\bibinfo {author} {\bibfnamefont {T.}~\bibnamefont {Monz}}, \bibinfo {author} {\bibfnamefont {P.}~\bibnamefont {Schindler}}, \bibinfo {author} {\bibfnamefont {J.~T.}\ \bibnamefont {Barreiro}}, \bibinfo {author} {\bibfnamefont {M.}~\bibnamefont {Chwalla}}, \bibinfo {author} {\bibfnamefont {D.}~\bibnamefont {Nigg}}, \bibinfo {author} {\bibfnamefont {W.~A.}\ \bibnamefont {Coish}}, \bibinfo {author} {\bibfnamefont {M.}~\bibnamefont {Harlander}}, \bibinfo {author} {\bibfnamefont {W.}~\bibnamefont {H\"ansel}}, \bibinfo {author} {\bibfnamefont {M.}~\bibnamefont {Hennrich}}, \emph {et~al.},\ }\bibfield  {title} {\bibinfo {title} {{14-Qubit Entanglement: Creation and Coherence}},\ }\href {https://doi.org/10.1103/PhysRevLett.106.130506} {\bibfield  {journal} {\bibinfo  {journal} {Phys. Rev. Lett.}\ }\textbf {\bibinfo {volume} {106}},\ \bibinfo {pages} {130506} (\bibinfo {year} {2011})}\BibitemShut {NoStop}%
\bibitem [{\citenamefont {Shammah}\ \emph {et~al.}(2018)\citenamefont {Shammah}, \citenamefont {Ahmed}, \citenamefont {Lambert}, \citenamefont {De~Liberato},\ and\ \citenamefont {Nori}}]{Shammah2018:CollectiveDecoherence}%
  \BibitemOpen
  \bibfield  {author} {\bibinfo {author} {\bibfnamefont {N.}~\bibnamefont {Shammah}}, \bibinfo {author} {\bibfnamefont {S.}~\bibnamefont {Ahmed}}, \bibinfo {author} {\bibfnamefont {N.}~\bibnamefont {Lambert}}, \bibinfo {author} {\bibfnamefont {S.}~\bibnamefont {De~Liberato}},\ and\ \bibinfo {author} {\bibfnamefont {F.}~\bibnamefont {Nori}},\ }\bibfield  {title} {\bibinfo {title} {Open quantum systems with local and collective incoherent processes: Efficient numerical simulations using permutational invariance},\ }\href {https://doi.org/10.1103/PhysRevA.98.063815} {\bibfield  {journal} {\bibinfo  {journal} {Phys. Rev. A}\ }\textbf {\bibinfo {volume} {98}},\ \bibinfo {pages} {063815} (\bibinfo {year} {2018})}\BibitemShut {NoStop}%
\bibitem [{\citenamefont {Tsang}\ \emph {et~al.}(2011)\citenamefont {Tsang}, \citenamefont {Wiseman},\ and\ \citenamefont {Caves}}]{Tsang2011:Waveform}%
  \BibitemOpen
  \bibfield  {author} {\bibinfo {author} {\bibfnamefont {M.}~\bibnamefont {Tsang}}, \bibinfo {author} {\bibfnamefont {H.~M.}\ \bibnamefont {Wiseman}},\ and\ \bibinfo {author} {\bibfnamefont {C.~M.}\ \bibnamefont {Caves}},\ }\bibfield  {title} {\bibinfo {title} {{Fundamental Quantum Limit to Waveform Estimation}},\ }\href {https://doi.org/10.1103/PhysRevLett.106.090401} {\bibfield  {journal} {\bibinfo  {journal} {Phys. Rev. Lett.}\ }\textbf {\bibinfo {volume} {106}},\ \bibinfo {pages} {090401} (\bibinfo {year} {2011})}\BibitemShut {NoStop}%
\bibitem [{\citenamefont {Ng}\ \emph {et~al.}(2016)\citenamefont {Ng}, \citenamefont {Ang}, \citenamefont {Wheatley}, \citenamefont {Yonezawa}, \citenamefont {Furusawa}, \citenamefont {Huntington},\ and\ \citenamefont {Tsang}}]{Ng2016:WaveformSpectrum}%
  \BibitemOpen
  \bibfield  {author} {\bibinfo {author} {\bibfnamefont {S.}~\bibnamefont {Ng}}, \bibinfo {author} {\bibfnamefont {S.~Z.}\ \bibnamefont {Ang}}, \bibinfo {author} {\bibfnamefont {T.~A.}\ \bibnamefont {Wheatley}}, \bibinfo {author} {\bibfnamefont {H.}~\bibnamefont {Yonezawa}}, \bibinfo {author} {\bibfnamefont {A.}~\bibnamefont {Furusawa}}, \bibinfo {author} {\bibfnamefont {E.~H.}\ \bibnamefont {Huntington}},\ and\ \bibinfo {author} {\bibfnamefont {M.}~\bibnamefont {Tsang}},\ }\bibfield  {title} {\bibinfo {title} {{Spectrum analysis with quantum dynamical systems}},\ }\href {https://doi.org/10.1103/PhysRevA.93.042121} {\bibfield  {journal} {\bibinfo  {journal} {Phys. Rev. A}\ }\textbf {\bibinfo {volume} {93}},\ \bibinfo {pages} {042121} (\bibinfo {year} {2016})}\BibitemShut {NoStop}%
\bibitem [{\citenamefont {Gardner}\ \emph {et~al.}(2025{\natexlab{b}})\citenamefont {Gardner}, \citenamefont {Gefen}, \citenamefont {Haine}, \citenamefont {Hope}, \citenamefont {Preskill}, \citenamefont {Chen},\ and\ \citenamefont {McCuller}}]{Gardner2025:Waveform}%
  \BibitemOpen
  \bibfield  {author} {\bibinfo {author} {\bibfnamefont {J.~W.}\ \bibnamefont {Gardner}}, \bibinfo {author} {\bibfnamefont {T.}~\bibnamefont {Gefen}}, \bibinfo {author} {\bibfnamefont {S.~A.}\ \bibnamefont {Haine}}, \bibinfo {author} {\bibfnamefont {J.~J.}\ \bibnamefont {Hope}}, \bibinfo {author} {\bibfnamefont {J.}~\bibnamefont {Preskill}}, \bibinfo {author} {\bibfnamefont {Y.}~\bibnamefont {Chen}},\ and\ \bibinfo {author} {\bibfnamefont {L.}~\bibnamefont {McCuller}},\ }\bibfield  {title} {\bibinfo {title} {{Stochastic Waveform Estimation at the Fundamental Quantum Limit}},\ }\href {https://doi.org/10.1103/h91r-4ws9} {\bibfield  {journal} {\bibinfo  {journal} {PRX Quantum}\ }\textbf {\bibinfo {volume} {6}},\ \bibinfo {pages} {030311} (\bibinfo {year} {2025}{\natexlab{b}})}\BibitemShut {NoStop}%
\bibitem [{\citenamefont {Brady}\ \emph {et~al.}(2022)\citenamefont {Brady}, \citenamefont {Gao}, \citenamefont {Harnik}, \citenamefont {Liu}, \citenamefont {Zhang},\ and\ \citenamefont {Zhuang}}]{Brady2022:cavityDMSearch}%
  \BibitemOpen
  \bibfield  {author} {\bibinfo {author} {\bibfnamefont {A.~J.}\ \bibnamefont {Brady}}, \bibinfo {author} {\bibfnamefont {C.}~\bibnamefont {Gao}}, \bibinfo {author} {\bibfnamefont {R.}~\bibnamefont {Harnik}}, \bibinfo {author} {\bibfnamefont {Z.}~\bibnamefont {Liu}}, \bibinfo {author} {\bibfnamefont {Z.}~\bibnamefont {Zhang}},\ and\ \bibinfo {author} {\bibfnamefont {Q.}~\bibnamefont {Zhuang}},\ }\bibfield  {title} {\bibinfo {title} {{Entangled Sensor-Networks for Dark-Matter Searches}},\ }\href {https://doi.org/10.1103/PRXQuantum.3.030333} {\bibfield  {journal} {\bibinfo  {journal} {PRX Quantum}\ }\textbf {\bibinfo {volume} {3}},\ \bibinfo {pages} {030333} (\bibinfo {year} {2022})}\BibitemShut {NoStop}%
\bibitem [{\citenamefont {Brady}\ \emph {et~al.}(2023)\citenamefont {Brady}, \citenamefont {Chen}, \citenamefont {Xia}, \citenamefont {Manley}, \citenamefont {Dey~Chowdhury}, \citenamefont {Xiao}, \citenamefont {Liu}, \citenamefont {Harnik}, \citenamefont {Wilson} \emph {et~al.}}]{Brady2023:OmechArrayDMsearch}%
  \BibitemOpen
  \bibfield  {author} {\bibinfo {author} {\bibfnamefont {A.~J.}\ \bibnamefont {Brady}}, \bibinfo {author} {\bibfnamefont {X.}~\bibnamefont {Chen}}, \bibinfo {author} {\bibfnamefont {Y.}~\bibnamefont {Xia}}, \bibinfo {author} {\bibfnamefont {J.}~\bibnamefont {Manley}}, \bibinfo {author} {\bibfnamefont {M.}~\bibnamefont {Dey~Chowdhury}}, \bibinfo {author} {\bibfnamefont {K.}~\bibnamefont {Xiao}}, \bibinfo {author} {\bibfnamefont {Z.}~\bibnamefont {Liu}}, \bibinfo {author} {\bibfnamefont {R.}~\bibnamefont {Harnik}}, \bibinfo {author} {\bibfnamefont {D.~J.}\ \bibnamefont {Wilson}}, \emph {et~al.},\ }\bibfield  {title} {\bibinfo {title} {{Entanglement-enhanced optomechanical sensor array with application to dark matter searches}},\ }\href {https://doi.org/10.1038/s42005-023-01357-z} {\bibfield  {journal} {\bibinfo  {journal} {Commun. Phys.}\ }\textbf {\bibinfo {volume} {6}},\ \bibinfo {pages} {237} (\bibinfo {year} {2023})}\BibitemShut {NoStop}%
\bibitem [{\citenamefont {Xia}\ \emph {et~al.}(2023)\citenamefont {Xia}, \citenamefont {Agrawal}, \citenamefont {Pluchar}, \citenamefont {Brady}, \citenamefont {Liu}, \citenamefont {Zhuang}, \citenamefont {Wilson},\ and\ \citenamefont {Zhang}}]{Xia2023:OmechDQS}%
  \BibitemOpen
  \bibfield  {author} {\bibinfo {author} {\bibfnamefont {Y.}~\bibnamefont {Xia}}, \bibinfo {author} {\bibfnamefont {A.~R.}\ \bibnamefont {Agrawal}}, \bibinfo {author} {\bibfnamefont {C.~M.}\ \bibnamefont {Pluchar}}, \bibinfo {author} {\bibfnamefont {A.~J.}\ \bibnamefont {Brady}}, \bibinfo {author} {\bibfnamefont {Z.}~\bibnamefont {Liu}}, \bibinfo {author} {\bibfnamefont {Q.}~\bibnamefont {Zhuang}}, \bibinfo {author} {\bibfnamefont {D.~J.}\ \bibnamefont {Wilson}},\ and\ \bibinfo {author} {\bibfnamefont {Z.}~\bibnamefont {Zhang}},\ }\bibfield  {title} {\bibinfo {title} {{Entanglement-enhanced optomechanical sensing}},\ }\href {https://doi.org/10.1038/s41566-023-01178-0} {\bibfield  {journal} {\bibinfo  {journal} {Nat. Photon.}\ }\textbf {\bibinfo {volume} {17}},\ \bibinfo {pages} {470} (\bibinfo {year} {2023})}\BibitemShut {NoStop}%
\bibitem [{\citenamefont {Guo}\ \emph {et~al.}(2020)\citenamefont {Guo}, \citenamefont {Breum}, \citenamefont {Borregaard}, \citenamefont {Izumi}, \citenamefont {Larsen}, \citenamefont {Gehring}, \citenamefont {Christandl}, \citenamefont {Neergaard-Nielsen},\ and\ \citenamefont {Andersen}}]{Guo2020:dqsCVnetwork}%
  \BibitemOpen
  \bibfield  {author} {\bibinfo {author} {\bibfnamefont {X.}~\bibnamefont {Guo}}, \bibinfo {author} {\bibfnamefont {C.~R.}\ \bibnamefont {Breum}}, \bibinfo {author} {\bibfnamefont {J.}~\bibnamefont {Borregaard}}, \bibinfo {author} {\bibfnamefont {S.}~\bibnamefont {Izumi}}, \bibinfo {author} {\bibfnamefont {M.~V.}\ \bibnamefont {Larsen}}, \bibinfo {author} {\bibfnamefont {T.}~\bibnamefont {Gehring}}, \bibinfo {author} {\bibfnamefont {M.}~\bibnamefont {Christandl}}, \bibinfo {author} {\bibfnamefont {J.~S.}\ \bibnamefont {Neergaard-Nielsen}},\ and\ \bibinfo {author} {\bibfnamefont {U.~L.}\ \bibnamefont {Andersen}},\ }\bibfield  {title} {\bibinfo {title} {{Distributed quantum sensing in a continuous-variable entangled network}},\ }\href {https://doi.org/10.1038/s41567-019-0743-x} {\bibfield  {journal} {\bibinfo  {journal} {Nat. Phys.}\ }\textbf {\bibinfo {volume} {16}},\ \bibinfo {pages} {281} (\bibinfo {year} {2020})}\BibitemShut {NoStop}%
\bibitem [{\citenamefont {Gilmore}\ \emph {et~al.}(2021)\citenamefont {Gilmore}, \citenamefont {Affolter}, \citenamefont {Lewis-Swan}, \citenamefont {Barberena}, \citenamefont {Jordan}, \citenamefont {Rey},\ and\ \citenamefont {Bollinger}}]{Gilmore2021:IonEFieldQSN}%
  \BibitemOpen
  \bibfield  {author} {\bibinfo {author} {\bibfnamefont {K.~A.}\ \bibnamefont {Gilmore}}, \bibinfo {author} {\bibfnamefont {M.}~\bibnamefont {Affolter}}, \bibinfo {author} {\bibfnamefont {R.~J.}\ \bibnamefont {Lewis-Swan}}, \bibinfo {author} {\bibfnamefont {D.}~\bibnamefont {Barberena}}, \bibinfo {author} {\bibfnamefont {E.}~\bibnamefont {Jordan}}, \bibinfo {author} {\bibfnamefont {A.~M.}\ \bibnamefont {Rey}},\ and\ \bibinfo {author} {\bibfnamefont {J.~J.}\ \bibnamefont {Bollinger}},\ }\bibfield  {title} {\bibinfo {title} {{Quantum-enhanced sensing of displacements and electric fields with two-dimensional trapped-ion crystals}},\ }\href {https://doi.org/10.1126/science.abi5226} {\bibfield  {journal} {\bibinfo  {journal} {Science}\ }\textbf {\bibinfo {volume} {373}},\ \bibinfo {pages} {673} (\bibinfo {year} {2021})}\BibitemShut {NoStop}%
\bibitem [{\citenamefont {Vasilyev}\ \emph {et~al.}(2024)\citenamefont {Vasilyev}, \citenamefont {Shankar}, \citenamefont {Kaubruegger},\ and\ \citenamefont {Zoller}}]{Vasilyev2024:QuCompass}%
  \BibitemOpen
  \bibfield  {author} {\bibinfo {author} {\bibfnamefont {D.~V.}\ \bibnamefont {Vasilyev}}, \bibinfo {author} {\bibfnamefont {A.}~\bibnamefont {Shankar}}, \bibinfo {author} {\bibfnamefont {R.}~\bibnamefont {Kaubruegger}},\ and\ \bibinfo {author} {\bibfnamefont {P.}~\bibnamefont {Zoller}},\ }\href@noop {} {\bibinfo {title} {{Optimal Multiparameter Metrology: The Quantum Compass Solution}}} (\bibinfo {year} {2024}),\ \Eprint {https://arxiv.org/abs/2404.14194} {arXiv:2404.14194 [quant-ph]} \BibitemShut {NoStop}%
\bibitem [{\citenamefont {Shankar}(2012)}]{Shankar2012QM}%
  \BibitemOpen
  \bibfield  {author} {\bibinfo {author} {\bibfnamefont {R.}~\bibnamefont {Shankar}},\ }\href {https://doi.org/10.1007/978-1-4757-0576-8} {\emph {\bibinfo {title} {Principles of Quantum Mechanics}}},\ \bibinfo {edition} {2nd}\ ed.\ (\bibinfo  {publisher} {Springer},\ \bibinfo {year} {2012})\BibitemShut {NoStop}%
\bibitem [{\citenamefont {Omanakuttan}\ \emph {et~al.}(2024)\citenamefont {Omanakuttan}, \citenamefont {Buchemmavari}, \citenamefont {Gross}, \citenamefont {Deutsch},\ and\ \citenamefont {Marvian}}]{Omanakuttan2024:SpinCat}%
  \BibitemOpen
  \bibfield  {author} {\bibinfo {author} {\bibfnamefont {S.}~\bibnamefont {Omanakuttan}}, \bibinfo {author} {\bibfnamefont {V.}~\bibnamefont {Buchemmavari}}, \bibinfo {author} {\bibfnamefont {J.~A.}\ \bibnamefont {Gross}}, \bibinfo {author} {\bibfnamefont {I.~H.}\ \bibnamefont {Deutsch}},\ and\ \bibinfo {author} {\bibfnamefont {M.}~\bibnamefont {Marvian}},\ }\bibfield  {title} {\bibinfo {title} {Fault-tolerant quantum computation using large spin-cat codes},\ }\href {https://doi.org/10.1103/PRXQuantum.5.020355} {\bibfield  {journal} {\bibinfo  {journal} {PRX Quantum}\ }\textbf {\bibinfo {volume} {5}},\ \bibinfo {pages} {020355} (\bibinfo {year} {2024})}\BibitemShut {NoStop}%
\bibitem [{\citenamefont {T\'oth}(2012)}]{Toth2012:dicke}%
  \BibitemOpen
  \bibfield  {author} {\bibinfo {author} {\bibfnamefont {G.}~\bibnamefont {T\'oth}},\ }\bibfield  {title} {\bibinfo {title} {{Multipartite entanglement and high-precision metrology}},\ }\href {https://doi.org/10.1103/PhysRevA.85.022322} {\bibfield  {journal} {\bibinfo  {journal} {Phys. Rev. A}\ }\textbf {\bibinfo {volume} {85}},\ \bibinfo {pages} {022322} (\bibinfo {year} {2012})}\BibitemShut {NoStop}%
\bibitem [{\citenamefont {Fujiwara}\ and\ \citenamefont {Imai}(2003)}]{Fujiwara2003:PauliNoise}%
  \BibitemOpen
  \bibfield  {author} {\bibinfo {author} {\bibfnamefont {A.}~\bibnamefont {Fujiwara}}\ and\ \bibinfo {author} {\bibfnamefont {H.}~\bibnamefont {Imai}},\ }\bibfield  {title} {\bibinfo {title} {{Quantum parameter estimation of a generalized Pauli channel}},\ }\href {https://doi.org/10.1088/0305-4470/36/29/314} {\bibfield  {journal} {\bibinfo  {journal} {J. Phys. A: Math. Gen.}\ }\textbf {\bibinfo {volume} {36}},\ \bibinfo {pages} {8093} (\bibinfo {year} {2003})}\BibitemShut {NoStop}%
\bibitem [{\citenamefont {Flammia}\ and\ \citenamefont {Wallman}(2020)}]{Flammia2020:PauliLearn}%
  \BibitemOpen
  \bibfield  {author} {\bibinfo {author} {\bibfnamefont {S.~T.}\ \bibnamefont {Flammia}}\ and\ \bibinfo {author} {\bibfnamefont {J.~J.}\ \bibnamefont {Wallman}},\ }\bibfield  {title} {\bibinfo {title} {{Efficient Estimation of Pauli Channels}},\ }\href {https://doi.org/10.1145/3408039} {\bibfield  {journal} {\bibinfo  {journal} {ACM Trans. Quantum Comput.}\ }\textbf {\bibinfo {volume} {1}},\ \bibinfo {pages} {1} (\bibinfo {year} {2020})}\BibitemShut {NoStop}%
\bibitem [{\citenamefont {Kwon}\ \emph {et~al.}(2026)\citenamefont {Kwon}, \citenamefont {Lie},\ and\ \citenamefont {Jiang}}]{Kwon2026:FisherComplexity}%
  \BibitemOpen
  \bibfield  {author} {\bibinfo {author} {\bibfnamefont {H.}~\bibnamefont {Kwon}}, \bibinfo {author} {\bibfnamefont {S.~H.}\ \bibnamefont {Lie}},\ and\ \bibinfo {author} {\bibfnamefont {L.}~\bibnamefont {Jiang}},\ }\href@noop {} {\bibinfo {title} {{Universal Sample Complexity Bounds in Quantum Learning Theory via Fisher Information matrix}}} (\bibinfo {year} {2026}),\ \Eprint {https://arxiv.org/abs/2602.21510} {arXiv:2602.21510 [quant-ph]} \BibitemShut {NoStop}%
\bibitem [{\citenamefont {Tu}\ and\ \citenamefont {Jiang}(2025)}]{Tu2025:LearnMixedU}%
  \BibitemOpen
  \bibfield  {author} {\bibinfo {author} {\bibfnamefont {Y.}~\bibnamefont {Tu}}\ and\ \bibinfo {author} {\bibfnamefont {L.}~\bibnamefont {Jiang}},\ }\href@noop {} {\bibinfo {title} {{Quantum Advantage in Learning Mixed Unitary Channels}}} (\bibinfo {year} {2025}),\ \Eprint {https://arxiv.org/abs/2511.13683} {arXiv:2511.13683 [quant-ph]} \BibitemShut {NoStop}%
\bibitem [{\citenamefont {Tsang}\ \emph {et~al.}(2016)\citenamefont {Tsang}, \citenamefont {Nair},\ and\ \citenamefont {Lu}}]{Tsang2016:SPADE}%
  \BibitemOpen
  \bibfield  {author} {\bibinfo {author} {\bibfnamefont {M.}~\bibnamefont {Tsang}}, \bibinfo {author} {\bibfnamefont {R.}~\bibnamefont {Nair}},\ and\ \bibinfo {author} {\bibfnamefont {X.-M.}\ \bibnamefont {Lu}},\ }\bibfield  {title} {\bibinfo {title} {{Quantum Theory of Superresolution for Two Incoherent Optical Point Sources}},\ }\href {https://doi.org/10.1103/PhysRevX.6.031033} {\bibfield  {journal} {\bibinfo  {journal} {Phys. Rev. X}\ }\textbf {\bibinfo {volume} {6}},\ \bibinfo {pages} {031033} (\bibinfo {year} {2016})}\BibitemShut {NoStop}%
\bibitem [{\citenamefont {Lupo}\ and\ \citenamefont {Pirandola}(2016)}]{Lupo2016:QuImaging}%
  \BibitemOpen
  \bibfield  {author} {\bibinfo {author} {\bibfnamefont {C.}~\bibnamefont {Lupo}}\ and\ \bibinfo {author} {\bibfnamefont {S.}~\bibnamefont {Pirandola}},\ }\bibfield  {title} {\bibinfo {title} {{Ultimate Precision Bound of Quantum and Subwavelength Imaging}},\ }\href {https://doi.org/10.1103/PhysRevLett.117.190802} {\bibfield  {journal} {\bibinfo  {journal} {Phys. Rev. Lett.}\ }\textbf {\bibinfo {volume} {117}},\ \bibinfo {pages} {190802} (\bibinfo {year} {2016})}\BibitemShut {NoStop}%
\bibitem [{\citenamefont {Lupo}\ \emph {et~al.}(2020)\citenamefont {Lupo}, \citenamefont {Huang},\ and\ \citenamefont {Kok}}]{Lupo2020:LinearOptLimits}%
  \BibitemOpen
  \bibfield  {author} {\bibinfo {author} {\bibfnamefont {C.}~\bibnamefont {Lupo}}, \bibinfo {author} {\bibfnamefont {Z.}~\bibnamefont {Huang}},\ and\ \bibinfo {author} {\bibfnamefont {P.}~\bibnamefont {Kok}},\ }\bibfield  {title} {\bibinfo {title} {{Quantum Limits to Incoherent Imaging are Achieved by Linear Interferometry}},\ }\href {https://doi.org/10.1103/PhysRevLett.124.080503} {\bibfield  {journal} {\bibinfo  {journal} {Phys. Rev. Lett.}\ }\textbf {\bibinfo {volume} {124}},\ \bibinfo {pages} {080503} (\bibinfo {year} {2020})}\BibitemShut {NoStop}%
\bibitem [{\citenamefont {\ifmmode \check{R}\else \v{R}\fi{}eha\ifmmode~\check{c}\else \v{c}\fi{}ek}\ \emph {et~al.}(2017)\citenamefont {\ifmmode \check{R}\else \v{R}\fi{}eha\ifmmode~\check{c}\else \v{c}\fi{}ek}, \citenamefont {Hradil}, \citenamefont {Stoklasa}, \citenamefont {Pa\'ur}, \citenamefont {Grover}, \citenamefont {Krzic},\ and\ \citenamefont {S\'anchez-Soto}}]{Rehacek2017:MultiImaging}%
  \BibitemOpen
  \bibfield  {author} {\bibinfo {author} {\bibfnamefont {J.}~\bibnamefont {\ifmmode \check{R}\else \v{R}\fi{}eha\ifmmode~\check{c}\else \v{c}\fi{}ek}}, \bibinfo {author} {\bibfnamefont {Z.}~\bibnamefont {Hradil}}, \bibinfo {author} {\bibfnamefont {B.}~\bibnamefont {Stoklasa}}, \bibinfo {author} {\bibfnamefont {M.}~\bibnamefont {Pa\'ur}}, \bibinfo {author} {\bibfnamefont {J.}~\bibnamefont {Grover}}, \bibinfo {author} {\bibfnamefont {A.}~\bibnamefont {Krzic}},\ and\ \bibinfo {author} {\bibfnamefont {L.~L.}\ \bibnamefont {S\'anchez-Soto}},\ }\bibfield  {title} {\bibinfo {title} {{Multiparameter quantum metrology of incoherent point sources: Towards realistic superresolution}},\ }\href {https://doi.org/10.1103/PhysRevA.96.062107} {\bibfield  {journal} {\bibinfo  {journal} {Phys. Rev. A}\ }\textbf {\bibinfo {volume} {96}},\ \bibinfo {pages} {062107} (\bibinfo {year} {2017})}\BibitemShut {NoStop}%
\bibitem [{\citenamefont {Grace}\ \emph {et~al.}(2020)\citenamefont {Grace}, \citenamefont {Dutton}, \citenamefont {Ashok},\ and\ \citenamefont {Guha}}]{Grace2020:Imaging}%
  \BibitemOpen
  \bibfield  {author} {\bibinfo {author} {\bibfnamefont {M.~R.}\ \bibnamefont {Grace}}, \bibinfo {author} {\bibfnamefont {Z.}~\bibnamefont {Dutton}}, \bibinfo {author} {\bibfnamefont {A.}~\bibnamefont {Ashok}},\ and\ \bibinfo {author} {\bibfnamefont {S.}~\bibnamefont {Guha}},\ }\bibfield  {title} {\bibinfo {title} {Approaching quantum-limited imaging resolution without prior knowledge of the object location},\ }\href {https://doi.org/10.1364/josaa.392116} {\bibfield  {journal} {\bibinfo  {journal} {JOSA A}\ }\textbf {\bibinfo {volume} {37}},\ \bibinfo {pages} {1288} (\bibinfo {year} {2020})}\BibitemShut {NoStop}%
\bibitem [{\citenamefont {Brady}\ \emph {et~al.}(2025)\citenamefont {Brady}, \citenamefont {Gong}, \citenamefont {Gorshkov},\ and\ \citenamefont {Guha}}]{Brady2025:TwinEcho}%
  \BibitemOpen
  \bibfield  {author} {\bibinfo {author} {\bibfnamefont {A.~J.}\ \bibnamefont {Brady}}, \bibinfo {author} {\bibfnamefont {Z.}~\bibnamefont {Gong}}, \bibinfo {author} {\bibfnamefont {A.~V.}\ \bibnamefont {Gorshkov}},\ and\ \bibinfo {author} {\bibfnamefont {S.}~\bibnamefont {Guha}},\ }\href@noop {} {\bibinfo {title} {{Incoherent Imaging with Spatially Structured Quantum Probes}}} (\bibinfo {year} {2025}),\ \Eprint {https://arxiv.org/abs/2510.09521} {arXiv:2510.09521 [quant-ph]} \BibitemShut {NoStop}%
\bibitem [{\citenamefont {Samach}\ \emph {et~al.}(2022)\citenamefont {Samach}, \citenamefont {Greene}, \citenamefont {Borregaard}, \citenamefont {Christandl}, \citenamefont {Barreto}, \citenamefont {Kim}, \citenamefont {McNally}, \citenamefont {Melville}, \citenamefont {Niedzielski} \emph {et~al.}}]{Samach2022:LindbladTomography}%
  \BibitemOpen
  \bibfield  {author} {\bibinfo {author} {\bibfnamefont {G.~O.}\ \bibnamefont {Samach}}, \bibinfo {author} {\bibfnamefont {A.}~\bibnamefont {Greene}}, \bibinfo {author} {\bibfnamefont {J.}~\bibnamefont {Borregaard}}, \bibinfo {author} {\bibfnamefont {M.}~\bibnamefont {Christandl}}, \bibinfo {author} {\bibfnamefont {J.}~\bibnamefont {Barreto}}, \bibinfo {author} {\bibfnamefont {D.~K.}\ \bibnamefont {Kim}}, \bibinfo {author} {\bibfnamefont {C.~M.}\ \bibnamefont {McNally}}, \bibinfo {author} {\bibfnamefont {A.}~\bibnamefont {Melville}}, \bibinfo {author} {\bibfnamefont {B.~M.}\ \bibnamefont {Niedzielski}}, \emph {et~al.},\ }\bibfield  {title} {\bibinfo {title} {{Lindblad Tomography of a Superconducting Quantum Processor}},\ }\href {https://doi.org/10.1103/PhysRevApplied.18.064056} {\bibfield  {journal} {\bibinfo  {journal} {Phys. Rev. Appl.}\ }\textbf {\bibinfo {volume} {18}},\ \bibinfo {pages} {064056} (\bibinfo {year} {2022})}\BibitemShut {NoStop}%
\bibitem [{\citenamefont {Olsacher}\ \emph {et~al.}(2025)\citenamefont {Olsacher}, \citenamefont {Kraft}, \citenamefont {Kokail}, \citenamefont {Kraus},\ and\ \citenamefont {Zoller}}]{Olsacher2025:LindbladLearning}%
  \BibitemOpen
  \bibfield  {author} {\bibinfo {author} {\bibfnamefont {T.}~\bibnamefont {Olsacher}}, \bibinfo {author} {\bibfnamefont {T.}~\bibnamefont {Kraft}}, \bibinfo {author} {\bibfnamefont {C.}~\bibnamefont {Kokail}}, \bibinfo {author} {\bibfnamefont {B.}~\bibnamefont {Kraus}},\ and\ \bibinfo {author} {\bibfnamefont {P.}~\bibnamefont {Zoller}},\ }\bibfield  {title} {\bibinfo {title} {{Hamiltonian and Liouvillian learning in weakly-dissipative quantum many-body systems}},\ }\href {https://doi.org/10.1088/2058-9565/ad9ed5} {\bibfield  {journal} {\bibinfo  {journal} {Quantum Sci. Technol.}\ }\textbf {\bibinfo {volume} {10}},\ \bibinfo {pages} {015065} (\bibinfo {year} {2025})}\BibitemShut {NoStop}%
\bibitem [{\citenamefont {Ivashkov}\ \emph {et~al.}(2026)\citenamefont {Ivashkov}, \citenamefont {Romanov}, \citenamefont {Gong}, \citenamefont {Gu}, \citenamefont {Hu},\ and\ \citenamefont {Yelin}}]{Ivashkov2026:LindbladLearn}%
  \BibitemOpen
  \bibfield  {author} {\bibinfo {author} {\bibfnamefont {P.}~\bibnamefont {Ivashkov}}, \bibinfo {author} {\bibfnamefont {N.}~\bibnamefont {Romanov}}, \bibinfo {author} {\bibfnamefont {W.}~\bibnamefont {Gong}}, \bibinfo {author} {\bibfnamefont {A.}~\bibnamefont {Gu}}, \bibinfo {author} {\bibfnamefont {H.-Y.}\ \bibnamefont {Hu}},\ and\ \bibinfo {author} {\bibfnamefont {S.~F.}\ \bibnamefont {Yelin}},\ }\href@noop {} {\bibinfo {title} {{Ansatz-Free Learning of Lindbladian Dynamics In Situ}}} (\bibinfo {year} {2026}),\ \Eprint {https://arxiv.org/abs/2603.05492} {arXiv:2603.05492 [quant-ph]} \BibitemShut {NoStop}%
\bibitem [{\citenamefont {Breuer}\ \emph {et~al.}(2016)\citenamefont {Breuer}, \citenamefont {Laine}, \citenamefont {Piilo},\ and\ \citenamefont {Vacchini}}]{Breuer2016:NonMarkovRvw}%
  \BibitemOpen
  \bibfield  {author} {\bibinfo {author} {\bibfnamefont {H.-P.}\ \bibnamefont {Breuer}}, \bibinfo {author} {\bibfnamefont {E.-M.}\ \bibnamefont {Laine}}, \bibinfo {author} {\bibfnamefont {J.}~\bibnamefont {Piilo}},\ and\ \bibinfo {author} {\bibfnamefont {B.}~\bibnamefont {Vacchini}},\ }\bibfield  {title} {\bibinfo {title} {{Colloquium: Non-Markovian dynamics in open quantum systems}},\ }\href {https://doi.org/10.1103/RevModPhys.88.021002} {\bibfield  {journal} {\bibinfo  {journal} {Rev. Mod. Phys.}\ }\textbf {\bibinfo {volume} {88}},\ \bibinfo {pages} {021002} (\bibinfo {year} {2016})}\BibitemShut {NoStop}%
\bibitem [{\citenamefont {Groszkowski}\ \emph {et~al.}(2023)\citenamefont {Groszkowski}, \citenamefont {Seif}, \citenamefont {Koch},\ and\ \citenamefont {Clerk}}]{Groszkowski2023:nonMarkovModel}%
  \BibitemOpen
  \bibfield  {author} {\bibinfo {author} {\bibfnamefont {P.}~\bibnamefont {Groszkowski}}, \bibinfo {author} {\bibfnamefont {A.}~\bibnamefont {Seif}}, \bibinfo {author} {\bibfnamefont {J.}~\bibnamefont {Koch}},\ and\ \bibinfo {author} {\bibfnamefont {A.~A.}\ \bibnamefont {Clerk}},\ }\bibfield  {title} {\bibinfo {title} {{Simple master equations for describing driven systems subject to classical non-Markovian noise}},\ }\href {https://doi.org/10.22331/q-2023-04-06-972} {\bibfield  {journal} {\bibinfo  {journal} {Quantum}\ }\textbf {\bibinfo {volume} {7}},\ \bibinfo {pages} {972} (\bibinfo {year} {2023})}\BibitemShut {NoStop}%
\bibitem [{\citenamefont {Chin}\ \emph {et~al.}(2012)\citenamefont {Chin}, \citenamefont {Huelga},\ and\ \citenamefont {Plenio}}]{Chin2012:QuMetrologyNonMarkov}%
  \BibitemOpen
  \bibfield  {author} {\bibinfo {author} {\bibfnamefont {A.~W.}\ \bibnamefont {Chin}}, \bibinfo {author} {\bibfnamefont {S.~F.}\ \bibnamefont {Huelga}},\ and\ \bibinfo {author} {\bibfnamefont {M.~B.}\ \bibnamefont {Plenio}},\ }\bibfield  {title} {\bibinfo {title} {{Quantum Metrology in Non-Markovian Environments}},\ }\href {https://doi.org/10.1103/PhysRevLett.109.233601} {\bibfield  {journal} {\bibinfo  {journal} {Phys. Rev. Lett.}\ }\textbf {\bibinfo {volume} {109}},\ \bibinfo {pages} {233601} (\bibinfo {year} {2012})}\BibitemShut {NoStop}%
\bibitem [{\citenamefont {White}\ \emph {et~al.}(2022)\citenamefont {White}, \citenamefont {Pollock}, \citenamefont {Hollenberg}, \citenamefont {Modi},\ and\ \citenamefont {Hill}}]{White2022:nonMarkovProcessTom}%
  \BibitemOpen
  \bibfield  {author} {\bibinfo {author} {\bibfnamefont {G.}~\bibnamefont {White}}, \bibinfo {author} {\bibfnamefont {F.}~\bibnamefont {Pollock}}, \bibinfo {author} {\bibfnamefont {L.}~\bibnamefont {Hollenberg}}, \bibinfo {author} {\bibfnamefont {K.}~\bibnamefont {Modi}},\ and\ \bibinfo {author} {\bibfnamefont {C.}~\bibnamefont {Hill}},\ }\bibfield  {title} {\bibinfo {title} {{Non-Markovian Quantum Process Tomography}},\ }\href {https://doi.org/10.1103/PRXQuantum.3.020344} {\bibfield  {journal} {\bibinfo  {journal} {PRX Quantum}\ }\textbf {\bibinfo {volume} {3}},\ \bibinfo {pages} {020344} (\bibinfo {year} {2022})}\BibitemShut {NoStop}%
\bibitem [{\citenamefont {Varona}\ \emph {et~al.}(2025)\citenamefont {Varona}, \citenamefont {M{\"u}ller},\ and\ \citenamefont {Bermudez}}]{Varona2025:nonMarkovLearn}%
  \BibitemOpen
  \bibfield  {author} {\bibinfo {author} {\bibfnamefont {S.}~\bibnamefont {Varona}}, \bibinfo {author} {\bibfnamefont {M.}~\bibnamefont {M{\"u}ller}},\ and\ \bibinfo {author} {\bibfnamefont {A.}~\bibnamefont {Bermudez}},\ }\bibfield  {title} {\bibinfo {title} {{Lindblad-like quantum tomography for non-Markovian quantum dynamical maps}},\ }\href {https://doi.org/10.1038/s41534-025-01044-7} {\bibfield  {journal} {\bibinfo  {journal} {npj Quantum Inf.}\ }\textbf {\bibinfo {volume} {11}},\ \bibinfo {pages} {96} (\bibinfo {year} {2025})}\BibitemShut {NoStop}%
\bibitem [{\citenamefont {Montañà-López}\ \emph {et~al.}(2025)\citenamefont {Montañà-López}, \citenamefont {Elben}, \citenamefont {Choi},\ and\ \citenamefont {Trivedi}}]{Jordi2025:NonMarkovLearn}%
  \BibitemOpen
  \bibfield  {author} {\bibinfo {author} {\bibfnamefont {J.~A.}\ \bibnamefont {Montañà-López}}, \bibinfo {author} {\bibfnamefont {A.}~\bibnamefont {Elben}}, \bibinfo {author} {\bibfnamefont {J.}~\bibnamefont {Choi}},\ and\ \bibinfo {author} {\bibfnamefont {R.}~\bibnamefont {Trivedi}},\ }\href@noop {} {\bibinfo {title} {{Efficiently learning non-Markovian noise in many-body quantum simulators}}} (\bibinfo {year} {2025}),\ \Eprint {https://arxiv.org/abs/2511.16772} {arXiv:2511.16772 [quant-ph]} \BibitemShut {NoStop}%
\bibitem [{\citenamefont {Vacchini}\ and\ \citenamefont {Breuer}(2010)}]{Breuer2010}%
  \BibitemOpen
  \bibfield  {author} {\bibinfo {author} {\bibfnamefont {B.}~\bibnamefont {Vacchini}}\ and\ \bibinfo {author} {\bibfnamefont {H.-P.}\ \bibnamefont {Breuer}},\ }\bibfield  {title} {\bibinfo {title} {{Exact master equations for the non-Markovian decay of a qubit}},\ }\href {https://doi.org/10.1103/PhysRevA.81.042103} {\bibfield  {journal} {\bibinfo  {journal} {Phys. Rev. A}\ }\textbf {\bibinfo {volume} {81}},\ \bibinfo {pages} {042103} (\bibinfo {year} {2010})}\BibitemShut {NoStop}%
\bibitem [{\citenamefont {Ishizaki}\ and\ \citenamefont {Tanimura}(2005)}]{Tanimura2005}%
  \BibitemOpen
  \bibfield  {author} {\bibinfo {author} {\bibfnamefont {A.}~\bibnamefont {Ishizaki}}\ and\ \bibinfo {author} {\bibfnamefont {Y.}~\bibnamefont {Tanimura}},\ }\bibfield  {title} {\bibinfo {title} {{Quantum Dynamics of System Strongly Coupled to Low-Temperature Colored Noise Bath: Reduced Hierarchy Equations Approach}},\ }\href {https://doi.org/10.1143/jpsj.74.3131} {\bibfield  {journal} {\bibinfo  {journal} {J. Phys. Soc. Jpn.}\ }\textbf {\bibinfo {volume} {74}},\ \bibinfo {pages} {3131–3134} (\bibinfo {year} {2005})}\BibitemShut {NoStop}%
\bibitem [{\citenamefont {Rosenbach}\ \emph {et~al.}(2016)\citenamefont {Rosenbach}, \citenamefont {Cerrillo}, \citenamefont {Huelga}, \citenamefont {Cao},\ and\ \citenamefont {Plenio}}]{Plenio2016}%
  \BibitemOpen
  \bibfield  {author} {\bibinfo {author} {\bibfnamefont {R.}~\bibnamefont {Rosenbach}}, \bibinfo {author} {\bibfnamefont {J.}~\bibnamefont {Cerrillo}}, \bibinfo {author} {\bibfnamefont {S.~F.}\ \bibnamefont {Huelga}}, \bibinfo {author} {\bibfnamefont {J.}~\bibnamefont {Cao}},\ and\ \bibinfo {author} {\bibfnamefont {M.~B.}\ \bibnamefont {Plenio}},\ }\bibfield  {title} {\bibinfo {title} {{Efficient simulation of non-Markovian system-environment interaction}},\ }\href {https://doi.org/10.1088/1367-2630/18/2/023035} {\bibfield  {journal} {\bibinfo  {journal} {New J. Phys.}\ }\textbf {\bibinfo {volume} {18}},\ \bibinfo {pages} {023035} (\bibinfo {year} {2016})}\BibitemShut {NoStop}%
\bibitem [{\citenamefont {Trivedi}\ \emph {et~al.}(2021)\citenamefont {Trivedi}, \citenamefont {Malz},\ and\ \citenamefont {Cirac}}]{Cirac2021}%
  \BibitemOpen
  \bibfield  {author} {\bibinfo {author} {\bibfnamefont {R.}~\bibnamefont {Trivedi}}, \bibinfo {author} {\bibfnamefont {D.}~\bibnamefont {Malz}},\ and\ \bibinfo {author} {\bibfnamefont {J.~I.}\ \bibnamefont {Cirac}},\ }\bibfield  {title} {\bibinfo {title} {{Convergence Guarantees for Discrete Mode Approximations to Non-Markovian Quantum Baths}},\ }\href {https://doi.org/10.1103/PhysRevLett.127.250404} {\bibfield  {journal} {\bibinfo  {journal} {Phys. Rev. Lett.}\ }\textbf {\bibinfo {volume} {127}},\ \bibinfo {pages} {250404} (\bibinfo {year} {2021})}\BibitemShut {NoStop}%
\bibitem [{\citenamefont {Emerson}\ \emph {et~al.}(2005)\citenamefont {Emerson}, \citenamefont {Alicki},\ and\ \citenamefont {Życzkowski}}]{Emerson2005:randNoiseEstim}%
  \BibitemOpen
  \bibfield  {author} {\bibinfo {author} {\bibfnamefont {J.}~\bibnamefont {Emerson}}, \bibinfo {author} {\bibfnamefont {R.}~\bibnamefont {Alicki}},\ and\ \bibinfo {author} {\bibfnamefont {K.}~\bibnamefont {Życzkowski}},\ }\bibfield  {title} {\bibinfo {title} {{Scalable noise estimation with random unitary operators}},\ }\href {https://doi.org/10.1088/1464-4266/7/10/021} {\bibfield  {journal} {\bibinfo  {journal} {J. Opt. B: Quantum Semiclass. Opt.}\ }\textbf {\bibinfo {volume} {7}},\ \bibinfo {pages} {S347–S352} (\bibinfo {year} {2005})}\BibitemShut {NoStop}%
\bibitem [{\citenamefont {Elben}\ \emph {et~al.}(2023)\citenamefont {Elben}, \citenamefont {Flammia}, \citenamefont {Huang}, \citenamefont {Kueng}, \citenamefont {Preskill}, \citenamefont {Vermersch},\ and\ \citenamefont {Zoller}}]{Elben2023:RandToolbox}%
  \BibitemOpen
  \bibfield  {author} {\bibinfo {author} {\bibfnamefont {A.}~\bibnamefont {Elben}}, \bibinfo {author} {\bibfnamefont {S.~T.}\ \bibnamefont {Flammia}}, \bibinfo {author} {\bibfnamefont {H.-Y.}\ \bibnamefont {Huang}}, \bibinfo {author} {\bibfnamefont {R.}~\bibnamefont {Kueng}}, \bibinfo {author} {\bibfnamefont {J.}~\bibnamefont {Preskill}}, \bibinfo {author} {\bibfnamefont {B.}~\bibnamefont {Vermersch}},\ and\ \bibinfo {author} {\bibfnamefont {P.}~\bibnamefont {Zoller}},\ }\bibfield  {title} {\bibinfo {title} {{The randomized measurement toolbox}},\ }\href {https://doi.org/10.1038/s42254-022-00535-2} {\bibfield  {journal} {\bibinfo  {journal} {Nature Reviews Physics}\ }\textbf {\bibinfo {volume} {5}},\ \bibinfo {pages} {9} (\bibinfo {year} {2023})}\BibitemShut {NoStop}%
\bibitem [{\citenamefont {Li}\ \emph {et~al.}(2023)\citenamefont {Li}, \citenamefont {Colombo}, \citenamefont {Shu}, \citenamefont {Velez}, \citenamefont {Pilatowsky-Cameo}, \citenamefont {Schmied}, \citenamefont {Choi}, \citenamefont {Lukin}, \citenamefont {Pedrozo-Pe{\~n}afiel} \emph {et~al.}}]{Li2023:ScrambleSensing}%
  \BibitemOpen
  \bibfield  {author} {\bibinfo {author} {\bibfnamefont {Z.}~\bibnamefont {Li}}, \bibinfo {author} {\bibfnamefont {S.}~\bibnamefont {Colombo}}, \bibinfo {author} {\bibfnamefont {C.}~\bibnamefont {Shu}}, \bibinfo {author} {\bibfnamefont {G.}~\bibnamefont {Velez}}, \bibinfo {author} {\bibfnamefont {S.}~\bibnamefont {Pilatowsky-Cameo}}, \bibinfo {author} {\bibfnamefont {R.}~\bibnamefont {Schmied}}, \bibinfo {author} {\bibfnamefont {S.}~\bibnamefont {Choi}}, \bibinfo {author} {\bibfnamefont {M.}~\bibnamefont {Lukin}}, \bibinfo {author} {\bibfnamefont {E.}~\bibnamefont {Pedrozo-Pe{\~n}afiel}}, \emph {et~al.},\ }\bibfield  {title} {\bibinfo {title} {{Improving metrology with quantum scrambling}},\ }\href {https://doi.org/10.1126/science.adg9500} {\bibfield  {journal} {\bibinfo  {journal} {Science}\ }\textbf {\bibinfo {volume} {380}},\ \bibinfo {pages} {1381} (\bibinfo {year} {2023})}\BibitemShut {NoStop}%
\bibitem [{\citenamefont {Gong}\ \emph {et~al.}(2026)\citenamefont {Gong}, \citenamefont {Ye}, \citenamefont {Mark},\ and\ \citenamefont {Choi}}]{Gong2026:ScrambleSensing}%
  \BibitemOpen
  \bibfield  {author} {\bibinfo {author} {\bibfnamefont {W.}~\bibnamefont {Gong}}, \bibinfo {author} {\bibfnamefont {B.}~\bibnamefont {Ye}}, \bibinfo {author} {\bibfnamefont {D.}~\bibnamefont {Mark}},\ and\ \bibinfo {author} {\bibfnamefont {S.}~\bibnamefont {Choi}},\ }\href@noop {} {\bibinfo {title} {{Robust multiparameter estimation using quantum scrambling}}} (\bibinfo {year} {2026}),\ \Eprint {https://arxiv.org/abs/2601.23283} {arXiv:2601.23283 [quant-ph]} \BibitemShut {NoStop}%
\bibitem [{\citenamefont {Zhou}\ and\ \citenamefont {Chen}(2026)}]{Zhou2026:RandMeasurements}%
  \BibitemOpen
  \bibfield  {author} {\bibinfo {author} {\bibfnamefont {S.}~\bibnamefont {Zhou}}\ and\ \bibinfo {author} {\bibfnamefont {S.}~\bibnamefont {Chen}},\ }\bibfield  {title} {\bibinfo {title} {{Randomized Measurements for Multiparameter Quantum Metrology}},\ }\href {https://doi.org/10.1103/s27y-gbrp} {\bibfield  {journal} {\bibinfo  {journal} {PRX Quantum}\ }\textbf {\bibinfo {volume} {7}},\ \bibinfo {pages} {010314} (\bibinfo {year} {2026})}\BibitemShut {NoStop}%
\bibitem [{\citenamefont {Zhou}\ \emph {et~al.}(2020)\citenamefont {Zhou}, \citenamefont {Choi}, \citenamefont {Choi}, \citenamefont {Landig}, \citenamefont {Douglas}, \citenamefont {Isoya}, \citenamefont {Jelezko}, \citenamefont {Onoda}, \citenamefont {Sumiya} \emph {et~al.}}]{Zhou2020:QuMetroStrongSpin}%
  \BibitemOpen
  \bibfield  {author} {\bibinfo {author} {\bibfnamefont {H.}~\bibnamefont {Zhou}}, \bibinfo {author} {\bibfnamefont {J.}~\bibnamefont {Choi}}, \bibinfo {author} {\bibfnamefont {S.}~\bibnamefont {Choi}}, \bibinfo {author} {\bibfnamefont {R.}~\bibnamefont {Landig}}, \bibinfo {author} {\bibfnamefont {A.~M.}\ \bibnamefont {Douglas}}, \bibinfo {author} {\bibfnamefont {J.}~\bibnamefont {Isoya}}, \bibinfo {author} {\bibfnamefont {F.}~\bibnamefont {Jelezko}}, \bibinfo {author} {\bibfnamefont {S.}~\bibnamefont {Onoda}}, \bibinfo {author} {\bibfnamefont {H.}~\bibnamefont {Sumiya}}, \emph {et~al.},\ }\bibfield  {title} {\bibinfo {title} {{Quantum Metrology with Strongly Interacting Spin Systems}},\ }\href {https://doi.org/10.1103/PhysRevX.10.031003} {\bibfield  {journal} {\bibinfo  {journal} {Phys. Rev. X}\ }\textbf {\bibinfo {volume} {10}},\ \bibinfo {pages} {031003} (\bibinfo {year} {2020})}\BibitemShut {NoStop}%
\bibitem [{\citenamefont {Colombo}\ \emph {et~al.}(2022)\citenamefont {Colombo}, \citenamefont {Pedrozo-Penafiel}, \citenamefont {Adiyatullin}, \citenamefont {Li}, \citenamefont {Mendez}, \citenamefont {Shu},\ and\ \citenamefont {Vuleti{\'c}}}]{Colombo2022:QSNEcho}%
  \BibitemOpen
  \bibfield  {author} {\bibinfo {author} {\bibfnamefont {S.}~\bibnamefont {Colombo}}, \bibinfo {author} {\bibfnamefont {E.}~\bibnamefont {Pedrozo-Penafiel}}, \bibinfo {author} {\bibfnamefont {A.~F.}\ \bibnamefont {Adiyatullin}}, \bibinfo {author} {\bibfnamefont {Z.}~\bibnamefont {Li}}, \bibinfo {author} {\bibfnamefont {E.}~\bibnamefont {Mendez}}, \bibinfo {author} {\bibfnamefont {C.}~\bibnamefont {Shu}},\ and\ \bibinfo {author} {\bibfnamefont {V.}~\bibnamefont {Vuleti{\'c}}},\ }\bibfield  {title} {\bibinfo {title} {{Time-reversal-based quantum metrology with many-body entangled states}},\ }\href {https://doi.org/10.1038/s41567-022-01653-5} {\bibfield  {journal} {\bibinfo  {journal} {Nat. Phys.}\ }\textbf {\bibinfo {volume} {18}},\ \bibinfo {pages} {925} (\bibinfo {year} {2022})}\BibitemShut {NoStop}%
\bibitem [{\citenamefont {Gao}\ \emph {et~al.}(2025)\citenamefont {Gao}, \citenamefont {Martin}, \citenamefont {Hughes}, \citenamefont {Leitao}, \citenamefont {Put}, \citenamefont {Zhou}, \citenamefont {Koyluoglu}, \citenamefont {Meynell}, \citenamefont {Jayich} \emph {et~al.}}]{Gao2025:NVEcho}%
  \BibitemOpen
  \bibfield  {author} {\bibinfo {author} {\bibfnamefont {H.}~\bibnamefont {Gao}}, \bibinfo {author} {\bibfnamefont {L.~S.}\ \bibnamefont {Martin}}, \bibinfo {author} {\bibfnamefont {L.~B.}\ \bibnamefont {Hughes}}, \bibinfo {author} {\bibfnamefont {N.~T.}\ \bibnamefont {Leitao}}, \bibinfo {author} {\bibfnamefont {P.}~\bibnamefont {Put}}, \bibinfo {author} {\bibfnamefont {H.}~\bibnamefont {Zhou}}, \bibinfo {author} {\bibfnamefont {N.~U.}\ \bibnamefont {Koyluoglu}}, \bibinfo {author} {\bibfnamefont {S.~A.}\ \bibnamefont {Meynell}}, \bibinfo {author} {\bibfnamefont {A.~C.~B.}\ \bibnamefont {Jayich}}, \emph {et~al.},\ }\bibfield  {title} {\bibinfo {title} {{Signal amplification in a solid-state sensor through asymmetric many-body echo}},\ }\href {https://doi.org/10.1038/s41586-025-09452-7} {\bibfield  {journal} {\bibinfo  {journal} {Nature}\ }\textbf {\bibinfo {volume} {646}},\ \bibinfo {pages} {68} (\bibinfo {year} {2025})}\BibitemShut {NoStop}%
\bibitem [{\citenamefont {Wu}\ \emph {et~al.}(2025)\citenamefont {Wu}, \citenamefont {Davis}, \citenamefont {Hughes}, \citenamefont {Ye}, \citenamefont {Wang}, \citenamefont {Kufel}, \citenamefont {Ono}, \citenamefont {Meynell}, \citenamefont {Block} \emph {et~al.}}]{Wu2025:NVSpinSqz}%
  \BibitemOpen
  \bibfield  {author} {\bibinfo {author} {\bibfnamefont {W.}~\bibnamefont {Wu}}, \bibinfo {author} {\bibfnamefont {E.~J.}\ \bibnamefont {Davis}}, \bibinfo {author} {\bibfnamefont {L.~B.}\ \bibnamefont {Hughes}}, \bibinfo {author} {\bibfnamefont {B.}~\bibnamefont {Ye}}, \bibinfo {author} {\bibfnamefont {Z.}~\bibnamefont {Wang}}, \bibinfo {author} {\bibfnamefont {D.}~\bibnamefont {Kufel}}, \bibinfo {author} {\bibfnamefont {T.}~\bibnamefont {Ono}}, \bibinfo {author} {\bibfnamefont {S.~A.}\ \bibnamefont {Meynell}}, \bibinfo {author} {\bibfnamefont {M.}~\bibnamefont {Block}}, \emph {et~al.},\ }\bibfield  {title} {\bibinfo {title} {{Spin squeezing in an ensemble of nitrogen-vacancy centres in diamond}},\ }\href {https://doi.org/10.1038/s41586-025-09524-8} {\bibfield  {journal} {\bibinfo  {journal} {Nature}\ }\textbf {\bibinfo {volume} {646}},\ \bibinfo {pages} {74} (\bibinfo {year} {2025})}\BibitemShut {NoStop}%
\bibitem [{\citenamefont {Montenegro}\ \emph {et~al.}(2025)\citenamefont {Montenegro}, \citenamefont {Mukhopadhyay}, \citenamefont {Yousefjani}, \citenamefont {Sarkar}, \citenamefont {Mishra}, \citenamefont {Paris},\ and\ \citenamefont {Bayat}}]{Montenegro2024:ManyBodyMetrology}%
  \BibitemOpen
  \bibfield  {author} {\bibinfo {author} {\bibfnamefont {V.}~\bibnamefont {Montenegro}}, \bibinfo {author} {\bibfnamefont {C.}~\bibnamefont {Mukhopadhyay}}, \bibinfo {author} {\bibfnamefont {R.}~\bibnamefont {Yousefjani}}, \bibinfo {author} {\bibfnamefont {S.}~\bibnamefont {Sarkar}}, \bibinfo {author} {\bibfnamefont {U.}~\bibnamefont {Mishra}}, \bibinfo {author} {\bibfnamefont {M.~G.}\ \bibnamefont {Paris}},\ and\ \bibinfo {author} {\bibfnamefont {A.}~\bibnamefont {Bayat}},\ }\bibfield  {title} {\bibinfo {title} {{Review: Quantum metrology and sensing with many-body systems}},\ }\href {https://doi.org/10.1016/j.physrep.2025.05.005} {\bibfield  {journal} {\bibinfo  {journal} {Phys. Rep.}\ }\textbf {\bibinfo {volume} {1134}},\ \bibinfo {pages} {1} (\bibinfo {year} {2025})}\BibitemShut {NoStop}%
\bibitem [{\citenamefont {Kubo}(1962)}]{Kubo1962}%
  \BibitemOpen
  \bibfield  {author} {\bibinfo {author} {\bibfnamefont {R.}~\bibnamefont {Kubo}},\ }\bibfield  {title} {\bibinfo {title} {{Generalized Cumulant Expansion Method}},\ }\href {https://doi.org/10.1143/JPSJ.17.1100} {\bibfield  {journal} {\bibinfo  {journal} {J. Phys. Soc. Jpn.}\ }\textbf {\bibinfo {volume} {17}},\ \bibinfo {pages} {1100} (\bibinfo {year} {1962})}\BibitemShut {NoStop}%
\bibitem [{\citenamefont {Pang}\ and\ \citenamefont {Brun}(2014)}]{Pang2014:TimeDepHam}%
  \BibitemOpen
  \bibfield  {author} {\bibinfo {author} {\bibfnamefont {S.}~\bibnamefont {Pang}}\ and\ \bibinfo {author} {\bibfnamefont {T.~A.}\ \bibnamefont {Brun}},\ }\bibfield  {title} {\bibinfo {title} {{Quantum metrology for a general Hamiltonian parameter}},\ }\href {https://doi.org/10.1103/PhysRevA.90.022117} {\bibfield  {journal} {\bibinfo  {journal} {Phys. Rev. A}\ }\textbf {\bibinfo {volume} {90}},\ \bibinfo {pages} {022117} (\bibinfo {year} {2014})}\BibitemShut {NoStop}%
\bibitem [{\citenamefont {Pang}\ and\ \citenamefont {Jordan}(2017)}]{Pang2017:TimeDepHam}%
  \BibitemOpen
  \bibfield  {author} {\bibinfo {author} {\bibfnamefont {S.}~\bibnamefont {Pang}}\ and\ \bibinfo {author} {\bibfnamefont {A.~N.}\ \bibnamefont {Jordan}},\ }\bibfield  {title} {\bibinfo {title} {{Optimal adaptive control for quantum metrology with time-dependent Hamiltonians}},\ }\href {https://doi.org/10.1038/ncomms14695} {\bibfield  {journal} {\bibinfo  {journal} {Nat. Commun.}\ }\textbf {\bibinfo {volume} {8}},\ \bibinfo {pages} {14695} (\bibinfo {year} {2017})}\BibitemShut {NoStop}%
\bibitem [{\citenamefont {Provost}\ and\ \citenamefont {Vallee}(1980)}]{Provost1980}%
  \BibitemOpen
  \bibfield  {author} {\bibinfo {author} {\bibfnamefont {J.}~\bibnamefont {Provost}}\ and\ \bibinfo {author} {\bibfnamefont {G.}~\bibnamefont {Vallee}},\ }\bibfield  {title} {\bibinfo {title} {{Riemannian structure on manifolds of quantum states}},\ }\href {https://doi.org/10.1007/BF02193559} {\bibfield  {journal} {\bibinfo  {journal} {Commun. Math. Phys.}\ }\textbf {\bibinfo {volume} {76}},\ \bibinfo {pages} {289} (\bibinfo {year} {1980})}\BibitemShut {NoStop}%
\bibitem [{\citenamefont {Carollo}\ \emph {et~al.}(2018)\citenamefont {Carollo}, \citenamefont {Spagnolo},\ and\ \citenamefont {Valenti}}]{Carollo2018:UhlmannCurvature}%
  \BibitemOpen
  \bibfield  {author} {\bibinfo {author} {\bibfnamefont {A.}~\bibnamefont {Carollo}}, \bibinfo {author} {\bibfnamefont {B.}~\bibnamefont {Spagnolo}},\ and\ \bibinfo {author} {\bibfnamefont {D.}~\bibnamefont {Valenti}},\ }\bibfield  {title} {\bibinfo {title} {{Uhlmann curvature in dissipative phase transitions}},\ }\href {https://doi.org/10.1038/s41598-018-27362-9} {\bibfield  {journal} {\bibinfo  {journal} {Sci. Rep.}\ }\textbf {\bibinfo {volume} {8}},\ \bibinfo {pages} {9852} (\bibinfo {year} {2018})}\BibitemShut {NoStop}%
\bibitem [{\citenamefont {Sekatski}\ \emph {et~al.}(2017)\citenamefont {Sekatski}, \citenamefont {Skotiniotis}, \citenamefont {Ko{\l{}}ody{\'{n}}ski},\ and\ \citenamefont {D{\"{u}}r}}]{Sekatski2017:FullFast}%
  \BibitemOpen
  \bibfield  {author} {\bibinfo {author} {\bibfnamefont {P.}~\bibnamefont {Sekatski}}, \bibinfo {author} {\bibfnamefont {M.}~\bibnamefont {Skotiniotis}}, \bibinfo {author} {\bibfnamefont {J.}~\bibnamefont {Ko{\l{}}ody{\'{n}}ski}},\ and\ \bibinfo {author} {\bibfnamefont {W.}~\bibnamefont {D{\"{u}}r}},\ }\bibfield  {title} {\bibinfo {title} {{Quantum metrology with full and fast quantum control}},\ }\href {https://doi.org/10.22331/q-2017-09-06-27} {\bibfield  {journal} {\bibinfo  {journal} {{Quantum}}\ }\textbf {\bibinfo {volume} {1}},\ \bibinfo {pages} {27} (\bibinfo {year} {2017})}\BibitemShut {NoStop}%
\bibitem [{\citenamefont {Demkowicz-Dobrza\ifmmode~\acute{n}\else \'{n}\fi{}ski}\ \emph {et~al.}(2017)\citenamefont {Demkowicz-Dobrza\ifmmode~\acute{n}\else \'{n}\fi{}ski}, \citenamefont {Czajkowski},\ and\ \citenamefont {Sekatski}}]{Demkowicz2017:Adaptive}%
  \BibitemOpen
  \bibfield  {author} {\bibinfo {author} {\bibfnamefont {R.}~\bibnamefont {Demkowicz-Dobrza\ifmmode~\acute{n}\else \'{n}\fi{}ski}}, \bibinfo {author} {\bibfnamefont {J.}~\bibnamefont {Czajkowski}},\ and\ \bibinfo {author} {\bibfnamefont {P.}~\bibnamefont {Sekatski}},\ }\bibfield  {title} {\bibinfo {title} {{Adaptive Quantum Metrology under General Markovian Noise}},\ }\href {https://doi.org/10.1103/PhysRevX.7.041009} {\bibfield  {journal} {\bibinfo  {journal} {Phys. Rev. X}\ }\textbf {\bibinfo {volume} {7}},\ \bibinfo {pages} {041009} (\bibinfo {year} {2017})}\BibitemShut {NoStop}%
\bibitem [{\citenamefont {Zhou}\ \emph {et~al.}(2018)\citenamefont {Zhou}, \citenamefont {Zhang}, \citenamefont {Preskill},\ and\ \citenamefont {Jiang}}]{Zhou2018:HNLS}%
  \BibitemOpen
  \bibfield  {author} {\bibinfo {author} {\bibfnamefont {S.}~\bibnamefont {Zhou}}, \bibinfo {author} {\bibfnamefont {M.}~\bibnamefont {Zhang}}, \bibinfo {author} {\bibfnamefont {J.}~\bibnamefont {Preskill}},\ and\ \bibinfo {author} {\bibfnamefont {L.}~\bibnamefont {Jiang}},\ }\bibfield  {title} {\bibinfo {title} {{Achieving the Heisenberg limit in quantum metrology using quantum error correction}},\ }\href {https://doi.org/10.1038/s41467-017-02510-3} {\bibfield  {journal} {\bibinfo  {journal} {Nat. Commun.}\ }\textbf {\bibinfo {volume} {9}},\ \bibinfo {pages} {78} (\bibinfo {year} {2018})}\BibitemShut {NoStop}%
\bibitem [{\citenamefont {Zhou}\ and\ \citenamefont {Jiang}(2021)}]{Zhou2021:Asymptotic}%
  \BibitemOpen
  \bibfield  {author} {\bibinfo {author} {\bibfnamefont {S.}~\bibnamefont {Zhou}}\ and\ \bibinfo {author} {\bibfnamefont {L.}~\bibnamefont {Jiang}},\ }\bibfield  {title} {\bibinfo {title} {{Asymptotic Theory of Quantum Channel Estimation}},\ }\href {https://doi.org/10.1103/PRXQuantum.2.010343} {\bibfield  {journal} {\bibinfo  {journal} {PRX Quantum}\ }\textbf {\bibinfo {volume} {2}},\ \bibinfo {pages} {010343} (\bibinfo {year} {2021})}\BibitemShut {NoStop}%
\end{thebibliography}%

%\newpage 
\onecolumngrid
\appendix

% 1. Automatically reset numbers by section and add the dot (A.1)
\counterwithin{equation}{section}
\counterwithin{thm}{section}
\counterwithin{prop}{section}
\counterwithin{lem}{section}
\counterwithin{cor}{section}

% 2. Sync the Hyperlinks so they don't jump to the Main Text
\makeatletter

% Printed equation numbers: A1, A2, ... / B1, B2, ...
\renewcommand{\theequation}{\thesection\arabic{equation}}

% Unique hyperref anchors for appendix sections/equations
\renewcommand{\theHsection}{app.\Alph{section}}
\renewcommand{\theHequation}{app.\Alph{section}.\arabic{equation}}
\renewcommand{\theHthm}{\thesection.\arabic{thm}}
\renewcommand{\theHprop}{\thesection.\arabic{prop}}
\renewcommand{\theHlem}{\thesection.\arabic{lem}}
\renewcommand{\theHcor}{\thesection.\arabic{cor}}
\makeatother

%==========
%==========
\section{Details on the Collisional Purification}
\label{app:collisional}

In this Appendix, we provide a self-contained derivation of the collisional purification (or purification) used to realize Markovian open-system dynamics [see Sec.~\ref{sec:markov-evol}], by showing (i) how a vacuum, memoryless bosonic bath produces a generic Lindblad dissipator, and (ii) how the same physics arises as the continuous-time limit of a discrete repeated-interaction (collision) model. Throughout, the dilation provides a \emph{purification} of the reduced dynamics in the sense that if the probe is initialized in a pure state $\rho(0)=\psi$, with $\psi=\dyad{\psi}$, and each bath component is initialized in the vacuum, then the joint system-environment state remains pure under the global unitary evolution, and the reduced system's time evolution exactly agrees with the dynamics predicted by the Lindblad master equation.

\paragraph*{Target dynamics.}
We take the reduced probe state $\rho(t)$ to satisfy a parametrized Lindblad master equation,
\begin{equation}
\label{eq:app_lindblad}
\partial_t \rho(t) = -i\comm{H_c(t)}{\rho(t)} + \mathscr{D}_{\bm{\theta}}(\rho(t)),
\end{equation}
where $H_c(t)$ is a (parameter-independent) control Hamiltonian, and the parametrized dissipator is written in the canonical form as
\begin{equation}
\label{eq:app_dissip_can}
\mathscr{D}_{\bm{\theta}}(\rho)
=\sum_{k=1}^R \gamma_k(\bm{\theta})
\left(J_k(\bm{\theta})\,\rho\, J_k^\dagger(\bm{\theta})
-\tfrac12\acomm{J_k^\dagger(\bm{\theta})J_k(\bm{\theta})}{\rho}\right).
\end{equation}
We realize \eqref{eq:app_lindblad} by a unitary dilation on $\mathscr{H}_S\otimes \mathscr{H}_{\rm env}$ generated by
\begin{equation}
\label{eq:app_Htot}
H_{\rm tot}(\bm{\theta}; t) = H_c(t) + H(\bm{\theta}; t),
\end{equation}
where the system-bath interaction reads
\begin{equation}
\label{eq:app_continuum_H}
H(\bm{\theta}; t)
= \sum_{k=1}^R \sqrt{\gamma_k(\bm{\theta})}\,J_k(\bm{\theta})\otimes e_k^\dagger(t) + {\rm h.c.},
\end{equation}
The bath operators $\{e_k(t)\}$ are bosonic annihilation operators satisfying
\begin{equation}
\label{eq:app_white_noise}
\comm{e_k(t)}{e_{k'}^\dagger(t')}=\delta_{kk'}\delta(t-t'),\qquad
\Tr\!\big(\varphi_{\rm env}e_k(t)\big)=0,\qquad
\Tr\!\big(\varphi_{\rm env}e_k(t)e_{k'}^\dagger(t')\big)=\delta_{kk'}\delta(t-t'),
\end{equation}
with all other second moments vanishing in the vacuum. The state of the system is obtained from the global state
$\Psi(t) \triangleq U(t)\,[\psi \otimes\varphi_{\rm env}]\,U^\dagger(t)$ by $\rho(t)=\Tr_{\rm env}\Psi(t)$ with $\rho(0)=\psi$. We define this construction as the canonical collisional purification. 

%==========
\subsection{Collisional Model Implies Markovian Evolution}
\label{app:markov-kubo}

We sketch how the memoryless bath specified in \eqref{eq:app_continuum_H}--\eqref{eq:app_white_noise} yields the Lindblad form \eqref{eq:app_lindblad}, emphasizing how Gaussianity and Markovianity reduce the Kubo cumulant expansion to second order~\cite{Kubo1962}.

\paragraph*{Interaction picture.}
First, define $U_c(t)=\mathcal{T}\exp\!\big(-i\int_0^t \dd{s}\,H_c(s)\big)$ and the reduced state
$\tilde{\rho}(t)=U_c^\dagger(t)\rho(t)U_c(t)$ in the system's interaction picture governed by quantum control.
The interaction-picture propagator on $\mathscr{H}_S\otimes\mathscr{H}_{\rm env}$ is
$U_I(t)=\mathcal{T}\exp\!\big(-i\int_0^t\dd{s}\,\tilde{H}(s)\big)$, where
\begin{equation}
\label{eq:app_HI}
\tilde{H}(t)
= \sum_{k=1}^R \sqrt{\gamma_k(\bm{\theta})}\,\tilde{J}_k(\bm{\theta};t)\otimes e_k^\dagger(t) +{\rm h.c.},\qquad
\tilde{J}_k(\bm{\theta};t)\triangleq U_c^\dagger(t)J_k(\bm{\theta})U_c(t).
\end{equation}
The reduced interaction-picture dynamics are
\begin{equation}
\label{eq:app_reduced_I}
\tilde{\rho}(t)=\Tr_{\rm env}\!\left[U_I(t)\,\big(\psi\otimes\varphi_{\rm env}\big)\,U_I^\dagger(t)\right],
\end{equation}
with $\tilde{\rho}(0)=\psi$.

\paragraph*{Cumulant expansion and Gaussian truncation.}
The Kubo cumulant expansion~\cite{Kubo1962} expresses the reduced map generated by $\tilde{H}(t)$ in terms of cumulant superoperators. Schematically,
\begin{equation}
\label{eq:app_kubo}
\tilde{\rho}(t)=\exp\!\left[\sum_{n=1}^\infty K_n(t)\right]\psi,
\end{equation}
the first two cumulants are
\begin{align}
\label{eq:app_cumulants}
K_1(t)\rho &= -i\int_0^t \dd{s}\,\Tr_{\rm env}\!\big[\tilde{H}(s),\rho\otimes\varphi_{\rm env}\big],\\
K_2(t)\rho &= -\int_0^t \dd{s}\int_0^s \dd{s}'\,\Tr_{\rm env}\!\big[\tilde{H}(s),[\tilde{H}(s'),\rho\otimes\varphi_{\rm env}]\big].
\end{align}
Since $\Tr(\varphi_{\rm env}e_k(t))=0$, one has $K_1(t)=0$. Moreover, the bath is Gaussian and Markovian, so all connected cumulants beyond second order vanish, $K_{n>2}(t)=0$. Thus the reduced dynamics are entirely captured by $K_2(t)$.

\paragraph*{Time-local evolution via white noise.}
Finally, the Markov assumption \eqref{eq:app_white_noise} collapses the double integral in $K_2(t)$ to a single time integral: the bath trace produces $\delta(s-s')$, so the contribution from $s'\approx s$ dominates and the evolution becomes time-local. Concretely,
\begin{equation}
\label{eq:app_K2_local}
K_2(t)\rho=\int_0^t \dd{s}\,\tilde{\mathscr{D}}_{\bm{\theta},s}(\rho),
\end{equation}
where the instantaneous interaction-picture dissipator is
\begin{equation}
\label{eq:app_dissip_I}
\tilde{\mathscr{D}}_{\bm{\theta},t}(\rho)
=\sum_{k=1}^R \gamma_k(\bm{\theta})
\left(\tilde{J}_k(\bm{\theta};t)\rho \tilde{J}_k^\dagger(\bm{\theta};t)
-\tfrac12\acomm{\tilde{J}_k^\dagger(\bm{\theta};t)\tilde{J}_k(\bm{\theta};t)}{\rho}\right).
\end{equation}
Therefore,
\begin{equation}
\label{eq:app_solution_I}
\tilde{\rho}(t)=\mathcal{T}\exp\!\left(\int_0^t \dd{s}\,\tilde{\mathscr{D}}_{\bm{\theta},s}\right)(\psi),
\end{equation}
and transforming back to the Schr\"odinger picture yields the Lindblad master equation \eqref{eq:app_lindblad} with dissipator \eqref{eq:app_dissip_can}.

\subparagraph*{Remark.}
It is sometimes convenient to keep a fixed system operator basis $\{L_i\}_{i=1}^R$ and encode all parameter dependence in bath correlations. In this case, we write the interaction as
\begin{equation}
\label{eq:app_phys_basis}
H(t)=\sum_{i=1}^R L_i\otimes f_i^\dagger(t)+{\rm h.c.},\qquad
\Tr(\varphi_{\rm env}f_i(t)f_j^\dagger(t'))=\Gamma_{ij}(\bm{\theta})\delta(t-t'), \qquad \Tr(\varphi_{\rm env}f_i(t))=0,
\end{equation}
where $\Gamma(\bm{\theta})$ is Hermitian and positive. Diagonalizing $\Gamma(\bm{\theta})=V(\bm{\theta})\Lambda(\bm{\theta})V^\dagger(\bm{\theta})$ with $\Lambda=\mathrm{diag}(\gamma_k)$ and defining $e_k(t)=\sum_i V_{ik}^\dagger(\bm{\theta}) f_i(t)$ recovers the independent vacuum noise channels \eqref{eq:app_white_noise} and the canonical jumps $J_k(\bm{\theta})=\sum_i V_{ik}(\bm{\theta})L_i$ appearing in \eqref{eq:app_dissip_can}. This is usually the physical picture that we have in mind when probing a correlated stochastic force, field, or many-body system.

%==========
\subsection{Discrete-Time Model}
\label{app:discrete-collisional}

In this Appendix, we construct an explicit discrete-time dilation that converges to the Lindblad evolution \eqref{eq:app_lindblad} in the limit $\Delta t\to 0$. In Appendix~\ref{app:lem-qfi-flow}, we use this formalism to derive the QFI flow lemma (Lemma~\ref{lem:qfi-flow}) of the main text.

Let the environment be a chain of independent, identical bath elements $E_1,E_2,\dots,E_n$,
\begin{equation}
\label{eq:app_chain}
\mathscr{H}_{\rm env}=\bigotimes_{j=1}^n \mathscr{H}_{E_j},\qquad
\varphi_{\rm env}=\bigotimes_{j=1}^n \varphi_{E_j},\qquad
\varphi_{E_j}=\dyad{0}.
\end{equation}
For $R$ jump channels, take each $E_j$ to consist of $R$ bosonic modes with annihilation operators $\{b_{k,j}\}_{k=1}^R$ satisfying
\begin{equation}
\label{eq:app_b_comm}
\comm{b_{k,j}}{b_{k',j'}^\dagger}=\delta_{kk'}\delta_{jj'},\qquad
\Tr(\varphi_{E_j} b_{k,j})=0,\qquad
\Tr(\varphi_{E_j} b_{k,j} b_{k',j'}^\dagger)=\delta_{kk'} \delta _{jj'}.
\end{equation}
Discretize time as $t_j=j\Delta t$. Over the $j$th time bin, we take the joint unitary
\begin{equation}
\label{eq:app_Uj}
U_j(\bm{\theta})
=\exp\!\big(-i\Delta t\,H_c(t_j)\big)\,\exp\!\Big(-i\sqrt{\Delta t}\,H_j(\bm{\theta})\Big),
\end{equation}
with system-bath interaction Hamiltonian
\begin{equation}
\label{eq:app_Hj}
H_j(\bm{\theta})
\triangleq \sum_{k=1}^R \sqrt{\gamma_k(\bm{\theta})}
\left(J_k(\bm{\theta})\otimes b_{k,j}^\dagger + J_k^\dagger(\bm{\theta})\otimes b_{k,j}\right).
\end{equation}
The $\sqrt{\Delta t}$ scaling is chosen so that a single collision produces an $\order{\Delta t}$ change in the \emph{reduced} state:
after tracing over a vacuum ancilla, the first-order term in $\sqrt{\Delta t}$ vanishes because $\Tr(\varphi_{E_j} b_{k,j})=0$,
while the second-order term survives because $\Tr(\varphi_{E_j} b_{k,j} b_{k',j}^\dagger)=\delta_{kk'}$, producing the dissipative
increment that remains finite as $\Delta t\to 0$.

Define $\rho_j\triangleq \rho(t_j)$. One collision implements the CPTP map
\begin{equation}
\label{eq:app_map}
\rho_{j+1}=\mathcal{E}_{j,\bm{\theta}}(\rho_j)
\triangleq \Tr_{E_j}\!\Big[U_j(\bm{\theta})\big(\rho_j\otimes\varphi_{E_j}\big)U_j^\dagger(\bm{\theta})\Big].
\end{equation}
Expanding $e^{-i\sqrt{\Delta t}H_j}$ to second order and tracing over the vacuum yields, up to $\order{\Delta t^2}$,
a Kraus decomposition,
\begin{equation}
\label{eq:app_kraus}
\rho_{j+1}=K_{0,j}\rho_j K_{0,j}^\dagger+\sum_{k=1}^R K_{k,j}\rho_j K_{k,j}^\dagger+\order{\Delta t^2},
\end{equation}
with
\begin{equation}
\label{eq:app_kraus_ops}
K_{0,j} = I - i\Delta t\,H_c(t_j) - \frac{\Delta t}{2}\sum_{k=1}^R \gamma_k(\bm{\theta})\,J_k^\dagger(\bm{\theta})J_k(\bm{\theta}),
\qquad
K_{k,j} = \sqrt{\Delta t\,\gamma_k(\bm{\theta})}\,J_k(\bm{\theta}).
\end{equation}
A short calculation then gives the increment
\begin{equation}
\label{eq:app_increment}
\rho_{j+1}-\rho_j=\Delta t\,\mathcal{L}_{\bm{\theta},t_j}(\rho_j)+\order{\Delta t^2},
\end{equation}
where
\begin{equation}
\label{eq:app_generator}
\mathcal{L}_{\bm{\theta},t}(\rho)
= -i\comm{H_c(t)}{\rho}
+\sum_{k=1}^R \gamma_k(\bm{\theta})
\left(J_k(\bm{\theta})\rho J_k^\dagger(\bm{\theta})
-\tfrac12\acomm{J_k^\dagger(\bm{\theta})J_k(\bm{\theta})}{\rho}\right).
\end{equation}
Thus $\mathcal{L}_{\bm{\theta},t}$ coincides with the Lindblad generator in \eqref{eq:app_lindblad}--\eqref{eq:app_dissip_can}.

Iterating $n=t/\Delta t$ steps gives the global unitary
\begin{equation}
\label{eq:app_U1n}
U_{1:n}(\bm{\theta})\triangleq U_n(\bm{\theta})\cdots U_1(\bm{\theta}),
\end{equation}
and the joint state
$\Psi_{1:n}(\bm{\theta})=U_{1:n}(\bm{\theta})\big(\psi\otimes\varphi_{\rm env}\big)U_{1:n}^\dagger(\bm{\theta})$,
which is pure whenever $\psi$ is pure.
The reduced state is $\rho(t)=\Tr_{\rm env}\Psi_{1:n}(\bm{\theta})$.
Taking $\Delta t\to 0$ with $t$ fixed yields the time-ordered exponential solution,
\begin{equation}
\label{eq:app_cont_limit}
\lim_{\Delta t\to 0}\rho(t)=\mathcal{T}\exp\!\left(\int_0^t \dd{s}\,\mathcal{L}_{\bm{\theta},s}\right)(\psi),
\end{equation}
which is the continuous-time Markovian evolution generated by \eqref{eq:app_lindblad}. 

%==========
%==========
\section{Proofs of Main Results}
\label{app:proofs}

%==========
\subsection{Proof of Lemma~\ref{lem:qfi-flow}}
\label{app:lem-qfi-flow}

In this Appendix, we prove that the QFI for estimating purely dissipative Markovian processes grows at most linear in time,  Lemma~\ref{lem:qfi-flow} of the main text, which we restate here for convenience. For simplicity of presentation, we consider one jump operator $L$ with parametrized dissipation rate $\gamma_\theta$. The extension to multiple jumps and multiple parameters is straightforward, since the argument relies only on the memoryless structure of the bath.

\QFIFlow*

Before proving the result, we state the following Lemma regarding QFI structure for unitary models, which we then apply to our collisional purification.

\begin{lem}
    \label{lem:unitary-qfi}
    Consider pure-state model of the QFI matrix~\cite{Matsumoto2002}, with parameters encoded into an initial probe state $\Psi$ by a unitary process, $\ket{\Psi(\theta)} = U(\theta; t) \ket{\Psi}$. Let $\Delta O \triangleq O - \expval{O}$, and define the parameter generator
    \begin{equation}
         G(\theta; t) \triangleq i U^\dagger (\partial_\theta U) = \int_0^t \dd{s}\, g(\theta; s),
    \end{equation}
    with generator density
    \begin{equation}\label{eq:generator-rate}
        g(\theta; s)
        = U^\dagger(\theta;s)\partial_\theta H(\theta;s)U(\theta;s).
    \end{equation}
    Then, the QFI obeys~\cite{Boixo2007:QuEstim}
    \begin{equation}
        F_Q =4\,\Var_{\Psi}(G(\theta; s))
        = 4 \expval{\Delta G(\theta;t)^2}{\Psi},
    \end{equation}
    with expectation taken with respect to the initial state.
\end{lem}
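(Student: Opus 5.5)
The plan is to compute the QFI directly from the standard pure-state formula and rewrite the state derivative in terms of the generator $G$. Step one is the formula for a pure model $\ket{\Psi(\theta)}$,
\begin{equation*}
F_Q = 4\Big(\ip{\partial_\theta\Psi(\theta)}{\partial_\theta\Psi(\theta)} - \abs{\ip{\Psi(\theta)}{\partial_\theta\Psi(\theta)}}^2\Big),
\end{equation*}
which I take as known for pure-state models~\cite{Braunstein1994:Bures, Matsumoto2002}. Step two is to check that $G=iU^\dagger\partial_\theta U$ is Hermitian. Differentiating $U^\dagger U=I$ gives $(\partial_\theta U^\dagger)U=-U^\dagger\partial_\theta U$, so $G^\dagger=-i(\partial_\theta U^\dagger)U=iU^\dagger\partial_\theta U=G$. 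Step three is to write $\ket{\partial_\theta\Psi(\theta)}=(\partial_\theta U)\ket{\Psi}=-iUG\ket{\Psi}$. Substituting into the formula and using $U^\dagger U=I$ gives $\ip{\partial\Psi}{\partial\Psi}=\expval{G^2}{\Psi}$ and $\ip{\Psi(\theta)}{\partial\Psi}=-i\expval{G}{\Psi}$. Hence $F_Q=4(\expval{G^2}-\expval{G}^2)=4\expval{\Delta G^2}{\Psi}$, with all expectations taken in the initial state.

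The substantive step is the integral representation $G(\theta;t)=\int_0^t \dd{s}\, g(\theta;s)$ for a time-ordered $U$. I would write $W(t)\triangleq\partial_\theta U(\theta;t)$ and differentiate the Schr\"odinger equation $\partial_t U=-iH U$ with respect to $\theta$:
\begin{equation*}
\partial_t W=-iHW-i(\partial_\theta H)U,\qquad W(0)=0.
\end{equation*}
This equation is linear and inhomogeneous, so variation of constants (Duhamel's formula) gives
\begin{equation*}
W(t)=-i\int_0^t \dd{s}\, U(t,s)\,\partial_\theta H(s)\,U(s),
\end{equation*}
where $U(t,s)=U(t)U^\dagger(s)$ is the propagator from $s$ to $t$. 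Multiplying by $iU^\dagger(t)$ and using $U^\dagger(t)U(t,s)=U^\dagger(s)$ yields $G=\int_0^t\dd{s}\, U^\dagger(s)\,\partial_\theta H(s)\,U(s)$, which is the generator density of Eq.~\eqref{eq:generator-rate}.

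I expect the only real care point to be this Duhamel step. The ordering of the propagator factors has to be tracked correctly, and the argument must allow $H$ to depend on both $\theta$ and $s$ with control included. When applied to the collisional model, the singular white-noise operators $e_k(s)$ also have to be handled. For that case I would argue in the discrete collision model of Appendix~\ref{app:discrete-collisional}, where $U_{1:n}$ is a finite product and the same telescoping gives
\begin{equation*}
G=\sum_j U_{1:j}^\dagger\,\big(i\,U_j^\dagger\partial_\theta U_j\big)\,U_{1:j},
\end{equation*}
and then take $\Delta t\to0$.
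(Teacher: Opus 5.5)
Your proof is correct. The paper gives no proof of this lemma; it states it as a known result with a pointer to Boixo et al.\ and Matsumoto. Your route is the standard derivation: the pure-state QFI formula, Hermiticity of $G$ from $\partial_\theta(U^\dagger U)=0$, the identity $\partial_\theta\ket{\Psi(\theta)}=-iUG\ket{\Psi}$, and then Duhamel's formula for $\partial_t(\partial_\theta U)=-iH\,\partial_\theta U-i(\partial_\theta H)U$ with $\partial_\theta U(0)=0$. You handle the propagator ordering correctly through $U^\dagger(t)U(t,s)=U^\dagger(s)$.

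One correction concerns your closing remark, which is where the paper actually uses this structure: the product-rule decomposition in the proof of Lemma~\ref{lem:qfi-flow}. The exact telescoped identity is $G_n=\sum_j U_{1:j-1}^\dagger\,(iU_j^\dagger\partial_\theta U_j)\,U_{1:j-1}$, or equivalently $G_n=\sum_j U_{1:j}^\dagger\,\big(i(\partial_\theta U_j)U_j^\dagger\big)\,U_{1:j}$. Your formula conjugates $iU_j^\dagger\partial_\theta U_j$ by $U_{1:j}$, that is, by one extra $U_j$. This is not a harmless index shift in the collisional setting. A single collision is $U_j=e^{-i\sqrt{\Delta t}H_j}$, so the extra conjugation changes each term by $i\Delta t\,[H_j,\partial_\theta H_j]+O(\Delta t^{3/2})$. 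Summed over $t/\Delta t$ bins, this gives an $O(t)$ error that does not vanish as $\Delta t\to 0$. It disappears only when $[H_j,\partial_\theta H_j]=0$, for example for a single jump operator with rate-only encoding, where $\partial_\theta H_j\propto H_j$. With the index fixed, your discrete route is exactly how the paper avoids the ill-defined conjugation of white-noise operators $e_k(s)$.
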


For time-\emph{independent} Hamiltonians, Lemma~\ref{lem:unitary-qfi} implies the famous Heisenberg-scaling in $t$, $(F_Q)_{aa} \leq 4 t^2 \norm{\partial_a H(\bm{\theta})}$. However, this scaling is not generic and changes when $H$ depends on time~\cite{Pang2014:TimeDepHam, Pang2017:TimeDepHam}. In particular, for the memoryless (Markovian) processes, $F_Q(t) = \order{t}$, as we now show.

\begin{proof}[Proof of Lemma~\ref{lem:qfi-flow}]
    Without loss of generality, we consider single-parameter estimation and one jump operator $L$ coupled to a discrete set of independent bosonic bath elements $b_j$ (see Appendix~\ref{app:collisional}). We work with the canonical collisional purification: the global state $\Psi(\theta;t)$ on system plus bath is pure, and the reduced (system-only) state is obtained by tracing out the bath. Since the partial trace is a $\theta$-independent CPTP map,
    \begin{equation}
        \label{eq:qfi_monotone}
        F_Q\big(\rho(\theta;t)\big)\leq  F_Q\big(\Psi(\theta;t)\big).
    \end{equation}
    In this proof, we therefore compute the QFI of the purified (collisional) model; for brevity we write $F_Q(t)\triangleq F_Q(\Psi(\theta;t))$.
    
    \paragraph*{Generator decomposition.}
    Let $t=n\Delta t$ and define the joint unitary up to time $t$ by\footnote{We ignore the control Hamiltonian $H_c$ for simplicity of presentation, as this does not alter the main result.}
    \begin{equation}
    \label{eq:U1n_def}
        U_{1:n}(\theta)\triangleq U_n(\theta)\cdots U_1(\theta),
        \qquad
        U_j(\theta)=\exp\!\big(-i\sqrt{\Delta t}\,H_j(\theta)\big),
    \end{equation}
    with interaction Hamiltonian
    \begin{equation}
        \label{eq:Hj_def}
        H_{j}(\theta)=\sqrt{\gamma_\theta}\left(L\otimes b_j^\dagger + L^\dagger\otimes b_j\right),
    \end{equation}
    where each bath element is initialized in vacuum, $\varphi_{E_j}=\dyad{0}$, so that
    \begin{equation}
        \label{eq:vac_moments}
        \Tr(\varphi_{E_j} b_j)=0,
        \qquad
        \Tr(\varphi_{E_j} b_j b_j^\dagger)=1,
        \qquad
        \Tr(\varphi_{E_j} b_j^\dagger b_j)=\Tr(\varphi_{E_j} b_j^2)=\Tr(\varphi_{E_j}b_j^{\dagger\, 2})=0.
    \end{equation}
    Differentiating $H_j(\theta)$ gives
    \begin{equation}
    \label{eq:partial-Hj}
        \partial_\theta H_{j}
        = A\otimes b_j^\dagger + A^\dagger \otimes b_j,
        \qquad
        A \triangleq \partial_\theta\sqrt{\gamma_\theta}\, L.
    \end{equation}
    
    For total interaction time $t=n\Delta t$, define the generator
    \begin{equation}
        \label{eq:Gn_def}
        G_n(\theta)\triangleq i\,U_{1:n}^\dagger(\theta)\,\partial_\theta U_{1:n}(\theta).
    \end{equation}
    A standard product-rule computation yields the decomposition
    \begin{equation}
        \label{eq:Gn_sum}
        G_n(\theta)=\sum_{j=1}^n \widetilde g_j(\theta),
        \qquad
        \widetilde g_j(\theta)\triangleq U_{1:j-1}^\dagger(\theta)\,g_j(\theta)\,U_{1:j-1}(\theta),
        \qquad
        g_j(\theta)\triangleq i\,U_j^\dagger(\theta)\,\partial_\theta U_j(\theta).
    \end{equation}
    Moreover, $g_j(\theta)$ admits the integral representation
    \begin{equation}
        \label{eq:gj_integral}
        g_j(\theta)
        =
        \sqrt{\Delta t}\int_0^1 \dd{u}\,
        e^{iu\sqrt{\Delta t}H_j(\theta)}\big(\partial_\theta H_j(\theta)\big)e^{-iu\sqrt{\Delta t}H_j(\theta)},
    \end{equation}
    and therefore $g_j(\theta) \approx \sqrt{\Delta t}\, \partial_\theta H_j + \order{\Delta t}$.
    
    \paragraph*{Memoryless additivity.}
    Because $U_{1:j-1}$ acts trivially on the fresh bath element $E_j$, the operator $\widetilde g_j(\theta)$ acts nontrivially only on the system and $E_j$. Since the bath is initially a product of vacua, the cross moments factorize for $j\neq j'$ and, using $\Tr(\varphi_{E_j}b_j)=0$, one has $\expval{\widetilde g_j(\theta)}=0$ and
    \begin{equation}
        \label{eq:cross_vanish}
        \expval{\widetilde g_j(\theta)\,\widetilde g_{j'}(\theta)}
        =
        \expval{\widetilde g_j(\theta)}\expval{ \widetilde g_{j'}(\theta)}
        =0
        \qquad (j\neq j').
    \end{equation}
    Hence the variance is additive,
    \begin{equation}
    \label{eq:var_add}
        \Var(G_n(\theta))
        =\sum_{j=1}^n \Var(\widetilde g_j(\theta))
        = \sum_{j=1}^n \expval{\widetilde g_j(\theta)^2}.
    \end{equation}
    To leading order in $\Delta t$, Eq.~\eqref{eq:gj_integral} implies $\widetilde g_j(\theta)=\sqrt{\Delta t}\,U_{1:j-1}^\dagger(\partial_\theta H_j)U_{1:j-1}+\order{\Delta t}$. Since the bath is the bosonic vacuum, whose odd moments vanish, any contribution to $\expval{\widetilde g_j(\theta)^2}$ at order $\order{\Delta t^{k/2}}$ with $k$ odd vanishes in expectation. Therefore,
    \begin{equation}
    \label{eq:local_second_moment}
        \expval{\widetilde g_j(\theta)^2}
        =
        \Delta t\,\Tr_{SE_j}\!\left[(\partial_\theta H_j)^2\;\rho_{j-1}(\theta)\otimes\varphi_{E_j}\right]
        +\order{\Delta t^2},
        \qquad
        t_{j-1}=(j-1)\Delta t,
    \end{equation}
    where $\rho_{j-1}(\theta)$ is the system state immediately before the $j$th collision~\eqref{eq:app_map}. Using \eqref{eq:partial-Hj} and the vacuum moments \eqref{eq:vac_moments},
    \begin{align}
        \Tr_{E_j}\left(\varphi_{E_j}\,(\partial_\theta H_j)^2\right)
        &=
        \Tr_{E_j}\left(\varphi_{E_j}\,(A\otimes b_j^\dagger + A^\dagger\otimes b_j)^2\right) \nonumber\\
        &=
        A^\dagger A \\
        &= (\partial_\theta\sqrt{\gamma_\theta})^2\,L^\dagger L,
    \end{align}
    and therefore
    \begin{equation}
        \label{eq:local_eval}
        \expval{\widetilde g_j(\theta)^2}
        =
        \Delta t\,(\partial_\theta\sqrt{\gamma_\theta})^2\,\Tr\!\big(L^\dagger L\,\rho_{j-1}(\theta)\big)
        +\order{\Delta t^2}.
    \end{equation}
    
    \paragraph*{QFI flow.}
    For the purified (collisional) model, the global state is pure and depends on $\theta$ only through the joint unitary, hence (Lemma~\ref{lem:unitary-qfi})
    \begin{equation}
        F_Q(t)=4\,\Var\!\big(G_n(\theta)\big).
    \end{equation}
    Combining this with \eqref{eq:var_add} and \eqref{eq:local_eval} gives
    \begin{equation}
        \label{eq:QFI_Riemann}
        F_Q(t) = 4(\partial_\theta\sqrt{\gamma_\theta})^2\sum_{j=1}^n \Delta t\,\Tr(L^\dagger L\,\rho_{j-1}(\theta))
        + O(t\,\Delta t).
    \end{equation}
    Note that the first term is $\order{t}$. Taking the limit $\Delta t\to 0$ (with $t=n\Delta t$ fixed) thus yields
    \begin{equation}
    \label{eq:QFI_integral}
        F_Q(t)
        =
        4(\partial_\theta\sqrt{\gamma_\theta})^2\int_0^t \dd{s}\,\Tr(L^\dagger L\,\rho(\theta;s)).
    \end{equation}
    By the fundamental theorem of calculus, this exhibits the QFI as an integral of an instantaneous flow,
    \begin{equation}
        F_Q(t)=\int_0^t \dd{s}\,\dot F_Q(s), \qquad \dot F_Q(s)\triangleq 4(\partial_\theta\sqrt{\gamma_\theta})^2\,\Tr\!\big(L^\dagger L\,\rho(\theta;s)\big),
    \end{equation}
    and using $\Tr(L^\dagger L\,\rho(\theta;s))\le \norm{L^\dagger L}=\norm{L}^2$ we obtain
    \begin{equation}
        F_Q(t)\le 4t\,(\partial_\theta\sqrt{\gamma_\theta})^2\,\norm{L}^2,
    \end{equation}
    which concludes the proof.
\end{proof}

%==========
\subsection{Proof of Theorem~\ref{thm:qfi-matrix}}
\label{app:thm-collision-qfi}

In this Appendix, we prove the explicit form of the collisional QFI matrix flow stated in Theorem~\ref{thm:qfi-matrix} of the main text, which we restate here for convenience.

\QFIM*

In fact, we obtain the QFI matrix result from a more general flow identity for the (pure-state) quantum geometric tensor (QGT) of the collisional purification (Theorem~\ref{thm:qgt-flow} below),
\begin{equation}\label{eq:qgt-def-pure}
  \mathcal{Q}_{\alpha\beta}
  =\ip{\partial_\alpha \Psi}{\partial_\beta \Psi}
  -\ip{\partial_\alpha\Psi}{\Psi}\!\ip{\Psi}{\partial_\beta\Psi},
\end{equation}
which decomposes as
\begin{equation}
  \mathcal{Q} = \frac{1}{4}F_Q - \frac{i}{2}\Omega,
\end{equation}
so that $F_Q=4\Re\,\mathcal{Q}$ is the QFI matrix while $\Omega=-2\Im\,\mathcal{Q}$ is the (Uhlmann) curvature governing measurement incompatibility~\cite{Ragy2016compatibility, Pezze2017:Multiparam, LiuYuanLuWang2020} (see also Refs.~\cite{Provost1980, Matsumoto2002, Carollo2018:UhlmannCurvature}).

We first prove two lemmas.
\begin{lem}[Rate Equation]
    \label{lemma:rate-eq}
    Given $\Gamma= V\Lambda V^\dagger$ and $K_\alpha = (\partial_\alpha V) V^\dagger$, the fundamental rate equation reads
    \begin{equation}
      \partial_\alpha\Gamma =  V(\partial_\alpha\Lambda) V^\dagger + \comm{K_\alpha}{\Gamma}.
    \end{equation}
\end{lem}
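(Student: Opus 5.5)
The plan is to differentiate the product $\Gamma = V\Lambda V^\dagger$ directly with the Leibniz rule and then rewrite the two terms containing derivatives of $V$ using unitarity of the eigenframe. Since $\Gamma$ is Hermitian positive semidefinite, we may take $V(\bm{\theta})$ unitary for every $\bm{\theta}$, with $VV^\dagger = I$. We also assume $V$ is chosen differentiably, for instance away from eigenvalue crossings or via any smooth local branch of the diagonalization. The identity is purely algebraic, so no property of the dissipator beyond this decomposition is needed.

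First, the product rule gives
\begin{equation*}
\partial_\alpha\Gamma = (\partial_\alpha V)\Lambda V^\dagger + V(\partial_\alpha\Lambda)V^\dagger + V\Lambda(\partial_\alpha V^\dagger).
\end{equation*}
Next, we insert identities. In the first term, write $\Lambda V^\dagger = V^\dagger V\Lambda V^\dagger = V^\dagger\Gamma$, so the term becomes $(\partial_\alpha V)V^\dagger\Gamma = K_\alpha\Gamma$. In the third term, write $V\Lambda = V\Lambda V^\dagger V = \Gamma V$, so the term becomes $\Gamma V(\partial_\alpha V^\dagger)$.

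The one genuinely needed fact is that $K_\alpha$ is anti-Hermitian. Differentiating $VV^\dagger = I$ gives
\begin{equation*}
(\partial_\alpha V)V^\dagger + V(\partial_\alpha V^\dagger) = 0,
\end{equation*}
that is, $V(\partial_\alpha V^\dagger) = -K_\alpha$, which also equals $K_\alpha^\dagger$. The third term is therefore $-\Gamma K_\alpha$. Combining it with the first term yields $K_\alpha\Gamma - \Gamma K_\alpha = \comm{K_\alpha}{\Gamma}$, and adding the middle term gives the claimed rate equation.

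The only potential obstacle is differentiability of the eigenframe when eigenrates are degenerate. I would address this by stating the lemma for a given differentiable unitary family $V(\bm{\theta})$, which is what the definition of $K_\alpha$ presupposes. It is also worth remarking that the gauge freedom $V\to VD$, with $D$ diagonal unitary, changes $K_\alpha$ but leaves $\comm{K_\alpha}{\Gamma}$ unaffected when the eigenrates are nondegenerate.
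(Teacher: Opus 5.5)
Your proposal is correct and follows the paper's proof essentially step for step. Both apply the Leibniz rule to $V\Lambda V^\dagger$ and insert $V^\dagger V = I$ to rewrite the outer terms as $K_\alpha\Gamma$ and $\Gamma V(\partial_\alpha V^\dagger)$. Both then use $V(\partial_\alpha V^\dagger) = K_\alpha^\dagger = -K_\alpha$, obtained by differentiating $VV^\dagger = I$.

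As a minor aside, your closing gauge remark does not need nondegeneracy. For a diagonal unitary $D$, the shift $V(\partial_\alpha D)D^\dagger V^\dagger$ in $K_\alpha$ commutes with $\Gamma$ because $(\partial_\alpha D)D^\dagger$ is diagonal.
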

\begin{proof}
The following equalities hold:
\begin{align}
  \partial_\alpha\Gamma &= \partial_\alpha( V\Lambda  V^\dagger) \\
  &= (\partial_\alpha  V)\Lambda  V^\dagger +  V(\partial_\alpha\Lambda) V^\dagger +  V \Lambda (\partial_\alpha  V^\dagger). \label{eq:lem3-step1}
\end{align}
Recall the connection in the canonical basis $K'_\alpha =  V^\dagger (\partial_\alpha  V)$, which implies $K_\alpha =  VK'_\alpha V^\dagger = (\partial_\alpha  V)  V^\dagger$ in the local basis and also $K_\alpha^\dagger  =  V(\partial_\alpha  V^\dagger)=-K_\alpha$. Therefore:
\begin{align}
  (\partial_\alpha  V)\Lambda  V^\dagger +   V \Lambda (\partial_\alpha  V^\dagger) &= (\partial_\alpha  V) V^\dagger ( V\Lambda  V^\dagger) +  ( V \Lambda V^\dagger)  V(\partial_\alpha  V^\dagger) \\
  &= K_\alpha \Gamma -\Gamma K_\alpha \\
  &= \comm{K_\alpha}{\Gamma}.
\end{align}
Substituting this into Eq.~\eqref{eq:lem3-step1} and rearranging terms concludes the proof.
\end{proof}

\begin{lem}[$\Gamma$-Frame Connection]
    \label{lemma:connection}
    Let $\Gamma= V\Lambda  V^\dagger$, such that $J_r(\bm{\theta})=\sum_a  V_{ar} L_a$ defines the canonical jump operators. Then:
    \begin{equation}
      \label{eq:berry-connection}
      \partial_\alpha J_r = \sum_q (K'_\alpha)_{qr} J_q, \qquad K'_\alpha =  V^\dagger (\partial_\alpha V),
    \end{equation}
    where $K'_\alpha$ denotes the $\Gamma$-frame connection, which is a non-Abelian Berry (or Wilczek-Zee) connection such that $(K'_\alpha)_{ij}=\ip{\lambda_i}{\partial_\alpha \lambda_j}$.
\end{lem}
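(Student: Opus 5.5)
The plan is a direct computation from the definition $J_r(\bm{\theta})=\sum_a V_{ar}(\bm{\theta})L_a$, using that the local basis $\{L_a\}$ is parameter independent. Differentiating gives $\partial_\alpha J_r=\sum_a(\partial_\alpha V)_{ar}L_a$. The remaining task is to re-express the local operators $L_a$ in terms of the canonical jumps $J_q$. To do this, invert the defining relation using unitarity of $V$, i.e.\ $VV^\dagger=I$. Then $\sum_q V^\dagger_{qa}J_q=\sum_{q,b}V^\dagger_{qa}V_{bq}L_b=\sum_b (VV^\dagger)_{ba}L_b=L_a$, so $L_a=\sum_q (V^\dagger)_{qa}J_q$.

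Substituting this into the derivative yields
\begin{equation*}
\partial_\alpha J_r=\sum_q\Big(\sum_a (V^\dagger)_{qa}(\partial_\alpha V)_{ar}\Big)J_q=\sum_q (V^\dagger\partial_\alpha V)_{qr}J_q=\sum_q (K'_\alpha)_{qr}J_q ,
\end{equation*}
which is the claimed identity \eqref{eq:berry-connection}.

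For the Berry/Wilczek--Zee interpretation, write the columns of $V$ as the eigenvectors $\ket{\lambda_j}$ of $\Gamma$, so that $V_{aj}=\ip{a}{\lambda_j}$. Then $(V^\dagger\partial_\alpha V)_{ij}=\sum_a\ip{\lambda_i}{a}\partial_\alpha\ip{a}{\lambda_j}=\ip{\lambda_i}{\partial_\alpha\lambda_j}$. Differentiating $V^\dagger V=I$ shows $K'^\dagger_\alpha=-K'_\alpha$, so $K'_\alpha$ is anti-Hermitian, i.e.\ a $\mathfrak{u}(R)$-valued connection. It transforms as a non-Abelian gauge field under the eigenframe freedom $V\to VW(\bm{\theta})$ within degenerate eigenspaces of $\Gamma$, which is what justifies the name. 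This is also consistent with $K_\alpha=VK'_\alpha V^\dagger$ from Lemma~\ref{lemma:rate-eq}.

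The main, though mild, obstacle is the inversion step, which requires $V$ to be exactly unitary on the full $R$-dimensional jump space. If $\Gamma$ is rank-deficient, I would keep $V$ as a full unitary, including zero-rate eigenvectors, so that $\{J_q\}$ still spans the span of the $L_a$. If $\Gamma$ has degenerate eigenvalues, I would assume a differentiable choice of eigenframe; the identity holds for any such choice.
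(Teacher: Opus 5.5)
Your proof is correct and is essentially the paper's argument: the paper rewrites $\partial_\alpha V = V K'_\alpha$ and then regroups $\sum_a V_{aq}L_a = J_q$, whereas you first invert $L_a=\sum_q (V^\dagger)_{qa}J_q$ and substitute, which is the same single use of $VV^\dagger=I$. You also verify that $(V^\dagger\partial_\alpha V)_{ij}=\ip{\lambda_i}{\partial_\alpha\lambda_j}$ and that $K'_\alpha$ is anti-Hermitian, which supplies the interpretive Berry-connection claim that the paper states without proof.
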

\begin{proof}
    The relation $J_r(\bm{\theta})=\sum_a  V_{ar} L_a$ follows from substituting the eigendecomposition $\Gamma= V\Lambda V^\dagger$ into the Lindbladian. Using this relation, we compute
    \begin{align}
      \partial_\alpha J_r &= \sum_a (\partial_\alpha V)_{ar} L_a \\
      &= \sum_a ( V K'_\alpha)_{ar} L_a \\
      &= \sum_a\sum_q  V_{aq} (K'_\alpha)_{qr} L_a \\
      &= \sum_q (K'_\alpha)_{qr} \left(\sum_a  V_{aq} L_a \right)\\
      &= \sum_q (K'_\alpha)_{qr} J_q,
    \end{align}
    where $K'_\alpha =  V^\dagger(\partial_\alpha V)$ denotes the eigenframe connection in the canonical basis.
\end{proof}

\begin{thm}[Collisional QGT Flow]
\label{thm:qgt-flow}
    
    Consider the Lindbladian
    \[
      \mathcal L_{\bm{\theta}}(\rho)= -i\comm{H_c(t)}{\rho} + \mathscr{D}_{\bm{\theta}}(\rho),\qquad  \mathscr{D}_{\bm{\theta}}(\rho)=\sum_{i,j}\Gamma_{ij}(\bm{\theta})
      \left(L_i\rho L_j^\dagger-\tfrac12\{L_j^\dagger L_i,\rho\}\right),
    \]
    with (time-dependent) control Hamiltonian $H_c$, fixed local jumps $L_i$ and a differentiable (time-independent) dissipation matrix $\Gamma(\bm{\theta}) \geq 0$. Choose the diagonalization $\Gamma=V\Lambda V^\dagger$ and define the $\Gamma$-frame connection $K_\alpha \triangleq (\partial_\alpha V)V^\dagger$. In the local jump=operator basis, the collisional Quantum Geometric Tensor (QGT) flow then reads
    \begin{equation}
      \dot{\mathcal Q}_{\alpha\beta}(\bm{\theta};t) =\frac{1}{4}\Tr(\mathcal{K}_{\alpha\beta}(\bm{\theta})\,\mathcal{C}(t)),
    \end{equation}
    where $\mathcal{C}_{ij}(t)=\Tr(\rho(t)L_i^\dagger L_j)$,
    $\rho(t)=\mathcal{T}e^{\int_{0}^t\dd{s}\mathcal L_{\bm{\theta}}}(\psi)$,
    \begin{align}
        \mathcal{K}_{\alpha\beta}
        &\triangleq 4 \big(\partial_\beta \zeta\big)\big(\partial_\alpha \zeta\big)^\dagger \\
        &= \big(\partial_\beta \Gamma + \acomm{K_\beta}{\Gamma}\big)\,
        \Gamma^{-1}\big(\partial_\alpha \Gamma + \acomm{K_\alpha}{\Gamma}\big)^\dagger,
    \end{align}
    and $\zeta= V\Lambda^{1/2}$ such that $\Gamma=\zeta \zeta^\dagger$.
\end{thm}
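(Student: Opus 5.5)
We mirror the discrete-time collisional argument used for Lemma~\ref{lem:qfi-flow}, now upgraded from a variance to the full pure-state QGT. Work with the chain of vacuum bath elements of Appendix~\ref{app:discrete-collisional}. For the per-step interaction we write
\[
H_j(\bm{\theta})=\sum_k \sqrt{\gamma_k}\,(J_k\otimes b_{k,j}^\dagger+\mathrm{h.c.})=\sum_i (L_i\otimes B_{i,j}^\dagger+\mathrm{h.c.}),
\qquad B_{i,j}^\dagger\triangleq\sum_k \zeta_{ik}\,b_{k,j}^\dagger .
\]
This uses $J_k=\sum_i V_{ik}L_i$ and $\zeta=V\Lambda^{1/2}$, so all $\bm{\theta}$-dependence sits in the c-number matrix $\zeta$ and the $L_i$ are fixed. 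Then $\partial_\alpha H_j=\sum_{i,k}(\partial_\alpha\zeta)_{ik}\,L_i\otimes b_{k,j}^\dagger+\mathrm{h.c.}$.

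Define the multiparameter generators $G_\alpha\triangleq iU_{1:n}^\dagger\partial_\alpha U_{1:n}=\sum_j\widetilde g_{\alpha,j}$, with the same product-rule decomposition as in Eq.~\eqref{eq:Gn_sum}. For a pure global state, $\mathcal{Q}_{\alpha\beta}=\langle G_\alpha G_\beta\rangle-\langle G_\alpha\rangle\langle G_\beta\rangle$, which is the multiparameter, non-symmetrized analogue of Lemma~\ref{lem:unitary-qfi}. Since $\ket{\partial_\beta\Psi}=-iU G_\beta\ket{\Psi_0}$, this gives $\ip{\partial_\alpha\Psi}{\partial_\beta\Psi}=\langle G_\alpha G_\beta\rangle$.

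\emph{Memoryless additivity.} Each $\widetilde g_{\alpha,j}$ acts nontrivially only on the system and the fresh element $E_j$, and has zero mean. Hence the cross terms $j\neq j'$ vanish exactly as in Eq.~\eqref{eq:cross_vanish}: take the later index, trace over its fresh vacuum element, and use $\Tr(\varphi b)=0$ at leading order. This gives
\[
\mathcal{Q}_{\alpha\beta}=\sum_j\langle\widetilde g_{\alpha,j}\widetilde g_{\beta,j}\rangle+O(t\Delta t).
\]
To leading order, $\widetilde g_{\alpha,j}\approx\sqrt{\Delta t}\,U_{1:j-1}^\dagger\partial_\alpha H_jU_{1:j-1}$. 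The vacuum moments retain only the $b\,b^\dagger$ ordering, so
\[
\Tr_{E_j}\!\big[\varphi\,(\partial_\alpha H_j)(\partial_\beta H_j)\big]
=\sum_{i,l,k}\overline{(\partial_\alpha\zeta)_{ik}}\,(\partial_\beta\zeta)_{lk}\,L_i^\dagger L_l
=\sum_{i,l}\big[(\partial_\beta\zeta)(\partial_\alpha\zeta)^\dagger\big]_{li}\,L_i^\dagger L_l .
\]
Taking the expectation in $\rho_{j-1}$ gives $\Tr\big[(\partial_\beta\zeta)(\partial_\alpha\zeta)^\dagger\,\mathcal{C}\big]=\tfrac14\Tr(\mathcal{K}_{\alpha\beta}\mathcal{C})$. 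Passing to the Riemann-sum limit yields $\dot{\mathcal Q}_{\alpha\beta}=\tfrac14\Tr(\mathcal{K}_{\alpha\beta}\mathcal{C}(t))$.

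The control Hamiltonian enters only through $\rho_{j-1}$, and is harmless since it is $\bm{\theta}$-independent. One caveat: it appears between collisions, so the conjugation in $\widetilde g_{\alpha,j}$ must include it. This is fine because the system state before collision $j$ already incorporates it.

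\emph{Kernel identity.} It remains to check that $4(\partial_\beta\zeta)(\partial_\alpha\zeta)^\dagger$ equals the stated $\Gamma$-expression. Compute
\[
\partial_\beta\zeta=(\partial_\beta V)\Lambda^{1/2}+V\partial_\beta\Lambda^{1/2}=K_\beta\zeta+\tfrac12 V(\partial_\beta\Lambda)\Lambda^{-1/2}.
\]
Multiplying by $2$ and using Lemma~\ref{lemma:rate-eq} to write $V(\partial_\beta\Lambda)V^\dagger=\partial_\beta\Gamma-[K_\beta,\Gamma]$ gives
\[
2\,\partial_\beta\zeta=\big(\partial_\beta\Gamma+\{K_\beta,\Gamma\}\big)\,V\Lambda^{-1/2}=\big(\partial_\beta\Gamma+\{K_\beta,\Gamma\}\big)(\zeta^\dagger)^{-1}.
\]
Here we used $2K_\beta\Gamma-[K_\beta,\Gamma]=\{K_\beta,\Gamma\}$. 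Multiplying this by its $\alpha$-adjoint and using $(\zeta^\dagger)^{-1}\zeta^{-1}=\Gamma^{-1}$ gives the claimed form. This holds for full-rank $\Gamma$; otherwise one reads $\Gamma^{-1}$ as a pseudo-inverse on the support.

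\emph{Main obstacle.} The delicate step is the error control in the second moment. We must show that the $O(\Delta t^{3/2})$ terms from the $u$-integral in Eq.~\eqref{eq:gj_integral} either vanish by the odd-moment argument or sum to $O(t\sqrt{\Delta t})\to0$. We must also show that cross terms $j\neq j'$ vanish beyond leading order, now for the complex, non-symmetrized product $G_\alpha G_\beta$. I would handle both by expanding to order $\Delta t$ and applying the vacuum parity argument, exactly as in Eq.~\eqref{eq:local_second_moment}.

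Finally, $\Re$ of the flow gives $\dot F_Q=4\Re\dot{\mathcal Q}=\Re\Tr(\mathcal{K}_{\alpha\beta}\mathcal{C})$, which is Theorem~\ref{thm:qfi-matrix}.
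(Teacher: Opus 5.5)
\paragraph*{Where the proof breaks.} Your route is essentially the paper's own argument, transcribed to the discrete collision chain and to the local basis. Writing $H_j=\sum_i(L_i\otimes B_{i,j}^\dagger+\mathrm{h.c.})$ makes $\partial_\alpha H_j$ linear in $\partial_\alpha\zeta$, which is tidier than the paper's $F_\alpha$ bookkeeping. Your same-bin computation and your kernel identity are correct.

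The gap is exactly the step you flagged, and vacuum parity cannot close it, because the obstruction is an \emph{even} bath moment. Expanding Eq.~\eqref{eq:gj_integral} one order further gives
\begin{align*}
\langle 0|g_{\beta,j}|0\rangle_{E_j}&=\frac{i\Delta t}{2}\,\langle 0|[H_j,\partial_\beta H_j]|0\rangle+O(\Delta t^2)=\Delta t\,X_\beta+O(\Delta t^2),\\
X_\beta&=\frac{i}{2}\sum_{i,l}\big[(\partial_\beta\zeta)\zeta^\dagger-\zeta(\partial_\beta\zeta)^\dagger\big]_{li}\,L_i^\dagger L_l=\frac{i}{2}\sum_{i,l}\{K_\beta,\Gamma\}_{li}\,L_i^\dagger L_l .
\end{align*}
This has two consequences.
\begin{itemize}
\item $\widetilde g_{\beta,j}$ does not have zero mean: $\langle G_\beta\rangle\to\int_0^t\Tr(X_\beta\rho(s))\,ds$.
\item The cross terms do not vanish. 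Note that $\widetilde g_{\alpha,j}$ acts on $E_1,\dots,E_j$, not only on $E_j$. So tracing out the fresh element in a cross term with $j<j'$ leaves $\Delta t\,\langle\widetilde g_{\alpha,j}\,U_{1:j'-1}^\dagger X_\beta U_{1:j'-1}\rangle$. This is the response of $X_\beta$ at time $t_{j'}$ to the emission in bin $j$, and it is $O(\Delta t^2)$. Summed over $O(n^2)$ pairs, it gives an $O(t^2)$ contribution that survives $\Delta t\to0$.
\end{itemize}
The paper's continuum proof drops the same term when it asserts $\langle g_\alpha\rangle=0$ and $\delta$-correlated generator densities. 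Following the paper therefore does not repair the step.

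\paragraph*{The identity itself fails.} The omission is not cosmetic: the identity fails whenever $X_\beta\neq0$. Take a qubit with $L_1=\sigma_x$, $L_2=\sigma_y$, $\Gamma=\gamma I_2$, and the admissible diagonalization $V(\theta)=\bigl(\begin{smallmatrix}\cos\theta&-\sin\theta\\ \sin\theta&\cos\theta\end{smallmatrix}\bigr)$.
\begin{itemize}
\item The statement's prediction: $\mathcal{K}_{\theta\theta}=4\gamma I$, so it predicts $F_Q(t)=4\Re\,\mathcal{Q}_{\theta\theta}=8\gamma t$. Here $X_\theta=-2\gamma\sigma_z$.
\item The actual value: $H_j(\theta)=W_j^\dagger H_j(0)W_j$, where $W_j$ is a vacuum-preserving rotation of $(b_{1,j},b_{2,j})$. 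Hence $\ket{\Psi(\theta)}=\prod_jW_j^\dagger\ket{\Psi(0)}$. Moreover $H_j(0)=\sqrt{\gamma}(\sigma_x x_{1,j}+\sigma_y x_{2,j})$, with $x_{k,j}=b_{k,j}+b_{k,j}^\dagger$, conserves $\tfrac12\sigma_z+\sum_jA_j$, where $A_j$ is the suitably oriented generator of $W_j$ and $A_j\ket{0}=0$. It follows that $F_Q(t)=4\Var(\sum_jA_j)=\Var_{\Psi_0}(\sigma_z(t)-\sigma_z(0))\le 4$ for all $t$.
\item For $\langle\sigma_z\rangle_\psi=0$ this equals $2(1-e^{-4\gamma t})$, which matches $8\gamma t$ only at first order.
\item The non-degenerate rotation $\Gamma=V\,\mathrm{diag}(\gamma_1,\gamma_2)V^\top$ behaves the same way: it gives $2(1-e^{-2(\gamma_1+\gamma_2)t})$ against the predicted $4(\gamma_1+\gamma_2)t$.
\end{itemize}

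\paragraph*{What your argument does establish.} Two things survive.
\begin{itemize}
\item The initial flow $\dot{\mathcal{Q}}(0^+)=\tfrac14\Tr(\mathcal{K}\mathcal{C}(0))$ holds in general, since the cross terms and $|\langle G\rangle|^2$ are $O(t^2)$.
\item The full statement holds when $\{K_\alpha,\Gamma\}=0$ for all $\alpha$, i.e.\ the eigenframe is fixed on the support of $\Gamma$, as in eigenrate-only encoding. There $X_\alpha=0$, the next vacuum-traced term is $O(\Delta t^2)$, each cross term is $O(\Delta t^{5/2})$, and your proof goes through.
\end{itemize}
For genuine eigenframe parameters you would need one of two fixes. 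Either retain the drift contributions, namely the two-time correlations and linear response of $X_\alpha$. Or pass to a different purification $\zeta\to\zeta W(\bm\theta)$, chosen so that $(\partial_\alpha\zeta)\zeta^\dagger$ is Hermitian; this changes the kernel.
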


\begin{proof}
    Diagonalize the dissipation matrix as
    \begin{equation}
      \Gamma(\bm{\theta})
      =  V(\bm{\theta})\,\Lambda(\bm{\theta})\, V(\bm{\theta})^\dagger,
      \qquad
      \Lambda(\bm{\theta})
      = \mathrm{diag}\big(\gamma_1(\bm{\theta}),\dots,\gamma_R(\bm{\theta})\big),
    \end{equation}
    with canonical rates $\gamma_r >0$ and unitary $V$ acting on the jump-index space. In terms of the canonical jump operators,
    \begin{equation}
      J_r(\bm{\theta})
      \triangleq \sum_a  V_{ar}(\bm{\theta})\,L_a,
      \qquad r=1,\dots,R,
    \end{equation}
    the dissipative part of the Lindbladian diagonalizes:
    \begin{equation}
      \mathscr{D}_{\bm{\theta}}(\rho)
      = \sum_{r} \gamma_r(\bm{\theta})\,
      \Big(J_r\rho J_r^\dagger
      - \tfrac{1}{2}\{J_r^\dagger J_r,\rho\}\Big).
    \end{equation}
    
    We now go to the collisional purification picture. The canonical Hamiltonian reads
    \begin{equation}
      H(\bm{\theta}; t)
      = \sum_{r} \sqrt{\gamma_r(\bm{\theta})}
      \Big(J_r(\bm{\theta})\otimes e_r^\dagger(t)
      + J_r^\dagger(\bm{\theta})\otimes e_r(t)\Big),
    \end{equation}
    where the bath operators $e_r(t)$ are independent of $\bm{\theta}$ and satisfy (vacuum gauge):
    \begin{equation}
      \expval*{e_r(t)e_{r'}^\dagger (t')}_E=\delta_{rr'}\delta(t-t').
    \end{equation}
    For each parameter $\theta_\alpha$,
    define the generator density
    \begin{equation}
      \label{eq:galpha-density}
      g_{\alpha}(\bm{\theta}; t)
      \triangleq U(\bm{\theta}; t,0)^\dagger\,
      \partial_\alpha H(\bm{\theta}; t)\,
      U(\bm{\theta}; t,0),
    \end{equation}
    with
    \begin{align}
      \partial_\alpha H(\bm{\theta}; t) &= \sum_r \sqrt{\gamma_r}\left(\frac{\partial_\alpha\gamma_r}{2\gamma_r}\left(J_r\otimes e_r^\dagger + \text{h.c.}\right) + \left(\partial_\alpha J_r \otimes e_r^\dagger + \text{h.c.}\right)\right) \\
      &= \sum_r \sqrt{\gamma_r}\left(\frac{\partial_\alpha\gamma_r}{2\gamma_r}\left(J_r\otimes e_r^\dagger + \text{h.c.}\right) + \left(\sum_q (K_\alpha')_{qr} J_q \otimes e_r^\dagger + \text{h.c.}\right) \right)\\
      &= \frac{1}{2}\sum_{r,q} \sqrt{\gamma_r} (F_{\alpha})_{qr} J_q \otimes e_r^\dagger + \text{h.c.}, \label{eq:alphaH}
    \end{align}
    where we use Lemma~\ref{lemma:connection} for the second equality and let
    \begin{equation}
      \label{eq:Falpha}
      F_\alpha \triangleq \Lambda^{-1}\partial_\alpha \Lambda + 2K'_\alpha, \qquad (F_\alpha)_{qr} = \frac{\partial_\alpha\gamma_r}{\gamma_r}\delta_{qr} + 2(K_\alpha')_{qr}.
    \end{equation}

    For the collisional unitary process, the QGT of the purified state may be written in terms of the integrated generators
    $G_\alpha(t)=iU(\bm{\theta};t,0)^\dagger\partial_\alpha U(\bm{\theta};t,0)=\int_0^t\dd{s}\,g_\alpha(\bm{\theta};s)$ as
    $\mathcal{Q}_{\alpha\beta}(t)=\expval{ G_\alpha(t)\,G_\beta(t)}$ since $\expval{g_\alpha}=0$.
    Therefore its time derivative admits the time-bin representation
    \begin{equation}
        \label{eq:qgt_flow_bin}
        \dot{\mathcal{Q}}_{\alpha\beta}(\bm{\theta};t)
        = \partial _{t} \expval{ G_\alpha(t)\,G_\beta(t)}
        = \int_0^{t} \dd s 
        [ \expval{g_\alpha(\bm{\theta};t ) g_\beta(\bm{\theta};s)} + \expval{g_\alpha(\bm{\theta};s ) g_\beta(\bm{\theta};t)} ] 
    \end{equation}
    Define the canonical correlator matrix
    \begin{equation}
      \mathcal{C}'_{rq}(t)
      \triangleq \Tr\big\{J_r^\dagger J_q\,\rho(t)\big\},
      \qquad
      \rho(t) = \mathcal{T}e^{\int_{0}^t \dd{s}\mathcal{L}_{\bm{\theta}, t}}(\psi).
    \end{equation}
    Substitute Eq.~\eqref{eq:alphaH} into Eq.~\eqref{eq:galpha-density} and use the vacuum gauge $\expval*{e_r(t)e_{r'}^\dagger (t')}_E=\delta_{rr'}\delta(t-t')$, with all other correlators vanishing, to deduce
    \begin{align}
      \expval{g_\alpha(\bm{\theta};s)\,g_\beta(\bm{\theta};s') } &=
      \frac{1}{4} \delta(s-s')\sum_{q,q'} \sum_r \gamma_r (F_\beta)_{q'r} (F_\alpha^\dagger)_{rq} \mathcal{C}'_{qq'}(s)\\
      &= \frac{1}{4} \delta(s-s') \sum_{q,q'} (\mathcal{K}_{\alpha\beta}')_{q'q}\mathcal{C}'_{qq'}(s) \\
      &= \frac{1}{4}\delta(s-s') \Tr(\mathcal{K}'_{\alpha\beta}\, \mathcal{C}'(s)),
    \end{align}
    with
    \begin{equation}
      \label{eq:DK-comp-canonical}
      \mathcal{K}'_{\alpha\beta}(\bm{\theta})
      = \,F_\beta(\bm{\theta})\,
      \Lambda(\bm{\theta})\,
      F_\alpha^\dagger(\bm{\theta}).
    \end{equation}
    Since $J_r =\sum_a  V_{ar}L_a$, the canonical and local correlator matrices relate by
    \begin{equation}
      \mathcal{C}'(t)
      =  V^\dagger\,\mathcal{C}(t)\, V,
    \end{equation}
    where $\mathcal{C}_{ab}(t)
    = \Tr(L_a^\dagger L_b\,\rho(t))$ denotes the correlator matrix in the local basis. By cyclicity of the trace,
    \begin{equation}
      \Tr(\mathcal{K}'_{\alpha\beta}\, \mathcal{C}'(t))
      = \Tr(
        \mathcal{K}_{\alpha\beta}\,
      \mathcal{C}(t)), \qquad \mathcal{K}_{\alpha\beta}(\bm{\theta})
      \triangleq  V(\bm{\theta})\,
      \mathcal{K}'_{\alpha\beta}(\bm{\theta})\,
      V^\dagger(\bm{\theta}).
    \end{equation}
    We expand $\mathcal{K}_{\alpha\beta}$ [Eq.~\eqref{eq:DK-comp-canonical}] in a more convenient split:
    \begin{align}
      \mathcal{K}_{\alpha\beta}&= ( V F_\beta  V^\dagger)( V \Lambda  V^\dagger)( VF_\alpha^\dagger  V^\dagger) \\
      &= \Big(( V F_\beta  V^\dagger)\Gamma\Big) \Gamma^{-1}\Big(\Gamma( VF_\alpha^\dagger  V^\dagger)\Big).
    \end{align}
    Use Eq.~\eqref{eq:Falpha} together with $K_\beta= VK'_\beta  V^\dagger$,  $\Gamma= V\Lambda V^\dagger$, and Lemma~\ref{lemma:rate-eq} to compute
    \begin{align}
      ( VF_\beta  V^\dagger)\Gamma &=  V\left((\partial_\beta\Lambda)\Lambda^{-1} + 2K_\beta'\right) V^\dagger \Gamma\\
      &=  V(\partial_\beta\Lambda) V^\dagger + 2 K_\beta\Gamma \\
      &= \left(\partial_\beta\Gamma -\comm{K_\beta}{\Gamma}\right) + 2 K_\beta \Gamma\\
      &= \partial_\beta\Gamma + \acomm{K_\beta}{\Gamma}.
    \end{align}
    Therefore:
    \begin{equation}
      \mathcal{K}_{\alpha\beta}(\bm{\theta})= \left(\partial_\beta\Gamma + \acomm{K_\beta}{\Gamma}\right) \Gamma^{-1}\left(\partial_\alpha\Gamma + \acomm{K_\alpha}{\Gamma}\right)^\dagger.
    \end{equation}
    Gathering everything together, the QGT flow reads
    \begin{equation}
      \dot{\mathcal{Q}}_{\alpha\beta}(\bm{\theta};t) = \frac{1}{4}\, \Tr\left\{ \mathcal{K}_{\alpha\beta}(\bm{\theta})\,\mathcal{C}(t)\right\}.
    \end{equation}
    
    We now prove the dissipative amplitude form. Define the dissipative amplitude
    \begin{equation}
      \zeta(\bm{\theta})\triangleq  V (\bm{\theta})\,\Lambda(\bm{\theta})^{1/2},
      \qquad
      \Gamma=\zeta\zeta^\dagger.
    \end{equation}
    Work in the canonical gauge:
    \begin{equation}\label{eq:D-Lhalf}
      \mathcal{K}'_{\alpha\beta}
      =\big(F_\beta\Lambda^{1/2}\big)
      \big(F_\alpha\Lambda^{1/2}\big)^\dagger, \qquad F_\beta\Lambda^{1/2}
      =2\big[\partial_\beta(\Lambda^{1/2})+K'_\beta\Lambda^{1/2}\big].
    \end{equation}
    Now differentiate $\zeta=V\Lambda^{1/2}$:
    \begin{align}
      \partial_\beta\zeta
      &=(\partial_\beta V)\Lambda^{1/2}+V\partial_\beta(\Lambda^{1/2}) \\
      &=V\big[V^\dagger\partial_\beta V\,\Lambda^{1/2}
      +\partial_\beta(\Lambda^{1/2})\big]\\
      &=V\big[K'_\beta\Lambda^{1/2}+\partial_\beta(\Lambda^{1/2})\big].
    \end{align}
    Combine this with Eq.~\eqref{eq:D-Lhalf} to find
    \begin{equation}
      VF_\beta\Lambda^{1/2}
      =2\,\partial_\beta\zeta.
    \end{equation}
    Therefore,
    \begin{align}
      \mathcal{K}_{\alpha\beta}
      &=\big(VF_\beta\Lambda^{1/2}\big)
      \big(VF_\alpha\Lambda^{1/2}\big)^\dagger \\
      &=4\,(\partial_\beta\zeta)\,(\partial_\alpha\zeta)^\dagger,
    \end{align}
    which completes the proof.
\end{proof}

Taking the real part of $\dot{\mathcal{Q}}_{\alpha\beta}$ in the results above produces the QFI matrix~\eqref{eq:qfim-flow} of Theorem~\ref{thm:qfi-matrix}.

It turns out that we may tighten purification when the jump operators are Hermitian, as we now argue.

\begin{prop}[QGT Flow for Hermitian Jumps]
    \label{prop:connected-qgt}
    Suppose the local jump basis is Hermitian,
    \begin{equation}
        L_a = L_a^\dagger, \qquad a=1,\dots,R.
    \end{equation}
    Define the connected correlator
    \begin{equation}
        \mathcal C^{c}_{ab}(t)
        \triangleq
        \Tr\!\big(\Delta L_a(t)\,\Delta L_b(t)\,\rho(t)\big),
        \qquad
        \Delta L_a(t)\triangleq L_a-\Tr(L_a\rho(t))\,I.
    \end{equation}
    Then the QGT flow reads
    \begin{equation}
         \dot{\mathcal Q}^c_{\alpha\beta}(\bm{\theta};t) =\frac{1}{4}\Tr(\mathcal{K}_{\alpha\beta}(\bm{\theta})\,\mathcal{C}^c(t)),
    \end{equation}
    where $\mathcal K_{\alpha\beta}$
    is the kernel from Theorem~\ref{thm:qfi-matrix}.
\end{prop}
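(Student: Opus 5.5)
The plan is to exploit the dilation freedom $U(\bm{\theta};t)\mapsto (I\otimes U_E(\bm{\theta};t))\,U(\bm{\theta};t)$ noted in Sec.~\ref{sec:markov-evol}. Any such choice leaves $\rho(t)$ unchanged, so by data processing its QGT still upper bounds the system QFI. I will choose $U_E$ so that the new purification is exactly the canonical collisional purification with each local jump $L_a$ replaced by its centered version $\Delta L_a(t)$. The QGT flow of that purification should then follow from Theorem~\ref{thm:qgt-flow} verbatim, with $\mathcal{C}\to\mathcal{C}^c$ and the kernel $\mathcal{K}_{\alpha\beta}$ untouched. The key structural input is Hermiticity. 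In the physical-basis form of the dilation [Eq.~\eqref{eq:app_phys_basis}], $L_a\otimes f_a^\dagger+L_a^\dagger\otimes f_a = L_a\otimes(f_a+f_a^\dagger)$. Hence shifting $L_a\to L_a-c_a(t)I$ with real $c_a(t)$ changes $H$ only by the bath-only term $-\sum_a c_a(t)\,(f_a+f_a^\dagger)$, which is a time-dependent bath quadrature displacement.

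I would carry this out in the discrete collision model of Appendix~\ref{app:discrete-collisional}, in four steps. First, write each collision generator as $\sqrt{\Delta t}\sum_a L_a\otimes X_{a,j}$ with bath quadratures $X_{a,j}=f_{a,j}+f_{a,j}^\dagger$, whose correlations are fixed by $\Gamma(\bm{\theta})$. Second, fix $c_a(t)\triangleq \Tr(L_a\rho(\bm{\theta}_0;t))$ at the true parameter point $\bm{\theta}_0$; this makes it parameter-independent as a function, which is all the local QFI needs. Third, define the modified collision unitaries $U'_j=\exp\!\big(-i\sqrt{\Delta t}\sum_a \Delta L_a(t_j)\otimes X_{a,j}\big)$. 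Fourth, verify that $\Tr_{E_j}[U'_j(\rho\otimes\varphi_{E_j})U_j'^\dagger]$ reproduces the same CPTP increment \eqref{eq:app_increment} as the original collision up to $\order{\Delta t^2}$. Given that, the modified model is an honest purification of the same Lindbladian, and I would then rerun the proof of Theorem~\ref{thm:qgt-flow} on it.

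For the rerun itself, the steps are as follows. Because $c_a$ is parameter-independent, $\partial_\alpha H'$ has exactly the structure of Eq.~\eqref{eq:alphaH}, with $J_q$ replaced by $\Delta J_q\triangleq\sum_a V_{aq}\Delta L_a$, so the same $F_\alpha$ and the same $\mathcal{K}'_{\alpha\beta}=F_\beta\Lambda F_\alpha^\dagger$ appear. The vacuum/white-noise contraction then yields $\tfrac14\delta(s-s')\Tr(\mathcal{K}'_{\alpha\beta}\,V^\dagger\mathcal{C}^c(s)V)$. Rotating back to the local frame gives $\tfrac14\Tr(\mathcal{K}_{\alpha\beta}\mathcal{C}^c)$, where $\mathcal{C}^c_{ab}=\Tr(\Delta L_a^\dagger\Delta L_b\,\rho)=\Tr(\Delta L_a\Delta L_b\,\rho)$ by Hermiticity. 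A further point is that the first moments of the generator densities now vanish both on the bath side and on the system side, since $\Tr(\Delta L_a\rho)=0$. I would use this to confirm that $\mathcal{Q}=\expval{G_\alpha G_\beta}$ equals the connected covariance, so no extra disconnected term is dropped.

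The main obstacle is the fourth step, because the shift is not literally an environment-only unitary applied after the fact. For $a\neq b$ the quadratures obey $[X_a,X_b]=2i\,\Im\Gamma_{ab}\,\delta$, so at second order in $\sqrt{\Delta t}$ the cross term between $c_a X_a$ and $L_b\otimes X_b$ generates a commutator-type system contribution of order $\Delta t\,c_a\Im\Gamma_{ab}L_b$. This is a Lamb-shift-like correction that could alter the reduced dynamics. My first attempt would be to check whether this contribution cancels after tracing over the vacuum, using the fact that the anticommutator-symmetric part of the vacuum moments is what survives in Eq.~\eqref{eq:app_increment}. If it does not cancel, I would instead absorb it by adding a compensating parameter-independent control term to $H_c$, which requires $\Im\Gamma$ to be evaluated at $\bm{\theta}_0$; this term affects only $\rho(t)$ and not the parameter derivatives. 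The fallback is to restrict the argument to real $\Gamma$, where the quadratures commute and the shift is exactly the bath displacement $U_E=\mathcal{T}\exp\!\big(i\int c_a(s)X_a(s)\,\dd{s}\big)$.
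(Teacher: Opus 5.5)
Your mechanism is the paper's. Its proof sketch notes that, for Hermitian $L_a$, the part of each bin's generator proportional to the means $c_a=\Tr(L_a\rho(t_j))$ is $I\otimes(\text{bath quadrature on }E_j)$. It removes this term with a $\bm{\theta}$-dependent environment gauge built from bath displacements, and then reruns Theorem~\ref{thm:qgt-flow}. The paper applies the displacements after the fact, as $(I\otimes U_E)U$, rather than editing $H$, so the reduced state is preserved trivially.

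When $\Im\Gamma=0$ the two routes coincide exactly. Since $[H_j,\sum_a c_aX_{a,j}]\propto\sum_{a,b}c_a\,\Im\Gamma_{ab}\,L_b=0$, your centered model equals $\big(I\otimes\prod_j D_j(\bm{\theta})\big)U_{1:n}$ with $D_j=\exp\big(i\sqrt{\Delta t}\sum_a c_aX_{a,j}(\bm{\theta})\big)$. Your rerun with $c_a$ frozen at $\bm{\theta}_0$ is then a clean version of the paper's argument, as rigorous as the proof of Theorem~\ref{thm:qgt-flow} it reuses.

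For $\Im\Gamma\neq0$, neither of your remedies works.
The term does not cancel on tracing. The centered collision model reproduces $\mathcal{L}_{\bm{\theta}}$ only after adding $H_{\rm LS}(\bm{\theta})=\sum_{a,b}c_a\,\Im\Gamma_{ab}(\bm{\theta})\,L_b$. This is the usual Lindblad-gauge term $\tfrac{i}{2}\sum_k\gamma_k(\mu_k^*J_k-\mu_kJ_k^\dagger)$ for the complex shifts $\mu_k=\sum_aV_{ak}c_a$.

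Compensating with $H_{\rm LS}(\bm{\theta}_0)$ leaves a mismatch $H_{\rm LS}(\bm{\theta})-H_{\rm LS}(\bm{\theta}_0)$. It vanishes at $\bm{\theta}_0$, but its derivative is nonzero whenever $\partial_\alpha\Im\Gamma\neq0$, so it changes $\partial_\alpha\rho(t)$. Your model then no longer purifies $\rho(\bm{\theta};t)$ to first order, and data processing bounds the QFI of a different family.

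Compensating with $H_{\rm LS}(\bm{\theta})$ does restore the dynamics; this is the paper's displaced purification. But then a coherent, system-only generator of order $\Delta t$ per bin, built from $\partial_\alpha H_{\rm LS}$, survives. It adds an $\order{t^2}$ piece to the QGT that $\Tr(\mathcal{K}\mathcal{C}^c)$ misses. The paper's bin-by-bin cancellation is exact only at order $\sqrt{\Delta t}$. It leaves the same residual, proportional to the c-numbers $[X_{b,j},\partial_\alpha X_{a,j}]$, and drops it.

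This gap cannot be closed, because the statement is false for complex $\Gamma$. Write bosonic loss $L=a$ in the Hermitian basis $L_1=x$, $L_2=p$, so that $\Gamma_{11}=\Gamma_{22}=\gamma/2$ and $\Gamma_{12}=\Gamma_{21}^*=-i\gamma/2$. Then $\Tr(\mathcal{K}\mathcal{C}^c)=\gamma^{-1}\langle\Delta a^\dagger\Delta a\rangle$, which vanishes at all times for a coherent probe $\ket{\alpha}$. The true QFI, however, is $F_Q[\rho(t)]=|\alpha|^2t^2e^{-\gamma t}$. This is the paper's own bosonic-loss counterexample, merely re-expressed in a Hermitian basis. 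The missing $|\alpha|^2t^2$ is, to leading order, exactly $4t^2\Var(\partial_\gamma H_{\rm LS})$ with $H_{\rm LS}=\tfrac{\gamma}{2}(\langle p\rangle x-\langle x\rangle p)$.

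For general $\Gamma$, only the $t\to0^+$ flow survives. Your real-$\Gamma$ fallback (equivalently, Hermitian canonical jumps $J_k$) is therefore the correct scope of the proposition, and within it your proof goes through.
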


\begin{proof}[Proof sketch]
    The proof follows by revisiting the discrete collisional derivations of Lemma~\ref{lem:qfi-flow} (Appendix~\ref{app:lem-qfi-flow}) and Theorem~\ref{thm:qgt-flow} (Appendix~\ref{app:thm-collision-qfi}). In each time bin, the dressed generator, $\tilde{g}_{\alpha, j}$, splits into a fluctuating system part plus an system-identity term that is proportional to the instantaneous means $\Tr(L_a\rho(t_j))\in\mathbb{R}$ when $L_a$ are Hermitian. This system-identity term is exactly a Hermitian displacement generator on the bath element. Thus, we may choose a product of bath displacements in a new environment gauge (viz., $U_E$) that cancels this term bin by bin.\footnote{This serves as a new \emph{displaced} collisional purification.} The remaining generator depends only on the fluctuations $\Delta L_a(t_j)$, and repeating the proof of Theorem~\ref{thm:qgt-flow} in this new displaced gauge replaces the disconnected correlator $\mathcal C(t)$ by the connected correlator $\mathcal C^{c}(t)$.
\end{proof}

\subparagraph*{Remarks.}
The fact that Proposition~\ref{prop:connected-qgt} represents a tighter purification through the connected correlator is special to Hermitian jumps. If $L_a=L_a^\dagger$, then the mean $\Tr(L_a\rho(t))$ is real and contributes only a bath displacement, which may be removed by an environment gauge. After this subtraction, the collisional QGT flow depends only on the fluctuations $\Delta L_a$. Contrariwise, for non-Hermitian jumps, the disconnected part is not merely a removable bath displacement. Hence subtracting first moments does not suffice to reduce the all-time bound to the connected correlator $\mathcal C^c(t)$. Indeed, this replacement is false in general: Estimating bosonic loss, with jump operator $L=a$, with a coherent-state probe provides a counterexample, since the centered correlator vanishes whereas the QFI is nonzero for all $t$.

\subsection{Proof of Theorem~\ref{thm:gen-Heisenberg}}
\label{app:thm-gen-heisenberg}

In this Appendix, we prove the generalized Heisenberg limit, Theorem~\ref{thm:gen-Heisenberg} of the main text, which we restate here for convenience.

%average QFI per parameter satisfies

\GHL*

\begin{proof}
    To begin, consider a single shot of duration $t$. Take the parameter trace of~\eqref{eq:qfim-flow} (by Definition~\ref{def:param-avg}) to obtain the average QFI flow for $s\in[0,t]$:
    \begin{equation}
        \dot{\bar{F}}_Q(s)
        = \Re\Tr(\bar{\mathcal{K}} \mathcal{C}(s)).
    \end{equation}
    Integrate from $0$ to $t$ and use $\abs{\Re z}\le \abs{z}$ to obtain
    \begin{align}
        \bar{F}_Q(t)
        &=\int_0^t\dd{s}\,\Re\,\Tr(\bar{\mathcal{K}} \mathcal{C}(s))\nonumber\\
        &\le t\,\sup_{s\in[0,t]} \abs{\Tr(\bar{\mathcal{K}} \mathcal{C}(s))}.
        \label{eq:avg-qfi-int}
    \end{align}
    For fixed $s$, expand entrywise and bound terms by the max-norm (Definition~\ref{def:max-norms}):
    \begin{align}
        \abs{\Tr(\bar{\mathcal{K}} \mathcal{C}(s))}
        &= \abs{\sum_{a,b=1}^R \bar{\mathcal{K}}_{ba}\,\mathcal{C}_{ab}(s)} \nonumber\\
        &\le \sum_{a,b=1}^R \abs{\bar{\mathcal{K}}_{ba}}\,\abs{\mathcal{C}_{ab}(s)} \nonumber\\
        &\le \norm{\bar{\mathcal{K}}}_{\max}\,\norm{\mathcal{C}(s)}_{\max}\,
        \abs{\mathsf{I}(s)},
        \label{eq:KC-supp}
    \end{align}
    where we define the set $\mathsf{I}(s)\triangleq {\rm supp}(\bar{\mathcal{K}})\cap{\rm supp}(\mathcal{C}(s))$ and $\abs{\mathsf{I}(s)}$ its cardinality. Moreover, since $\mathcal{C}_{ab}(s)=\Tr(L_a^\dagger L_b\,\rho(s))$ with $\Tr\rho(s)=1$,
    \begin{equation}
        \abs{\mathcal{C}_{ab}(s)}
        =\abs{\Tr(L_a^\dagger L_b\,\rho(s))}
        \le \norm{L}_{\max}^2,
    \end{equation}
    and thus $\norm{\mathcal{C}(s)}_{\max}\le \norm{L}_{\max}^2$ (see Definition~\ref{def:max-norms} for a definition of $\norm{L}_{\max}$).
    Finally, for each $a\in[R]$,
    \begin{equation}
        \abs{\{b\in[R]:(a,b)\in\mathsf{I}(s)\}}
        \le \min(r_{\bar{\mathcal{K}}},\,r_{\mathcal{C}(s)}),
    \end{equation}
    with $r_M$ the row connectivity of matrix $M$ (Definition~\ref{def:row-connect}), 
    so summing over $a$ bounds the cardinality of $\mathsf{I}(s)$
    \begin{equation}
        \abs{\mathsf{I}(s)}
        \le R\,\min(r_{\bar{\mathcal{K}}},\,r_{\mathcal{C}(s)})
        \le R\,\min(r_{\bar{\mathcal{K}}},\,r_{\mathcal{C}}),
    \end{equation}
    with $r_{\mathcal{C}}\ \triangleq\ \sup_{s\in[0,t]} r_{\mathcal{C}(s)}$. Substitute into~\eqref{eq:KC-supp} and then into~\eqref{eq:avg-qfi-int} to obtain 
    \begin{equation}
        \bar{F}_Q(t)
        \le t\,\norm{\bar{\mathcal{K}}}_{\max}\,\norm{L}_{\max}^2\,
        R\,\min(r_{\bar{\mathcal{K}}},\,r_{\mathcal{C}}).
    \end{equation}
    Additivity of QFI under independent repetitions implies $\bar{F}_Q(T)=\nu\,\bar{F}_Q(t)$, which produces~\eqref{eq:avg-qfi-bound} since $T=\nu t$.
\end{proof}

%==========
\subsection{Proof of Theorem~\ref{thm:rpm-qcrb}}
\label{app:thm-rpm-qcrb}

In this Appendix, we prove that the RPM protocol achieves the eigenrate QCRB, Theorem~\ref{thm:rpm-qcrb} of the main text, which we restate here for convenience.

\RPM*

\begin{proof}
    In the eigenrate-only setting, Proposition~\ref{prop:rate-qfi} gives
    \begin{equation}
        (\dot F_Q)_{\alpha\alpha}(t)
        =
        \sum_{k=1}^R \frac{(\partial_\alpha\gamma_k)^2}{\gamma_k}\,\Tr(J_k^\dagger J_k\,\rho(t)).
    \end{equation}
    Taking the per-parameter average [via Eq.~\eqref{eq:param-avg}] yields
    \begin{equation}
        \dot{\bar{F}}_Q(t)=\Tr(A\,\rho(t)),
    \end{equation}
    with
    \begin{equation}
        A\ \triangleq\ \sum_{k=1}^R c_k\,J_k^\dagger J_k, \quad
        c_k \triangleq \frac{1}{d}\sum_{\alpha=1}^d \frac{(\partial_\alpha\gamma_k)^2}{\gamma_k}.
    \end{equation}
    Hence, for any trajectory $\rho(s)$ and any $t>0$,
    \begin{equation}
        \frac{\bar{F}_Q(t)}{t}
        =\frac{1}{t}\int_0^t\dd{s}\,\Tr(A\rho(s))
        \le \sup_{\rho}\Tr(A\rho)=\lambda_{\max}(A).
    \end{equation}
    Conversely,
    $\lim_{t\to 0^+}\bar{F}_Q(t;\psi)/t=\Tr(A\psi)$ for any pure $\psi$, so optimizing gives $\mathcal{R}^\star=\sup_{\psi}\Tr(A\psi)=\lambda_{\max}(A)$.
    
    If $\psi^\star$ achieves $\lambda_{\max}(A)$ and admits a distinguishable jump POVM $\bm{\mathcal{J}}$~\eqref{eq:jump-povm}, then
    Lemma~\ref{lem:rpm-fisher} implies
    \begin{equation}
        \lim_{t\to 0^+}\frac{\bar{F}_{\rm rpm}(t;\psi^\star)}{t}
        =
        \Tr(A\psi^\star)
        =
        \lambda_{\max}(A)
        =
        \mathcal{R}^\star,
    \end{equation}
    proving that RPM attains the average optimal per-parameter rate. Scenario~1 (unlimited shots~\eqref{eq:F-unlim-shots}) implies
    $\bar{F}_Q^{\rm opt}(T)=T\,\mathcal{R}^\star$, hence RPM saturates the eigenrate QCRB.
\end{proof}

%==========
%==========
\section{Uhlmann Extremality}
\label{app:uhlmann-extremality}

In this Appendix, we analyze the gauge freedom in purifications of the reduced state $\rho(\bm{\theta};t)$ and characterize the Uhlmann-extremal (tight) purification. In the main text we primarily work with the canonical collisional purification (Appendix~\ref{app:collisional}), which is convenient and useful for obtaining upper bounds, but is not generically tight for any given $\psi$. To assess tightness formally, we adopt a geometric viewpoint, using the Bures metric $g^{\rm Bures}_{uv}$ on mixed states and the Fubini--Study metric $g^{\rm FS}_{uv}$ on purifications, where the Bures metric is directly related to the QFI via~\cite{Braunstein1994:Bures}
\begin{equation}
  g_{uv}^{\rm Bures}\big[\rho(\bm{\theta};t)\big]
  =\tfrac{1}{4}(F_Q)_{uv}\big[\rho(\bm{\theta};t)\big].
\end{equation}

Fix a time $t$ and a parameter value $\bm{\theta}$. The canonical collisional purification reads
\begin{equation}
\ket{\Psi(\bm{\theta};t)}
=
U(\bm{\theta};t)\,\ket{\psi}\otimes\ket{\varphi_{\rm in}},
\end{equation}
where $\ket{\psi}$ is a pure system probe state and the environment is initialized in the bosonic vacuum
$\ket{\varphi_{\rm in}}=\ket{0}$. The corresponding reduced output states are
\begin{equation}
\rho(\bm{\theta};t)=\Tr_{\rm env}\dyad{\Psi(\bm{\theta};t)}{\Psi(\bm{\theta};t)},
\qquad
\varphi_{\rm out}(\bm{\theta};t)=\Tr_{\rm sys}\dyad{\Psi(\bm{\theta};t)}{\Psi(\bm{\theta};t)}.
\end{equation}

Any other purification of $\rho(\bm{\theta};t)$ (on the same environment Hilbert space) is obtained by a $\bm{\theta}$-dependent environment unitary,
\begin{equation}
\ket{\Psi(\bm{\theta};t)}
\longmapsto
\ket{\Psi_{U_E}(\bm{\theta};t)}
\triangleq
\big(I\otimes U_E(\bm{\theta})\big)\ket{\Psi(\bm{\theta};t)} ,
\end{equation}
which leaves $\rho(\bm{\theta};t)$ invariant. By Uhlmann's theorem~\cite{Uhlmann1976}, a purification variational \emph{Ansatz} admits the Bures metric. That is, for any tangent vectors $u,v$ at $\bm{\theta}$,
\begin{equation}
g^{\rm Bures}_{uv}\big[\rho(\bm{\theta};t)\big]
=
\min_{U_E}\, g^{\rm FS}_{uv}\big[\Psi_{U_E}(\bm{\theta};t)\big],
\end{equation}
where the minimum ranges over $\bm{\theta}$-dependent environment unitaries $U_E(\bm{\theta})$. Let $U_E^\star(\bm{\theta})$ denote an extremal gauge and define the corresponding extremal purification
\begin{equation}
\ket{\Psi^\star(\bm{\theta};t)}\;\triangleq\;
\big(I\otimes U_E^\star(\bm{\theta})\big)\ket{\Psi(\bm{\theta};t)}.
\end{equation}
Then the extremal Fubini--Study metric coincides with the Bures metric of the reduced state,
\begin{equation}
g^\star_{\alpha\beta}(\bm{\theta};t)
\;\triangleq\;
g^{\rm FS}_{\alpha\beta}\big[\Psi^\star(\bm{\theta};t)\big]
\;=\;
g^{\rm Bures}_{\alpha\beta}\big[\rho(\bm{\theta};t)\big].
\end{equation}

Below we characterize Uhlmann extremality through independent Lyapunov equations satisfied by the extremal environment generators $h^\star_\alpha\triangleq i{U_E^\star}^\dagger (\partial_\alpha U_E^\star)$. The canonical collisional purification is Uhlmann-extremal iff $U_E^\star(\bm{\theta})=I_E$, equivalently $h^\star_\alpha=0$ for all $\alpha$. (This extremization is similar in spirit to those used in Refs.~\cite{Fujiwara2008:fibre, Dobrza2012ElusiveHeisenberg, Sekatski2017:FullFast, Demkowicz2017:Adaptive, Zhou2018:HNLS, Zhou2021:Asymptotic, Albarelli2022:Incompatibility, Gorecki2023:CausalAdaptive} to obtain asymptotic precision bounds.)

Throughout we use the shorthand $\Psi \triangleq \dyad{\Psi}$ to denote the density matrix of the global pure state.

%==========
\subsection{Extremality Condition}

Let $\partial_\alpha$ denote differentiation with respect to $\theta_\alpha\in\bm{\theta}$, and define a real direction $u$ with $\partial_u \triangleq \sum_\alpha u_\alpha \partial_\alpha$. Introduce the Hermitian generators for the collisional unitary and the environment gauge,\footnote{Here, $\tilde{G}_\alpha$ relates to $G_\alpha$ of Appendix~\ref{app:thm-collision-qfi} by $G_\alpha= U \widetilde{G}_\alpha U^\dagger$.}
\begin{equation}\label{appeq:generators}
\widetilde{G}_\alpha(\bm{\theta};t)\triangleq i\,(\partial_\alpha U)U^\dagger,
\qquad
h_\alpha(\bm{\theta})\triangleq i\,U_E^\dagger(\bm{\theta})\,(\partial_\alpha U_E(\bm{\theta})),
\end{equation}
and the directional generators $\widetilde{G}_u=\sum_\alpha u_\alpha \widetilde{G}_\alpha$ and $h_u=\sum_\alpha u_\alpha h_\alpha$.
Then
\begin{equation}
\partial_u\ket{\Psi(\bm{\theta};t)} = -i\,\widetilde{G}_u(\bm{\theta};t)\,\ket{\Psi(\bm{\theta};t)}.
\end{equation}
For the gauged purification $\ket{\Psi_{U_E}}=(I\otimes U_E)\ket{\Psi}$,
\begin{equation}
\partial_u\ket{\Psi_{U_E}}
=-i(I\otimes U_E)\left(\widetilde{G}_u+I\otimes h_u\right)\ket{\Psi}.
\end{equation}
Since the Fubini-Study metric is invariant under $(I\otimes U_E)$, the gauge optimization is entirely captured by
the shift $\widetilde{G}_u\mapsto \widetilde{G}_u+I\otimes h_u$ acting on the canonical state $\ket{\Psi}$.

For a smooth parametrization $\Psi_{U_E}(\bm{\theta};t)$, the directional Fubini-Study line element is
\begin{equation}
    \ell_u^2
    = \ip{\partial_u\Psi_{U_E}}{\partial_u\Psi_{U_E}}
    -\abs{\ip{\Psi_{U_E}}{\partial_u\Psi_{U_E}}}^2.
\end{equation}
Using $\partial_u\ket{\Psi_{U_E}}=-i(I\otimes U_E)(\widetilde{G}_u+I\otimes h_u)\ket{\Psi}$,
this becomes the variance of the shifted generator $\widetilde{G}_u+I\otimes h_u$ with respect to the canonical state,
\begin{equation}
\ell_u^2[h_u] = \Var_{\Psi}\!\big(\widetilde{G}_u+I\otimes h_u\big).
\label{eq:FS-variance}
\end{equation}
Expanding gives
\begin{equation}
\Var_{\Psi}(\widetilde{G}_u+I\otimes h_u)
=\Var_{\Psi}(\widetilde{G}_u)
+\Var_{\varphi_{\rm out}}(h_u)
+2\,\Re\Tr\big(\Psi\,\Delta \widetilde{G}_u\,(I\otimes \Delta h_u)\big), 
\end{equation}
where
\begin{equation}
\Delta h_u = h_u - \Tr(\varphi_{\rm out} h_u)\,I_E,
\qquad
\Delta \widetilde{G}_u = \widetilde{G}_u - \Tr(\Psi \widetilde{G}_u)\,I_{SE}. 
\end{equation}
The cross-term reduces to an environment expectation,
\begin{equation}
\Tr\big(\Psi\,\Delta \widetilde{G}_u\,(I\otimes \Delta h_u)\big)
=\Tr_{\rm env}\!\Big(B_u\,\Delta h_u\Big), \qquad B_u \triangleq \Tr_{\rm sys}(\Psi\,\Delta \widetilde{G}_u). \label{eq:Bu}
\end{equation}
Because $\Delta h_u$ is Hermitian, only the Hermitian part $(B_u)_H=(B_u+B_u^\dagger)/2$ contributes.
Thus we obtain the quadratic functional
\begin{equation}
\ell_u^2[h_u]
=
\Var_{\Psi}(\widetilde{G}_u)
+\Tr\big(\varphi_{\rm out}\,(\Delta h_u)^2\big)
+2\,\Tr\big((B_u)_H\,\Delta h_u\big).
\label{eq:quadratic-functional}
\end{equation}
For each direction $u$, minimizing the Fubini-Study line element over $U_E$ is equivalent to minimizing
Eq.~\eqref{eq:quadratic-functional} over $\Delta h_u$.

\begin{thm}[Lyapunov-Uhlmann Extremality]
    \label{thm:lyapunov}
    Given a purification $\ket{\Psi_{U_E}}=(I\otimes U_E)\ket{\Psi}$, with environment gauge $U_E$ and local generator $h_u=i U_E^\dagger(\partial_u U_E)$, the gauge is Uhlmann extremal ($U_E=U^\star_E$) iff the centered generator $\Delta h_u^\star$ obeys the Lyapunov equation
    \begin{equation}\label{eq:uhlmann-lyapunov}
      \{\varphi_{\rm out},\Delta h_u^\star\} = -2(B_u)_H,
    \end{equation}
    with $B_u$ from Eq.~\eqref{eq:Bu}. The solution is unique up for each $u$ to $h^\star_u \mapsto h^\star_u+c I$.
\end{thm}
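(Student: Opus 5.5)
The plan is to treat the minimization of the quadratic functional~\eqref{eq:quadratic-functional} over the centered environment generator $\Delta h_u$ as an unconstrained convex quadratic problem on the real vector space of Hermitian operators on $\mathscr{H}_{\rm env}$. The extremal gauge is then characterized by its first-order stationarity condition. First, I will justify that every Hermitian $h_u$ is realized by some smooth $\bm{\theta}$-dependent gauge. For instance, $U_E(\bm{\theta})=\exp(-i\sum_\alpha(\theta_\alpha-\theta_\alpha^{(0)})h_\alpha)$ gives $h_\alpha=iU_E^\dagger\partial_\alpha U_E$ at $\bm{\theta}^{(0)}$. Only $h_u$ at the given point enters $\ell_u^2$. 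Hence, by Uhlmann's theorem, minimizing over gauges is equivalent to minimizing $\ell_u^2[h_u]$ over Hermitian $h_u$. By Eq.~\eqref{eq:FS-variance}, the objective is invariant under $h_u\mapsto h_u+cI$, so we may work with $X\triangleq\Delta h_u$.

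Next, I compute the directional derivative. For a Hermitian perturbation $X\mapsto X+\epsilon Y$,
\begin{equation}
\frac{d}{d\epsilon}\ell_u^2\Big|_{\epsilon=0}=\Tr\big(\varphi_{\rm out}(XY+YX)\big)+2\Tr\big((B_u)_H Y\big)=\Tr\Big(\big(\{\varphi_{\rm out},X\}+2(B_u)_H\big)Y\Big),
\end{equation}
using cyclicity of the trace. The operator $\{\varphi_{\rm out},X\}+2(B_u)_H$ is Hermitian, and the identity holds for all Hermitian $Y$. Therefore the derivative vanishes for every $Y$ if and only if the Lyapunov equation~\eqref{eq:uhlmann-lyapunov} holds. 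One bookkeeping point: the centering term $\Tr(\varphi_{\rm out}h_u)I$ contributes zero to $\Tr((B_u)_H\,\cdot)$, because $\Tr_{\rm env}B_u=\Tr(\Psi\Delta\widetilde G_u)=0$. So differentiating with respect to $h_u$ or $\Delta h_u$ gives the same condition.

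Sufficiency and necessity then follow from convexity. The second variation is $2\Tr(\varphi_{\rm out}Y^2)\ge 0$, so $\ell_u^2$ is convex in $X$. A stationary point is therefore a global minimizer, which gives the "if" direction. Conversely, a minimizer must be stationary, which gives the "only if" direction. Because Uhlmann's minimum equals the Bures metric, $U_E$ is extremal exactly when $\Delta h_u$ solves~\eqref{eq:uhlmann-lyapunov}.

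The main obstacle is uniqueness and existence when $\varphi_{\rm out}$ is not full rank. The environment is infinite-dimensional and the output is typically supported on a subspace, so $\{\varphi_{\rm out},\cdot\}$ is injective only on operators supported on $\mathrm{supp}(\varphi_{\rm out})$. My approach is to restrict to that support. There, $\varphi_{\rm out}>0$, and the equation is solved by the standard integral $X=-\int_0^\infty e^{-s\varphi_{\rm out}}2(B_u)_He^{-s\varphi_{\rm out}}\dd{s}$, which is unique on that block. Existence also requires $B_u$ to live on this support. This holds because $B_u=\Tr_{\rm sys}(\Psi\Delta\widetilde G_u)$ has range inside $\mathrm{supp}(\varphi_{\rm out})$, as $\Psi$ is pure with environment marginal $\varphi_{\rm out}$. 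Components of $h_u$ outside the support, i.e. blocks $PXP^\perp$, do not affect $\Psi_{U_E}$ to first order and hence leave $\ell_u^2$ unchanged. I will therefore state uniqueness modulo such directions, and the identity shift $h_u^\star\mapsto h_u^\star+cI$ is the residual freedom left by centering, as the theorem asserts.
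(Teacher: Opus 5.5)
Your variational core is the paper's proof. The first variation of~\eqref{eq:quadratic-functional} is $\Tr\big((\{\varphi_{\rm out},\Delta h_u\}+2(B_u)_H)Y\big)$, and your convexity step is equivalent to the paper's completing-the-square identity $\ell_u^2[h_u]=\ell_u^2[h_u^\star]+\Tr\big(\varphi_{\rm out}(\Delta h_u-\Delta h_u^\star)^2\big)$. Handling the centering through $\Tr_{\rm env}B_u=0$ and exhibiting a gauge that realizes any Hermitian $h_u$ are harmless additions.

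\textbf{The gap is in your last paragraph.} Let $P$ project onto $\mathrm{supp}(\varphi_{\rm out})$.
\begin{itemize}
\item You correctly note $P^\perp B_u=0$, but in general $B_uP^\perp\neq 0$. Hence $(B_u)_H$ has off-diagonal blocks, namely $\tfrac12 B_uP^\perp$ and its adjoint.
\item The operator equation you derived from stationarity therefore contains, besides the Lyapunov equation on the $P$ block, the nontrivial block equation $\varphi_{\rm out}\,(PXP^\perp)=-B_uP^\perp$. A solution built only on the support block does not satisfy it, so your existence argument does not go through as written.
\item The off-diagonal block of $X$ is also not invisible. Since $\ket{\Psi}\in\mathscr{H}_S\otimes P\mathscr{H}_{\rm env}$, the vector $(I\otimes h_u)\ket{\Psi}$ contains $P^\perp h_u P\ket{\Psi}$. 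Likewise $\Tr(\varphi_{\rm out}X^2)=\Tr\big(\varphi_{\rm out}(PXP)^2\big)+\Tr\big(\varphi_{\rm out}(PXP^\perp)(PXP^\perp)^\dagger\big)$.
\item Only $P^\perp XP^\perp$ drops out of $\ell_u^2$ and of the equation. Your uniqueness claim ``modulo $PXP^\perp$'' is therefore false. That block is uniquely fixed, $PXP^\perp=-\varphi_{\rm out}^{-1}B_uP^\perp$ (inverse taken on the support), and any other choice strictly increases $\ell_u^2$.
\end{itemize}

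\textbf{This block is where the physics sits.} Take the paper's short-time analysis with $\mathcal{C}'_c=0$, for example bosonic loss $J=a$ probed by a coherent state. Then $\varphi_{\rm out}(\delta t)$ is rank one, supported on $\ket{0}+\order{\sqrt{\delta t}}$, and $\Xi_u^\star=0$. The entire extremal generator is the counter-displacement $\xi_u^\star=-\tfrac12 w_u$, which connects that support to the single-excitation complement, exactly the block you discard. It lowers $\ell_u^2$ by $\tfrac{\delta t}{4}\norm{w_u}^2$. In the coherent-state loss example this cancels the whole canonical short-time flow, consistent with the true QFI being $\order{t^2}$.

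\textbf{The fix is simple: do not restrict to the support block.}
\begin{itemize}
\item Take your integral $X=-\int_0^\infty e^{-s\varphi_{\rm out}}\,2(B_u)_H\,e^{-s\varphi_{\rm out}}\dd{s}$ on the full environment space. It converges entrywise in the eigenbasis of $\varphi_{\rm out}$ precisely because $P^\perp B_u=0$: every nonzero block of $(B_u)_H$ touches the support on at least one side.
\item It yields $-\varphi_{\rm out}^{-1}B_uP^\perp$ on the off-diagonal block and $0$ on $P^\perp P^\perp$.
\item State uniqueness for $\Delta h_u^\star P$, that is, for $\Delta h_u^\star$ acting on $\mathrm{supp}(\varphi_{\rm out})$. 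Only the irrelevant block $P^\perp\Delta h_u^\star P^\perp$ and the constant removed by centering remain free. This is the correct reading of the paper's ``$\Delta h_u=\Delta h_u^\star$ on $\mathrm{supp}(\varphi_{\rm out})$''.
\end{itemize}
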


\begin{proof}
Consider a Hermitian, centered variation $\Delta h_u\mapsto \Delta h_u+\varepsilon X$ with $\Tr(\varphi_{\rm out} X)=0$. Differentiating Eq.~\eqref{eq:quadratic-functional} at $\varepsilon=0$ gives
\begin{align}
  \lim_{\epsilon\to 0}\frac{\ell^2_u[h_u+\epsilon X]-\ell^2_u[h_u]}{\epsilon}
  &= \Tr\Big(\varphi_{\rm out}(\Delta h_u X + X\Delta h_u)\Big) + 2\,\Tr\big((B_u)_H X\big) \\
  &= \Tr\Big(\big(\{\varphi_{\rm out},\Delta h_u\}+2(B_u)_H\big)\,X\Big).
\end{align}
On $\mathrm{supp}(\varphi_{\rm out})$, extremality for all $X$ implies
\begin{equation}
  \{\varphi_{\rm out},\Delta h_u^\star\} = -2(B_u)_H.
  \label{eq:sylvester}
\end{equation}
Conversely, if $\Delta h_u^\star$ satisfies \eqref{eq:sylvester}, then completing the square in Eq.~\eqref{eq:quadratic-functional} yields
\begin{equation*}
  \ell^2_u[h_u] = \ell^2_u[h_u^\star] +\Tr\Big(\varphi_{\rm out}(\Delta h_u-\Delta h_u^\star)^2\Big)
\end{equation*}
so $\ell^2_u[h_u]\ge \ell^2_u[h_u^\star]$ with equality iff $\Delta h_u=\Delta h_u^\star$ on ${\rm supp}(\varphi_{\rm out})$. Finally, since $X\mapsto\{\varphi_{\rm out},X\}$ is positive and invertible on ${\rm supp}(\varphi_{\rm out})$ and the variance functional~\eqref{eq:quadratic-functional} is strictly convex, the solution $\Delta h_u^\star$ is unique and the global minimizer.
\end{proof}

\subparagraph*{Remarks.}
Extremality of the canonical gauge, such that $U_E^\star=I$ and $\Delta h^\star_u =0$, implies $(B_u)_H = 0$ on $\mathrm{supp}(\varphi_{\rm out})$. Therefore, the canonical purification is Uhlmann-extremal in direction $u$ iff the Hermitian part of $B_u$ vanishes on $\mathrm{supp}(\varphi_{\rm out})$.

Let $\varphi_{\rm out}=\sum_m p_m \dyad{m}$ on its support. Taking matrix elements of Eq.~\eqref{eq:sylvester} yields, for $m,n$ in the support,
\begin{equation}
  (\Delta h_u^\star)_{mn} = -\frac{2\big((B_u)_H\big)_{mn}}{p_m+p_n},
  \qquad (m\neq n),
\end{equation}
while the diagonal is fixed up to an arbitrary scalar shift. This makes manifest that $\Delta h_u^\star=0$ holds iff $(B_u)_H$ has no off-diagonal components relative to $\varphi_{\rm out}$.

%==========
\subsection{Short-Time Extremality}
\label{app:short-time-extremality}
Here we analyze extremality in the short-time limit ($\delta t \to 0^+$), which is relevant to RPM protocols and to optimal precision limits more broadly. 

Consider the first collision within an arbitrarily small interval $\delta t$. In this limit, the input-output bath elements are confined to the vacuum and single-excitation sectors, $\mathscr{H}_E\simeq \mathscr{H}_0 \oplus \mathscr{H}_1$, with $\mathscr{H}_0={\rm span}(\ket{0})$ and
$\mathscr{H}_1={\rm span}(\{\ket{1_r}\}_{r=1}^R)$. The initial joint state is
\begin{equation}
\Psi(0)=\dyad{\psi}\otimes\dyad{0}{0},
\end{equation}
where the initial probe $\psi$ has mean and two-point correlator
\begin{equation}
\label{eq:init-moments}
m_r = \expval{J_r}{\psi},
\qquad
\mathcal{C}'_{rs} = \expval{J_r^\dagger J_s}{\psi},
\end{equation}
in the canonical jump basis.

We layout the perturbative expansion. The canonical purification admits the standard short-time expansion
\begin{equation}
\ket{\Psi(\delta t)}
=
\Big(I-\tfrac{\delta t}{2}\sum_r \gamma_r J_r^\dagger J_r\Big)\ket{\psi}\otimes\ket{0}
\;-\; i\sqrt{\delta t}\sum_r \sqrt{\gamma_r}\,\big(J_r\ket{\psi}\big)\otimes\ket{1_r}
\;+\;\order{\delta t^{3/2}},
\label{eq:Psi-firstbin}
\end{equation}
where $\ket{1_r} \triangleq  \,e_r^\dagger\ket{0}$.
The reduced state of the first collisional bath element is
\begin{equation}
\varphi_{\rm out}(\delta t)
=
\begin{pmatrix}
  1-\delta t\,\Tr(\mathfrak{C}) & i\sqrt{\delta t}\,\mathfrak{m}^\dagger\\[2pt]
  -i\sqrt{\delta t}\,\mathfrak{m} & \delta t\,\mathfrak{C}
\end{pmatrix}
+\order{\delta t^{3/2}},
\label{eq:rhoE-firstbin}
\end{equation}
in the basis $\mathscr{H}_0\oplus\mathscr{H}_1$, and we define the dressed mean and correlator as
\begin{equation}
\mathfrak{m}\triangleq \sqrt{\Lambda}\,m,
\qquad
\mathfrak{C}\triangleq \sqrt{\Lambda}\,\mathcal{C}'^{\top}\sqrt{\Lambda}.
\label{eq:cD-def}
\end{equation}
It is convenient to introduce the (dressed) connected correlator,
\begin{equation}
\mathfrak{C}_c\triangleq \mathfrak{C}-\mathfrak{m}\mathfrak{m}^\dagger
=\sqrt{\Lambda}\mathcal{C}_c'^{\top}\sqrt{\Lambda}, \qquad \mathcal{C}'_c \triangleq \mathcal{C}'-m^* m^\top
    \label{eq:dressed-moments}
\end{equation}

Fix a direction $u=\sum_\alpha u_\alpha \partial_\alpha$ and define $F_u\triangleq \sum_\alpha u_\alpha F_\alpha$, with $F_\alpha$ given by Eq.~\eqref{eq:Falpha}. In the short-time limit, the generator~\eqref{appeq:generators} reads
\begin{equation}
\widetilde{G}_u^{(\delta t)} =
\frac{\sqrt{\delta t}}{2}\left(\sum_{q,r}\sqrt{\gamma_r}
(F_u)_{qr}J_q\otimes\ket{1_r}\!\bra{0}+\text{h.c.}\right)
+\order{\delta t}.
\label{eq:Gu-firstbin}
\end{equation}
Given the canonical purification $\Psi(\delta t)=U(\delta t)\Psi(0)U^\dagger(\delta t)$, let
\begin{equation}
B_u = \Tr_{\rm sys}\left(\Psi(\delta t)\,\Delta \widetilde{G}_u^{(\delta t)}\right),
\qquad
\Delta \widetilde{G}_u^{(\delta t)} = \widetilde{G}_u^{(\delta t)}-\Tr(\Psi\widetilde{G}_u^{(\delta t)})\,I,
\end{equation}
similar to~\eqref{eq:Bu}. A direct evaluation using \eqref{eq:Psi-firstbin}--\eqref{eq:Gu-firstbin} gives
\begin{equation}
B_u = \frac{\sqrt{\delta t}}{2}\dyad{0}{w_u} +\delta t \sum_{r,s}\left(M_u\right)_{sr}\dyad{1_s}{1_r} - \delta t\,\beta \dyad{0}
+ \order{\delta t^{3/2}},
\label{eq:Bu-expand}
\end{equation}
where $\beta=\Tr(M_u)$ is an irrelevant constant,
\begin{equation}
w_u=\sqrt{\Lambda}F_u^\top m, \qquad \ket{w_u}\triangleq \sum_{r}\left(w_u\right)_r\ket{1_r},
\label{eq:w-def}
\end{equation}
and we define the mixing matrix
\begin{equation}
M_u \triangleq
-\frac{i}{2}\sqrt{\Lambda}\mathcal{C}'^\top F_u^*\sqrt{\Lambda},
\label{eq:Cu-def}
\end{equation}
with Hermitian and anti-Hermitian parts
\begin{align}
(M_u)_H &= \frac{i}{4}\sqrt{\Lambda}\left(F_u^\top\mathcal{C}'^\top - \mathcal{C}'^\top F_u^*\right)\sqrt{\Lambda}, \\
(M_u)_A &= -\frac{i}{4}\sqrt{\Lambda} \left(F_u^\top\mathcal{C}'^\top + \mathcal{C}'^\top F_u^*\right) \sqrt{\Lambda}.
\end{align}
With this perturbative construction in hand, we prove the following proposition.

\begin{prop}[Short-Time Extremal \emph{Ansatz}]
\label{prop:ansatz}
    Let $m$ and $\mathcal{C}'$ denote the mean and two-point correlator of the initial probe in the canonical basis per Eq.~\eqref{eq:init-moments} and $\mathcal{C}'_c \triangleq \mathcal{C}'-m^* m^\top$ denote the connected correlator. Choose an environment gauge $U_E(\bm{\theta})$ with generator $h_u = iU_E^\dagger\partial_u U_E$ and $\Delta h_u = h_u-\Tr(\varphi_{\rm out} h_u)I$. Then, the block \emph{Ansatz}
    \begin{equation}
      \Delta h_u^\star =
      \begin{pmatrix}
        0 & \sqrt{\delta t}\,{\xi_u^\star}^\dagger\\[2pt] \sqrt{\delta t}\,\xi_u^\star & \Xi_u^\star
      \end{pmatrix}
      + \order{\delta t}
      \label{eq:h-ansatz}
    \end{equation}
    is Uhlmann extremal, in the sense of satisfying equation~\eqref{eq:uhlmann-lyapunov} of Theorem~\ref{thm:lyapunov}, when (\textit{i}) the displacement vector $\xi_u^\star$ obeys
    \begin{equation}
        \label{eq:extremal-displace}
      \xi_u^\star = \left(i \Xi_u^\star \sqrt{\Lambda}- \frac{1}{2}\sqrt{\Lambda}F_u^\top\right)m
    \end{equation}
    and (\textit{ii}) the $R\times R$ Hermitian matrix $\Xi_u^\star$ obeys the Lyapunov equation
    \begin{equation}
        \label{eq:extremal-rotate}
      \acomm{\mathfrak{C}_{c}}{\,\Xi_u^\star} = -\frac{i}{2} \sqrt{\Lambda}\left(F_u^\top {\mathcal{C}'_{c}}^\top - {\mathcal{C}'_{c}}^\top F_u^*\right) \sqrt{\Lambda},
    \end{equation}
    where $\mathfrak{C}_{c} = \sqrt{\Lambda}\, \mathcal{C}_c'^\top \sqrt{\Lambda}$.
\end{prop}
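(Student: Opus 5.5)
The plan is to plug the block \emph{Ansatz}~\eqref{eq:h-ansatz} directly into the Lyapunov condition~\eqref{eq:uhlmann-lyapunov} of Theorem~\ref{thm:lyapunov}. We use the perturbative forms of $\varphi_{\rm out}(\delta t)$ from Eq.~\eqref{eq:rhoE-firstbin} and of $B_u$ from Eq.~\eqref{eq:Bu-expand}. Both sides are written as $2\times2$ block matrices on $\mathscr{H}_0\oplus\mathscr{H}_1$, and the blocks are matched order by order in $\sqrt{\delta t}$. The guiding observation is that the blocks of $\varphi_{\rm out}$ scale as $(1,\sqrt{\delta t},\delta t)$ for (vacuum--vacuum, off-diagonal, single-excitation). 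With the \emph{Ansatz} scaling $(0,\sqrt{\delta t},1)$, every block of $\{\varphi_{\rm out},\Delta h_u^\star\}$ has a definite leading order. Those orders match the corresponding blocks of $-2(B_u)_H$, which scale as $(\delta t,\sqrt{\delta t},\delta t)$.

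\emph{Step 1: off-diagonal block, order $\sqrt{\delta t}$.} Here the anticommutator receives the contribution $1\cdot\sqrt{\delta t}\,\xi_u^\star$ from the vacuum population times the \emph{Ansatz} coherence. It also receives $(-i\sqrt{\delta t}\,\mathfrak{m})\cdot 0$ and $\Xi_u^\star\cdot(-i\sqrt{\delta t}\,\mathfrak{m})$ from the single-excitation block. The population--coherence products $\delta t\,\mathfrak{C}\cdot\sqrt{\delta t}\,\xi$ are higher order and are dropped. On the right-hand side, the $(1,0)$ block of $(B_u)_H$ is $\tfrac{\sqrt{\delta t}}{4}\,w_u$, coming from the Hermitian part of $\tfrac{\sqrt{\delta t}}{2}\dyad{0}{w_u}$. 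Matching gives $\xi_u^\star - i\,\Xi_u^\star\mathfrak{m} = -\tfrac12 w_u$. Using $\mathfrak{m}=\sqrt{\Lambda}m$ and $w_u=\sqrt{\Lambda}F_u^\top m$, this is exactly condition~\eqref{eq:extremal-displace}. The vacuum--vacuum block works similarly. At order $\sqrt{\delta t}$ it is trivially zero. At order $\delta t$ it only fixes the diagonal/scalar part, which Theorem~\ref{thm:lyapunov} leaves free, and it absorbs the irrelevant constant $\beta$. So I will treat it as consistent rather than constraining. If needed, I would verify it via the centering condition $\Tr(\varphi_{\rm out}\Delta h_u^\star)=0$.

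\emph{Step 2: single-excitation block, order $\delta t$.} The anticommutator gives $\delta t\{\mathfrak{C},\Xi_u^\star\}$. Each cross term between the off-diagonal blocks contributes at the same order: $(-i\sqrt{\delta t}\,\mathfrak{m})(\sqrt{\delta t}\,\xi^{\star\dagger})$ plus its mirror $(\sqrt{\delta t}\,\xi^\star)(i\sqrt{\delta t}\,\mathfrak{m}^\dagger)$. The right-hand side is $-2\delta t\,(M_u)_H$. We then substitute $\xi_u^\star$ from Step 1. The terms $-i\mathfrak{m}\,\xi^{\star\dagger}+i\,\xi^\star\mathfrak{m}^\dagger$ then produce $-\{\mathfrak{m}\mathfrak{m}^\dagger,\Xi_u^\star\}$, which converts $\{\mathfrak{C},\Xi\}$ into $\{\mathfrak{C}_c,\Xi\}$. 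They also produce a term $\tfrac{i}{2}\bigl(\mathfrak{m}\,m^\dagger F_u^*\sqrt{\Lambda}-\sqrt{\Lambda}F_u^\top m\,\mathfrak{m}^\dagger\bigr)$. Moving that term to the right subtracts the disconnected piece $m^*m^\top$ from $\mathcal{C}'^\top$ inside $-2(M_u)_H=-\tfrac{i}{2}\sqrt{\Lambda}(F_u^\top\mathcal{C}'^\top-\mathcal{C}'^\top F_u^*)\sqrt{\Lambda}$. The result is~\eqref{eq:extremal-rotate}. The relevant transposes are $(m^*m^\top)^\top = m\,m^\dagger$, and I will check them against the definition $\mathcal{C}'_c=\mathcal{C}'-m^*m^\top$.

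The main obstacle is the bookkeeping in Step 2. Conjugation and transpose conventions ($\mathcal{C}'^\top$ versus $\mathcal{C}'$, $F_u^*$ versus $F_u^\top$) and the sign of the $\pm i\sqrt{\delta t}\,\mathfrak{m}$ coherences must line up exactly, so that the disconnected terms cancel into the connected correlator. I would first verify the identity for a single channel ($R=1$) with real $F_u$, and only then do the matrix case. A secondary point is to confirm that the dropped terms are genuinely $\order{\delta t^{3/2}}$ in each block, so that the matching is complete at the stated orders.
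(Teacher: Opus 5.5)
Your proposal is correct and follows the paper's proof essentially step by step. You substitute the block \emph{Ansatz} into the Lyapunov condition and match the $\sqrt{\delta t}$ off-diagonal block to get $\xi_u^\star - i\Xi_u^\star\mathfrak{m} = -\tfrac12 w_u$. You then eliminate $\xi_u^\star$ from the $\order{\delta t}$ single-excitation block, where the $\{\mathfrak{m}\mathfrak{m}^\dagger,\Xi_u^\star\}$ and $w_u\mathfrak{m}^\dagger$ terms turn $\mathfrak{C}$ and $\mathcal{C}'^\top$ into their connected versions; your check that the vacuum--vacuum block is consistent via the centering condition is a detail the paper leaves implicit.
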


\begin{proof}
    By Theorem~\ref{thm:lyapunov}, a gauge is  Uhlmann-extremal iff
    \begin{equation}
      \{\varphi_{\rm out}(\delta t),\Delta h_u^\star\}=-2(B_u)_H.
      \label{eq:sylvester-firstbin}
    \end{equation}
    Motivated by the scalings in \eqref{eq:rhoE-firstbin} and \eqref{eq:Bu-expand}, we posit the block \emph{Ansatz}
    \begin{equation}
      \Delta h_u
      =
      \begin{pmatrix}
        0 & \sqrt{\delta t}\,\xi_u^\dagger\\[2pt]
        \sqrt{\delta t}\,\xi_u & \Xi_u
      \end{pmatrix}
      +\order{\delta t},
      \label{eq:h-ansatz2}
    \end{equation}
    where $\sqrt{\delta t}\,\xi_u$ denotes a displacement vector acting on the $\dyad{0}{1}$ coherence block and $\Xi_u$ denotes a (Hermitian) generator of rotations on $\mathscr{H}_1$.
    
    Using \eqref{eq:rhoE-firstbin} and \eqref{eq:h-ansatz2}, we find that
    \begin{equation}
      \varphi_{\rm out} \Delta h_u =
      \begin{pmatrix}
        0 + \order{\delta t} & \sqrt{\delta t}\left(\xi_u^\dagger + i \mathfrak{m}^\dagger\, \Xi_u\right) + \order{\delta t} \\
        0 + \order{\delta t^{3/2}} & \delta t\left(-i \mathfrak{m}\, \xi_u^\dagger + \mathfrak{C}\,\Xi_u\right) + \order{\delta t^{3/2}}
      \end{pmatrix},
    \end{equation}
    such that
    \begin{align}
      \left(\{\varphi_{\rm out},\Delta h_u\}\right)_{01}
      &=
      \sqrt{\delta t}\left(\xi_u^\dagger+i\,\mathfrak{m}^\dagger \Xi_u\right)+\order{\delta t},
      \label{eq:block01-LHS}\\
      \left(\{\varphi_{\rm out},\Delta h_u\}\right)_{11}
      &=
      \delta t\left(\{\mathfrak{C},\Xi_u\}+i(\xi_u \mathfrak{m}^\dagger-\mathfrak{m}\,\xi_u^\dagger)\right)+\order{\delta t^{3/2}}.
      \label{eq:block11-LHS}
    \end{align}
    Imposing extremality, such that $\Delta h^\star_u$ satisfies Eq.~\eqref{eq:uhlmann-lyapunov}, we equate~\eqref{eq:block01-LHS} to the $\dyad{0}{1}$ block of $-2(B_u)_H$ from \eqref{eq:Bu-expand}, and similarly equate~\eqref{eq:block11-LHS} to the $\dyad{1}$ block $-2(B_u)_H$, resulting in the coupled equations
    \begin{align}
      \xi_u^\star &=i \Xi_u^\star\, \mathfrak{m}
      -\frac{1}{2} w_u,
      \label{eq:xi-from-H}\\
      \{\mathfrak{C},\Xi_u\} &=
      -2 \left((M_u)_H + \frac{i}{2}\left(\xi_u^\star \mathfrak{m}^\dagger-\mathfrak{m}{\xi_u^\star}^\dagger\right)\right).
      \label{eq:H-coupled}
    \end{align}
    Given $i(\xi_u^\star  \mathfrak{m}^\dagger-\mathfrak{m}{\xi_u^\star}^\dagger)=-\{\mathfrak{m} \mathfrak{m}^\dagger,\Xi_u\}-\frac{i}{2}(w_u \mathfrak{m}^\dagger- \mathfrak{m} w_u^\dagger)$, we recover the closed Lyapunov equation
    \begin{equation}
      \{\mathfrak{C}_{c},\,\Xi_u^\star\} = -2 \left((M_u)_H-\frac{i}{4}\left(w_u\mathfrak{m}^\dagger-\mathfrak{m}w_u^\dagger\right)\right),
      \label{eq:H-final}
    \end{equation}
    with $\mathfrak{C}_{c}=\mathfrak{C}-\mathfrak{m}\mathfrak{m}^\dagger$ [Eq.~\eqref{eq:dressed-moments}], which concludes the proof.
\end{proof}

\begin{cor}
    \label{cor:shorttime-no-mixing}
    Up to a counter-displacement $\xi^\star$, the canonical gauge is Uhlmann-extremal at order $\order{\delta t}$, i.e., $\Xi^\star=0$, iff
    \begin{equation}
      \sqrt{\Lambda}\left(F_u^\top {\mathcal{C}'_{c}}^\top - {\mathcal{C}'_{c}}^\top F_u^*\right) \sqrt{\Lambda}=0
      \quad\Longleftrightarrow\quad
      F_u^\dagger\,\mathcal{C}'_{c}=\mathcal{C}'_{c}\,F_u.
      \label{eq:no-mixing-condition}
    \end{equation}
\end{cor}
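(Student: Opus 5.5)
The corollary should follow directly from Proposition~\ref{prop:ansatz}, specialized to $\Xi_u^\star=0$. I will treat the two directions of the first equivalence through the Lyapunov equation~\eqref{eq:extremal-rotate}, and then establish the second equivalence by elementary matrix manipulation.

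\emph{First equivalence.} Suppose the right-hand side of~\eqref{eq:extremal-rotate} vanishes. Then $\Xi_u^\star=0$ solves $\{\mathfrak{C}_c,\Xi_u^\star\}=0$. By the uniqueness part of Theorem~\ref{thm:lyapunov}, this is the extremal rotation block on $\mathrm{supp}(\mathfrak{C}_c)$, since the map $X\mapsto\{\mathfrak{C}_c,X\}$ is invertible there.

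With $\Xi_u^\star=0$, condition~\eqref{eq:extremal-displace} reduces to the pure counter-displacement $\xi_u^\star=-\tfrac12\sqrt{\Lambda}F_u^\top m$. This is exactly the ``up to a counter-displacement'' clause, so the canonical gauge is extremal at order $\order{\delta t}$ in the $\mathscr{H}_1$ block.

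Conversely, suppose $\Xi_u^\star=0$ is extremal. Substituting into~\eqref{eq:extremal-rotate} forces the right-hand side to be zero.

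One subtlety concerns components of $\Xi_u^\star$ outside $\mathrm{supp}(\mathfrak{C}_c)$. These are unconstrained and do not affect the line element, because $\varphi_{\rm out}$ restricted to $\mathscr{H}_1$ has weight $\delta t\,\mathfrak{C}$. I would therefore state the result as holding on the support. Note also that the right-hand side of~\eqref{eq:extremal-rotate} is itself compressed by $\sqrt{\Lambda}$, so it already lives on the relevant subspace.

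\emph{Second equivalence.} This is algebraic. Assuming $\gamma_r>0$, so that $\sqrt{\Lambda}$ is invertible (as in the proof of Theorem~\ref{thm:qgt-flow}), strip the $\sqrt{\Lambda}$ factors. The condition becomes $F_u^\top{\mathcal{C}'_c}^\top={\mathcal{C}'_c}^\top F_u^*$.

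Taking the transpose gives $\mathcal{C}'_c F_u=F_u^\dagger\mathcal{C}'_c$, because $(F_u^*)^\top=F_u^\dagger$. Since transposition is invertible, this step is reversible, which proves the stated equivalence.

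\emph{Main obstacle.} The hard step is justifying the converse cleanly, namely that extremality really forces $\Xi_u^\star=0$ rather than allowing some other admissible solution. Degenerate or rank-deficient $\mathfrak{C}_c$ is the delicate case. For example, if $\psi$ is an eigenstate of some $J_r$, the connected correlator acquires a kernel. I would handle this by restricting all statements to $\mathrm{supp}(\mathfrak{C}_c)$ and noting that the two sides of~\eqref{eq:no-mixing-condition} are then compared as operators on that support.
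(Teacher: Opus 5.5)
Your proposal is correct and follows the route the paper has in mind; the corollary is stated without a separate proof. You set $\Xi_u^\star=0$ in the Lyapunov equation~\eqref{eq:extremal-rotate} of Proposition~\ref{prop:ansatz}, where the converse rests on the ``iff'' of Theorem~\ref{thm:lyapunov} together with the reversible block matching in that proposition's proof, and then strip the invertible $\sqrt{\Lambda}$ factors and transpose using $(F_u^*)^\top=F_u^\dagger$.

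One correction to your side remarks. For singular $\mathfrak{C}_c$, only the kernel--kernel block of $\Xi_u^\star$ is free. The right-hand side vanishes there because ${\mathcal{C}'_c}^\top$ is Hermitian and annihilates $\sqrt{\Lambda}\,\ker\mathfrak{C}_c$, not because of the $\sqrt{\Lambda}$ compression. The support--kernel block of $\Xi_u^\star$ is still determined by the equation, so you should keep the full matrix identity~\eqref{eq:no-mixing-condition} rather than compare only on $\mathrm{supp}(\mathfrak{C}_c)$.
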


\subparagraph*{Remarks.}
For \emph{rate-only encoding} ($K'_u=0$, hence $F_u=\Lambda^{-1}\partial_u\Lambda$
is real diagonal in the canonical basis), Eq.~\eqref{eq:no-mixing-condition} reduces to $\comm{F_u}{\mathcal{C}'_{c}}=0$, which holds
whenever $\mathcal{C}'_{c}$ is block-diagonal in the canonical basis.

If the probe has vanishing connected correlators in the canonical jump basis, such that $\mathcal{C}'_{c}=0$, then $\Xi^\star_u=0$ for any $F_u$, and thus only the counter-displacement $\xi^\star$ is needed.

Finally, we may also write a formal solution for $\Xi_u^\star$. Let $\mathscr{A}_{\mathfrak{C}_{c}}$ denote the anticommutator superoperator with respect to the (dressed) connected correlator,
\begin{equation}
\mathscr{A}_{\mathfrak{C}_{c}}(X)\triangleq \{\mathfrak{C}_{c},X\}.
\end{equation}
On $\mathrm{supp}(\mathfrak{C}_{c})$ the map $\mathscr{A}_{\mathfrak{C}_{c}}$ is invertible and \eqref{eq:H-final} yields
\begin{equation}
\Xi_u^\star
=
-2\mathscr{A}_{\mathfrak{C}_{c}}^{-1}\left(M_u^{\rm eff}\right),
\qquad M_u^{\rm eff} = M_u-\frac{i}{4}\left(w_u\mathfrak{m}^\dagger-\mathfrak{m}w_u^\dagger\right),
\label{eq:H-solution}
\end{equation}
which is valid in the short-time regime. Equivalently, in the eigenbasis $\mathfrak{C}_{c}=\sum_k s_k\dyad{k}$,
\begin{equation}
    (\Xi_u^\star)_{k\ell}
    =
    -2\,\frac{(M_u^{\rm eff}(0))_{k\ell}}{s_k+s_\ell}.
    \label{eq:H-eigs}
\end{equation}

%==========
\subsection{Gap Between Canonical and Extremal Purifications}
\label{sec:qgt-gap-t0}

Here we assess the QGT/metric gap between the canonical collisional purification and the Uhlmann-extremal purification in the short-time limit.

Let $\mathcal Q_{uv}$ denote the QGT of the canonical purification $\ket{\Psi}=U(\bm{\theta};t)\ket{\psi}\otimes \ket{0}$. Appending an environment unitary $I\otimes U_E$, viz., $\ket{\Psi} \mapsto (I\otimes U_E)\ket{\Psi}$, shifts the centered generator of $U$ by $\Delta\widetilde G_u\mapsto \Delta\widetilde G_u+I\otimes \Delta h_u$ and, hence, the QGT of the new gauge becomes
\begin{align}
  \mathcal Q_{uv}(h)
  &= \ip{\partial_u \Psi_{U_E}}{\partial_v \Psi_{U_E}} - \ip{\partial_u \Psi_{U_E}}{\Psi_{U_E}}\ip{\Psi_{U_E}}{\partial_v \Psi_{U_E}} \nonumber \\ 
  &=\Tr(\Psi \left(\Delta\widetilde{G}_u+I\otimes\Delta h_u\right)\left(\Delta \widetilde{G}_v+I\otimes\Delta h_v\right)),
  \label{eq:QGT-gauge}
\end{align}
where we emphasize that $h$ parametrizes the gauge. Expanding \eqref{eq:QGT-gauge} then yields
\begin{equation}
  \mathcal Q_{uv}(h)
  =
  \mathcal Q_{uv}
  +\Tr(B_u\,\Delta h_v)
  +\Tr(\Delta h_u B_v^\dagger)
  +\Tr(\varphi_{\rm out}\,\Delta h_u\,\Delta h_v),
  \label{eq:QGT-expand}
\end{equation}
where $B_u\triangleq \Tr_{\rm sys}(\Psi\,\Delta\widetilde G_u)$ as before [Eq.~\eqref{eq:Bu}] and $\varphi_{\rm out}=\Tr_{\rm sys}(\Psi)$. The Uhlmann-extremal purification $\ket{\Psi^\star}$, i.e., the gauge choice $U_E^\star$ (and therefore $h^\star$) that minimizes the Fubini-Study metric, has QGT $\mathcal Q^{\star}_{uv} \triangleq \mathcal Q_{uv}(h^\star)$. We now quantify how close the canonical purification is to the extremal purification in terms of the gap between their QGTs.

\begin{prop}[QGT Gap]
  \label{prop:qgt-gap}
  Assume $\Delta h_u^\star$ solves the Uhlmann extremality equation~\eqref{eq:uhlmann-lyapunov} of Theorem~\ref{thm:lyapunov}. Then the QGT gap between the canonical and extremal purifications admits the decomposition
  \begin{equation}
    \mathcal Q_{uv}-\mathcal{Q}^\star_{uv}
    =
    \Tr\big(\varphi_{\rm out}\,\Delta h_v^\star\,\Delta h_u^\star\big)
    -\Tr((B_u)_A\,\Delta h_v^\star)
    +\Tr(\Delta h_u^\star (B_v)_A),
    \label{eq:qgt-gap}
  \end{equation}
  where $\mathcal{Q}^\star_{uv}\triangleq \mathcal Q_{uv}(h^\star)$ and $(B_u)_A=\tfrac12(B_u-B_u^\dagger)$ is the anti-Hermitian part of $B_u$ [defined in Eq.~\eqref{eq:Bu}]. Therefore, the metric gap, which relates to the real part of the QGT, is
  \begin{equation}
    g_{uv}-g^\star_{uv}
    =
    \frac{1}{2}\,\Tr(\varphi_{\rm out}\{\Delta h_u^\star,\Delta h_v^\star\}) \geq 0,
    \label{eq:metric-gap}
  \end{equation}
  and the curvature gap, which relates the imaginary part of the QGT, is
  \begin{equation}
      \Omega_{uv} - \Omega^\star_{uv} = -i\Tr(\varphi_{\rm out} \comm{\Delta h_u^\star}{\Delta h_v^\star}) + 2i\Tr(\Delta h_u^\star (B_v)_A - (B_u)_A\Delta h_v^\star).
    \label{eq:curvature-gap}
  \end{equation}
\end{prop}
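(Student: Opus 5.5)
The proof is pure algebra on the gauge expansion~\eqref{eq:QGT-expand}, specialized to $h=h^\star$ and simplified with the extremality condition~\eqref{eq:uhlmann-lyapunov}. The plan is to rewrite every term of $\mathcal Q_{uv}(h^\star)-\mathcal Q_{uv}$ so that the Hermitian parts $(B_u)_H,(B_v)_H$ are eliminated in favour of $\varphi_{\rm out}$ and $\Delta h^\star$. What remains should be exactly the $\varphi_{\rm out}$ quadratic term and the anti-Hermitian cross terms.

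\textbf{Step 1: apply~\eqref{eq:QGT-expand} at the extremal gauge.} At $h=h^\star$ it gives
\begin{equation*}
\mathcal Q_{uv}-\mathcal Q^\star_{uv}=-\Tr(B_u\Delta h_v^\star)-\Tr(\Delta h_u^\star B_v^\dagger)-\Tr(\varphi_{\rm out}\Delta h_u^\star\Delta h_v^\star).
\end{equation*}

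\textbf{Step 2: split the $B$'s.} Write $B_u=(B_u)_H+(B_u)_A$, so that $B_v^\dagger=(B_v)_H-(B_v)_A$. This makes the anti-Hermitian pieces $-\Tr((B_u)_A\Delta h_v^\star)+\Tr(\Delta h_u^\star(B_v)_A)$ appear directly, and these are already the last two terms of~\eqref{eq:qgt-gap}.

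\textbf{Step 3: eliminate the Hermitian pieces with~\eqref{eq:uhlmann-lyapunov}.} Substituting $(B_u)_H=-\tfrac12\{\varphi_{\rm out},\Delta h_u^\star\}$ and using cyclicity,
\begin{equation*}
-\Tr((B_u)_H\Delta h_v^\star)-\Tr(\Delta h_u^\star (B_v)_H)=\tfrac12\Tr(\varphi_{\rm out}\{\Delta h_u^\star,\Delta h_v^\star\})+\tfrac12\Tr(\varphi_{\rm out}\{\Delta h_v^\star,\Delta h_u^\star\}).
\end{equation*}
Each anticommutator trace equals $\Tr(\varphi_{\rm out}\Delta h_u^\star\Delta h_v^\star)+\Tr(\varphi_{\rm out}\Delta h_v^\star\Delta h_u^\star)$, so the right-hand side totals $\Tr(\varphi_{\rm out}\Delta h_u^\star\Delta h_v^\star)+\Tr(\varphi_{\rm out}\Delta h_v^\star\Delta h_u^\star)$.

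\textbf{Step 4: cancel.} Adding the $-\Tr(\varphi_{\rm out}\Delta h_u^\star\Delta h_v^\star)$ term from Step 1 leaves $\Tr(\varphi_{\rm out}\Delta h_v^\star\Delta h_u^\star)$. Together with Step 2 this yields~\eqref{eq:qgt-gap}.

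\textbf{Bookkeeping caveat.} The order of factors and the sign conventions (e.g.\ $B_v^\dagger$ versus $B_v$ in~\eqref{eq:QGT-expand}) must be tracked carefully. This is the step most likely to hide a slip, and if the orderings come out transposed I would recheck~\eqref{eq:QGT-expand} directly from~\eqref{eq:QGT-gauge}. The Lyapunov equation only constrains $\Delta h^\star$ on ${\rm supp}(\varphi_{\rm out})$, but every term involves either $\varphi_{\rm out}$ or $B$. The $B$'s are supported there because $B_u=\Tr_{\rm sys}(\Psi\,\cdot)$, so the off-support ambiguity does not matter.

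\textbf{Real and imaginary parts.} Using $\mathcal Q=\tfrac14F_Q-\tfrac i2\Omega$ with $g=\Re\mathcal Q$ and $\Omega=-2\Im\mathcal Q$, split~\eqref{eq:qgt-gap}.
\begin{itemize}
\item Since $\varphi_{\rm out}$ and $\Delta h^\star$ are Hermitian, $\Tr(\varphi_{\rm out}\Delta h_v^\star\Delta h_u^\star)$ has real part $\tfrac12\Tr(\varphi_{\rm out}\{\Delta h_u^\star,\Delta h_v^\star\})$ and imaginary part $\tfrac1{2i}\Tr(\varphi_{\rm out}[\Delta h_v^\star,\Delta h_u^\star])$.
\item The terms $\Tr((B_u)_A\Delta h_v^\star)$ are traces of an anti-Hermitian operator times a Hermitian one, hence purely imaginary. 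They contribute only to the curvature.
\end{itemize}
This gives~\eqref{eq:metric-gap}. Positivity follows for $u=v$ from $\Tr(\varphi_{\rm out}(\Delta h_u^\star)^2)\ge0$, and more generally because the form $(u,v)\mapsto\Re\Tr(\varphi_{\rm out}\Delta h_u^\star\Delta h_v^\star)$ is positive semidefinite, being a Gram form with respect to $\varphi_{\rm out}$. Multiplying the imaginary part by $-2$ gives~\eqref{eq:curvature-gap}, and I would double-check that the signs of the commutator and of the $B_A$ terms match the stated formula.
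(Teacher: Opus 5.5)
Your proposal is correct and follows the paper's proof step for step: evaluate \eqref{eq:QGT-expand} at $h^\star$, split $B_u$ and $B_v^\dagger$ into Hermitian and anti-Hermitian parts, eliminate $(B_u)_H,(B_v)_H$ using the Lyapunov equation traced against $\Delta h^\star$, then separate real and imaginary parts. Your Step~1 ordering $-\Tr(\varphi_{\rm out}\Delta h_u^\star\Delta h_v^\star)$ is the one \eqref{eq:QGT-expand} actually gives (the paper's first displayed line transposes it), and with it the final ordering in \eqref{eq:qgt-gap} and the curvature signs in \eqref{eq:curvature-gap} come out exactly as stated, so your bookkeeping caveats can be dropped.
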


\begin{proof}
  Evaluate Eq.~\eqref{eq:QGT-expand} at $h^\star$,
  \[
    \mathcal Q_{uv}-\mathcal{Q}^\star_{uv}
    =
    -\Tr(B_u\,\Delta h_v^\star)-\Tr(B_v^\dagger\,\Delta h_u^\star)-\Tr(\varphi_{\rm out}\,\Delta h_v^\star\Delta h_u^\star).
  \]
  Use $B_u=(B_u)_H+(B_u)_A$ and $B_v^\dagger= (B_v)_H - (B_v)_A$ to rewrite the gap as 
  \begin{equation}
      \mathcal Q_{uv}-\mathcal{Q}^\star_{uv} = - \Tr(\varphi_{\rm out}\,\Delta h_v^\star\Delta h_u^\star) - \Tr((B_u)_H \Delta h_v^\star + \Delta h_u^\star(B_u)_H) \\ - \Tr((B_u)_A \Delta h_v^\star - \Delta h_u^\star(B_u)_A).
  \end{equation}
  Now consider the Lyapunov equation~\eqref{eq:uhlmann-lyapunov}, multiply it by $\Delta h_v^\star$, and take the trace:
  \[
    \Tr(\{\varphi_{\rm out},\Delta h_u^\star\}\Delta h_v^\star)=-2\Tr((B_u)_H\Delta h_v^\star),
  \]
  By cyclicity,
  \[
    \Tr(\{\varphi_{\rm out},\Delta h_u^\star\}\Delta h_v^\star)
    =
    \Tr\big(\varphi_{\rm out}(\Delta h_u^\star\Delta h_v^\star+\Delta h_v^\star\Delta h_u^\star)\big)
    =
    \Tr\big(\varphi_{\rm out}\{\Delta h_u^\star,\Delta h_v^\star\}\big),
  \]
  hence
  \[
    \Tr((B_u)_H\Delta h_v^\star)=-\frac12\Tr(\varphi_{\rm out}\{\Delta h_u^\star,\Delta h_v^\star\}),
  \]
  and similarly for $u \leftrightarrow v$. Substituting produces \eqref{eq:qgt-gap}. Finally, terms like $\Tr((B_u)_A \Delta h_{v}^\star)\in i\mathbb{R}$ and only contribute to the curvature. Therefore, Eqs.~\eqref{eq:metric-gap} and~\eqref{eq:curvature-gap} follow.
\end{proof}

We now express the QGT gap in the short-time regime, where the single-excitation sector of the environment is most informative (see Sec.~\ref{app:short-time-extremality}). 
\begin{cor}[Short-time gap]
    In the short-time regime, from $t=0$ to $t=\delta t$, the metric gap decomposes to
    \begin{equation}
        g_{uv}-g^\star_{uv}
        =\frac{\delta t}{2} \left(\frac{1}{4} \Tr(m^* m^\top\left(F_u\Lambda F_v^\dagger+F_v\Lambda F_u^\dagger\right))
        +\Tr({\mathcal C_c'}^{\top}\sqrt{\Lambda}\, \acomm{\Xi_u^\star}{\Xi_v^\star}\, \sqrt{\Lambda})\right)
        +\, \order{\delta t^{3/2}},
        \label{eq:shorttime-metric-gap}
    \end{equation}
    where $m$ and $\mathcal{C}'_c$ are the mean and (connected) correlator of the initial probe state in the canonical basis [Eq.~\eqref{eq:init-moments}] and $F_u$ is defined in Eq.~\eqref{eq:Falpha}.
\end{cor}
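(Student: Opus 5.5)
The plan is to specialize the exact metric gap of Proposition~\ref{prop:qgt-gap} to the first time bin, using the short-time forms of $\varphi_{\rm out}(\delta t)$ and of the extremal generator $\Delta h_u^\star$ from Proposition~\ref{prop:ansatz}. By Eq.~\eqref{eq:metric-gap}, we have
\[
g_{uv}-g^\star_{uv}=\tfrac12\Tr\big(\varphi_{\rm out}\{\Delta h_u^\star,\Delta h_v^\star\}\big).
\]
So the only work is to evaluate this trace to order $\delta t$. We substitute the block matrix \eqref{eq:rhoE-firstbin} for $\varphi_{\rm out}$ and the block \emph{Ansatz} \eqref{eq:h-ansatz} for $\Delta h^\star_{u,v}$. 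The off-diagonal blocks of $\Delta h^\star$ are $\sqrt{\delta t}\,\xi^\star$, and the $\mathscr{H}_1$ block is $\Xi^\star$.

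First we form the product $\Delta h_u^\star\Delta h_v^\star$ blockwise and keep only terms that survive to $\order{\delta t}$ after tracing against $\varphi_{\rm out}$.

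\begin{itemize}
\item The $00$ block of the product is $\delta t\,{\xi_u^\star}^\dagger\xi_v^\star$. It pairs with $\varphi_{00}=1+\order{\delta t}$, giving $\delta t\,{\xi_u^\star}^\dagger\xi_v^\star$.
\item The $11$ block is $\Xi_u^\star\Xi_v^\star+\order{\delta t}$. It pairs with $\delta t\,\mathfrak{C}$, giving $\delta t\Tr(\mathfrak{C}\,\Xi_u^\star\Xi_v^\star)$.
\item The cross blocks $\sqrt{\delta t}\,\Xi^\star\xi^\star$ pair with the coherences $\mp i\sqrt{\delta t}\,\mathfrak{m}$. They give $\delta t$ terms of the form $i\,\mathfrak{m}^\dagger\Xi_u^\star\xi_v^\star-i\,{\xi_u^\star}^\dagger\Xi_v^\star\mathfrak{m}$.
\end{itemize}

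Next we insert the displacement condition \eqref{eq:extremal-displace}, written as $\xi^\star=i\Xi^\star\mathfrak{m}-\tfrac12 w$ with $w=\sqrt{\Lambda}F^\top m$. Expanding the resulting quadratic form, the $\Xi\mathfrak{m}\mathfrak{m}^\dagger\Xi$ pieces from the $00$ block should cancel against the cross terms. The net effect is to replace $\mathfrak{C}$ by the connected $\mathfrak{C}_c=\mathfrak{C}-\mathfrak{m}\mathfrak{m}^\dagger$ in the $\Xi\Xi$ term.

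Two groups of terms then remain. The first is the pure $w$ contribution $\tfrac14(w_u^\dagger w_v+w_v^\dagger w_u)$. Using $w_u^\dagger w_v=m^\dagger F_u^*\Lambda F_v^\top m=\Tr\big(m^*m^\top F_u\Lambda F_v^\dagger\big)^{*}$ together with symmetrization in $u\leftrightarrow v$, this should match $\tfrac14\Tr\big(m^*m^\top(F_u\Lambda F_v^\dagger+F_v\Lambda F_u^\dagger)\big)$. The second is the mixed $w$–$\Xi$ terms.

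The main obstacle is showing that these mixed $w$–$\Xi$ terms do not survive in the symmetrized trace. My first attempt is to pair the symmetrized $u,v$ mixed terms with the Lyapunov equation \eqref{eq:extremal-rotate}. The equivalent form \eqref{eq:H-final} carries exactly the $w\mathfrak{m}^\dagger$ source, so that the mixed terms either cancel or recombine into the $\{\Xi_u^\star,\Xi_v^\star\}$ term weighted by $\mathfrak{C}_c$. This is the same completing-the-square mechanism used in the proof of Proposition~\ref{prop:qgt-gap}. If that fails, the fallback is to evaluate the gap directly from \eqref{eq:QGT-expand}, using the explicit $B_u$ of \eqref{eq:Bu-expand}. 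In that form the $w$ and $M_u$ pieces separate block by block.

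Finally, using $\mathfrak{C}_c=\sqrt{\Lambda}\,{\mathcal{C}'_c}^\top\sqrt{\Lambda}$ and cyclicity, we rewrite $\Tr(\mathfrak{C}_c\{\Xi_u^\star,\Xi_v^\star\})$ as $\Tr\big({\mathcal{C}'_c}^\top\sqrt{\Lambda}\{\Xi_u^\star,\Xi_v^\star\}\sqrt{\Lambda}\big)$. All neglected terms come from $\order{\delta t}$ corrections to $\Delta h^\star$ and $\order{\delta t^{3/2}}$ corrections to $\varphi_{\rm out}$, so the remainder is $\order{\delta t^{3/2}}$. Together with the overall factor $\tfrac12$, this gives \eqref{eq:shorttime-metric-gap}.
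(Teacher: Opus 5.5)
Your route is the paper's: specialize \eqref{eq:metric-gap} to the first bin, insert the block forms of $\varphi_{\rm out}(\delta t)$ and $\Delta h^\star$, and impose the extremal displacement. Your three block contributions, the $w$-term identity, and the final cyclic rewriting are all correct. The one flaw is the step you call the main obstacle, because it does not exist.

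Your three bullets regroup exactly, for arbitrary $\xi_{u,v}$ and $\Xi_{u,v}$ in the \emph{Ansatz}, as
\[
\frac{1}{\delta t}\Tr\big(\varphi_{\rm out}\,\Delta h_u\,\Delta h_v\big)=\big(\xi_u-i\Xi_u\mathfrak{m}\big)^\dagger\big(\xi_v-i\Xi_v\mathfrak{m}\big)+\Tr\big(\mathfrak{C}_{c}\,\Xi_u\Xi_v\big)+\order{\sqrt{\delta t}}.
\]
The $\Xi\mathfrak{m}\mathfrak{m}^\dagger\Xi$ pieces do not cancel. The $00$ block gives $+\mathfrak{m}^\dagger\Xi_u\Xi_v\mathfrak{m}$ and the cross blocks give $-2\,\mathfrak{m}^\dagger\Xi_u\Xi_v\mathfrak{m}$. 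The net $-\mathfrak{m}^\dagger\Xi_u\Xi_v\mathfrak{m}$ is exactly what turns $\mathfrak{C}$ into $\mathfrak{C}_{c}$; one copy of it is absorbed into the square.

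Condition (i), \eqref{eq:extremal-displace}, says precisely $\xi_u^\star-i\Xi_u^\star\mathfrak{m}=-\tfrac12 w_u$. So the squared term is $\tfrac14 w_u^\dagger w_v$, and no $w$--$\Xi$ cross terms ever arise. Equivalently, if you substitute $\xi^\star=i\Xi^\star\mathfrak{m}-\tfrac12 w$ without regrouping, the $00$ block produces $\tfrac{i}{2}\big(\mathfrak{m}^\dagger\Xi_u^\star w_v-w_u^\dagger\Xi_v^\star\mathfrak{m}\big)$. The two cross blocks cancel this identically, already before symmetrizing in $u\leftrightarrow v$.

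Hence neither of your two proposed resolutions is needed. The Lyapunov equation (ii), \eqref{eq:extremal-rotate} or \eqref{eq:H-final}, is never used, and $\Xi^\star$ can stay implicit. The fallback through $B_u$ is also unnecessary. The completing-the-square step that uses the Lyapunov equation was already spent in deriving \eqref{eq:metric-gap}, so no $B$-terms remain in your starting expression for it to act on.

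With this corrected, your symmetrized result
\[
\tfrac{\delta t}{2}\big[\tfrac14(w_u^\dagger w_v+w_v^\dagger w_u)+\Tr(\mathfrak{C}_{c}\{\Xi_u^\star,\Xi_v^\star\})\big]
\]
is exactly the claimed formula. The paper keeps the unsymmetrized trace and takes its real part at the end, which is equivalent.
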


\begin{proof}
   From Eq.~\eqref{eq:metric-gap}, the metric gap depends only on $\Tr(\varphi_{\rm out}\,\Delta h_u^\star\,\Delta h_v^\star)$. To calculate this, use the block \emph{Ansatz} from Proposition~\ref{prop:ansatz}:
    \begin{equation}
      \Delta h_u^\star
      =
      \begin{pmatrix}
        0 & \sqrt{\delta t}\,\xi_u^{\star\dagger}\\[2pt]
        \sqrt{\delta t}\,\xi_u^\star & \Xi_u^\star
      \end{pmatrix}
      +\order{\delta t}.
    \end{equation}
    Direct multiplication using the short-time form of $\varphi_{\rm out}(\delta t)$ in Eq.~\eqref{eq:rhoE-firstbin} gives, at leading order,
    \begin{equation}
      \Tr(\varphi_{\rm out}\,\Delta h_u^\star\,\Delta h_v^\star)
      =
      \delta t\left[
        \Tr(\left(\xi_v^\star-i\Xi_v^\star\mathfrak{m}\right)\left(\xi_u^\star-i\Xi_u^\star\mathfrak{m}\right)^\dagger)
        +
        \Tr(\mathfrak{C}_{c}\,\Xi_u^\star \Xi_v^\star)
      \right]
      +\order{\delta t^{3/2}}.
      \label{eq:additivity-general}
    \end{equation}
    The extremal displacement obeys $\xi_u^\star-i\Xi_u^\star\mathfrak{m}=-\tfrac12 w_u$, with $w_u=\sqrt{\Lambda}\,F_u^\top m$.
    Use $\mathfrak{C}_c=\sqrt{\Lambda} {\mathcal{C}_c'}^\top \sqrt{\Lambda}$ [Eq.~\eqref{eq:dressed-moments}] and substitute these expressions into~\eqref{eq:additivity-general} to find
    \begin{equation}
      \Tr(\varphi_{\rm out}\,\Delta h_u^\star\,\Delta h_v^\star)
      =
      \delta t\left[
        \frac14\Tr(w_v w_u^\dagger)
        +
        \Tr({\mathcal{C}'_{c}}^\top\,\sqrt{\Lambda}\Xi_u^\star \Xi_v^\star\sqrt{\Lambda})
      \right]
      +\order{\delta t^{3/2}}.
      \label{eq:additivity-opt}
    \end{equation}
    Expand the displacement term,
    \begin{equation}
      \Tr(w_v w_u^\dagger)
      = \Tr(\sqrt{\Lambda} F_v^\top m m^\dagger F_u^* \sqrt{\Lambda}) = \Tr(m^*m^\top\, F_v\Lambda F_u^\dagger),
    \end{equation}
    where we use $\Tr(A^\top)=\Tr(A)$. Substituting and taking the real part validates~\eqref{eq:shorttime-metric-gap}.
\end{proof}

%==========
%==========
\section{Details on Learning Pauli Noise}
\label{app:pauli-noise}

In this Appendix, we derive a precision bound for the RPM protocol that indicates exponential complexity overhead for learning Pauli noise rates without quantum memory. The arguments rely on general system-only POVMs, as opposed to the simplification using projective measurements (PVMs) in the main text (Sec.~\ref{sec:pauli-learning}). This analysis applies to the short-time, or equivalently weak-noise, Pauli noise rate problem; cf.~\cite{Flammia2020:PauliLearn} for a complementary approach in the weak-noise setting and Refs.~\cite{Tu2025:LearnMixedU, Kwon2026:FisherComplexity} for related Fisher-information analyses and more broadly Refs.~\cite{Chen2022:ExpSepMemory, Chen2022:PauliChEst, Chen2024:TightPauliLearn, Caro2024:PTMlearning, Kim2025:LearnResources}.

Consider an $N$-qubit pure probe state $\ket{\psi}$ evolving for a short time bin $\delta t$ under the Pauli-noise channel:
\begin{equation}
    \label{appeq:pauli-short-time}
    \rho(\bm{\theta};\delta t)
    =
    \left(1-\delta t \sum_{a=1}^{R}\gamma_a\right)\psi
    +
    \delta t \sum_{a=1}^{R}\gamma_a \, P_a \psi P_a
    +
    \order{\delta t^2},
\end{equation}
where $\{P_a\}_{a=1}^{R}$ are the non-identity Pauli strings, $R=4^N-1$, $\bm{\theta}=(\gamma_a)_{a=1}^R$ are the estimands with $\gamma_a \ge 0$ the Pauli noise rates. Let $D \triangleq 2^N$ denote the system dimension, and define $\gamma_{\min}\triangleq \min_a \gamma_a$.

We consider an arbitrary system-only POVM $\{E_x\}_x$ on the system Hilbert space $\mathscr{H}_S$, with outcome probabilities
\begin{equation}
    \label{eq:px}
    p_x(\bm{\theta};\delta t)
    =
    \Tr(E_x\,\rho(\bm{\theta};\delta t)).
\end{equation}
Let $F(\delta t)$ denote the associated classical Fisher information matrix with respect to the parameters $\bm{\theta}$, and consider the short-time Fisher rate $\dot F(0^+) = \lim_{\delta t\to 0} F(\delta t)/\delta t$ [see Eq.~\eqref{eq:F-short-time}]. We now establish the following complexity bound on the average precision for any RPM protocol.

\begin{prop}[RPM Complexity: Learning Pauli Noise Without Quantum Memory]
    \label{prop:exp-pauli-learn}
    Consider any short-time RPM protocol of total sensing time $T$, built from an arbitrary sequence of rapid prepare-and-measure rounds on the system alone (no quantum memory); a classical register and adaptivity are allowed. Then the total Fisher information satisfies
    \begin{equation}
        \Tr(F_{\rm tot})
        \le
        T\,\frac{D-1}{\gamma_{\min}},
    \end{equation}
    at leading order in $\delta t$. Consequently, the average absolute precision (per parameter) obeys
    \begin{equation}
        \bar{\tau}
        \le
        \frac{T}{\gamma_{\min}(D+1)}
        =
        \Theta\!\left(T\gamma_{\min}^{-1}2^{-N}\right).
    \end{equation}
\end{prop}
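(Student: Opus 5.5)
The plan is to bound the Fisher information of a single rapid prepare-and-measure round at leading order in $\delta t$, uniformly over the probe $\ket{\psi}$ and the system-only POVM $\{E_x\}$. I will then sum over rounds, with the chain rule handling adaptivity, and finally convert the trace bound into a bound on $\bar\tau$ exactly as in Eq.~\eqref{eq:prec-trace-bound}.

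\emph{Single-round structure.} Set $q_{x,0}\triangleq\expval{E_x}{\psi}$ and $q_{x,a}\triangleq\expval{P_aE_xP_a}{\psi}$. From Eq.~\eqref{appeq:pauli-short-time},
\begin{equation*}
p_x=q_{x,0}+\delta t\sum_b\gamma_b\,(q_{x,b}-q_{x,0})+\order{\delta t^2},
\qquad
\partial_{\gamma_a}p_x=\delta t\,(q_{x,a}-q_{x,0})+\order{\delta t^2}.
\end{equation*}
I split the outcomes into two classes.
\begin{itemize}
\item \emph{Bright outcomes}, with $q_{x,0}>0$. These have $p_x=\Theta(1)$, so they contribute only $\order{\delta t^2}$ to $F$ and vanish from $\dot F(0^+)$.
\item \emph{Dark outcomes}, with $E_x\ket{\psi}=0$, so $q_{x,0}=0$. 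Here $p_x=\delta t\sum_b\gamma_bq_{x,b}$ and $\partial_ap_x=\delta t\,q_{x,a}$.
\end{itemize}
The dark outcomes therefore give
\begin{equation*}
\Tr F(\delta t)\simeq\delta t\sum_{x\,\rm dark}\frac{\sum_a q_{x,a}^2}{\sum_b\gamma_bq_{x,b}}
\le\frac{\delta t}{\gamma_{\min}}\sum_{x\,\rm dark}\max_a q_{x,a},
\end{equation*}
using $\sum_aq_{x,a}^2\le(\max_aq_{x,a})\sum_aq_{x,a}$ and $\sum_b\gamma_bq_{x,b}\ge\gamma_{\min}\sum_bq_{x,b}$.

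\emph{Key dimensional step.} Each $P_a\ket{\psi}$ is a unit vector. For a dark outcome, $E_x$ annihilates $\ket{\psi}$, so $E_x$ is supported on $\Pi_\perp\triangleq I-\psi$. Hence
\begin{equation*}
q_{x,a}\le\lambda_{\max}(\Pi_\perp E_x\Pi_\perp)\le\Tr(E_x\Pi_\perp).
\end{equation*}
Summing over dark outcomes and using $\sum_xE_x\le I$ gives $\sum_{x\,\rm dark}\max_aq_{x,a}\le\Tr\Pi_\perp=D-1$. So each round satisfies $\Tr F\le\delta t\,(D-1)/\gamma_{\min}$ at leading order. This is the system-only analogue of the rank bound in the main text. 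With an entangled memory, $\Pi_\perp$ would instead have dimension $4^N-1$, which is why the obstruction disappears there.

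\emph{Summing over rounds.} For adaptive protocols with a classical register, the chain rule for classical Fisher information writes $F_{\rm tot}$ as a sum over rounds of the expected conditional single-round Fisher matrices, given the past record. The per-round bound holds for every choice of $\psi$ and $\{E_x\}$, so it survives the expectation, and summing gives $\Tr F_{\rm tot}\le T(D-1)/\gamma_{\min}$. Then $\tau_a\le(F_{\rm tot})_{aa}$ yields
\begin{equation*}
\bar\tau\le\frac{\Tr F_{\rm tot}}{R}=\frac{T(D-1)}{\gamma_{\min}(D^2-1)}=\frac{T}{\gamma_{\min}(D+1)}.
\end{equation*}

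\emph{Main obstacle.} The delicate point is the intermediate regime of outcomes with $q_{x,0}>0$ but very small, comparable to $\delta t$, possibly arising from a $\delta t$-dependent POVM. In that regime the bright/dark split is not clean. I would first try to handle it by decomposing each $E_x$ into its compressions onto $\ket{\psi}$ and $\Pi_\perp$. The cross terms are $\order{\sqrt{\delta t}}$ in amplitude; I would bound their contribution using $(\partial p)^2/p\le$ (a sum of the separate pieces) by joint convexity of $(a,b)\mapsto a^2/b$, and show the cross terms add at most $o(\delta t)$. This would make the $D-1$ bound uniform as $\delta t\to0^+$.
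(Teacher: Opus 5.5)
Your proposal is correct and matches the paper's proof essentially step for step. Both use the bright/dark split, the bound $q_{x,a}\le\Tr E_x$ on dark outcomes, $\Tr E_\perp\le D-1$, summation over rounds, and $\bar\tau\le\Tr F_{\rm tot}/R$; your chain-rule remark makes explicit what the paper simply asserts round by round, and the paper's Remarks handle the classical register by a branch decomposition plus data processing. The uniformity issue you flag for $\delta t$-dependent POVMs is not treated in the paper, which fixes the POVM and takes $\delta t\to0^+$, so your argument without that final paragraph already covers what the paper proves.
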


\begin{proof}
    Consider a single short-time round with probe state $\psi$, POVM $\{E_x\}_x$, and duration $\delta t$. For each measurement outcome $x$, define
    \begin{equation}
        e_x \triangleq \Tr(E_x\psi),
        \qquad
        q_{x,a} \triangleq \Tr(E_x P_a \psi P_a).
    \end{equation}
    Expanding the outcome probabilities~\eqref{eq:px} to first order in $\delta t$ gives
    \begin{equation}
        p_x(\bm{\theta};\delta t)
        =
        e_x
        +
        \delta t \sum_{a=1}^{R}\gamma_a\big(q_{x,a}-e_x\big)
        +
        \order{\delta t^2}.
    \end{equation}
    Hence
    \begin{equation}
        \partial_{\gamma_b}p_x
        =
        \delta t \big(q_{x,b}-e_x\big)
        +
        \order{\delta t^2}.
    \end{equation}
    Now split the outcomes into two classes:

    If $e_x>0$, then $p_x(\bm{\theta};\delta t)=e_x+\order{\delta t}$ is order unity, while $\partial_{\gamma_b}p_x=\order{\delta t}$. Therefore the contribution of these outcomes to $F(\delta t)/\delta t$ vanishes as $\delta t\to 0$.

    If $e_x=0$, then positivity implies $E_x\ket{\psi}=0$, and
    \begin{equation}
        p_x(\bm{\theta};\delta t)
        =
        \delta t \sum_{a=1}^{R}\gamma_a q_{x,a}
        +
        \order{\delta t^2},
        \qquad
        \partial_{\gamma_b}p_x
        =
        \delta t\, q_{x,b}
        +
        \order{\delta t^2}.
    \end{equation}
    Thus only these outcomes contribute to the short-time Fisher rate, and
    \begin{equation}
        \dot F_{bc}(0^+)
        =
        \sum_{x:\,e_x=0}
        \frac{q_{x,b}q_{x,c}}{\sum_{a=1}^{R}\gamma_a q_{x,a}}.
    \end{equation}
    Taking the trace,
    \begin{equation}
        \Tr(\dot F(0^+))
        =
        \sum_{x:\,e_x=0}
        \frac{\sum_{b=1}^{R} q_{x,b}^2}{\sum_{a=1}^{R}\gamma_a q_{x,a}}
        \le
        \frac{1}{\gamma_{\min}}
        \sum_{x:\,e_x=0}
        \frac{\sum_{b=1}^{R} q_{x,b}^2}{\sum_{a=1}^{R} q_{x,a}}.
    \end{equation}
    For each $x$ and $a$,
    \begin{equation}
        0 \le q_{x,a} \le \Tr(E_x),
    \end{equation}
    since $P_a\psi P_a$ is a rank-$1$ projector. Therefore
    \begin{equation}
        \sum_{b=1}^{R} q_{x,b}^2
        \le
        \Tr(E_x)\sum_{a=1}^{R} q_{x,a},
    \end{equation}
    and so
    \begin{equation}
        \Tr(\dot F(0^+))
        \le
        \frac{1}{\gamma_{\min}}
        \sum_{x:\,e_x=0}\Tr(E_x).
    \end{equation}

    Define
    \begin{equation}
        E_\perp \triangleq \sum_{x:\,e_x=0} E_x.
    \end{equation}
    Because each $E_x\ge 0$ and $e_x=\expval{E_x}{\psi}=0$, we have $E_x\ket{\psi}=0$ for every such $x$, hence
    \begin{equation}
        E_\perp\ket{\psi}=0.
    \end{equation}
    Moreover, $0\le E_\perp \le \sum_x E_x = I$, so $E_\perp$ is supported on the $(D-1)$-dimensional orthogonal complement of $\ket{\psi}$. Therefore
    \begin{equation}
        \sum_{x:\,e_x=0}\Tr(E_x)
        =
        \Tr(E_\perp)
        \le
        D-1.
    \end{equation}
    Combining the above inequalities yields
    \begin{equation}
        \Tr(\dot F(0^+))
        \le
        \frac{D-1}{\gamma_{\min}}.
    \end{equation}

    Now consider an arbitrary RPM protocol, consisting of many rounds and taking a total time $T=\sum_j \delta t_j$. In each round $j$, possibly conditioned on the entire previous classical record, the same one-round bound applies. Hence, within an interval $\delta t_j$,
    \begin{equation}
        \Tr(F^{(j)})= \delta t_j \Tr(\dot F(0^+)) + \order{\delta t_j^2}
        \le
        \delta t_j\,\frac{D-1}{\gamma_{\min}}+\order{\delta t_j^2}.
    \end{equation}
    Summing over all rounds gives
    \begin{equation}
        \Tr(F_{\rm tot})
        \le
        T\,\frac{D-1}{\gamma_{\min}},
    \end{equation}
    at leading order. Finally, using $R=D^2-1$ and the average absolute precision bound $\bar{\tau}\le \Tr(F_{\rm tot})/R$, which follows from the Cram\'er-Rao bound, we obtain
    \begin{equation}
        \bar{\tau}
        \le
        T\,\frac{D-1}{\gamma_{\min}(D^2-1)}
        =
        \frac{T}{\gamma_{\min}(D+1)}
        =
        \Theta\!\left(T\,\gamma_{\min}^{-1}2^{-N}\right),
    \end{equation}
    which concludes the proof.
\end{proof}

Thus, without quantum memory, the average precision in the RPM protocol is exponentially suppressed in $N$, even when allowing arbitrary system-only POVMs.

\subparagraph*{Remarks.}
Proposition~\ref{prop:exp-pauli-learn} holds even if we append an initial classical register and perform joint measurements on the register and system. Indeed, this classical-memory-assisted protocol~\cite{Chen2024:TightPauliLearn} is equivalent to a classical randomization over system-only experiments: given
\begin{equation}
    \rho_{CS}(0)=\sum_c p_c \dyad{c}\otimes \psi_c \longrightarrow \rho_{CS}(\bm{\theta}; \delta t)=\sum_c p_c \dyad{c}\otimes \rho_c(\bm{\theta};\delta t), 
\end{equation}
a final joint POVM induces conditional system POVMs $\{E_{c,x}\}_x$ and outcome probabilities
\begin{equation}
    p_{c,x}(\bm{\theta};\delta t)=p_c\,\Tr(E_{c,x}\, \rho_c(\bm{\theta};\delta t)).
\end{equation}
Since Proposition~\ref{prop:exp-pauli-learn} applies to each branch separately, and coarse-graining over the classical label cannot increase Fisher information (data processing), the trace bound follows immediately.

Proposition~\ref{prop:exp-pauli-learn} is formulated for the short-time estimation of Pauli noise rates $\bm{\theta}=(\gamma_a)_{a=1}^R$. The exponential complexity barrier persists when learning the discrete Pauli error probabilities $\{r_a\}$~\cite{Chen2024:TightPauliLearn}, since both models are governed by the same underlying Pauli-channel geometry. The proof given here, however, exploits the special short-time (weak-noise) structure of the rate-estimation problem, in which only rare jump events contribute at leading order. For the discrete model with arbitrary $\{r_a\}$, this simplification no longer seems available, and one needs slightly different proof techniques~\cite{Kwon2026:FisherComplexity, Tu2025:LearnMixedU}.

Here we bound the \emph{absolute} precision per parameter. We can likewise bound the \emph{relative} precision per parameter, $\bar{\tau}_{\rm rel}$, which is dimensionless and defined as $\bar{\tau}_{\rm rel} \triangleq (\sum_{a=1}^R \gamma_a^2 \tau_a)/R$. In fact, using similar arguments as above, one can show that $\bar{\tau}_{\rm rel} \leq T\gamma_{\rm max} /(D+1)=\Theta(T\gamma_{\rm max} 2^{-N})$, implying an exponential time complexity to achieve a target relative precision without quantum memory, $T=\Theta(\bar{\tau}_{\rm rel}\gamma_{\rm max}^{-1} 2^N)$.

%==========
%==========
\section{Details on Subdiffraction Quantum Imaging}
\label{app:subdiff-imaging}

In this Appendix, we provide details and derivations underlying the subdiffraction imaging problem in Sec.~\ref{sec:subdiff} of the main text.

\paragraph*{Problem setup.}
We consider two equally bright incoherent emitters at image-plane positions $p_{1,2}=\bar{x}\mp d/2$ with a Gaussian point-spread function (PSF) of width $\sigma$. Let $\phi_0(u)$ denote the normalized Gaussian mode centered at the origin,
\begin{equation}
  \phi_0(u)=\frac{1}{(2\pi\sigma^2)^{1/4}}\exp(-\frac{u^2}{4\sigma^2}),
\end{equation}
so that $|\phi_0(u)|^2$ has variance $\sigma^2$.
Define the displaced PSFs
\begin{equation}
  \phi_1(u)=\phi_0\big(u-(\bar{x}-d/2)\big),\qquad
  \phi_2(u)=\phi_0\big(u-(\bar{x}+d/2)\big).
\end{equation}
In the image-plane coordinate basis, the mutual coherence matrix of the fluorescence scene\footnote{Physically, the mutual coherence $\Gamma_{uv}$ is just the two-point correlator of the thermal emission field in image-plane coordinates. } has elements~\cite{Lupo2016:QuImaging}
\begin{equation}
  \Gamma_{uv}(\bar{x},d)=\frac{\varepsilon}{2}
  \big(\phi_1^*(v)\phi_1(u)+\phi_2^*(v)\phi_2(u)\big),
  \qquad
  \Tr{\Gamma}=\varepsilon,
  \label{eq:Gamma-two-source}
\end{equation}
where $\varepsilon$ equals the detected brightness per bin. In the weak-signal regime, $\varepsilon\ll 1$, the optical state in a single bin is well approximated by
\begin{equation}
  \rho(\bar{x},d)\approx (1-\varepsilon)\dyad{0}
  +\varepsilon\,\rho_1(\bar{x},d),
\end{equation}
where the one-photon component is an incoherent mixture
\begin{equation}
  \rho_1(\bar{x},d)
  =\varepsilon^{-1}\sum_{u,v}\Gamma_{uv}(\bar{x},d)\,
  \hat a_u^\dagger \dyad{0}\hat a_v
  =\frac{1}{2}\sum_{k=1}^2 \dyad{\phi_k},
\end{equation}
with
\begin{equation}
  \ket{\phi_k}\triangleq \sum_u \phi_k(u)\hat a_u^\dagger\ket{0}.
\end{equation}
Here $\hat a_u^\dagger$ creates a photon at image-plane coordinate $u$.

To place this in our Lindblad framework, we interpret $\Gamma(\bar{x},d)$ as the dissipation matrix of a weak, Markovian excitation process with jump operators $L_u=\hat{a}_u^\dagger$ acting on the vacuum. In the weak-signal regime [Scenario 3, Eq.~\eqref{eq:F-short-time}], the per-bin QFI obeys $F_Q(\delta t)\approx \delta t\,\dot F_Q(0^+)$, which we evaluate via Theorem~\ref{thm:qfi-matrix}. In particular, the quantum correlator matrix in this regime satisfies
\begin{equation}
  \mathcal{C}_{uv}(0^+)=\Tr(L_u^\dagger L_v\,\rho(0^+))
  =\Tr(\hat{a}_u \hat{a}_v^\dagger\dyad{0}{0})
  =\delta_{uv},
\end{equation}
so $\mathcal{C}(0^+)=I$ to leading order.

\paragraph*{Spectrum of $\Gamma$.}
The identical-emitter assumption implies global translations act unitarily on the mode space,
\begin{equation}
  \Gamma(\bar{x},d)=T(\bar{x})\,\Gamma(0,d)\,T^\dagger(\bar{x}),
  \label{eq:Gamma-translation}
\end{equation}
where $(T(\bar{x})f)(u)=f(u-\bar{x})$.
At $\bar{x}=0$, define the overlap of the mode functions,
\begin{equation}
  S(d)\triangleq \braket{\phi_1}{\phi_2}=e^{-d^2/(8\sigma^2)}.
\end{equation}
The rank-$2$ operator $\Gamma(0,d)$ then has eigenvalues~\cite{Lupo2016:QuImaging}
\begin{equation}
  \lambda_\pm(d)=\frac{\varepsilon}{2}\big(1\pm S(d)\big),
\end{equation}
and normalized eigenvectors
\begin{equation}
  \ket{\phi_\pm(d)}=\frac{\ket{\phi_1}\pm\ket{\phi_2}}{\sqrt{2(1\pm S(d))}}.
\end{equation}
Thus $\Gamma(0,d)=V_0(d)\Lambda(d)V_0^\dagger(d)$ where
$V_0(d)=(\ket{\phi_+(d)},\ket{\phi_-(d)})$ and $\Lambda(d)=\mathrm{diag}\{\lambda_+(d),\lambda_-(d)\}$, and
\begin{equation}
  V(\bar{x},d)=T(\bar{x})V_0(d),\qquad
  \Gamma(\bar{x},d)=V(\bar{x},d)\Lambda(d)V^\dagger(\bar{x},d).
\end{equation}
In the subdiffraction regime $d/\sigma\to 0$, the dependence of $V_0(d)$ on $d$ is subleading compared to the eigenvalue dependence of $\Lambda(d)$; in particular we may use the approximations
\begin{equation}
  V(\bar{x},d)\approx V(\bar{x})=T(\bar{x}) V_0(0), \qquad \lambda_+(d)\approx \varepsilon\left(1-\frac{d^2}{16\sigma^2}\right),\qquad
  \lambda_-(d)\approx \frac{\varepsilon d^2}{16\sigma^2}.
  \label{eq:lambda-small-d}
\end{equation}

\paragraph*{Subdiffraction QFI matrix.}
We now evaluate the QFI matrix using Theorem~\ref{thm:qfi-matrix}. Since $\mathcal{C}(0^+)=I$, the QFI is determined completely by the dissipation kernel $\mathcal{K}_{\alpha\beta}$.

We first calculate the QFI for $\bar{x}$. From $V(\bar{x},d)=T(\bar{x})V_0(d)$, the centroid dependence enters through the translation unitary. Writing $T(\bar{x})=e^{-i\bar{x}\hat p}$ with generator $\hat p=-i\partial_u$ on mode functions, the eigenframe connection is
\begin{equation}
  K_{\bar{x}}=(\partial_{\bar{x}}V)V^\dagger=(\partial_{\bar{x}}T)T^\dagger=-i\hat p,
  \qquad K_{\bar{x}}^\dagger=-K_{\bar{x}}.
  \label{eq:Kx-translation}
\end{equation}
Combining this with the rate equation [Eq.~\eqref{eq:gamma-rate-eq}], we have $\partial_{\bar{x}}\Lambda=0$ and hence $\partial_{\bar{x}}\Gamma=[K_{\bar{x}},\Gamma]$. Therefore,
\begin{equation}
  \partial_{\bar{x}}\Gamma+\{K_{\bar{x}},\Gamma\}
  =[K_{\bar{x}},\Gamma]+\{K_{\bar{x}},\Gamma\}
  =2K_{\bar{x}}\Gamma,
\end{equation}
and application of Theorem~\ref{thm:qfi-matrix} yields
\begin{equation}
  \mathcal{K}_{\bar{x}\bar{x}}=4K_{\bar{x}}\Gamma K_{\bar{x}}^\dagger.
\end{equation}
Contracting with $\mathcal{C}(0^+)=I$ gives
\begin{equation}
  (F_Q)_{\bar{x}\bar{x}}(\delta t)=4\Tr(K_{\bar{x}}\Gamma K_{\bar{x}}^\dagger)
  =4\Tr(\hat p^{\,2}\Gamma).
  \label{eq:Fxx-eval}
\end{equation}
In the limit $d/\sigma\to 0$, a direct calculation for the Gaussian PSF gives $\Tr(\hat p^{\,2}\Gamma)=\varepsilon/(4\sigma^2)$~\cite{Lupo2020:LinearOptLimits}, hence
\begin{equation}
  \lim_{d/\sigma \to 0} (F_Q)_{\bar{x}\bar{x}}(\delta t) = \frac{\varepsilon}{\sigma^2}.
\end{equation}

We now calculate the QFI for $d$. In the subdiffraction regime $V(\bar{x},d)\approx V(\bar{x})$, we have $\partial_d\Gamma\approx V(\partial_d\Lambda)V^\dagger$. Thus
\begin{equation}
  \mathcal{K}_{dd}\approx V\big((\partial_d\Lambda)\Lambda^{-1}(\partial_d\Lambda)\big)V^\dagger,
\end{equation}
so with $\mathcal{C}(0^+)=I$,
\begin{equation}
  (F_Q)_{dd}(\delta t)
  \approx \frac{(\partial_d\lambda_+)^2}{\lambda_+}+\frac{(\partial_d\lambda_-)^2}{\lambda_-}.
  \label{eq:Fdd-exact-form}
\end{equation}
Using $S'(d)=-(d/(4\sigma^2))S(d)$ and $\partial_d\lambda_\pm=\pm(\varepsilon/2)S'(d)$, we obtain
\begin{equation}
  \lim_{d/\sigma\to 0}(F_Q)_{dd}(\delta t) = \frac{\varepsilon}{4\sigma^2}.
\end{equation}

We now calculate the cross terms of the QFI matrix. To leading order, the off-diagonal term of the dissipation kernel takes the form
\begin{equation}
  \mathcal{K}_{\bar{x}d}\ \propto\ V\,(\partial_d\Lambda^{1/2})\Lambda^{1/2}\,V^\dagger\,K_{\bar{x}}^\dagger,
\end{equation}
so that, in the $\Gamma$-eigenbasis,
\begin{equation}
  \Tr\!\big(\mathcal{K}_{\bar{x}d}\big)\ \propto\ \Tr\!\big((\partial_d\Lambda^{1/2})\Lambda^{1/2}\,(K'_{\bar{x}})^\dagger\big).
\end{equation}
Here $(\partial_d\Lambda^{1/2})\Lambda^{1/2}$ is diagonal, while $K'_{\bar{x}}$ has vanishing diagonal entries in the real eigenbasis $\phi_\pm$. Hence
\begin{equation}
  \Tr\!\big(\mathcal{K}_{\bar{x}d}\big)=0
  \quad\Rightarrow\quad
  (F_Q)_{\bar{x}d}(\delta t)=0.
\end{equation}

Collecting everything together yields the subdiffraction QFI matrix for $(\bar{x}, d)$ after $\nu$ measurements:
\begin{equation}
  \lim_{d/\sigma\to 0}F_Q^{\rm tot}=\frac{\nu\varepsilon}{\sigma^2}
  \begin{pmatrix}
    1 & 0\\
    0 & \tfrac14
  \end{pmatrix},
\end{equation}
where $\nu\varepsilon$ is the total mean-number of photons detected. This agrees with the fundamental quantum limits for centroid and separation estimation in the
weak-thermal, Gaussian-PSF model~\cite{Tsang2016:SPADE, Rehacek2017:MultiImaging, Lupo2020:LinearOptLimits}.

\end{document}